\documentclass{article}
\usepackage{graphicx} % Required for inserting images
\usepackage{amsmath}
\usepackage{amsthm}
\usepackage{geometry}[margin=1inch]
\usepackage{physics}
\usepackage{authblk}
\usepackage{comment}
\usepackage{quantikz}
\usepackage{xurl}
\usepackage{cite}
\usepackage{hyperref}
\usepackage{subcaption}
\newcommand{\ver}{\nu}
\newcommand{\opqsp }{OP-QSP }

\newtheorem*{lemma*}{Lemma}
\newtheorem*{proposition*}{Proposition}

\newcommand{\qbinom}[3][]{\genfrac{[}{]}{0pt}{}{#2}{#3}_{#1}}

\theoremstyle{plain}
\newtheorem{theorem}{Theorem}
\newtheorem{definition}[theorem]{Definition}
\newtheorem{example}[theorem]{Example}

\newtheorem{remark}[theorem]{Remark}

\newtheorem{lemma}[theorem]{Lemma}
\newtheorem{corollary}[theorem]{Corollary}
\newtheorem{proposition}[theorem]{Proposition}

\usepackage{amsfonts}

\title{Analytical Angle-Finding and Series Expansions for Quantum Signal Processing via Orthogonal Polynomial Theory}
\author[1]{Pierre-Antoine Bernard\thanks{bernardpierreantoine@outlook.com}}
\author[1,2]{Nathan Wiebe}

\affil[1]{Department of Computer Science, University of Toronto, Toronto, ON, Canada}
\affil[2]{Pacific Northwest National Laboratory, Richland, WA, USA}

\date{}

\begin{document}
\maketitle

\begin{abstract}
    Quantum signal processing is a powerful framework in quantum algorithms, playing a central role in Hamiltonian simulation and related applications. The sequence of polynomials implemented at each step of this protocol provides a polynomial basis for block-encoding any polynomial of a unitary. We characterize the achievable polynomial bases in terms of their orthogonality or biorthogonality with respect to a linear functional admitting an integral representation. Explicit expressions for the quantum signal processing angles are derived for families of polynomial sequences, including Hermite, Jacobi, and Rogers–Szegő polynomials. We show that $2n+2$ rotation angles are required to encode a sequence of polynomials in these classes up to degree $n$. We use this result to show that an $\epsilon$-approximation of a smooth function $f$ can be block-encoded using $O(\log(1/\epsilon))$ gates via its Hermite series expansion. The connections established with the theory of orthogonal and biorthogonal polynomials lead to a new method for solving the quantum signal processing angle-finding problem, yielding explicit expressions for the angles. They also provide a complete characterization of the polynomials achievable by $\mathrm{SU}(1,1)$-QSP in terms of their roots. Biorthogonality properties are shown to hold in the bivariate QSP setting, yielding a set of necessary conditions for achievable polynomials.

\end{abstract}

\section{Introduction}

Quantum signal processing has revolutionized quantum algorithm design. It has led to optimal quantum simulation methods \cite{low2017optimal,low2019hamiltonian,liu2025toward}, efficient matrix inversion algorithms \cite{chakraborty2019power,lin2020optimal}, quantum eigenvalue and singular value transformation techniques \cite{gilyen2019quantum}, improved quantum control methods \cite{low2016methodology}, and new approaches to quantum sensing \cite{marrero2026encoded,dong2022beyond}. Quantum signal processing achieves this by implementing a polynomial transformation within two-dimensional irreducible subspaces of a unitary matrix, which it realizes through a sequence of interspersed applications of the unitary and rotations with angles tailored to the desired polynomial transformation \cite{low2017optimal,low2016methodology}. Since much of quantum algorithm design focuses on controlling two-dimensional subspaces, this method has acted as a disruptive force in the field.

%Quantum signal processing has revolutionized our approaches to quantum algorithm design.  Specifically, it has led to optimal quantum simulation methods, efficient matrix inversion algorithms, quantum eigenvalue and singular value transformation algorithms, improved quantum control methods and new approaches to quantum sensing.  Quantum signal processing achieves this by providing a way to apply a polynomial transformation within a two-dimensional irreducible subspace of a larger unitary matrix by constructing a sequence of interspersed calls to the unitary and rotations with angles tailored specifically for the desired polynomial transformation.  As much of quantum algorithm design focuses on controlling two-dimensional spaces, quantum signal processing has acted as a disruptive force in quantum algorithms.

A substantial issue with quantum signal processing is that numerical methods are often needed to find a particular polynomial transformation.  This means that, apart from Chebyshev polynomials which use trivial angle settings, we cannot analyze specifically how a given quantum signal processing algorithm works.  This issue is further confounded by the limitations of quantum signal processing, including its inability to transform non-unitary matrices and the challenges it faces generalizing to multivariate cases \cite{rossi2022multivariable,nemeth2023variants,gomes2024multivariable,laneve2025multivariate,rossi2025modular}.

%The key power of quantum signal processing derives from its ability to transform block encodings. Block-encodings are a potent unifying framework for the design and analysis of quantum algorithms. 

The key power of quantum signal processing can be understood through the unifying framework of block encodings. By introducing ancillary qubits and using post-selection, this framework enables the implementation of non-unitary linear transformations within an overall unitary quantum circuit. It underlies several cornerstone algorithms, including quantum linear systems algorithms \cite{harrow2009quantum, childs2017quantum, gilyen2019quantum}, quantum simulations~\cite{childs2012hamiltonian, low2017optimal}, differential equations~\cite{berry2017quantum,fang2023time,an2023linear}. 

The task of block-encoding polynomial transformations $P(U)$ of a given unitary $U$ has attracted substantial attention, as it constitutes a key subroutine in Hamiltonian simulation algorithms \cite{berry2015hamiltonian, low2017optimal}. A simple approach to this problem is provided by the Linear Combination of Unitaries (LCU) primitive~\cite{childs2012hamiltonian,berry2014exponential}, which enables block encoding of any matrix given its decomposition as a sum of unitaries, using two oracles that respectively encode the unitaries and the coefficients of the linear combination. Taking these unitaries to be monomials $U^i$ in $U$, this primitive allows one to block-encode any $P(U)$ given access to the coefficients in its monomial expansion.  This approach is highly flexible, but can require a large number of qubits to implement the polynomial transformation.

Quantum signal processing (QSP) provides a second approach. This protocol is inherently iterative: controlled applications of the target unitary are interleaved with single-qubit rotation gates acting on an ancillary qubit. This results in a block encoding of a polynomial $P(U)$, whose structure is fully determined by the sequence of rotation angles chosen at each step. Determining a sequence of angles that implements a desired target polynomial is a key preprocessing step and has been the subject of substantial research (see, e.g., \cite{haah2019product,chao2020finding,dong2021efficient,motlagh2024generalized}). While this preprocessing constitutes a limitation of the approach, QSP has the substantial advantage of requiring a constant number of ancillary qubits.  This is especially significant in the near term and early fault tolerant regime where qubits are expected to be expensive. 

This work gives an intermediate approach based on interpreting the QSP protocol as a procedure that block-encodes a polynomial basis. Since the QSP algorithm is iterative, after $i$ steps it produces a block encoding of a polynomial $P_i(U)$ of degree $i$. We thus propose an approach that uses a controlled version of this construction as part of an LCU-based protocol, so that any polynomial $P(U)$ can be block-encoded given access to its decomposition in the basis $\{P_i(U)\}_{i=0}^n$.

%A prominent technique built on the idea of block encoding is the Quantum Signal Processing (QSP) protocol, whose objective is to implement polynomial transformations of a given unitary, assuming controlled access to that unitary. The protocol is inherently iterative: controlled applications of the target unitary are interleaved with single-qubit rotation gates acting on an ancillary qubit. The specific polynomial realized by the circuit is entirely determined by the sequence of rotation angles chosen at each step. Determining a sequence of angles that implements a desired target polynomial constitutes a crucial preprocessing step and has been the subject of substantial research (see, for instance, \cite{dong2021efficient,motlagh2024generalized}). We refer to this task as the \emph{QSP angle-finding problem}. For the standard formulation of generalized quantum signal processing, it has been shown that an appropriate sequence of angles exists for any suitably normalized polynomial \cite{motlagh2024generalized}. The corresponding existence proof relies on the Fejér--Riesz theorem, which poses significant challenges for extensions beyond the univariate setting. In particular, this reliance complicates generalizations to multivariate quantum signal processing, where the goal is to construct polynomial functions of multiple unitaries \cite{rossi2022multivariable,laneve2025multivariate,nemeth2023variants,gomes2024multivariable}.

\begin{lemma*}[Informal statement of Lemma \ref{prop:sed2}]
    There exists an algorithm that uses the sequence of polynomials $\{P_i(U)\}_{i=0}^n$ obtained as block encodings at each step of an $n$-steps quantum signal processing protocol (or one of its variants) as a polynomial basis for block encoding any polynomial of degree at most $n$.
\end{lemma*}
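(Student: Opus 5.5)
The plan is to combine the QSP iteration with a select-type controlled construction and then wrap it in a standard LCU.

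\textbf{Step 1: what QSP gives after each step.} Write the $n$-step protocol as
\[
W_n=R(\phi_n)\,\tilde U\,R(\phi_{n-1})\cdots \tilde U\,R(\phi_0),
\]
where $\tilde U$ is the controlled application of $U$ (controlled on the QSP ancilla) and $R(\phi_j)$ are single-qubit rotations on that ancilla. For each $i$, the prefix $W_i$ (the first $i$ steps, closed by the rotation $R(\phi_i)$) block-encodes $P_i(U)$, with $\deg P_i = i$, in the $\langle 0|\cdot|0\rangle$ block of the ancilla. Since $\deg P_i = i$ exactly, the family $\{P_i\}_{i=0}^n$ is a basis of the polynomials of degree $\le n$. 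Hence any target polynomial has an expansion
\[
P=\sum_{i=0}^n c_i P_i ,
\]
and the coefficients are unique, obtained by a triangular change of basis. I expect the later orthogonality results to give the $c_i$ explicitly, but the lemma only needs existence.

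\textbf{Step 2: a unitary $V$ with $\big(\langle i|\otimes\langle 0|\otimes I\big)\,V\,\big(|i\rangle\otimes|0\rangle\otimes I\big)=P_i(U)$.} Introduce a $\lceil\log_2(n+1)\rceil$-qubit index register holding $|i\rangle$. Run all $n$ steps, but make the $j$-th application of $\tilde U$ controlled on the condition $j\le i$. For $j>i$, skip $\tilde U$ and replace the remaining rotations so that their product is the identity; for example, controlled on $j>i$, replace $R(\phi_j)$ by the identity. The delicate point is the terminal rotation $R(\phi_i)$. In many QSP conventions, the prefix ending at step $i$ must be closed by $R(\phi_i)$, whereas in the full protocol that same rotation is followed by further gates. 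One fix is to apply, controlled on $j=i$, an extra correction rotation that makes the $i$-th prefix exactly equal to the $i$-step protocol. In the angle conventions used in the paper, this correction might be unnecessary if each step already ends with a rotation, making $W_i$ literally a prefix. Either way, each step then needs only comparator-controlled gates, costing $O(\log n)$ gates per step, and $O(n\log n)$ gates in total beyond the $n$ controlled uses of $U$.

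\textbf{Step 3: LCU.} Let
\[
\lambda=\sum_i |c_i| ,
\]
and absorb the phases of the $c_i$ into the $P_i$, for example with a controlled phase gate on the index register. Use a state preparation unitary
\[
\mathrm{PREP}\,|0\rangle=\sum_i\sqrt{|c_i|/\lambda}\,|i\rangle .
\]
Then
\[
\mathrm{PREP}^\dagger\, V\, \mathrm{PREP}
\]
block-encodes $P(U)/\lambda$, where the index register and the QSP ancilla are both projected onto $|0\rangle$. The check is the usual LCU computation:
\[
\sum_i \frac{c_i}{\lambda}\,\langle 0|W_i|0\rangle=\frac{P(U)}{\lambda}.
\]
Amplitude amplification can be added to improve the normalization if desired.

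\textbf{Main obstacle.} The hardest part is Step 2. The controlled version must reproduce exactly the $i$-th intermediate polynomial for every $i$ at once, including the correct handling of the final rotation of each prefix. It must also not disturb the block structure for the variants of QSP considered (generalized, $\mathrm{SU}(1,1)$, and so on). I would first fix the convention in which $P_i$ is read off after the rotation $R(\phi_i)$, and then verify the identity-padding argument within that convention.
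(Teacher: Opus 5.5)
Your proposal is correct and follows the paper's proof: an LCU whose select oracle is a product of controlled QSP steps, each applied only when the index register value is at least the step number, so that the $i$-th branch is exactly the prefix $\prod_{j\le i}W_j$ and the LCU lemma gives normalization $\|v\|_1=\sum_i|c_i|$. The differences are cosmetic. The paper uses a unary (thermometer) index register, so each step is controlled by a single qubit at the cost of $n$ ancillas, where you use a binary register with comparators. Your concern about the terminal rotation is moot, because each $W_j$ already ends with its own rotation and every intermediate circuit is therefore literally a prefix.
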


While it is known that any suitably rescaled polynomial can be implemented using generalized quantum signal processing \cite{motlagh2024generalized}, a characterization of the classes of polynomial sequences $\{P_i(U)\}_{i=0}^n$ that can be realized within the QSP framework has not yet been fully established. Since actual application of the algorithm in the previous lemma depends on this question, we address it and provide a characterization in terms of orthogonality and biorthogonality properties of polynomial sequences. In particular, the sequences generated by generalized QSP are shown to be special cases of Laurent biorthogonal polynomials, while those arising in ${\rm SU}(1,1)$-QSP (a variant of QSP introduced in \cite{rossi2023quantum}) correspond with orthogonal polynomials on the unit circle. We also introduce a new variant, referred to as OP-QSP, whose associated polynomial sequences correspond to orthogonal polynomials, including orthogonal polynomials on the real line. This is summarized in Table~\ref{tab:1}.

 \begin{table}[t]
\centering
\begin{tabular}{|c|c|}
\hline
\textbf{QSP Variants} & \textbf{Polynomial Sequences} \\ \hline
\opqsp  & Orthogonal polynomials \\ \hline
Generalized QSP & Laurent biorthogonal polynomials (LBPs) satisfying  Proposition~\ref{prop:idlbps}\\ \hline
{\rm SU}(1,1)-QSP & Orthogonal polynomials on the unit circle (OPUC) \\ \hline
\end{tabular}
\caption{Characterization of polynomial sequences implemented by generalized quantum signal processing and other variants (see Propositions \ref{prop:corroprl1}, \ref{prop:idlbps}, and \ref{prop:corroprl12})}
\end{table}\label{tab:1}

The connection with the theory of orthogonal and biorthogonal polynomials leads to several examples of well-known polynomial families that can be implemented using variants of QSP and for which the corresponding rotation angles can be derived explicitly. A notable example is given by the Hermite polynomials, which are orthogonal with respect to an integral inner product involving a Gaussian weight. They can therefore be used as the polynomial sequence in the previous lemma, allowing the complexity of block-encoding a function $f$ to be analyzed through its Hermite expansion, which leads to the following result.

\begin{proposition*}[Informal statement of Proposition \ref{prop:herms}]
   Let $f(z)$ be a function that is square-integrable with respect to a Gaussian weight centered at $z=1$, with variance $1/\gamma$ (for $\gamma>1$), and such that its $k$-th derivative, when integrated against this weight, grows at most like $\sqrt{k!}/k^{1+3\gamma^2}$. Then a block encoding of $f(U)$ can be achieved with precision $\epsilon$ using $O(\log(1/\epsilon))$ depth and gate complexity, and with a block encoding constant that scales with the precision as $O(\log(1/\epsilon)^{1+3\gamma^2})$.
\end{proposition*}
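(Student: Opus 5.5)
The plan is to expand $f$ in a rescaled Hermite basis adapted to the Gaussian weight, truncate the series, and block-encode the truncation with the Lemma (Lemma~\ref{prop:sed2}). The basis would be $H_k(\sqrt{\gamma}(z-1))$, orthogonal with respect to $w(z)\propto e^{-\gamma (z-1)^2}$; its QSP angles are explicit and use $2n+2$ rotations up to degree $n$. I would then bound the truncation error and the LCU normalization in terms of the decay of the Hermite coefficients.

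\textbf{Coefficients.} Write $f=\sum_k c_k \tilde H_k$ with orthonormal $\tilde H_k = H_k/\sqrt{2^k k!}$ in the rescaled variable. Using Rodrigues' formula and integrating by parts $k$ times (boundary terms vanish thanks to the Gaussian), I would obtain
\begin{equation*}
c_k = \frac{\gamma^{-k/2}}{\sqrt{2^k k!}}\int f^{(k)}(z)\, w(z)\, dz .
\end{equation*}
With the hypothesis $\left|\int f^{(k)} w\right| \lesssim \sqrt{k!}/k^{1+3\gamma^2}$, this gives $|c_k| \lesssim (2\gamma)^{-k/2} k^{-1-3\gamma^2}$, which decays geometrically since $\gamma>1$.

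\textbf{Operator norms and error.} On the unit circle $|z|=1$ the eigenvalues of $U$ lie at distance at most $2$ from $1$. I therefore need a bound on $\sup_{|z|=1}|\tilde H_k(\sqrt{\gamma}(z-1))|$. The natural tool is the generating function, or the standard estimate $|H_k(x)|\le \sqrt{2^k k!}\, e^{|x|\sqrt{2k}}$ for complex $x$. This yields $\|\tilde H_k(U)\| \le e^{2\sqrt{2\gamma k}}$, i.e.\ subexponential growth. Multiplying by $|c_k|$, the terms are summable with geometric tail. Truncating at $n$ then gives error $\epsilon \sim (2\gamma)^{-n/2}$ times lower-order factors, so $n=O(\log(1/\epsilon))$.

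\textbf{Cost and normalization.} The depth and gate count follow from the Lemma: $n$ controlled steps plus a state-preparation oracle on $O(\log n)$ qubits, hence $O(\log(1/\epsilon))$ overall. The block-encoding constant is $\sum_{k\le n} |c_k|\,\alpha_k$, where $\alpha_k$ is the subnormalization with which QSP encodes the $k$-th basis polynomial. I would read $\alpha_k$ off the explicit angle formulas, using the product of the normalizing factors of the recurrence coefficients. I expect $\alpha_k$ to grow like $k^{3\gamma^2}$ times exponential factors that cancel the $(2\gamma)^{-k/2}$ decay, so that $\sum_k k^{-1-3\gamma^2}\alpha_k' \sim \sum_{k\le n} k^{3\gamma^2}\cdot k^{-1}$ behaves like $n^{1+3\gamma^2}$ up to constants. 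This gives $O(\log(1/\epsilon)^{1+3\gamma^2})$.

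\textbf{Main obstacle.} The hardest step is pinning down $\alpha_k$ precisely from the explicit Hermite QSP angles. That means identifying how the per-step rescaling needed to make the three-term recurrence unitary accumulates, and explaining where the exponent $3\gamma^2$ comes from. I would first compute the ratio $\alpha_{k+1}/\alpha_k$ from the recurrence coefficients $a_k \propto \sqrt{k/\gamma}$ and then take the product asymptotically via Stirling. If the exponents do not match exactly, the error analysis still goes through, and only the stated normalization exponent would need adjusting.
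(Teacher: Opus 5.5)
Your coefficient bound (Rodrigues' formula plus $k$ integrations by parts) is the same as the paper's. Your truncation-error argument is a genuinely different and valid route. The paper bounds the tail by $\sum_{k>n}|v_k|$, using $|\hat P_k(z)/\alpha_k|\le 1$ on $\mathbb{T}$ together with $|v_k|\le C\gamma^{-k}$ from \eqref{eq:bdvkher}; that bound already needs the asymptotics of the OP-QSP constants $\alpha_k$. You instead bound $\sup_{z\in\mathbb{T}}|\tilde H_k|$ directly from the complex-argument Hermite estimate, which Cauchy's estimate on the generating function gives up to a harmless polynomial factor. This decouples the approximation error from the circuit, which is cleaner.

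The gap is the block-encoding constant, which is half of the statement, and which you leave as an expectation. Lemma~\ref{prop:sed2} gives the constant $\sum_k|v_k|$, where $v_k$ is the coefficient relative to what the circuit actually encodes: the monic $\hat P_k(U)/\alpha_k$, with $\alpha_k=\prod_{j\le k}\alpha(\tau_j,\omega_j)$. Against your orthonormal coefficients this is $v_k=c_k\kappa_k\alpha_k$, where $\kappa_k=(2\gamma)^{k/2}/\sqrt{k!}$ is the leading coefficient of $\tilde H_k$. For the Hermite angles, $\tau_j=\pi/4$, $\alpha(\pi/4,\omega)^2=\cot^2\omega+3+2\tan^2\omega$, and $\cot\omega_j\cot\omega_{j+1}=1/a_j^2\to 0$. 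Hence $\alpha_k$ grows factorially, and everything hinges on how $\kappa_k\alpha_k$ compares with the geometric decay of $c_k$. If it outgrew that decay, $\|v\|_1$ would be polynomial in $1/\epsilon$ rather than polylogarithmic. Your expectation that the exponential factors cancel exactly is not derived. Even granting it, $\sum_{k\le n}k^{3\gamma^2-1}$ is of order $n^{3\gamma^2}$, not $n^{1+3\gamma^2}$.

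The paper takes this missing input from Example~\ref{ex:herm}, equation~\eqref{eq:asyher}: $\alpha_n=\gamma^{-n}\sqrt{n!}\,n^{1/2+3\gamma^2+O(1/n)}$, obtained from the closed form of $\cot\omega_n$ for $b_i=1$ and Stirling. This exponent is tied to the scaling $\hat P_n=\gamma^{-n}H_n(\gamma(x-1))$, with weight $e^{-\gamma^2(x-1)^2/2}$ and $a_n^2=n/\gamma^2$. With your weight $e^{-\gamma(z-1)^2}$ one has $a_n^2=n/(2\gamma)$, so the $\gamma$-dependence of the exponents would not match the statement. The paper then bounds $\|v\|_1$ not by summing $|v_k|$ but by Cauchy--Schwarz and Parseval:
$\|v\|_1\le\sqrt{\mathcal{L}[f^2]}\,\big(\sum_{k\le n}\alpha_k^2\gamma^{2k}/(\sqrt{2\pi}\,k!)\big)^{1/2}=O(n^{1+3\gamma^2})$.
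So the exponent comes from $\alpha_k$ and the $L^2$ hypothesis alone; the derivative hypothesis is used only for the error. Once \eqref{eq:asyher} is in hand, your direct summation would actually do better in the paper's scaling: it gives $|v_k|\le C\gamma^{-k}$ and hence $\|v\|_1=O(1)$. Without computing or bounding $\alpha_k$, however, the normalization claim is not proved.
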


The connection between ${\rm SU}(1,1)$-QSP and orthogonal polynomials on the unit circle also leads to a new characterization of the polynomials achievable by this algorithm. This characterization is expressed in terms of the roots of the polynomials and follows from the zero theorem for orthogonal polynomials on the unit circle \cite{barry2019constant}. For each QSP variant, we also present an approach to the angle-finding problem based on the computation of the moments of an associated linear functional. Finally, the biorthogonality properties of the polynomial sequences arising in bivariate QSP are investigated and shown to impose constraints on the class of polynomials that can be implemented.

The structure of the paper is as follows. In Section~\ref{sec:1}, we describe in terms of polynomial bases the block encoding achieved by the LCU protocol with a select oracle based on a controlled version of a QSP protocol. In Section~\ref{sec:2}, we introduce the \opqsp  variant and show that the sequence of polynomials generated at each step forms a family of orthogonal polynomials. The associated angle-finding problem is then solved by deriving explicit expressions for the angles that implement a target polynomial in terms of the moments of the corresponding linear functional. In Section~\ref{sec:3}, we consider the standard circuit for generalized quantum signal processing introduced in \cite{motlagh2024generalized}. We show that the sequence of polynomials implemented by this circuit forms a family of Laurent biorthogonal polynomials, and we solve the associated angle-finding problem in terms of the corresponding linear functional. In Section~\ref{sec:4}, we consider ${\rm SU}(1,1)$-QSP as introduced in \cite{rossi2023quantum} and establish its connection with the theory of orthogonal polynomials on the unit circle. Finally, in Section~\ref{sec:5}, we study bivariate quantum signal processing and describe two distinct types of biorthogonality satisfied by the polynomial sequences it generates. The resulting constraints on the class of achievable bivariate polynomials are then discussed.

\section{LCU and Polynomial Bases}\label{sec:1}

Consider an iterative algorithm for block-encoding a polynomial in a unitary $U$ acting on a given Hilbert space $\mathcal{H}$. Assuming access to $U$, or to a controlled version thereof, the algorithm implements at each step a unitary $W_j$ acting on $\mathcal{H}$ together with $\ell$ ancillary qubits. We assume that after $i$ steps the construction yields an $(\alpha_i, \ell, 0)$-block encoding of a monic degree-$i$ polynomial $\hat{P}_i$, namely
\begin{equation}\label{eq:bepoly}
(\bra{0}^{\otimes \ell} \otimes I)
\left( \prod_{j=1}^i W_j \right)
(\ket{0}^{\otimes \ell} \otimes I)
= \frac{\hat{P}_i(U)}{\alpha_i}.
\end{equation}
Throughout, we denote monic polynomials (i.e., polynomials whose leading coefficient is equal to 
1) with a hat. The simplest example is obtained by taking $\ell = 0$ and $W_i = U$ for all $i$. In this case, we have $\hat{P}_i(U) = U^i$ and block encoding constants $\alpha_i = 1$. Another example is obtained by taking $\ell = 1$ and choosing $W_j$ to be the controlled-$U$ operation together with the ancillary qubit rotation $R(\theta,\phi)$ appearing in a single step of the generalized QSP protocol (see Figure \ref{fig:circuitGQSP}). In this setting, the sequence of polynomials $\{\hat{P}_i(U)\}_{i = 0}^n$ depends on the chosen sequence of QSP angles. In general, we refer to any algorithm with block structure \eqref{eq:bepoly} as a QSP or QSP-like protocol, in the sense that it realizes a polynomial transformation of a unitary operator. 

\begin{figure}[t]
    \centering
\begin{quantikz}[wire types = {q,q,n,q}]
   & \octrl{1}& \gate{R(\theta,\phi)} & \\ &\gate[3]{U}& & \\
  &  &  \vdots&\\
  && &  \\
\end{quantikz}
    \caption{Circuit associated to generalized QSP (one iteration)}
    \label{fig:circuitGQSP}
\end{figure}

While such an algorithm only yields a block-encoding of the final polynomial $\hat{P}_n(U)$ in the sequence, the full family $\{\hat{P}_i(U)\}_{i=0}^n$ in fact spans the space of all polynomials of degree at most $n$. In other words, any polynomial $P(U)$ of degree $n$ admits a decomposition of the form
\begin{equation}\label{eq:neqc1}
P(U)
=
\sum_{i=0}^n v_i \frac{\hat{P}_i(U)}{\alpha_i}.
\end{equation}
Moreover, the expansion of functions in different polynomial bases is a well-established subject, which naturally motivates the development of a block-encoding method based on expansions such as \eqref{eq:neqc1}. Since this approach requires taking linear combinations of the polynomial block-encodings produced at each step of a QSP or QSP-like protocol, it can be implemented using the LCU primitive. The following lemma characterizes the block encoding obtained when a controlled version of a QSP or QSP-like algorithm is used as the select oracle within the LCU framework.

\begin{lemma}\label{prop:sed2}
  Let $U$ be a unitary acting on $\mathcal{H}$, and let $\{W_i\}_{i=1}^n$ be a sequence of unitaries acting on $(\mathbb{C}^2)^{\otimes \ell} \otimes \mathcal{H}$ with the block structure~\eqref{eq:bepoly}, where $\hat{P}_i$ is a monic polynomial of degree $i$ and $\alpha_i$ is a block encoding constant. Then an $(\|v\|_1, n +\ell , 0)$-block encoding of $P(U)$ with
\begin{equation}\label{eq:bepb}
P(U)
=
\sum_{i=0}^n v_i \frac{\hat{P}_i(U)}{\alpha_i},
\qquad
\|v\|_1 = |v_0| + |v_1| + \dots + |v_n|,
\end{equation}
is obtained via the LCU protocol with select oracle
$
U_{\mathrm{SEL}} = W_n^c W_{n-1}^c \cdots W_1^c,
$
defined in terms of the controlled versions $W_i^c$ of $W_i$ acting on $ \mathcal{H}_{anc}\otimes(\mathbb{C}^2)^{\otimes \ell} \otimes \mathcal{H}$ as
    \begin{equation*}
        W_i^c \ket{\underline{j}} \otimes \ket{\psi} = \left\{
	\begin{array}{ll}
		\ket{\underline{j}} \otimes W_i \ket{\psi}   & \mbox{if } j \geq i\\
		\ket{\underline{j}} \otimes \ket{\psi} & \mbox{if } j < i,
	\end{array}
\right.
    \end{equation*}
    with $\mathcal{H}_{anc} = \text{span}\{\, \ket{\underline{j}} \,|\, j = 0, 1, \dots, n\}$.
\end{lemma}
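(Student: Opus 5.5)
We will verify that the standard LCU construction, with $U_{\mathrm{SEL}}$ as defined, produces exactly the stated block encoding.

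\textbf{Preparation oracle.} Define a state-preparation unitary $U_{\mathrm{PREP}}$ on $\mathcal{H}_{anc}$ by
\[
U_{\mathrm{PREP}}\ket{\underline{0}}=\frac{1}{\sqrt{\|v\|_1}}\sum_{j=0}^n \sqrt{|v_j|}\,\ket{\underline{j}}, \qquad
U_{\mathrm{PREP}}'\ket{\underline{0}}=\frac{1}{\sqrt{\|v\|_1}}\sum_{j=0}^n \sqrt{|v_j|}\,e^{i\arg v_j}\ket{\underline{j}}.
\]
The phases are absorbed into the second oracle. Alternatively, a single PREP can be used with the phases inserted through a diagonal unitary on $\mathcal{H}_{anc}$. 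The overall circuit is $(U_{\mathrm{PREP}}'^{\dagger}\otimes I)\,U_{\mathrm{SEL}}\,(U_{\mathrm{PREP}}\otimes I)$, and we project onto $\ket{\underline{0}}\ket{0}^{\otimes\ell}$ on the ancillas $\mathcal{H}_{anc}\otimes(\mathbb{C}^2)^{\otimes\ell}$.

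\textbf{Step 1: action of $U_{\mathrm{SEL}}$ on each branch.} By the definition of $W_i^c$, on the branch $\ket{\underline{j}}$ the factor $W_i^c$ acts as $W_i$ exactly when $i\le j$, and as the identity otherwise. Hence
\[
U_{\mathrm{SEL}}\bigl(\ket{\underline{j}}\otimes\ket{\psi}\bigr)=\ket{\underline{j}}\otimes W_j\cdots W_1\ket{\psi}.
\]
The order is consistent with \eqref{eq:bepoly}, since $W_1$ is applied first. So $U_{\mathrm{SEL}}=\sum_j \ket{\underline{j}}\bra{\underline{j}}\otimes \prod_{i\le j}W_i$, with the empty product for $j=0$ equal to the identity.

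\textbf{Step 2: projection onto the $\ell$-qubit register.} Applying $\bra{0}^{\otimes\ell}\cdot\ket{0}^{\otimes\ell}$ to each branch and using \eqref{eq:bepoly} gives $\hat P_j(U)/\alpha_j$. For $j=0$ this requires the convention $\hat P_0=1$ and $\alpha_0=1$.

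\textbf{Step 3: projection onto $\mathcal{H}_{anc}$.} Sandwiching between the two preparation states, the $\ket{\underline{j}}$ components are orthogonal, so only diagonal terms survive. This yields
\[
\frac{1}{\|v\|_1}\sum_j |v_j|e^{i\arg v_j}\,\frac{\hat P_j(U)}{\alpha_j}=\frac{P(U)}{\|v\|_1},
\]
which is exactly an exact ($\epsilon=0$) block encoding with constant $\|v\|_1$.

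\textbf{Ancilla count.} $\mathcal{H}_{anc}$ has dimension $n+1$. If it is encoded in unary, or more simply if $W_i^c$ is controlled on the $i$-th qubit of a thermometer encoding $\ket{\underline{j}}=\ket{1^j0^{n-j}}$, then $n$ qubits suffice. Together with the $\ell$ QSP ancillas this gives $n+\ell$ ancillas.

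The main subtlety is this ancilla bookkeeping: one must choose the thermometer encoding so that the count is $n$ rather than $\lceil\log_2(n+1)\rceil$, and so that each $W_i^c$ becomes a singly-controlled $W_i$. Everything else is direct computation.
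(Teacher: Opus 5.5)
Your argument is the paper's: a direct application of the LCU lemma. On branch $\ket{\underline{j}}$, $U_{\mathrm{SEL}}$ acts as $W_j\cdots W_1$, the block structure \eqref{eq:bepoly} turns this into $\hat P_j(U)/\alpha_j$, and the ancilla count is $n$ qubits for $\mathcal{H}_{anc}$ in a unary (your thermometer) encoding plus the $\ell$ QSP ancillas. There is one small slip to fix in Step 3. Since $\bra{\underline{0}}(U'_{\mathrm{PREP}})^{\dagger}=(U'_{\mathrm{PREP}}\ket{\underline{0}})^{\dagger}$ carries the phases $e^{-i\arg v_j}$, your circuit as written yields $\frac{1}{\|v\|_1}\sum_j \bar v_j\,\hat P_j(U)/\alpha_j$ rather than $P(U)/\|v\|_1$. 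Define $U'_{\mathrm{PREP}}\ket{\underline{0}}$ with $e^{-i\arg v_j}$ instead. The paper avoids this by specifying the bra $\bra{\underline{0}}\tilde V_{\mathrm{prep}}$ directly, with $v_j=\beta_j\gamma_j$.
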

\begin{proof}
    This proposition follows from a simple application of the LCU lemma. Let $T$ be a matrix that admits a decomposition as a linear combination of unitary matrices $U_i$,
\begin{equation*}
T = \sum_{i=0}^n v_i U_i.
\end{equation*}
The linear combination of unitaries (LCU) primitive enables the construction of a block encoding of $T$ using preparation oracles together with a select oracle,
\begin{equation*}
    U_{\mathrm{SEL}} \ket{\underline{j}} \otimes \ket{\psi} = \ket{\underline{j}} \otimes U_i \ket{\psi} 
\end{equation*}
\begin{equation*}
    V_{\mathrm{prep}} \ket{\underline{0}}  = \frac{1}{\sqrt{\|v\|_1}} \sum_{i =0}^N \beta_i \ket{\underline{j}}, \quad  \bra{\underline{0}}\tilde{V}_{\mathrm{prep}}   = \frac{1}{\sqrt{\|v\|_1}} \sum_{i =0}^N \gamma_i \bra{\underline{j}}
\end{equation*}
with $|\beta_i| = |\gamma_i| = \sqrt{|v_i|}$ and $v_i = \beta_i \gamma_i$. The quantity $\|v\|_1 = |v_0| + \cdots + |v_n|$ denotes the $\ell_1$-norm of the vector with entries $v_i$. By the LCU lemma, the unitary obtained from the successive application of these oracles has the following block structure
\begin{equation*}
   ( \bra{\underline{0}} \otimes I ) (\tilde{V}_{\mathrm{prep}} \otimes I )U_{\mathrm{SEL}}(V_{\mathrm{prep}}\otimes I )( \ket{\underline{0}} \otimes I )   = \frac{T}{\|v\|_1}
\end{equation*}
For the select oracle $
U_{\mathrm{SEL}} = W_n^c W_{n-1}^c \cdots W_1^c
$, one obtains that $U_i = \prod_{j=1}^i W_j.$ Therefore, from the block structure in equation \eqref{eq:bepoly}, we obtain
\begin{equation*}
  ( \bra{\underline{0}} \otimes  \bra{0}^{\otimes \ell} \otimes I ) (\tilde{V}_{\mathrm{prep}} \otimes I )U_{\mathrm{SEL}}(V_{\mathrm{prep}}\otimes I )( \ket{\underline{0}} \otimes  \ket{0}^{\otimes \ell} \otimes I ) = \frac{1}{\|v\|_1}\sum_{i=0}^n v_i \frac{\hat{P}_i(U)}{\alpha_i}
\end{equation*}
The result follows from this, with the $n+\ell$ ancillary qubits for the block encoding arising from the $\ell$ ancilla qubits on which $W_j$ acts, together with the assumption of a simple unary encoding of $\ket{\underline{j}}$ using $n$ ancilla qubits.
\end{proof}
The previous proposition shows that the sequence of polynomials block-encoded at each step of a QSP protocol provides a basis for block-encoding any (properly rescaled) target polynomial $P(U)$. In the case $\ell = 0$ and $W_i = U$ for all $i$, the decomposition~\eqref{eq:bepb} of $P(U)$ is expressed in the monomial basis, with $\hat{P}_i(U) = U^i$. The associated select oracle where $\ket{\underline{i}}$ is identified with its unary encoding is presented in Figure \ref{fig:naiveLCUsel}.

\begin{figure}[h]
    \centering
    % Left circuit: monomial basis, shifted right
    \begin{minipage}{0.45\textwidth}
        \centering
        \hspace*{-0.2\textwidth} % shift right by 5% of text width
        \begin{quantikz}[wire types={q,n,q,q,n,q}]
          & & \ctrl{3} &    &  & & \\
          & \vdots &    &  &  & \vdots & \\
          & & &   &  \ctrl{1} & & \\
          & & \gate[3]{U} & \ \ldots\  & \gate[3]{U} & & \\
          & \vdots & \dots &  & & \vdots &  \\
          & &  &  \ \ldots\  & & & \\
        \end{quantikz}
        \subcaption{Select circuit for monomial basis}
        \label{fig:naiveLCUsel}
    \end{minipage}
    \hfill
    % Right circuit: QSP basis scaled
    \begin{minipage}{0.45\textwidth}
        \centering
        \scalebox{0.8}{ % adjust scale factor as needed
        \hspace*{-0.2\textwidth}
        \begin{quantikz}[wire types={q,n,q,q,q,n,q}]
          & & \ctrl{4} & \ctrl{4} & & & & & \\
          & \vdots & & & & & & \vdots & \\
          & & & & & \ctrl{2} & \ctrl{2} & & \\
          & & \octrl{1} & & & \octrl{1} & & & \\
          & & \gate[3]{U} & \gate{R(\theta_1,\phi_1)} & \ \ldots\  & \gate[3]{U} & \gate{R(\theta_n,\phi_n)} & & \\
          & \vdots & & & & & & \vdots & \\
          & & & & \ \ldots\  & & & & \\
        \end{quantikz}
        }
        \subcaption{Select circuit with generalized QSP basis}
        \label{fig:LCUQSP}
    \end{minipage}
    \caption{Select circuits for LCU implementations of a polynomial block encoding: monomial basis (left) and generalized QSP based basis (right).}
\end{figure}

In the case based on the generalized QSP setting, the decomposition~\eqref{eq:bepb} of $P(U)$ is expressed with respect to a basis $\{\hat{P}_i(U)\}_{i=0}^n$ determined by the chosen sequence of QSP angles. The corresponding select oracle, where $\ket{\underline{i}}$ is identified with its unary encoding, is shown in Figure~\ref{fig:LCUQSP}.  If $|P(z)| \le 1$, it is known that the angles can be chosen such that $\hat{P}_n(U)/\alpha_n = P(U)$ \cite{motlagh2024generalized}, thereby recovering the standard QSP block encoding by setting $v_i = \delta_{i n}$. For arbitrary choices of angles, this approach to encoding $P(U)$ can be viewed as an intermediate strategy between an LCU method based on the monomial basis and the standard QSP construction, which requires a preprocessing step to determine the sequence of angles yielding $\hat{P}_n(U)/\alpha_n = P(U)$. Given a polynomial basis $\{\hat{P}_i(U)\}_{i=0}^n$ for which the corresponding QSP angles are known, one can test whether a target polynomial admits expansion coefficients $v_i$ leading to a sufficiently small block encoding constant $\|v\|_1$. If so, the circuit of Figure~\ref{fig:LCUQSP} can be used with these angles to implement the desired block encoding.\\

Proposition~\ref{prop:sed2} motivates a detailed analysis of the sequence of polynomials realized as block encodings at each step of QSP protocols. In particular, it naturally leads to the study of their orthogonality and biorthogonality relations, since such relations can be used to determine the coefficients $v_i$ in the decomposition~\eqref{eq:bepb} of a target polynomial $P(U)$. 

The next section introduces a QSP protocol for block encoding bases of orthogonal polynomials, including orthogonal polynomials on the real line. Orthogonality and biorthogonality properties for sequences arising in generalized QSP, continuous-variable QSP, and bivariate QSP protocols are discussed in the subsequent sections.

\section{QSP for orthogonal polynomials }\label{sec:2}

To illustrate the connection between QSP and orthogonality in polynomial sequences, we present in this section an example based on orthogonal polynomials (OP); see \cite{chihara2011introduction} for a reference on OP. The associated protocol is distinct from the standard and generalized QSP framework (which is covered in Section \ref{sec:3}) and will be referred to as \opqsp . %It is considerably less efficient than standard QSP, as it requires post-selection at each step, and should therefore be regarded primarily as an instructive example.

\subsection{Circuit and recurrence relation for \opqsp }

Let $U$ denote an arbitrary unitary acting on a finite-dimensional Hilbert space $\mathcal{H}$ and assume access to the doubly $0-$controlled version of $U$, $C^2(U)$, acting on $\mathbb{C}^2 \otimes \mathbb{C}^2 \otimes \mathcal{H}$. The circuit in Figure \ref{fig:circuitOPRL} block-encodes a non-unitary transformation on $\mathbb{C}^2 \otimes \mathcal{H}$ flagged by the top-most qubit being $0$.  The following lemma describes the transformation induced when $0$ is measured. 

\begin{figure}[t]
    \centering
    \scalebox{0.9}{
\begin{quantikz}[wire types = {q,q,q,n,q}]
    \lstick{$\ket{0}$}& \gate{X}& \gate[2]{R_{xy}(2(\tau- \pi))}&\gate{X} & \gate{R_y\left(2 \tan^{-1}(\cot(\omega)/\sqrt{2})\right)} & \gate[2]{R_{xy}\left(\frac{\pi}{2} \right)} & \octrl{1} &\gate{R_y(2\omega)} &  \qw \\
  & \gate{Z} & &  &  & & \octrl{1}  & &  \\
   & & & &  & &  \gate[3]{U} & & \\
 & \vdots& &  &  & \vdots & & \vdots&  \\
  &   & & &  &  & & & \\
\end{quantikz}}
    \caption{Circuit associated to \opqsp  (one iteration). $R_{xy}$ and $R_y$ denote respectively the $XY$ interaction and rotation around the $y$ axis gates as given in~\eqref{eq:RxyRy}.}
    \label{fig:circuitOPRL}
\end{figure}

\begin{lemma}\label{lem:BE}
   Let $W$ be the unitary acting on $\mathbb{C}^2 \otimes \mathbb{C}^2 \otimes \mathcal{H}$ that implements the circuit in Figure \ref{fig:circuitOPRL}. %When viewed as a block matrix with respect to the first ancillary qubit, its $(0,0)$-block is given by
   The $W$ transformation enacts for any $\tau\in (-\pi/2,\pi/2)$ and $\omega\in (-\pi,\pi)\setminus \{0, \pm \frac{\pi}{2}\}$ an $(\alpha(\tau,\omega),1,0)$-block encoding of a matrix $T(\tau, \omega)$, i.e.
    \begin{equation*}
        (\bra{0}\otimes I)W(\tau,\omega) (\ket{0}\otimes I)  =\frac{T(\tau, \omega)}{\alpha(\tau, \omega)},
    \end{equation*}
    where
    \begin{equation}\label{eq:matOPRL}
     T(\tau,\omega) = \begin{pmatrix}
        U - \frac{\tan (\tau) + 1}{2} &  \frac{i \tan (\omega ) }{\sqrt{2}\cos (\tau )}  \\
\frac{i \tan (\tau )}{2}  \left(2 \tan (\omega )+ \cot (\omega )\right)- \frac{i\cot (\omega )}{2}  & 0  \end{pmatrix}, \quad \alpha(\tau, \omega) =    \frac{1}{\cos \tau }\sqrt{\frac{2 (1 + \sin^2 \omega)}{ \sin^2 2\omega }}.
\end{equation}
\end{lemma}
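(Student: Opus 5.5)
The proof is a direct gate-by-gate computation of the circuit in Figure \ref{fig:circuitOPRL}. First fix the conventions from \eqref{eq:RxyRy}: $R_y(\theta)=e^{-i\theta Y/2}$, and $R_{xy}(\theta)=e^{-i\theta(XX+YY)/4}$. The $XY$ interaction acts trivially on $\ket{00},\ket{11}$ and as a rotation $\begin{pmatrix}\cos(\theta/2) & -i\sin(\theta/2)\\ -i\sin(\theta/2)&\cos(\theta/2)\end{pmatrix}$ on $\operatorname{span}\{\ket{01},\ket{10}\}$.

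Since the target $T$ is a $2\times 2$ block matrix indexed by the second qubit, with entries that are operators on $\mathcal{H}$, I will treat the two ancillas as a four-dimensional space with operator-valued coefficients. The top ancilla starts in $\ket{0}$ and is projected onto $\bra{0}$ at the end. So I need the four entries $\bra{0 b'}W\ket{0 b}$ for $b,b'\in\{0,1\}$.

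I will split $W=A_2\,C^2(U)\,A_1$. Here $A_1$ is the purely two-qubit unitary formed by the gates before the controlled-$U$: $X\otimes Z$, then $R_{xy}(2(\tau-\pi))$, then $X\otimes I$, then $R_y(\cdot)\otimes I$, then $R_{xy}(\pi/2)$. The final gate is $A_2=R_y(2\omega)\otimes I$. The operator $C^2(U)$ is diagonal in the ancilla basis, with $U$ on $\ket{00}$ and $I$ elsewhere. Inserting a resolution of the identity over the ancilla basis $\ket{cd}$ gives
\[
\bra{0b'}W\ket{0b}=\sum_{c,d}\bra{0b'}A_2\ket{cd}\,\bigl(U\delta_{cd,00}+I(1-\delta_{cd,00})\bigr)\,\bra{cd}A_1\ket{0b}.
\]
Because $A_2$ acts only on the first qubit, only $d=b'$ contributes. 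Each entry of $T/\alpha$ is therefore a coefficient times $U$, coming only from the $(00,00)$ path, plus a scalar.

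So the main step is computing the columns $A_1\ket{00}$ and $A_1\ket{01}$ explicitly as functions of $\tau$ and $\omega$. Write $\beta=\tan^{-1}(\cot\omega/\sqrt2)$, so that $\cos\beta=\sqrt2\sin\omega/\sqrt{1+\sin^2\omega}$ and $\sin\beta=\cos\omega/\sqrt{1+\sin^2\omega}$. The factor $\sqrt{1+\sin^2\omega}$ appearing in $\alpha$ comes from these expressions. Combined with $\cos\omega,\sin\omega$ from the last rotation, it produces the denominator $\sin^2 2\omega$ and the $\tan\omega$, $\cot\omega$ terms.

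Then I read off the entries and normalize:
\begin{enumerate}
\item The coefficient of $U$ in the $(0,0)$ entry should be $1/\alpha$, with the expected factor $\cos\tau$ arising from the first $R_{xy}$ rotation by $\tau-\pi$.
\item Dividing every entry by this coefficient should give the scalar $-(\tan\tau+1)/2$ in the $(0,0)$ entry.
\item The $(1,1)$ entry should vanish: no $U$ term appears there, and the scalar terms should cancel.
\item The off-diagonal entries should match \eqref{eq:matOPRL}.
\end{enumerate}
The restrictions $\tau\in(-\pi/2,\pi/2)$ and $\omega\notin\{0,\pm\pi/2\}$ are exactly what is needed for $\cos\tau$, $\sin 2\omega$, $\tan\omega$ and $\cot\omega$ to be finite and nonzero, so that $\alpha$ is well defined.

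The main obstacle is the bookkeeping in $A_1$. The $X$ and $Z$ conjugations introduce signs, the shift $\tau-\pi$ flips signs, and there is a possible global phase. I would track each gate's action on the four basis states in a small table, possibly absorbing the global phase into the normalization. Then I would check the final formulas numerically at one parameter value, for instance $\tau=0$, $\omega=\pi/4$, before simplifying the trigonometric identities. In particular I would confirm the sign conventions by verifying that the $(1,1)$ entry vanishes.
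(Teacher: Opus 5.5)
Your plan is the same direct multiplication as the paper's proof; only the grouping differs. The paper first computes the top-left block of the last four gates, where the choice $\tan\beta=\cot\omega/\sqrt2$ kills the $(1,1)$ entry. It then uses that the first three gates only mix the $\ket{00}$ and $\ket{11}$ columns, sending $\ket{00}\mapsto-\cos\tau\ket{00}+i\sin\tau\ket{11}$ and $\ket{01}\mapsto-\ket{01}$. You are right to use the figure's $R_{xy}(2(\tau-\pi))$: the $R_{xy}(2\tau)$ written in the paper's proof would flip the relative sign of the two columns, and the block would no longer be proportional to $T$.

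The one thing that will not come out as you predict is the overall sign, which is exactly the bookkeeping you flagged.

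\textbf{Your formulas for $\beta$.} Since $\beta\in(-\pi/2,\pi/2)$, one has $\cos\beta=\sqrt2\,|\sin\omega|/\sqrt{1+\sin^2\omega}>0$ and $\sin\beta=\cot\omega\cos\beta/\sqrt2$. Your formulas drop the absolute value and are off by a sign when $\sin\omega<0$.

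\textbf{The actual block.} Carrying out the computation, the top-left block is exactly $\kappa\,T(\tau,\omega)$ with $\kappa=-\cos\tau\cos\omega\cos\beta$ and $|\kappa|=1/\alpha(\tau,\omega)$. So every entry and every relative sign in \eqref{eq:matOPRL} checks out. However, $\kappa=+1/\alpha$ only when $\cos\omega<0$. At your own test point $\tau=0$, $\omega=\pi/4$, you get $\kappa=-1/\sqrt3=-1/\alpha$.

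\textbf{Consequence.} A factor $-1$ cannot be absorbed into the positive $\alpha$ of the statement, so your step 1 fails for $\omega\in(-\pi/2,\pi/2)$. The block actually equals $-\mathrm{sgn}(\cos\omega)\,T/\alpha$, and the paper's proof does not notice this either.

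\textbf{Repair.} $T$ and $\alpha$ depend on $\omega$ only through $\tan\omega$, $\cot\omega$, $\sin^2\omega$ and $\sin^2 2\omega$, and $\beta$ depends only on $\cot\omega$. All of these are invariant under $\omega\mapsto\omega\pm\pi$, whereas $R_y(2\omega\pm2\pi)=-R_y(2\omega)$. You can therefore always pick the representative with $\cos\omega<0$ inside the allowed range, which gives the statement exactly. Alternatively, carry the sign along; it is harmless downstream because it can be absorbed into the LCU coefficients $v_i$ without changing $\|v\|_1$.
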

\begin{proof}
The gates $C^2(U)$, $R_{xy}$, and $R_y$ admit the matrix representation $C^2(U) = \text{diag}(U, 1,1,1)$ and
\begin{equation}
   R_{xy}(\theta) =  
\begin{pmatrix}
 1 & 0 & 0 & 0 \\
 0 & \cos \left(\frac{\theta }{2}\right) & -i \sin \left(\frac{\theta }{2}\right) & 0 \\
 0 & -i \sin \left(\frac{\theta }{2}\right) & \cos \left(\frac{\theta }{2}\right) & 0 \\
 0 & 0 & 0 & 1 \\
\end{pmatrix}, \quad R_y(\theta) =
\begin{pmatrix}
 \cos \left(\frac{\theta }{2}\right) & -\sin \left(\frac{\theta }{2}\right)  \\

 \sin \left(\frac{\theta }{2}\right)  & \cos \left(\frac{\theta }{2}\right) 
\end{pmatrix}.\label{eq:RxyRy}
\end{equation}
By linearity of quantum operations and standard matrix multiplication, the transformation induced by the last four gates of the circuit has the following structure in its first $2 \times 2$ block:
{\small
\begin{equation*}
    (R_{y}( 2 \omega) \otimes I) C^2(U) R_{xy}\left(\frac{\pi}{2}\right) (R_y\left(2 \tan^{-1}(\cot(\omega)/\sqrt{2})\right)\otimes I) = 
\begin{pmatrix}
 \frac{(2 U-1) \cos (\omega )}{\sqrt{2} \sqrt{\cot ^2(\omega )+2}} & \frac{i \sin (\omega )}{\sqrt{\cot ^2(\omega )+2}} & \cdot& \quad \cdot \\
 -\frac{i \cos (\omega ) \cot (\omega )}{\sqrt{2} \sqrt{\cot ^2(\omega )+2}} & 0 & \cdot & \quad \cdot\\
  \cdot & \cdot & \cdot& \quad \cdot\\
  \cdot & \cdot & \cdot& \quad \cdot\\
\end{pmatrix}
\end{equation*}}
In particular, the angles in the two $R_y$ rotations are related so that the second diagonal entry vanishes. The first three gates $(X \otimes I) R_{xy}(2\tau)(X \otimes Z)$ mix the first and last columns, while leaving the second and third columns unchanged. Composing the action of all gates, we find that the first block of the matrix representation of the circuit coincides with the r.h.s of equation \eqref{eq:matOPRL}, thereby completing the proof.
\end{proof}

This lemma allows one to determine how a state transforms under multiple iterations of the circuit. If the circuit is applied $i$ times, each with a distinct first ancillary qubit and angles $\tau_j$ and $\omega_j$, the resulting transformation exhibits a block structure given by the product of the matrices $T(\tau_j,\omega_j)$:
\begin{equation}\label{eq:bsoprl}
    (\bra{0}^{\otimes i} \otimes I ) W_i W_{i-1} \dots W_{1} (\ket{0}^{\otimes i} \otimes I ) = \frac{T(\tau_i,\omega_i)T(\tau_{i-1},\omega_{i-1}) \dots T(\tau_{1},\omega_{1})}{\alpha(\tau_i,\omega_i)\alpha(\tau_{i-1},\omega_{i-1}) \dots \alpha(\tau_{1},\omega_{1})}
\end{equation}
where $W_j$ denotes the unitary $W(\tau_j, \omega_j)$ acting such that the first ancillary qubit of the circuit corresponds to the $j$-th ancillary qubit of the overall register. Since the matrices $T$ depend linearly on $U$, their product yields a block encoding of a polynomial of degree $i$ in $U$,
\begin{equation}\label{eq:block encoding}
   \begin{bmatrix}
       \hat{P}_i(U) & \cdot \\
       Q_i(U) & \cdot
   \end{bmatrix} 
   = T(\tau_i,\omega_i,)\dots T(\tau_1,\omega_1).
\end{equation}
Therefore, the transformation associated with $i$ iterations of the circuit in Figure~\ref{fig:circuitOPRL} exhibits exactly the block structure described in equation~\eqref{eq:bepoly},with $\ell = i + 1$. The following proposition characterizes the sequence of polynomials realized as block encodings in terms of a recurrence relation. 
\begin{proposition}\label{prop:qsp-oprl}
     Let $\{\hat{P}_i(x)\mid i=0,1,\ldots,n\}$ be a sequence of monic polynomials. The following statements are equivalent:
     \begin{enumerate}
     
        \item The polynomials satisfy the following three-term recurrence relation, with $\deg \hat{P}_i=i$, $a_i^2,\, b_i \in \mathbb{R}$ and $a_i \neq 0 $,
        \begin{equation}\label{eq:3tr}
    \hat{P}_{i+1}(x) = (x-b_i)\hat{P}_i(x) - a_i^2 \hat{P}_{i-1}(x).
\end{equation}
     
         \item There exist angles $\{\omega_i  \}_{i=1}^n$ and $\{\tau_i \}_{i=1}^n$ such that $\tau_i\in (-\pi/2,\pi/2)$ and $\omega_i\in (-\pi,\pi)\setminus \{0, \pm \frac{\pi}{2}\}$ and such that, for each $i = 0, 1, \dots, n$, an $(\alpha_i, n + 1, 0)$-block encoding of $ \hat{P}_i(U)$ is realized after $i$ iterations of the circuit shown in Figure~\ref{fig:circuitOPRL}, with block encoding constants $\alpha_i$ given by
         \begin{equation}\label{eq:beconst}
    \alpha_i =  \prod_{j = 1}^i \alpha(\tau_j, \omega_j) =\prod_{j = 1}^i \left( \frac{1}{\cos \tau_j}\sqrt{\frac{2(1 + \sin^2 \omega_j)}{ \sin^2 (2\omega_j)}}\right).
\end{equation}
     \end{enumerate} 
\end{proposition}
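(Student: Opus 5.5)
The plan is to read off the recurrence directly from the product of transfer matrices in \eqref{eq:block encoding}. Write the first column of $T(\tau_i,\omega_i)\cdots T(\tau_1,\omega_1)$ as $(\hat P_i(U),Q_i(U))^T$, with $\hat P_0=1$ and $Q_0=0$. Multiplying on the left by $T(\tau_{i+1},\omega_{i+1})$ from Lemma~\ref{lem:BE} gives the first-order system
\begin{equation*}
\hat P_{i+1}=\Big(x-\tfrac{1+\tan\tau_{i+1}}{2}\Big)\hat P_i+d_{i+1}Q_i,\qquad Q_{i+1}=e_{i+1}\hat P_i .
\end{equation*}
Here
\begin{equation*}
d_j=\frac{i\tan\omega_j}{\sqrt2\cos\tau_j},\qquad e_j=\frac{i}{2}\Big[\tan\tau_j\,(2\tan\omega_j+\cot\omega_j)-\cot\omega_j\Big].
\end{equation*}
Eliminating $Q_i=e_i\hat P_{i-1}$ gives a three-term recurrence of the form \eqref{eq:3tr}, with
\begin{equation*}
b_i=\tfrac{1+\tan\tau_{i+1}}{2},\qquad a_i^2=-d_{i+1}e_i=\frac{\tan\omega_{i+1}}{2\sqrt2\cos\tau_{i+1}}\Big[\tan\tau_i(2\tan\omega_i+\cot\omega_i)-\cot\omega_i\Big].
\end{equation*}
Both quantities are manifestly real. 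For $i=0$ the term involving $\hat P_{-1}$ is absent because $Q_0=0$. Monicity and $\deg\hat P_i=i$ follow by induction, since only the $(1,1)$ entry of $T$ contains $U$, and it does so with coefficient $1$. The block-encoding constant \eqref{eq:beconst} is just the product formula \eqref{eq:bsoprl}. This gives (2)$\Rightarrow$(1), provided the bracket above is nonzero so that $a_i\neq 0$. I would verify, or add as a mild genericity assumption, that the bracket does not vanish; it vanishes exactly when $\tan\tau_i=1/(2\tan^2\omega_i+1)$.

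For (1)$\Rightarrow$(2) I will invert these formulas step by step. Since $\tau\mapsto(1+\tan\tau)/2$ is a bijection from $(-\pi/2,\pi/2)$ onto $\mathbb R$, each $\tau_{i+1}$ is uniquely fixed by $\tan\tau_{i+1}=2b_i-1$. The angles $\omega_i$ are then chosen recursively. The angle $\omega_1$ is essentially free, because $Q_0=0$. Given $\tau_i,\omega_i,\tau_{i+1}$, I set
\begin{equation*}
\tan\omega_{i+1}=\frac{2\sqrt2\,a_i^2\cos\tau_{i+1}}{\tan\tau_i(2\tan\omega_i+\cot\omega_i)-\cot\omega_i}.
\end{equation*}
The right-hand side is a nonzero real number because $a_i\neq0$, so $\omega_{i+1}$ can be taken in $(-\pi/2,\pi/2)\setminus\{0\}$, as required. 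The same computation as in the first part then shows that the circuit reproduces the recurrence, and hence, since $\hat P_0=1$ and $\hat P_1$ match, it reproduces the given polynomials.

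The main obstacle is the denominator: the recursion fails if the bracket vanishes, i.e.\ if $\tan\tau_i(2\tan^2\omega_i+1)=1$ at some step. The only free parameter is $\omega_1$, which lets me avoid this degeneracy at $i=1$. After that each $\omega_i$ is forced, up to the sign/shift ambiguity $\omega\mapsto\omega\pm\pi$, which does not change the tangent. I would first try to show that the degenerate locus is avoided for all but finitely many choices of $\omega_1$. The idea is to view each bracket as a nonconstant rational function of $\tan\omega_1$ and conclude by a finite-union-of-zero-sets argument. If that fails for special data, the fallback is the standard rescaling freedom. This means exploiting the fact that the off-diagonal entries only matter through the products $d_{i+1}e_i$, and inserting the extra single-qubit phase that the $X$/$Z$ conjugation in Figure~\ref{fig:circuitOPRL} permits, so as to move off the degenerate locus without changing $a_i^2$ or $b_i$.
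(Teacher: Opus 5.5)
Your proposal is correct and follows the paper's own route: read the first-order system off $T(\tau,\omega)$, eliminate $Q_i$ to get exactly the coefficients in \eqref{eq:recangle}, and invert angle by angle as in \eqref{eq:inv1}--\eqref{eq:inv2}. Your treatment of the vanishing bracket goes beyond the paper, which ignores it: the bracket does vanish for admissible angles (take $\tan\tau_i=1/(2\tan^2\omega_i+1)\in(0,1)$), which gives $a_i=0$, so (2)$\Rightarrow$(1) needs that exclusion as an added hypothesis and cannot be verified; for (1)$\Rightarrow$(2), where the paper's formula \eqref{eq:inv2} would silently return the forbidden value $\omega_{i+1}=\pm\pi/2$, choosing $\omega_1$ outside a finite bad set already suffices (each step $\tan\omega_i\mapsto\tan\omega_{i+1}$ is a nonconstant real rational map), so your phase-insertion fallback is unnecessary.
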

\begin{proof}
    From equation~\eqref{eq:block encoding} and the structure of the matrix $T(\tau, \omega)$, it follows that the sequence of polynomials block-encoded at each iteration of the circuit shown in Figure~\ref{fig:circuitOPRL} satisfies
    \small
    \begin{align}
    \begin{bmatrix}
        \hat{P}_{i}(U)& \cdot \\
        Q_{i}(U)&\cdot 
    \end{bmatrix} = T(\tau_i, \omega_i) \begin{bmatrix}
        \hat{P}_{i-1}(U)& \cdot \\
        Q_{i-1}(U)&\cdot 
    \end{bmatrix}=\begin{bmatrix}
(U - \frac{\tan \tau_i + 1}{2}) \hat{P}_{i-1}(U)+   i \frac{\tan \omega_i}{\sqrt{2}\cos{\tau_i}}Q_{i-1}(U) & \cdot\\  \left(\frac{i \tan (\tau_i )}{2}  \left(2 \tan (\omega_i )+ \cot (\omega_i )\right)- \frac{i\cot (\omega_i )}{2}\right)\hat{P}_{i-1}(U) & \cdot
    \end{bmatrix}\label{eq:recur}.
\end{align}
\normalsize
This implies that $Q_i$ and $\hat{P}_{i-1}$ are linearly dependent,
\begin{equation}\label{eq:QiXPR}
    Q_i(U) = \left(\frac{i \tan (\tau_i )}{2}  \left(2 \tan (\omega_i )+ \cot (\omega_i )\right)- \frac{i\cot (\omega_i )}{2}\right) \hat{P}_{i-1}(U).
\end{equation}
Substituting \eqref{eq:QiXPR} into~\eqref{eq:recur} leads to the three term recurrence relation \eqref{eq:3tr} for $\hat{P}_i$ (now seen generically as a function of $x$), where the coefficients $b_i$ and $a_i$ are determined by the angles:
\begin{equation}\label{eq:recangle}
    b_i = \frac{ \tan\tau_{i+1} +1 }{2}, \quad
    a_i^2 = \frac{\tan (\omega_{i+1} ) }{ \sqrt{2}\cos (\tau_{i+1} )}   \left(\tan (\tau_{i} ) \tan (\omega_{i} )+ \frac{(\tan (\tau_{i} ) - 1)\cot (\omega_{i} )}{2} \right).
\end{equation}
This shows that the second statement implies the first. To prove the converse, we note that the relations between the angles $\omega_i$ and $\tau_i$ and the recurrence coefficients $a_i$ and $b_i$ can be inverted. Specifically, the angles $\tau_i \in (-\pi/2, \pi/2)$ can be expressed in terms of the coefficients $b_i$, while the angles $\omega_i$ are determined via the recurrence relation:
\begin{equation}\label{eq:inv1}
    \tau_{i+1} = \tan^{-1}\left(2 b_i - 1\right),
\end{equation}
\begin{equation}\label{eq:inv2}
    \omega_{i+1} = \cot^{-1}\left(\frac{\sqrt{2b_i^2 - 2 b_i + 1}}{a_i^2}\left(\left(2 b_{i-1} - 1\right) \tan (\omega_{i} )+ {\left( b_{i-1} - 1\right)\cot (\omega_{i} )}\right)\right)
\end{equation}
This thus shows that the first statement implies the second.
\end{proof}
The relations~\eqref{eq:recangle}, \eqref{eq:inv1}, and \eqref{eq:inv2} between the recurrence coefficients and the angles will play an important role in the following. They provide a systematic method to determine the angles that generate a sequence of polynomials with a known recurrence relation. In the special cases $b_i=\tfrac{1}{2}$ or $b_i=1$, the recurrence relation simplifies substantially, and a closed-form expression can be obtained, depending on the first angle $\omega_1$ as a free parameter:
\begin{equation}\label{eq:inv3}
    \omega_i = \cot^{-1}\left(\cot \omega_1 \prod_{j=1}^{i-1} \frac{-1}{2^{3/2}a_j^2} \right) \quad (b_i = 1/2),
\end{equation}
\begin{equation}\label{eq:inv4}
    \omega_{i} = \left\{
	\begin{array}{ll}
		\cot^{-1}\left( \cot \omega_1 \prod_{j=1}^{\lfloor \frac{i}{2} \rfloor} \frac{a_{2j-1}^2}{ a_{2j}^2 }  \right)  & \mbox{if } i \text{ odd} \\
		  \cot^{-1}\left( \frac{1}{ a_1^2 \cot \omega_1} \prod_{j=1}^{\frac{i-2}{2}} \frac{a_{2j}^2 }{ a_{2j+1}^2 }  \right) & \mbox{if } i \text{ even}
	\end{array}
\right.  \quad (b_i = 1).
\end{equation}

\noindent Note that for any angles $\tau$ and $\omega$, one has $\alpha(\tau,\omega) > 1.7$, so that the normalization constant in the previous proposition satisfies $\alpha_i > (1.7)^i$ and therefore grows exponentially with the degree. This behavior is expected, since the three-term recurrence relation \eqref{eq:3tr} can generate polynomials whose coefficients and whose values on the unit circle scale exponentially with the polynomial's degree, and the fact that the block encoding is embedded in a unitary implies that $|\hat{P}_i(U)/\alpha_i| \le 1$, so that $\alpha_i$ must compensate for this exponential growth. Using the same reasoning, this inequality can also be applied to derive a bound on the solutions of equation \eqref{eq:3tr}. This observation is formally stated in the following corollary.
\begin{corollary}
   Let $\{\hat{P}_i(x)\}_{i=0}^n$ be a sequence of monic polynomials satisfying the three-term recurrence relation \eqref{eq:3tr}, with $\deg \hat{P}_i = i$ and coefficients $a_i^2 \in \mathbb{R}\backslash\{0\}$ and $b_i \in \mathbb{R}$. Let $\alpha_i$ be the constant defined in \eqref{eq:beconst}. Then, for all $z \in \mathbb{T}$ and $i = 0,1,\dots,n$, we have 
   \begin{equation*}
       |\hat{P}_i(z)|< \alpha_i.
   \end{equation*}
\end{corollary}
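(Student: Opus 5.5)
The plan is to read the bound off the unitarity of the \opqsp circuit, via the implication from the first to the second statement of Proposition~\ref{prop:qsp-oprl}. Since $\{\hat{P}_i\}_{i=0}^n$ satisfies the three-term recurrence \eqref{eq:3tr} with $a_i^2\neq 0$, Proposition~\ref{prop:qsp-oprl} supplies angles $\tau_j\in(-\pi/2,\pi/2)$ and $\omega_j\in(-\pi,\pi)\setminus\{0,\pm\pi/2\}$, obtained by \eqref{eq:inv1}--\eqref{eq:inv2}. With these angles, $i$ iterations of the circuit in Figure~\ref{fig:circuitOPRL} give an $(\alpha_i,n+1,0)$-block encoding of $\hat{P}_i(U)$, where $\alpha_i$ is the constant \eqref{eq:beconst}. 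The statement holds for any unitary $U$, so I specialise to the one-dimensional Hilbert space $\mathcal{H}=\mathbb{C}$ and take $U=z$ with $z\in\mathbb{T}$, which is a legitimate $1\times 1$ unitary.

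For the non-strict bound, let $\ket{\Phi_i}=W_i\cdots W_1\ket{0}^{\otimes i}\ket{0}\ket{\psi}$ with $\ket{\psi}=1$. This is a unit vector. By \eqref{eq:bsoprl} and \eqref{eq:block encoding}, its component with all flag qubits equal to $0$ is $\big(\hat{P}_i(z),\,Q_i(z)\big)^T/\alpha_i$ on the second ancilla. Hence
\[
|\hat{P}_i(z)|^2+|Q_i(z)|^2\le \alpha_i^2 ,
\]
and in particular $|\hat{P}_i(z)|\le\alpha_i$.

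To upgrade to strict inequality (for $i\ge1$; the case $i=0$ reads $\hat{P}_0=1=\alpha_0$, so I will treat it as the trivial base case), I split into two cases.
\begin{itemize}
\item If $Q_i(z)\neq0$, strictness is immediate from the display above.
\item By \eqref{eq:QiXPR}, $Q_i(z)=c_i\hat{P}_{i-1}(z)$, and the constant $c_i$ is nonzero because $a_{i-1}^2\neq0$ via \eqref{eq:recangle}. So the remaining case is $\hat{P}_{i-1}(z)=0$.
\end{itemize}
In the second case, \eqref{eq:3tr} gives $\hat{P}_i(z)=-a_{i-1}^2\hat{P}_{i-2}(z)$. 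I would then argue by showing that a single step is strictly contractive on the relevant vector. Writing $v_{i}=T(\tau_i,\omega_i)v_{i-1}/\alpha(\tau_i,\omega_i)$, I want $\|v_i\|<\|v_{i-1}\|$ whenever $v_{i-1}\neq0$, which amounts to $\|T(\tau,\omega)/\alpha(\tau,\omega)\|<1$ for $|z|=1$.

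I expect this step to be the main obstacle. The matrix $T/\alpha$ is the flag-$0$ block of a unitary, so its operator norm is at most $1$, with equality only if some vector produces no amplitude on the flag-$1$ outcome. I would first try to rule this out by computing $\det$ and $\operatorname{tr}\big((T/\alpha)^\dagger T/\alpha\big)$ explicitly from \eqref{eq:matOPRL}, using $\bar z=1/z$. The aim is to show both singular values are below $1$, that is, $1-\operatorname{tr}+|\det|^2>0$ together with $\operatorname{tr}<2$, uniformly over the allowed angle ranges. This should be plausible because $|\det T/\alpha|$ is a product of off-diagonal entries divided by $\alpha^2$, and $\alpha>1.7$. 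The induction then gives $|\hat{P}_i(z)|\le\|v_i\|\alpha_i<\|v_0\|\alpha_i=\alpha_i$ for all $i\ge1$. If this uniform bound turns out to be delicate, the fallback is the case analysis above: in the second case, apply the non-strict bound at step $i-2$ and track the extra factor $a_{i-1}^2\alpha_{i-2}/\alpha_i$ through \eqref{eq:recangle}.
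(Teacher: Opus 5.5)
Your non-strict bound is the paper's argument. The paper supports the corollary only by noting that $\hat P_i(U)/\alpha_i$ sits in a block of a unitary, which gives $|\hat P_i(z)|\le\alpha_i$, and it offers no justification for strictness. At $i=0$ one has $\hat P_0=1=\alpha_0$, so the strict inequality is false there. The statement can only hold for $i\ge 1$, and you should say so explicitly rather than calling $i=0$ a base case.

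The genuine gap is your main step: the bound $\|T(\tau,\omega)/\alpha(\tau,\omega)\|<1$ on $\mathbb{T}$ is false, even for fixed admissible angles.
\begin{itemize}
\item Put $M=T/\alpha$. Since $M_{22}=0$, $\det(I-M^\dagger M)=(1-|M_{12}|^2)(1-|M_{21}|^2)-|M_{11}|^2$.
\item Take the admissible angles $\omega=\pi/6$, $\tan\tau=-7/15$, and $z=-1$. Then $|M_{11}|^2=1083/2740$, $|M_{12}|^2=1/20$, $|M_{21}|^2=80/137$.
\item Hence $\det(I-M^\dagger M)=\tfrac{19}{20}\cdot\tfrac{57}{137}-\tfrac{1083}{2740}=0$, so $M$ has singular value exactly $1$.
\item More generally this happens whenever $|1+\tan\tau|=2\sin^2\omega/(1-\sin^4\omega)$, at $z=-\operatorname{sign}(1+\tan\tau)$.
\end{itemize}
So no induction based on per-step contraction can work.

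Your fallback does handle the branch $\hat P_{i-1}(z)=0$. By \eqref{eq:recangle}, $|a_{i-1}^2|\alpha_{i-2}/\alpha_i=|M^{(i)}_{12}|\,|M^{(i-1)}_{21}|\le\sin^2\omega_i/\sqrt{1+\sin^2\omega_i}<1$. However, the case split has an index slip. Your constant $c_i=T_{21}(\tau_i,\omega_i)$ is nonzero if and only if $a_i^2\neq0$, not $a_{i-1}^2\neq0$. That is not assumed at $i=n$. If $T_{21}(\tau_n,\omega_n)=0$, then $Q_n\equiv0$ and neither branch applies.

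A repair that avoids all of this puts the strictness in the first step.
\begin{itemize}
\item Let $v_i=(\hat P_i(z),Q_i(z))^T/\alpha_i$. Then $\|v_i\|$ is nonincreasing, so it suffices that $\|v_1\|=\|Me_1\|<1$ for $M=T(\tau_1,\omega_1)/\alpha(\tau_1,\omega_1)$.
\item Suppose $\|Me_1\|=1$. Together with $M^\dagger M\le I$, this forces $M^\dagger Me_1=e_1$.
\item The second entry of $M^\dagger Me_1$ is $\overline{M_{12}}M_{11}$, so $\overline{M_{12}}M_{11}=0$. Since $\tan\omega_1\neq0$, this gives $M_{11}=0$ and hence $|M_{21}|=1$.
\item But $M_{21}$ does not depend on $z$, and $|M_{11}(z')|^2+|M_{21}|^2\le1$ for every $z'\in\mathbb{T}$. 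Since $M_{11}(z')=(z'-b_0)/\alpha\neq0$ for some $z'$, we get $|M_{21}|<1$, a contradiction.
\end{itemize}
Hence $|\hat P_i(z)|/\alpha_i\le\|v_i\|\le\|v_1\|<1$ for all $i\ge1$.
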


The previous proposition and its derivation illustrates how the iterative nature of a QSP protocol implies a recurrence relation for the sequence $\{\hat{P}_i\}_{i=0}^n$ of polynomials implemented at each iteration of Figure \ref{fig:circuitOPRL}. The recurrence relation \eqref{eq:3tr} further allows a characterization of the sequences produced by the circuit in terms of an orthogonality relation. This topic is addressed in the next subsection.

\subsection{Relation between OP-QSP and orthogonal polynomials}

We now recall several definitions and theorems from the classical theory of orthogonal polynomials (OP) and orthogonal polynomials on the real line (OPRL). Let $\mathcal{L}$ be a complex-valued linear functional on the space of all polynomials, and let its moments be denoted by
\begin{equation*}
    c_i = \mathcal{L}[x^i].
\end{equation*}  

\begin{definition}
    A sequence of polynomials $\{P_0(x), P_1(x), \dots, P_i(x), \dots\}$ over the $\mathbb{R}$ is said to be \textit{orthogonal with respect to $\mathcal{L}$} if there exists constants $\chi_i\in \mathbb{R}$ such that
\begin{equation*}
    \mathcal{L}[P_i(x) P_j(x)]  = \chi_i \delta_{ij}.
\end{equation*} 
\end{definition}
A key concept to classify the linear functionals is the determinant of the matrix of moments:
\begin{equation*}
   h_{i} := \begin{vmatrix}
    c_0 & c_{1} & \dots & c_{i}\\
    c_{1} & c_{2} & \dots& c_{i+1}\\
    \vdots & \vdots & \ddots &  \vdots \\
    c_{i} & c_{i+1} & \dots & c_{2i}
    \end{vmatrix}.
\end{equation*}
The following characterization of the linear functional $\mathcal{L}$ is essential for the subsequent theorems on orthogonal polynomials.
\begin{definition}
    A linear functional $\mathcal{L}$ is said to be \textit{positive-definite} if $c_i\in \mathbb{R}$ and $h_i > 0$ for all $i$ and \textit{quasi-definite} if $h_i \neq 0$ for all $i$. 
    \end{definition}
    Theorem 3.1 in Chapter 2 of \cite{chihara2011introduction} states that all positive-definite functionals admit an integral representation with respect to a distribution function $\mu(x)$ defined on the real line,
\begin{equation}\label{eq:intrep}
    \mathcal{L}[f(x)] = \int_{\mathbb{R}}f(x) d\mu(x).
\end{equation}
 Furthermore, Theorem 3.1 in Chapter 1 of \cite{chihara2011introduction} states that quasi-definite functionals admit a sequence of orthogonal polynomials. The following theorem describes the recurrence relation they satisfy.

\begin{theorem}\label{theo:oprl}
    (reformulation of Theorem 4.1 in Chapter 1 of \cite{chihara2011introduction}) Let $\{\hat{P}_i(x)\mid i=0,1,\ldots\}$ be a sequence of monic orthogonal polynomials with respect to a quasi-definite linear functional $\mathcal{L}$. Then there exist non-zero coefficients $a_i$ and $b_i$ such that the polynomials satisfy the recurrence relation \eqref{eq:3tr}. Moreover, if $\mathcal{L}$ is positive-definite then the coefficients $a_i$ and $b_i$ are real.
\end{theorem}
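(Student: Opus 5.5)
The plan is the classical argument: expand $x\hat{P}_i(x)$ in the monic orthogonal basis and use orthogonality to kill all but three terms. Quasi-definiteness, through Theorem 3.1 of Chapter 1 of \cite{chihara2011introduction}, gives $\mathcal{L}[\hat{P}_i^2]=\chi_i\neq 0$ and $\mathcal{L}[\pi\,\hat{P}_i]=0$ for every polynomial $\pi$ of degree $<i$. This holds because $\{\hat{P}_0,\dots,\hat{P}_{i-1}\}$ spans the polynomials of degree $<i$, as $\deg\hat{P}_j=j$.

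\textbf{Expansion.} Since $x\hat{P}_i(x)-\hat{P}_{i+1}(x)$ has degree at most $i$, we write
\begin{equation*}
x\hat{P}_i(x)=\hat{P}_{i+1}(x)+\sum_{j=0}^{i} d_{ij}\hat{P}_j(x).
\end{equation*}
Applying $\mathcal{L}[\,\cdot\,\hat{P}_k]$ for $k\le i$ gives $d_{ik}\chi_k=\mathcal{L}[x\hat{P}_i\hat{P}_k]$.

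\textbf{Vanishing of the lower coefficients.} For $k\le i-2$, the product $x\hat{P}_k$ has degree $<i$, so $\mathcal{L}[\hat{P}_i\cdot x\hat{P}_k]=0$. Hence $d_{ik}=0$.

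\textbf{The surviving coefficients.} Set $b_i=d_{ii}=\mathcal{L}[x\hat{P}_i^2]/\chi_i$ and $a_i^2=-d_{i,i-1}$. Writing $x\hat{P}_{i-1}=\hat{P}_i+(\text{lower degree})$ gives
\begin{equation*}
\mathcal{L}[x\hat{P}_i\hat{P}_{i-1}]=\mathcal{L}[\hat{P}_i^2]=\chi_i, \qquad a_i^2=\chi_i/\chi_{i-1}.
\end{equation*}
This is nonzero because both $\chi$'s are nonzero, which yields \eqref{eq:3tr} with $a_i\neq 0$. The base case uses $\hat{P}_{-1}=0$.

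\textbf{Positive-definite case.} If $\mathcal{L}$ is positive-definite, then $c_i\in\mathbb{R}$. Solving the linear system for the coefficients of $\hat{P}_i$ (Cramer's rule with the Hankel determinants $h_i$) gives real polynomials. We also get $\chi_i=h_i/h_{i-1}>0$, so $b_i$ is real and $a_i^2>0$, making $a_i$ real.

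\textbf{Main obstacle.} The step needing care is justifying $\chi_i\neq 0$ and the identity $\chi_i=h_i/h_{i-1}$. I would obtain both from the determinantal formula for $\hat{P}_i$ in terms of the moments, citing Chihara.

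A further caveat concerns the claim that $b_i$ is nonzero. This cannot hold in general: for a symmetric functional, $b_i=0$. So only $a_i\neq 0$ should actually be proved, and I would flag that wording.
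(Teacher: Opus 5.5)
Your proof is correct and is the classical argument behind Chihara's Theorem 4.1, which the paper cites without reproducing. Your caveat is also right: $b_i=\mathcal{L}[x\hat{P}_i^2]/\chi_i$ vanishes whenever $\mathcal{L}$ is symmetric (e.g.\ the paper's Chebyshev example with $\kappa=0$), so only $a_i\neq 0$ is provable. One slip to fix: your expansion gives $\hat{P}_{i+1}=(x-b_i)\hat{P}_i-d_{i,i-1}\hat{P}_{i-1}$, so $a_i^2=+d_{i,i-1}=\chi_i/\chi_{i-1}$ (the value you then write), not $-d_{i,i-1}$, and this sign is exactly what gives $a_i^2>0$ in the positive-definite case.
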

In the context of \opqsp  and in view of Proposition~\ref{prop:qsp-oprl}, the preceding theorem implies that sequences of polynomials orthogonal with respect to a quasi-definite linear functional provide explicit examples of polynomial sequences that can be implemented by the protocol. Moreover, one can show that any sequence produced by this protocol is orthogonal with respect to a linear functional. Indeed, the celebrated Favard theorem states that the converse of Theorem~\ref{theo:oprl} also holds: any sequence of polynomials satisfying a three-term recurrence relation of the form~\eqref{eq:3tr} is orthogonal with respect to a linear functional.
\begin{theorem}
    (Favard's Theorem) Let $\{\hat{P}_i(x)\mid i=-1,0,1,\ldots\}$ be the sequence of monic polynomials defined as the solution of the recurrence relation 
    \begin{equation*}
    \hat{P}_{i+1}(x) = (x-b_i)\hat{P}_i(x) - a_i^2 \hat{P}_{i-1}(x),
\end{equation*}
    with initial conditions $\hat{P}_{-1}(x) = 0$ and $\hat{P}_0(x) = 1$. Then there exists a linear functional $\mathcal{L}$ and constants $\chi_i\in \mathbb{R}$ such that
\begin{equation*}
    \mathcal{L}[1] = a_0^2, \quad \mathcal{L}[\hat{P}_i(x) \hat{P}_j(x)] =\chi_i \delta_{ij}.
\end{equation*}
Moreover, $\mathcal{L}$ is quasi-definite if and only if $a_i \neq 0$, and it is positive definite if and only if $a_i \in \mathbb{R}\backslash\{0\}$ and $b_i \in \mathbb{R}$.
\end{theorem}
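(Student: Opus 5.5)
The plan is to construct $\mathcal{L}$ directly from its moments, determined recursively by the recurrence, and then prove orthogonality by induction on degree.

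\textbf{Construction of $\mathcal{L}$.} Since $\{\hat{P}_i\}_{i\ge 0}$ are monic with $\deg \hat{P}_i = i$, they form a basis of the polynomial space. I therefore define $\mathcal{L}$ on this basis by
\begin{equation*}
\mathcal{L}[\hat{P}_0] = a_0^2, \qquad \mathcal{L}[\hat{P}_i] = 0 \quad (i\ge 1),
\end{equation*}
and extend by linearity. The moments $c_k=\mathcal{L}[x^k]$ are then fixed by expanding $x^k$ in the basis $\{\hat{P}_i\}$.

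\textbf{Orthogonality.} Set $\chi_i := a_0^2 a_1^2\cdots a_i^2$; I will prove
\begin{equation*}
\mathcal{L}[x^k \hat{P}_n] = 0 \quad (0\le k<n), \qquad \mathcal{L}[x^n\hat{P}_n]=\chi_n,
\end{equation*}
by induction on $k$, simultaneously for all $n>k$. Rewrite the recurrence as $x\hat{P}_m = \hat{P}_{m+1}+b_m\hat{P}_m + a_m^2\hat{P}_{m-1}$.
\begin{itemize}
\item[] \emph{Base case $k=0$:} this is the definition of $\mathcal{L}$.
\item[] \emph{Inductive step:} write $x^{k}\hat{P}_n = x^{k-1}(\hat{P}_{n+1}+b_n\hat{P}_n+a_n^2\hat{P}_{n-1})$. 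Each of $n+1$, $n$, $n-1$ exceeds $k-1$ when $k<n$, so every term vanishes by the induction hypothesis.
\item[] \emph{Diagonal value:} for $k=n$, only the last term survives, giving $\mathcal{L}[x^n\hat{P}_n]=a_n^2\,\mathcal{L}[x^{n-1}\hat{P}_{n-1}]$, so $\chi_n=a_n^2\chi_{n-1}$.
\end{itemize}
Since $\hat{P}_j$ is monic, $\mathcal{L}[\hat{P}_i\hat{P}_j]=\chi_i\delta_{ij}$ follows by expanding the lower-degree polynomial in monomials. Here $\chi_i$ is real whenever the $a_i^2$ are real.

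\textbf{Definiteness.} Let $H_n$ be the moment (Hankel) matrix whose determinant is $h_n$. Changing basis from monomials to the monic polynomials is unitriangular, so it transforms $H_n$ into $\mathrm{diag}(\chi_0,\dots,\chi_n)$ while preserving the determinant. Hence $h_n=\prod_{i\le n}\chi_i$, and:
\begin{itemize}
\item[] quasi-definiteness ($h_n\neq 0$ for all $n$) holds iff all $a_i\neq0$;
\item[] $h_n>0$ for all $n$ holds iff all $a_i^2>0$.
\end{itemize}
For the positive-definite case, reality of the $c_k$ requires real $b_i$. Conversely, real $a_i^2, b_i$ give real $c_k$, because the moments are determined recursively from them. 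For the reverse implication, positive-definiteness yields $\chi_i=h_i/h_{i-1}>0$, hence $a_i^2>0$. Reality of the $b_i$ then follows from
\begin{equation*}
b_i=\frac{\mathcal{L}[x\hat{P}_i^2]}{\chi_i},
\end{equation*}
since the $\hat{P}_i$ are built from real moments (Theorem~\ref{theo:oprl}).

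\textbf{Main obstacle.} The delicate point is the \emph{only if} direction of the definiteness characterization, together with the convention that $a_0^2$ enters only as the normalization $\mathcal{L}[1]$. I would handle it cleanly through the identity $\chi_i=h_i/h_{i-1}$.
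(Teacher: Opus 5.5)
The paper does not prove this statement; it quotes Favard's theorem from \cite{chihara2011introduction}, and your argument is the standard proof given there. Defining $\mathcal{L}$ on the basis $\{\hat{P}_i\}$ by $\mathcal{L}[\hat{P}_0]=a_0^2$ and $\mathcal{L}[\hat{P}_i]=0$ ($i\ge 1$), inducting to get $\mathcal{L}[x^k\hat{P}_n]=a_0^2\cdots a_n^2\,\delta_{kn}$ for $k\le n$, and reading both definiteness criteria off $h_n=\prod_{i=0}^n\chi_i$ is correct. Moreover, the required relations with $j=0$ force $\mathcal{L}[\hat{P}_i]=0$, so this $\mathcal{L}$ is the unique one and the ``if and only if'' claims are unambiguous.

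Your caveat is also right: $\chi_i=a_0^2\cdots a_i^2$ is real only when the $a_i^2$ are. The statement's unconditional $\chi_i\in\mathbb{R}$ therefore tacitly assumes real recurrence coefficients.
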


Thus, this yields a characterization of the polynomial sequences produced by iterating the circuit shown in Figure~\ref{fig:circuitOPRL}: they are precisely the sequences of polynomials that are orthogonal with respect to a quasi-definite linear functional. This includes  the case of all OPRL, which corresponds to the case of a positive-definite linear functional $\mathcal{L}$.

\begin{proposition}\label{prop:corroprl1}
    Let $\{\hat{P}_i(x)\mid i=0,1,\ldots\}$ be a sequence of monic polynomials. There exist angles $\{\omega_i \}_{i=1}^n$  and $\{\tau_i \}_{i=1}^n$  and constants $\alpha_i$ such that $\tau_i\in (-\pi/2,\pi/2)$ and $\omega_i\in (-\pi,\pi)\setminus \{0, \pm \frac{\pi}{2}\}$ and such that, for each $i = 0, 1, \dots, n$, an $(\alpha_i, n+1, 0)$-block encoding of $ \hat{P}_i(U)$ is realized after $i$ iterations of the circuit shown in Figure~\ref{fig:circuitOPRL} if and only if $\deg \hat{P}_i=i$ and the polynomials are orthogonal with respect to a quasi-definite linear functional $\mathcal{L}$.
\end{proposition}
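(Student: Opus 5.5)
The plan is to chain Proposition~\ref{prop:qsp-oprl} with Theorem~\ref{theo:oprl} and Favard's theorem, using the three-term recurrence \eqref{eq:3tr} as the bridge between the circuit statement and the orthogonality statement. Proposition~\ref{prop:qsp-oprl} already establishes that realizability by the circuit of Figure~\ref{fig:circuitOPRL}, with admissible angles and constants \eqref{eq:beconst}, is equivalent to $\deg\hat P_i=i$ together with \eqref{eq:3tr} holding with $a_i\neq 0$, $a_i^2,b_i\in\mathbb{R}$. It therefore remains to show that such a recurrence is equivalent to orthogonality with respect to a quasi-definite functional.

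For the forward direction, I start from the circuit data and obtain \eqref{eq:3tr} via \eqref{eq:recangle}, with $\hat P_{-1}=0$ and $\hat P_0=1$. The recurrence coefficients are read off from the angles, and $a_i\neq0$ holds by hypothesis of statement 1. Favard's theorem then supplies a linear functional $\mathcal{L}$ for which the $\hat P_i$ are orthogonal, and since $a_i\neq 0$ this $\mathcal{L}$ is quasi-definite. For the finite range $i\le n$, only finitely many coefficients are fixed; I extend the sequence by arbitrary choices, for instance $a_i=1$ and $b_i=0$ for $i\ge n$, so that Favard applies to an infinite sequence.

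For the converse, I assume the $\hat P_i$ are monic of exact degree $i$ and orthogonal with respect to a quasi-definite $\mathcal{L}$. Theorem~\ref{theo:oprl} yields \eqref{eq:3tr} with $a_i\neq0$, and Proposition~\ref{prop:qsp-oprl} then produces the angles through \eqref{eq:inv1}--\eqref{eq:inv2}.

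The main obstacle is a \emph{reality mismatch}. Proposition~\ref{prop:qsp-oprl} requires $a_i^2,b_i\in\mathbb{R}$, whereas a general quasi-definite (complex) functional can give complex recurrence coefficients. I would handle this by reading the statement in the setting fixed by the paper's definition of orthogonality, where the polynomials are over $\mathbb{R}$. Then $\hat P_{i+1}-x\hat P_i$ is a real polynomial, so the coefficients $b_i$ and $a_i^2$ obtained by matching terms in $\hat P_{i+1}-(x-b_i)\hat P_i=-a_i^2\hat P_{i-1}$ are real, because the basis expansion of a real polynomial in the real basis $\{\hat P_j\}$ is real. This relies on the uniqueness of those expansion coefficients, given that the $\hat P_j$ have exact degrees.

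Two further points need checking. First, the inversion \eqref{eq:inv2} must be well defined, which requires that the argument of $\cot^{-1}$ be finite. Second, the resulting $\omega_{i+1}$ must avoid $0$ and $\pm\pi/2$, meaning the argument must be finite and nonzero. If this fails for some $i$, I would exploit the freedom in $\omega_1$ and the choice of branch for $\cot^{-1}$ in $(-\pi,\pi)$, and cite the invertibility asserted in Proposition~\ref{prop:qsp-oprl} for the remaining details.
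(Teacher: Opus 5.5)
Your route is the paper's own: its proof is just Proposition~\ref{prop:qsp-oprl} combined with Favard's theorem, with Theorem~\ref{theo:oprl} for the converse. Your reality argument under the paper's ``over $\mathbb{R}$'' convention, and your extension of the finite recurrence, are both fine.

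The gap is in the forward direction, at ``$a_i\neq0$ holds by hypothesis of statement 1''. In that direction you are deriving statement 1, not assuming it, and the circuit does not force $a_i\neq 0$. From \eqref{eq:recangle},
\[
a_i^2=\frac{\tan\omega_{i+1}}{\sqrt2\cos\tau_{i+1}}\Big(\tan\tau_i\tan\omega_i+\tfrac12(\tan\tau_i-1)\cot\omega_i\Big),
\]
and the bracket equals $-i$ times the lower-left entry of $T(\tau_i,\omega_i)$ in \eqref{eq:matOPRL}. It vanishes iff $\tan\tau_i\,(1+2\tan^2\omega_i)=1$. For every admissible $\omega_i$ this has an admissible solution $\tau_i\in(0,\pi/4)$, e.g.\ $\tan\tau_i=1/3$, $\omega_i=\pi/4$.

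Suppose this happens at a step $1\le i\le n-1$. Then $Q_i=0$ and $\hat P_{i+1}=(x-b_i)\hat P_i$. Any $\mathcal L$ making the family orthogonal must satisfy
\[
0=\mathcal L[\hat P_{i+1}\hat P_{i-1}]=\mathcal L[\hat P_i\,x\hat P_{i-1}]=\mathcal L[\hat P_i^2].
\]
This is impossible for a quasi-definite $\mathcal L$, where $\mathcal L[\hat P_i^2]=h_i/h_{i-1}\neq 0$. So the ``only if'' part is false as literally stated. The paper's one-line proof has the same hole, because the $2\Rightarrow1$ step of Proposition~\ref{prop:qsp-oprl} never verifies $a_i\neq0$.

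\textbf{Repair.} Add the nondegeneracy condition $\tan\tau_i(1+2\tan^2\omega_i)\neq 1$ for $1\le i\le n-1$. Then $a_i\neq0$ follows from the display, since $\omega_{i+1}\notin\{0,\pm\pi\}$ gives $\tan\omega_{i+1}\neq0$. Favard then yields quasi-definiteness, after extending with $a_0$ and $a_i$ ($i\ge n$) chosen nonzero.

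The converse survives this restriction, and your $\omega_1$ idea is what makes it work; the branch of $\cot^{-1}$ plays no role. Write $t_j=\cot\omega_j$. Then \eqref{eq:inv2} reads $t_{i+1}=\kappa_i\,(A_i/t_i+B_it_i)$, with $\kappa_i=\sqrt{2b_i^2-2b_i+1}/a_i^2\neq 0$ and $(A_i,B_i)=(2b_{i-1}-1,\,b_{i-1}-1)\neq(0,0)$. Moreover, $A_i/t_i+B_it_i$ is exactly the bracket above.

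Hence each $t_j$ is a nonconstant real rational function of $t_1$. Only finitely many $\omega_1$ produce some $t_j\in\{0,\infty\}$, that is, an inadmissible $\omega_j$ or a degenerate step; any other $\omega_1$ works. You should carry this out explicitly rather than defer to the invertibility ``asserted'' in Proposition~\ref{prop:qsp-oprl}, which the paper does not actually prove.
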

\begin{proof}
    This result follows directly from the three-term recurrence in Proposition~\ref{prop:qsp-oprl} and Favard's Theorem.
\end{proof}
\noindent We now provide some examples based on well-known families of orthogonal polynomials. 
 \begin{example}
        Consider the monic polynomials $\hat{P}_i(x) = (2\gamma)^{-i} U_i(\gamma x - \kappa)$, defined in terms of the Chebyshev polynomials of the second kind $U_i$ and real parameters $\gamma$ and $\kappa$. Using the orthogonality relation of the Chebyshev polynomials, we obtain
        \begin{equation*}
            \mathcal{L}[\hat{P}_n(x)\hat{P}_m(x)] = \int_{\frac{\kappa -1}{\gamma}}^{\frac{\kappa +1}{\gamma}} \hat{P}_n(x)\hat{P}_m(x) \sqrt{1-(\gamma x- \kappa)^2}  dx = \frac{\pi }{(2\gamma)^{2n + 1}}\delta_{mn}.
        \end{equation*}
        Using the known three-term recurrence relation for $U_i$, we obtain the following recurrence relation for the polynomials $\hat{P}_i(x)$:
        \begin{equation*}
            \hat{P}_{i+1}(x) = \left(x -\frac{\kappa}{\gamma}\right) \hat{P}_i(x) - \frac{1}{4 \gamma^2} \hat{P}_{i-1}(x).
        \end{equation*}
       The angles $\tau_i$ and $\omega_i$ to implement a block encoding of $ \hat{P}_i(U)/\alpha_i$ are then obtained by solving equation~\eqref{eq:recangle} with $b_i = \kappa\gamma^{-1}$ and $a_i = (2\gamma)^{-1}$. For $2 \kappa < \gamma \leq \kappa$ or $\kappa \leq \gamma \leq 2\kappa$, one finds a solution corresponding to fixed angles $\tau_i = \tau$ and $\omega_i = \omega$ where
        \begin{equation*}
            \tau = \tan^{-1}\left(\frac{2\kappa}{\gamma} - 1\right), \quad  \omega = \tan^{-1}\left( \pm \sqrt{ \frac{\kappa-\gamma}{\gamma - 2 \kappa}- \frac{\mathrm{sign}(\gamma)}{4(  \gamma- 2\kappa)\sqrt{2\kappa^2 - 2\kappa \gamma + \gamma^2}}}\right).
        \end{equation*}
    Using these angles as input in equation \eqref{eq:beconst} yields the following normalization constant that grows exponentially with $i$:
        \begin{equation*}
    \alpha_i  = \left( \frac{1}{\cos \tau}\sqrt{\frac{2(1 + \sin^2 \omega)}{ \sin^2 (2\omega)}}\right)^i,
\end{equation*}
    \end{example}
    
\begin{example}\label{ex:twocheb}
Consider the sequence of monic polynomials defined by $\hat{P}_0(x)=1$, $\hat{P}_1(x)=x-1$, and for $i \geq 2$, 
\begin{equation}
    \hat{P}_{i}(x) = \frac{1}{(2\gamma)^i}U_i(\gamma(x-1)) + \left(\frac{1}{4 \gamma^2} - a^2\right) \frac{1}{(2\gamma)^{i-2}}U_{i-2}(\gamma(x-1)),
\end{equation}
where $a$ and $\gamma$ are fixed parameters and $U_i$ denotes the Chebyshev polynomials of the second kind. From the three-term recurrence relation satisfied by $U_i$, one can verify that they obey the recurrence
    \begin{equation*}
        \hat{P}_{i+1}(x) = (x-1)\hat{P}_{i}(x) - a_i^2\hat{P}_{i-1}(x), \quad a_i = \left\{
	\begin{array}{ll}
		a  & \mbox{if } i = 1 \\
		\frac{1}{2\gamma} & \mbox{if } i > 1.
	\end{array}
\right.
    \end{equation*}
    Using the identity $U_n\left( \frac{z}{2} + \frac{1}{2z}\right) = \frac{z^{n+1} -z^{-n-1}}{z -z^{-1}}$, one can show that these polynomials satisfy the orthogonality relation
    \begin{equation*}
        \mathcal{L}[\hat{P}_n(x)\hat{P}_m(x)] = \frac{\delta_{nm}}{(2 \gamma)^{2n}}
    \end{equation*}
where the linear functional $\mathcal{L}$ is defined by the contour integral,
\begin{equation*}
   \mathcal{L}[f] = \frac{-1}{4 \pi i}\int_{C}  f\left(\frac{z + z^{-1}}{2\gamma} + 1\right)\frac{(1- z^2)^2}{z(z^2 + \lambda)(1+ \lambda z^2 )} dz, \quad \lambda = 1 - 4 \gamma^2 a^2.
\end{equation*}
Here, $C$ is a contour enclosing $z=0$ and $z=\pm i\sqrt{\lambda}$, while excluding $z=\pm \frac{1}{i\sqrt{\lambda}}$. A proof of this orthogonality relation is provided in Appendix~\ref{sec:app1}. The angles $\omega_i$ and $\tau_i$ required to implement a block encoding of $ \hat{P}_i(U)/\alpha_i$ are obtained from Equations \eqref{eq:inv1} and \eqref{eq:inv4}, using $a_1=a$ and $a_i=\frac{1}{2\gamma}$ for $i>1$. This yields
\begin{equation*}
    \tau_i = \frac{\pi}{4},\quad \cot \omega_{2i + 1} = \frac{1}{a^2 \cot \omega_1}, \quad  \cot \omega_{2i} =  \frac{a^2 \cot \omega_1}{4 \gamma^2} .
\end{equation*}
  Using these angles as input in equation \eqref{eq:beconst}, we find that the associated block encoding constant $\alpha_n$ grows exponentially with $n$ and is given by
  \small
\begin{equation*}
    \alpha_{n} = \left(\cot^2\omega_1 + \frac{2}{\cot^2 \omega_1} + 3\right)^{\frac{1}{2}}\left(\frac{1}{a^4 \cot^2 \omega_1} + {2a^4 \cot^2 \omega_1} + 3\right)^{\frac{1}{2}\lfloor\frac{n}{2} \rfloor} \left(\frac{32 \gamma^4}{ a^4 \cot^2 \omega_1} + \frac{a^4 \cot^2 \omega_1}{16 \gamma^4 } + 3\right)^{\frac{1}{2}\lfloor\frac{n-1}{2} \rfloor}
\end{equation*}

\end{example}

\begin{example}\label{ex:herm}
    Consider the monic polynomials $\hat{P}_n(x) = \gamma^{-n} H_n\big(\gamma (x-1)\big)$, defined in terms of the Hermite polynomials and a real parameter $\gamma$. Using the orthogonality and three-term recurrence relations of the Hermite polynomials, one finds that  they satisfy
 \begin{equation}\label{eq:orher}
    \mathcal{L}[\hat{P}_n(x)\hat{P}_m(x)] = \int_{\mathbb{R}} \hat{P}_n(x)\hat{P}_m(x) e^{-\gamma^2(x-1)^2/2} \gamma dx =  \frac{\sqrt{2\pi}  n! \delta_{nm}}{\gamma^{2n}}
\end{equation}
and
\begin{equation*}
    \hat{P}_{n+1}( x) = (x-1) \hat{P}_n(x) - n \gamma^{-2} \hat{P}_{n-1}(x)
\end{equation*}
The angles $\tau_i$ and $\omega_i$ required to implement a block encoding of this polynomial sequence are obtained by solving equations~\eqref{eq:inv1} and~\eqref{eq:inv4}, with $b_n = 1$ and $a_n^2 = n \gamma^{-2}$. For $\cot \omega_1 = \gamma \sqrt{\frac{2}{\pi}}$, this yields $\tau_n = \pi/4$ and
\begin{equation*}
    \cot \omega_n = \left\{
	\begin{array}{ll}
		 \sqrt{\frac{2}{\pi}}\frac{\gamma}{2^{n-1}} \binom{n-1}{\frac{n-1}{2}}  & \mbox{if } n \text{ odd} \\
		 \sqrt{\frac{\pi}{2}}\frac{ \gamma 2^{n-2}}{ n-1} \binom{n-2}{\frac{n-2}{2}}^{-1} & \mbox{if } n \text{ odd}
	\end{array}
\right.
\end{equation*}
In particular, it follows from Stirling's approximation that as $n$ grows
\begin{equation*}
    \cot\omega_n = \frac{\gamma}{\sqrt{n}}\left(1 - \frac{1}{4n} + O(1/n^2)\right)
\end{equation*} From these angles and equation \eqref{eq:beconst}, we find that the block encoding constants $\alpha_n$ grow with $n$ as
\begin{equation}\label{eq:asyher}
    \alpha_n = \frac{  \sqrt{n!} }{\gamma^{n}} n^{\frac{1}{2} + 3\gamma^2 + O(1/n)}
\end{equation}

\end{example}

\begin{example}
    The monic Jacobi polynomials $\hat{P}_i^{(\lambda,\beta)}$ are known to satisfy the orthogonality relation
    \begin{equation}
       \mathcal{L}[\hat{P}^{(\lambda,\beta)}_n(x)\hat{P}^{(\lambda,\beta)}_m(x)] = \int_{-1}^1(1-x)^\lambda (1-x)^\beta \hat{P}_m^{(\lambda,\beta)}(x) \hat{P}_n^{(\lambda,\beta)}(x) dx = \chi_n \delta_{mn}
    \end{equation}
    with 
    \begin{equation}
        \chi_n = \frac{2^{2n + \lambda + \beta + 1} \Gamma(n+\lambda+1)\Gamma(n+\beta+1)\Gamma(n+\lambda + \beta +1) n!}{\Gamma(2n+\lambda+ \beta+2)\Gamma(2n+\lambda+\beta+1)},
    \end{equation}
    and the following three-term recurrence relation,
\begin{equation*}
\begin{split}
      \hat{P}_{i+1}^{(\lambda,\beta)}(x) &= \left(x - \frac{\beta^2 -\lambda^2}{(2i + \lambda + \beta) (2i + \lambda + \beta+2)}\right) \hat{P}_i^{(\lambda,\beta)}  \\ &- \frac{4 i (i+\lambda)(i+\beta)(i+\lambda + \beta)}{(2i + \lambda + \beta)^2(2i + \lambda + \beta+1)(2i + \lambda + \beta-1)}\hat{P}_{i-1}^{(\lambda,\beta)}(x).
\end{split}
\end{equation*}
The angles $\omega_i$ and $\tau_i$ to implement $ \hat{P}_{i}^{(\lambda,\beta)}(U)$ can thus be derived from equations \eqref{eq:inv1} and \eqref{eq:inv2} using
\begin{equation}
    b_i = \frac{\beta^2 -\lambda^2}{(2i + \lambda + \beta) (2i + \lambda + \beta+2)}, \quad a_i^2 = \frac{4 i (i+\lambda)(i+\beta)(i+\lambda + \beta)}{(2i + \lambda + \beta)^2(2i + \lambda + \beta+1)(2i + \lambda + \beta-1)}.
\end{equation}
In the case $\lambda = \beta$, we can consider the shifted polynomials $ \hat{P}_n^{(\lambda,\lambda)}(x-1/2)$, which satisfy a three-term recurrence relation with 
\begin{equation*}
    b_i = 1/2 , \quad a_i^2 = \frac{ i(i+2 \lambda)}{4(i + \lambda +1/2)(i + \lambda-1/2)}.
\end{equation*}
Using equations~\eqref{eq:inv1} and~\eqref{eq:inv4}, one finds that a block encoding of $ P_n^{(\lambda,\lambda)}(U-1/2)/\alpha_n$ is achieved with
\begin{equation*}
    \tau_i = 0, \quad \omega_{i+1} = \tan^{-1}\left(\frac{(-1)^i}{2^{i/2}} \prod_{j =1}^i \frac{j (j+2\lambda)}{(j +  \lambda +1/2)(j +  \lambda -1/2)}\right).
\end{equation*}
Expressing the product appearing in the expression for $\omega_i$ in terms of Gamma functions and using the asymptotic expansion for ratios of Gamma functions, we obtain for large $i$
\begin{equation*}
    \tan \omega_{i+1} = \frac{(-1)^i}{2^{i/2}} \left(\frac{\Gamma(\lambda + 3/2)\Gamma(\lambda + 1/2)}{\Gamma(2\lambda + 1) } + O(1/i^2)\right).
\end{equation*}
Consequently,
\begin{equation*}
   \alpha(\tau_{i+1},\omega_{i+1}) = 2^{(i-1)/2} \left(\frac{\Gamma(2\lambda + 1) }{\Gamma(\lambda + 3/2)\Gamma(\lambda + 1/2)} + O(1/i^2)\right),
\end{equation*}
and for large $n$, the normalization grows as $\alpha_n = \prod_{i =0}^n |\alpha(\tau_i,\omega_i)| \sim 2^{n^{2}/4}$.

\end{example}

\subsection{Angle-finding problem for \opqsp }

The previous subsection provided examples of polynomial sequences for which the \opqsp  angles can be derived exactly. Beyond such cases, a central aspect of QSP is to determine a sequence of angles that implements a block encoding of the form \eqref{eq:bsoprl} and \eqref{eq:block encoding} for a target polynomial $P_n(U)$. A natural approach is to invert the matrices $T(\tau,\omega)$ and iteratively fix the angles so that their action reduces the degree of $P_n$. Such procedure is commonly referred to as the \textit{layer-stripping} method.  Here, we present an alternative derivation of the \opqsp  angles for polynomials with real roots, based on the construction of a linear functional $\mathcal{L}$ associated with the target polynomial.

This approach is based on Proposition \ref{prop:qsp-oprl}, which characterizes the recurrence relation satisfied by the polynomial sequences implemented by OP-QSP. It relies on the identification of the recurrence coefficients $a_i$ and $b_i$ associated with a target polynomial $P_n$, as stated in the following proposition.

\begin{proposition}\label{prop:afp1}
   Let $\hat{P}_n$ be a monic polynomial of degree $n$ with distinct real roots $\{x_{m,n}\}_{m=1}^n$. Suppose that the linear functional $\mathcal{L}$, whose moments are given by $c_i = \mathcal{L}[x^i] = \sum_{m=1}^{n} x_{m,n}^i$, 
is quasi-definite. Then $\hat{P}_n$ is the $n$th polynomial in the sequence of orthogonal polynomials defined by the recurrence relation
    \begin{equation}\label{eq:recinprop}
    \hat{P}_{i+1}(x) = (x-b_i)\hat{P}_i(x) - a_i^2 \hat{P}_{i-1}(x),
\end{equation}
where $\hat{P}_{-1} =0$, $\hat{P}_0 = 1$, 
\begin{equation}\label{eq:aibi}
    a_i^2 = \frac{h_{i}h_{i-2}}{h_{i-1}^2}, \quad b_i =  \left( \frac{h_{i,i-1}}{h_{i-1} } -  \frac{h_{i+1,i}}{h_{i} }\right),
\end{equation}
and where $h_{i,k}$ denotes the determinant:
\begin{equation*}
   h_{i,k} := (-1)^{i+k}\begin{vmatrix}
    c_0 & c_{1} & \dots& c_{k-1} &  c_{k+1}& \dots & c_{i}\\
    c_{1} & c_{2} & \dots& c_{k} &  c_{k+2}& \dots & c_{i+1}\\
    \vdots & \vdots & \ddots & \vdots & \vdots & \ddots & \vdots \\
    c_{i-1} & c_{i} & \dots& c_{i+k-2} &  c_{i-k}& \dots & c_{2i-1}
    \end{vmatrix}.
\end{equation*}
\end{proposition}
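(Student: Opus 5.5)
The plan is to realize $\mathcal{L}$ as a discrete measure and show that $\hat{P}_n$ is annihilated by it in the right way. The functional with moments $c_i=\sum_{m}x_{m,n}^i$ is
\[
\mathcal{L}[f]=\sum_{m=1}^n f(x_{m,n}),
\]
the counting measure on the roots of $\hat{P}_n$. Since $\mathcal{L}$ is assumed quasi-definite, Theorem 3.1 in Chapter 1 of \cite{chihara2011introduction} gives a unique sequence of monic orthogonal polynomials $\{\hat{Q}_i\}$ with $\deg \hat{Q}_i=i$. Theorem~\ref{theo:oprl} then says this sequence satisfies a three-term recurrence of the form \eqref{eq:recinprop} with $\hat{Q}_{-1}=0$ and $\hat{Q}_0=1$.

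The main step is to show $\hat{Q}_n=\hat{P}_n$. For any polynomial $q$ of degree $<n$,
\[
\mathcal{L}[\hat{P}_n q]=\sum_m \hat{P}_n(x_{m,n})\,q(x_{m,n})=0,
\]
because every $x_{m,n}$ is a root of $\hat{P}_n$. So $\hat{P}_n$ is a monic degree-$n$ polynomial orthogonal to all lower-degree polynomials. Quasi-definiteness ($h_{n-1}\neq 0$) makes the linear system for the coefficients of such a polynomial uniquely solvable, hence $\hat{P}_n=\hat{Q}_n$. A subtlety: for an $n$-point measure, $h_i$ with $i\ge n$ vanishes, so "quasi-definite" can only hold up to index $n-1$, which is all that is needed to build $\hat{Q}_0,\dots,\hat{Q}_n$. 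I will state this explicitly.

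The recurrence coefficients come from the standard determinantal representation
\[
\hat{Q}_i(x)=\frac{1}{h_{i-1}}\begin{vmatrix} c_0 & \cdots & c_i\\ \vdots & & \vdots\\ c_{i-1} & \cdots & c_{2i-1}\\ 1 & \cdots & x^i\end{vmatrix}.
\]
This is orthogonal to $x^k$ for $k<i$, since applying $\mathcal{L}$ replaces the last row by a copy of an earlier row. Expanding along the last row, the coefficient of $x^{i-1}$ is $h_{i,i-1}/h_{i-1}$ with the sign convention built into $h_{i,k}$. Comparing $x^{i}$ coefficients in \eqref{eq:recinprop} gives
\[
\frac{h_{i+1,i}}{h_i}=\frac{h_{i,i-1}}{h_{i-1}}-b_i,
\]
which yields $b_i$ as in \eqref{eq:aibi}.

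For $a_i^2$, multiply the recurrence by $\hat{Q}_{i-1}$ and apply $\mathcal{L}$:
\[
a_i^2=\frac{\mathcal{L}[x\hat{Q}_i\hat{Q}_{i-1}]}{\mathcal{L}[\hat{Q}_{i-1}^2]}=\frac{\mathcal{L}[\hat{Q}_i^2]}{\mathcal{L}[\hat{Q}_{i-1}^2]}.
\]
Then use $\mathcal{L}[\hat{Q}_i^2]=\mathcal{L}[x^i\hat{Q}_i]=h_i/h_{i-1}$, which follows from the determinant formula. This gives $a_i^2=h_ih_{i-2}/h_{i-1}^2$.

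I expect the main obstacle to be bookkeeping rather than ideas. The sign and index conventions of $h_{i,k}$ must match the cofactor expansion, and the edge cases $i=0,1$ need care: with $h_{-1}=1$ and $h_{-2}$ chosen so that $a_0^2$ matches Favard's normalization $\mathcal{L}[1]=c_0=n$.
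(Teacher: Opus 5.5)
Your proposal is correct and follows essentially the same route as the paper: the counting functional on the roots (Gauss--Christoffel with unit Cotes numbers), annihilation of $\hat{P}_n$ together with uniqueness from $h_{n-1}\neq 0$, the determinantal formula for the orthogonal polynomials, and Theorem~\ref{theo:oprl} for the recurrence. Two small differences are improvements rather than gaps. You obtain $b_i$ from the $x^i$ coefficient and $a_i^2$ from the norm ratio $\mathcal{L}[\hat{Q}_i^2]/\mathcal{L}[\hat{Q}_{i-1}^2]=h_ih_{i-2}/h_{i-1}^2$, which is cleaner than the paper's loosely stated ``compare leading and constant terms.'' Your caveat is also correct: only $h_0,\dots,h_{n-1}\neq 0$ can hold, and in fact these are automatically positive for distinct real roots, while $h_i=0$ for $i\ge n$; these are exactly the determinants needed, which the paper leaves implicit.
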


\begin{proof}
    We seek to identify a sequence of polynomials $\{\hat{P}_i\}_{i=0}^n$ such that $\deg \hat{P}_i = i$ for each $i$, and which are orthogonal with respect to a linear functional $\mathcal{L}$, with the degree-$n$ polynomial coinciding with the target polynomial $\hat{P}_n$. To this end, we introduce a linear functional defined so that, for any polynomial $q(x)$ of degree at most $2n-1$, it satisfies
\begin{equation*}
   \mathcal{L}[q(x)]  = \sum_{m = 1}^{n} q(x_{m,n}),
\end{equation*}
where $x_{m,n}$ is the $m$-th root of the degree-$n$ target polynomial $\hat{P}_n(x)$. Given this construction, it is clear that $\mathcal{L}[x^i \hat{P}_n(x)] = 0$ for $i = 0, 1, \dots, n-1$. The associated moments $c_i$ with $i \leq 2n -1$ are real and given by
\begin{equation*}
    c_i = \mathcal{L}[x^i]  = \sum_{m = 1}^{n} x_{m,n}^i.
\end{equation*}
In the case where the form is quasi-definite ($h_i \neq 0$), one can express the sequence of polynomials orthogonal with respect to $\mathcal{L}$ in terms of these moments. The orthogonality condition $\mathcal{L}[\hat{P}_i(x)\hat{P}_j(x)] =0 $ for $j<i$ is equivalent to the condition that $\mathcal{L}[x^j \hat{P}_i(x)] = 0$ holds for all $j < i$. The coefficients $d_k$ of $\hat{P}_i(x) = d_0 + d_1 x + \cdots + d_i x^i$ therefore satisfy a system of linear equations, namely
\begin{equation*}
   0 =  \mathcal{L}[x^j P_i(x)] = \sum_{k=0}^i  \mathcal{L}[x^{j+k} d_k] =  \sum_{k=0}^i c_{j+k}d_k, \quad j = 0,1,\dots i-1.
\end{equation*}
The quasi-definiteness condition ensures the uniqueness of the solution for the coefficients $d_k$, subject to the monic normalization $d_i = 1$. Moreover, the solution can be expressed in terms of minors of a matrix whose entries are the moments $c_k$. The polynomial solution to $\mathcal{L}[x^j \hat{P}_i(x)] = 0$ for $j < i$ admits the form

\begin{equation}\label{eq:poly}
    \hat{P}_i(x) = \frac{1}{\ h_{i-1}}\begin{vmatrix}
    c_0 & c_{1} & \dots & c_{i}\\
    c_{1} & c_{2} & \dots & c_{i+1}\\
    \vdots & \vdots & \ddots & \vdots\\
    c_{i-1} & c_{i} & \dots & c_{2i-1}\\
        1 & x & \dots & x^{i}
    \end{vmatrix} = \sum_{k = 0}^i \frac{ h_{i,k}}{h_{i-1}} x^k,
\end{equation}
where $h_{i,k}$ denotes the determinant:
\begin{equation*}
   h_{i,k} := (-1)^{i+k}\begin{vmatrix}
    c_0 & c_{1} & \dots& c_{k-1} &  c_{k+1}& \dots & c_{i}\\
    c_{1} & c_{2} & \dots& c_{k} &  c_{k+2}& \dots & c_{i+1}\\
    \vdots & \vdots & \ddots & \vdots & \vdots & \ddots & \vdots \\
    c_{i-1} & c_{i} & \dots& c_{i+k-2} &  c_{i-k}& \dots & c_{2i-1}
    \end{vmatrix}.
\end{equation*}
By construction, these polynomials satisfy the orthogonality relation $
\mathcal{L}[\hat{P}_i(x)\hat{P}_j(x)] \propto \delta_{ij}$, with $\hat{P}_n$ coinciding with the target polynomial. We have thus identified a sequence of monic orthogonal polynomials with respect to the linear functional $\mathcal{L}$, whose degree-$n$ element is the target polynomial. By Theorem \ref{theo:oprl}, these polynomials also satisfy a three-term recurrence relation of the form \eqref{eq:recinprop}. The coefficients $a_i$ and $b_i$ can be determined by substituting equation \eqref{eq:poly} into equation \eqref{eq:recinprop} and comparing the leading and constant terms on both sides, from which we obtain equation \eqref{eq:aibi}.
\end{proof}

This result enables the derivation of an explicit formula for the angles in terms of the moments $c_i$, which determine the coefficients $h_i$ and $h_{i,k}$, and consequently $a_i$ and $b_i$. The solution to the OP-QSP angle-finding problem is given in the following proposition.

\begin{proposition}\label{prop:afp1v2}
    Let $\hat{P}_n$ be a monic polynomial of degree $n$ with distinct real roots $\{x_{m,n}\}_{m=1}^n$. Suppose that the linear functional $\mathcal{L}$, whose moments are given by $ c_i = \mathcal{L}[x^i] = \sum_{m=1}^{n} x_{m,n}^i$, is quasi-definite. Then OP-QSP yields a $(\alpha_n, n+1, 0)$-block encoding of $\hat{P}_n$ with angle sequences $\{\tau_i\}_{i=1}^n$ and $\{\omega_i\}_{i=1}^n$ given by
\begin{equation*}
    \tau_{i+1} = \tan^{-1}\left(2 b_i - 1\right),
\end{equation*}
and the recurrence
\begin{equation*}
    \omega_{i+1} = \cot^{-1}\left(\frac{\sqrt{2b_i^2 - 2 b_i + 1}}{a_i^2}\left(\left(2 b_{i-1} - 1\right) \tan (\omega_{i} )+ {\left( b_{i-1} - 1\right)\cot (\omega_{i} )}\right)\right)
\end{equation*}
where $a_i$ and $b_i$ are the coefficients given in equation \eqref{eq:aibi}
\end{proposition}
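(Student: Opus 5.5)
The plan is to combine Proposition~\ref{prop:afp1} with the converse direction of Proposition~\ref{prop:qsp-oprl}.

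First, I will invoke Proposition~\ref{prop:afp1}. Under the stated hypotheses (distinct real roots, and $\mathcal{L}$ quasi-definite with moments $c_i=\sum_m x_{m,n}^i$), it gives a sequence of monic polynomials $\{\hat{P}_i\}_{i=0}^n$ with $\deg\hat{P}_i=i$. These satisfy the three-term recurrence~\eqref{eq:recinprop}, with coefficients $a_i^2,b_i$ given by~\eqref{eq:aibi}, and the degree-$n$ member is the target polynomial. Because the $c_i$ are real, every Hankel determinant $h_i$ and every minor $h_{i,k}$ is real. Hence $a_i^2\in\mathbb{R}$ and $b_i\in\mathbb{R}$. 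Quasi-definiteness ($h_i\neq 0$) gives $a_i^2=h_ih_{i-2}/h_{i-1}^2\neq 0$. So the first statement of Proposition~\ref{prop:qsp-oprl} holds for this sequence.

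Second, I will apply the implication $1\Rightarrow 2$ of Proposition~\ref{prop:qsp-oprl}, i.e.\ its explicit inversion formulas~\eqref{eq:inv1} and~\eqref{eq:inv2}. The formula $\tau_{i+1}=\tan^{-1}(2b_i-1)$ always lands in $(-\pi/2,\pi/2)$, because $b_i$ is real. For $\omega_{i+1}$, I will define it recursively by~\eqref{eq:inv2}, starting from a free choice of $\omega_1$. Substituting these angles back into~\eqref{eq:recangle} should reproduce exactly the coefficients $a_i^2,b_i$. I will check this with the identity $1/\cos\tau_{i+1}=\sqrt{1+(2b_i-1)^2}=\sqrt{2}\sqrt{2b_i^2-2b_i+1}$ and with $\tan\tau_i=2b_{i-1}-1$.

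Third, I will conclude that the circuit block-encodes the polynomials. By~\eqref{eq:block encoding} and~\eqref{eq:recur}, the $(0,0)$ block after $i$ iterations satisfies the same recurrence with the same initial data $\hat{P}_0=1$ and $\hat{P}_{-1}=0$ (or $Q_0=0$). Induction on $i$ then shows the block equals $\hat{P}_i(U)$. Equation~\eqref{eq:bsoprl} then gives a block encoding of $\hat{P}_n(U)$ with constant $\alpha_n=\prod_j\alpha(\tau_j,\omega_j)$, as in~\eqref{eq:beconst}.

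The main obstacle is admissibility of the $\omega_i$. The recursion~\eqref{eq:inv2} must never produce $\omega_{i+1}\in\{0,\pm\pi/2\}$, since $\cot$ or $\tan$ would diverge and Lemma~\ref{lem:BE} would not apply. Finiteness of the argument of $\cot^{-1}$ already rules out $\omega_{i+1}=0$, so the only remaining danger is the argument vanishing, i.e.\ $(2b_{i-1}-1)\tan^2\omega_i=-(b_{i-1}-1)$. I will handle this by the freedom in $\omega_1$. For each $i$ the bad set of $\omega_1$ values is finite, since the relevant quantity is a nonconstant rational function of $\cot\omega_1$. A generic $\omega_1$ therefore avoids all $n$ bad sets, and I would state the proposition as holding for such generic $\omega_1$.
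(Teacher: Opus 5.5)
Your proposal is correct and follows the paper's proof: the paper likewise combines Proposition~\ref{prop:afp1} (a real three-term recurrence with $a_i^2\neq 0$ whose $n$th member is $\hat{P}_n$) with the inversion formulas \eqref{eq:inv1} and \eqref{eq:inv2} from the $1\Rightarrow 2$ direction of Proposition~\ref{prop:qsp-oprl}. Your genericity argument on the free angle $\omega_1$ addresses a point the paper passes over silently. It guarantees that the recursion \eqref{eq:inv2} never lands on an excluded angle, i.e.\ that the argument of $\cot^{-1}$ never vanishes, and it is a worthwhile refinement.
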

\begin{proof}
    This follows from Proposition \ref{prop:qsp-oprl}, which states that the sequence of polynomials implemented by OP-QSP is precisely the one satisfying the recurrence relation \eqref{eq:recinprop}, and Proposition \ref{prop:afp1}, which identifies the recurrence coefficients associated with a given target polynomial $\hat{P}_n$. The relation between the angles and these recurrence coefficients was derived in equations \eqref{eq:inv1} and \eqref{eq:inv2} in the proof of Proposition \ref{prop:qsp-oprl}.
\end{proof}

\begin{remark}
   The choice of the linear functional $\mathcal{L}$ in terms of the roots of $P_n$ in Propositions \ref{prop:afp1} and \ref{prop:afp1v2} is an application of the Gauss–Christoffel quadrature formula, which provides a method to evaluate $\mathcal{L}$ associated with a sequence of OPRL in terms of their roots.  
For any polynomial $q(x)$ of degree at most $2n-1$, we have that
\begin{equation*}
   \mathcal{L}[q(x)]  = \sum_{m = 1}^{n} q(x_{m,n}) \, \lambda_{m,n},
\end{equation*}
where $x_{m,n}$ is the $m$-th root of the degree-$n$ polynomial $P_n(x)$ in the OPRL sequence, and $\lambda_{m,n} > 0$ are the \textit{Cotes numbers}, given in terms of the derivative $P_n'$ and $P_{n-1}$ by
\begin{equation*}
    \lambda_{m,n} =  \frac{a_1^2 a_2^2 \dots a_{n-1}^2}{P_n'(x_{m,n}) \, P_{n-1}(x_{m,n})}.
\end{equation*}
Since a particular target polynomial is not fixed until the final $n$-th step, the polynomial $P_{n-1}$ in the \opqsp  setting is arbitrary. Consequently, the linear functional $\mathcal{L}$ associated with a target polynomial $P_n$ is not unique, and the corresponding Cotes numbers can be chosen as any positive values. In our case, we simply set $\lambda_{m,n} = 1$.
\end{remark}

%Note that the circuit of Figure~\ref{fig:circuitOPRL} can generate more than just sequences of OPRL, as it also admits cases where $a_n^2 \le 0$. However, we omit the analysis of this situation here. It is also worth noting that this circuit is an example of a circuit associated to OPRL, but is in no way unique. 

    While the \opqsp  approach to quantum signal processing is probabilistic (requiring a measurement outcome of $0$ on the ancillary qubit at each step, or $n$ such measurements at the end) it has the advantage of eliminating the need to determine the companion polynomial $Q_n$ that appears in the standard QSP protocol. It also provides a block encoding scheme for sequences of well-known classical orthogonal polynomials. Next, we discuss the functions that can be block-encoded when this scheme is used as a select oracle in an LCU circuit.

\subsection{Expressibility of LCU with \opqsp }

The previous subsections characterized and provided several examples of the sequence of polynomials block-encoded by the \opqsp  circuit. We now describe the block encoding achievable when the circuit is used as the select oracle of an LCU construction, as discussed in Section~\ref{sec:1}.

\begin{proposition}\label{prop:3.3orpl}
Let $f$ be a function that admits a degree-$n$ polynomial $P$ approximating it within $\epsilon$ on the unit circle $\mathbb{T}$,
\begin{equation*}
    |f(z) - P(z)| \le \epsilon, \quad z \in \mathbb{T}.
\end{equation*}
Let $\{\hat{P}_i(z)\}_{i=0}^n$ be a family of monic orthogonal polynomials, and let $\alpha_i$ denote the associated block encoding constants defined in~\eqref{eq:beconst}. Let $v_i$ be the coefficients of $P$ in this basis, i.e.
\begin{equation*}
    P(z) = \sum_{i=0}^n \frac{v_i}{\alpha_i} \hat{P}_i(z), 
    \qquad 
    \|v\|_1 = \sum_{i=0}^n |v_i|.
\end{equation*}
Then, an $(\|v\|_1, 2n + 1, \epsilon)$-block encoding of $f(U)$ can be implemented with $O(n)$ depth and gate complexity, including $n$ applications of controlled-$U$.
\end{proposition}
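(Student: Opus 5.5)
The proof combines Lemma~\ref{prop:sed2} with Proposition~\ref{prop:corroprl1}, followed by a triangle-inequality argument for the approximation error.

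\emph{Step 1: realize the basis.} Since $\{\hat{P}_i\}_{i=0}^n$ is a family of monic orthogonal polynomials, it is orthogonal with respect to a quasi-definite functional. By Theorem~\ref{theo:oprl} it therefore satisfies the three-term recurrence~\eqref{eq:3tr}. By Proposition~\ref{prop:qsp-oprl} (equivalently Proposition~\ref{prop:corroprl1}), there are angles $\tau_i,\omega_i$, given explicitly by \eqref{eq:inv1}--\eqref{eq:inv2}, such that $i$ iterations of the circuit in Figure~\ref{fig:circuitOPRL} give an $(\alpha_i,\cdot,0)$-block encoding of $\hat{P}_i(U)$, with $\alpha_i$ as in~\eqref{eq:beconst}. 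The relation~\eqref{eq:bsoprl} shows that the unitaries $W_i=W(\tau_i,\omega_i)$ have exactly the block structure~\eqref{eq:bepoly}. Here the $\ell$ ancillas are the $n$ flag qubits, one per iteration, together with the shared second qubit that carries the $2\times 2$ matrix index; projecting the latter onto $\ket{0}$ picks out the $(1,1)$ entry $\hat{P}_i(U)$ in~\eqref{eq:block encoding}.

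\emph{Step 2: apply LCU.} Lemma~\ref{prop:sed2} is applied with select oracle $W_n^c\cdots W_1^c$ and preparation oracles built from the coefficients $v_i$. This yields an exact $(\|v\|_1,\cdot,0)$-block encoding of $P(U)=\sum_i v_i\hat{P}_i(U)/\alpha_i$.

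\emph{Step 3: count ancillas.} This is the main bookkeeping obstacle, since the count must reach $2n+1$ rather than the naive $n+\ell$. The plan is to use a unary encoding of $\ket{\underline{j}}$ with $n$ qubits, $n$ OP-QSP flag qubits, and the single shared qubit. Every $W_i$ acts only on its own flag qubit and the shared qubit, so the total is $n+n+1=2n+1$.

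\emph{Step 4: cost.} Each $W_i^c$ is a constant-size circuit, namely the gates of Figure~\ref{fig:circuitOPRL} with one extra control on the $i$-th unary qubit. Each therefore costs $O(1)$ gates and contains one controlled-$U$ application, giving $n$ such applications in total. The unary state preparations $V_{\mathrm{prep}}$ and $\tilde V_{\mathrm{prep}}$ can be built with $O(n)$ gates and $O(n)$ depth by a cascade of controlled rotations. Overall depth and gate complexity are $O(n)$.

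\emph{Step 5: error.} Since $U$ is unitary, it is normal with spectrum on $\mathbb{T}$. The spectral theorem then gives
\[
\|f(U)-P(U)\|\le\sup_{z\in\mathbb{T}}|f(z)-P(z)|\le\epsilon.
\]
The exact block encoding of $P(U)/\|v\|_1$ is therefore an $(\|v\|_1,2n+1,\epsilon)$-block encoding of $f(U)$, using the convention $\|f(U)-\|v\|_1\,(\text{block})\|\le\epsilon$.

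A caveat concerns the post-selection. Each OP-QSP iteration needs its flag to read $0$, but this is already built into the block-encoding projector onto $\ket{0}^{\otimes}$ of all ancillas, so no separate argument is needed. The only subtlety is that the controlled $W_i^c$ for $i>j$ act trivially, and their flag qubits then stay in $\ket{0}$. This is consistent with the projector, so those terms contribute correctly.
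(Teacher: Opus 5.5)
Your proposal is correct and follows the paper's proof. The paper realizes the orthogonal basis with OP-QSP via Proposition~\ref{prop:corroprl1}, observes the block structure~\eqref{eq:bepoly} with $\ell=n+1$, and applies Lemma~\ref{prop:sed2} to obtain $n+\ell=2n+1$ ancillas and $n$ controlled-$U$ calls; you additionally make explicit the spectral-theorem bound $\|f(U)-P(U)\|\le\epsilon$ and the $O(n)$ unary state-preparation cost, which the paper leaves implicit.
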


\begin{proof}
    In the previous subsections, we showed that the sequence of unitaries obtained from repeated applications of the circuit in Figure~\ref{fig:circuitOPRL} admits the block structure described in equations~\eqref{eq:bsoprl} and~\eqref{eq:block encoding}. This matches the general form~\eqref{eq:bepoly} with $\ell = n +1$, where $\{\hat{P}_i\}_{i=0}^n$ is a sequence of monic orthogonal polynomials, as established in Proposition~\ref{prop:corroprl1}. The result then follows directly from Lemma~\ref{prop:sed2}, with a complexity corresponding to $n$ applications of the circuit in Figure~\ref{fig:circuitOPRL} used as the select oracle in the LCU construction. This entails $n$ applications of doubly-controlled $U$, each of which can be implemented using a controlled-$U$ together with a Toffoli gate.
\end{proof}

This proposition can be used to derive more precise results once a specific family of monic orthogonal polynomials is fixed. We now consider the case where this family is given by the Hermite polynomials, as discussed in Example~\ref{ex:herm}.

\begin{proposition}\label{prop:herms}
    Let $f$ be a function and let $\gamma \in \mathbb{R}$ be a fixed parameter such that $\gamma > 1$ and
\begin{equation}\label{eq:sqi}
   \mathcal{L}[f^2] =  \int_{\mathbb{R}} f^2(x)\, e^{-\frac{\gamma^2}{2}(x-1)^2}\, \gamma\, dx  < +\infty.
\end{equation}
Assume moreover that there exists a constant $M>0$ such that, for all $k \ge 0$,
\begin{equation}\label{eq:sqi2}
  \left| \mathcal{L}\!\left[\dv[k]{f}{x}\right] \right|
  =
  \left|\int_{\mathbb{R}} \dv[k]{f(x)}{x}\, e^{-\frac{\gamma^2}{2}(x-1)^2}\, \gamma\, dx \right|
  \le M \frac{\sqrt{k!}}{k^{1+ 3 \gamma^2}}.
\end{equation} Then, an $(\|v\|_1, 2n+1, \epsilon)$-block encoding of $f(U)$ can be implemented with $O(n)$ depth and gate complexity, which includes $n$ applications of controlled-$U$, where
\begin{equation*}
    n = O(\log(1/\epsilon)), \qquad \|v\|_1 = O\big(\log(1/\epsilon)^{1 + 3 \gamma^2}\big).
\end{equation*}
\end{proposition}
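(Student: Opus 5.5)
We expand $f$ in the monic Hermite basis of Example~\ref{ex:herm}, $\hat{P}_k(x)=\gamma^{-k}H_k(\gamma(x-1))$, and then apply Proposition~\ref{prop:3.3orpl} to the truncation $P=\sum_{k\le n}(v_k/\alpha_k)\hat{P}_k$. The plan has four steps: compute the expansion coefficients, bound them using hypothesis \eqref{eq:sqi2}, choose $n$ so the truncation error is at most $\epsilon$ on $\mathbb{T}$, and bound $\|v\|_1$.

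\textbf{Coefficients.} Condition \eqref{eq:sqi} makes the Hermite series of $f$ converge in $L^2(\mathcal{L})$. By the orthogonality \eqref{eq:orher}, the coefficient of $\hat{P}_k$ is
\begin{equation*}
c_k=\frac{\gamma^{2k}}{\sqrt{2\pi}\,k!}\,\mathcal{L}[f\hat{P}_k].
\end{equation*}
We rewrite this with the Rodrigues formula for Hermite polynomials, $\hat{P}_k(x)e^{-\gamma^2(x-1)^2/2}=(-1)^k\gamma^{-2k}\dv[k]{}{x}e^{-\gamma^2(x-1)^2/2}$, which holds up to the normalization convention used in Example~\ref{ex:herm}. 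Integrating by parts $k$ times, with boundary terms vanishing by the Gaussian decay (this is an assumption on $f$ that we will state), gives
\begin{equation*}
c_k=\frac{1}{\sqrt{2\pi}\,k!}\,\mathcal{L}\!\left[\dv[k]{f}{x}\right],
\qquad\text{hence}\qquad
|c_k|\le \frac{M}{\sqrt{2\pi}}\,\frac{1}{\sqrt{k!}\,k^{1+3\gamma^2}}.
\end{equation*}
In the LCU normalization of Lemma~\ref{prop:sed2} we have $v_k=c_k\alpha_k$. Using \eqref{eq:asyher}, $\alpha_k=\sqrt{k!}\,\gamma^{-k}k^{1/2+3\gamma^2+O(1/k)}$, so
\begin{equation*}
|v_k|\lesssim M\,\gamma^{-k}\,k^{-1/2+O(1/k)}.
\end{equation*}
Since $\gamma>1$, this decays geometrically in $k$.

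\textbf{Truncation error.} On $\mathbb{T}$ we bound each term by $|c_k\hat{P}_k(z)|\le|c_k|\alpha_k=|v_k|$, using the Corollary (the one after Proposition~\ref{prop:qsp-oprl}) that gives $|\hat{P}_k(z)|<\alpha_k$ for $z\in\mathbb{T}$. The tail therefore satisfies
\begin{equation*}
\Big|\sum_{k>n}c_k\hat{P}_k(z)\Big|\le\sum_{k>n}|v_k|=O(\gamma^{-n}).
\end{equation*}
We also need the Hermite series to converge pointwise to $f$ on $\mathbb{T}$. We will take $f$ to be analytic, or interpret $f(U)$ as the sum of the series; the absolute convergence just established makes the series well defined. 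Choosing $n=\lceil\log(C/\epsilon)/\log\gamma\rceil=O(\log(1/\epsilon))$ makes the error at most $\epsilon$, and Proposition~\ref{prop:3.3orpl} then gives depth, gate count, and $2n+1$ ancillas.

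\textbf{Block-encoding constant.} The bound $\sum_k|v_k|\le M\sum_k\gamma^{-k}=O(1)$ is already at least as strong as the claimed $O(\log(1/\epsilon)^{1+3\gamma^2})$. The weaker stated bound presumably comes from bounding $|v_k|$ crudely by $\alpha_k/\sqrt{k!}$ times the polynomial factor, and any bound of that form suffices.

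\textbf{Main obstacle.} The delicate point is pinning down the constants. First, the exact normalization of $\hat{P}_k$ versus $H_k$ (physicists' vs.\ probabilists') matters: in \eqref{eq:orher} the weight $e^{-\gamma^2(x-1)^2/2}$ suggests probabilists' Hermite polynomials, which fixes both the Rodrigues prefactor and the factor $\gamma^{\pm k}$. If these factors turn out to cancel differently, the geometric decay $\gamma^{-k}$ disappears and one must instead rely on the polynomial factor $k^{-(1+3\gamma^2)}$ cancelling the $k^{1/2+3\gamma^2}$ in $\alpha_k$. That would give only $|v_k|\sim k^{-1/2}$, in which case I would expect the need for an extra power in the hypothesis and $\|v\|_1=O(n^{1/2})$-type growth. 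Second, the $O(1/k)$ in the exponent of \eqref{eq:asyher} must be controlled uniformly in $k$. I would first verify the Rodrigues scaling explicitly for small $k$ to settle which regime applies.
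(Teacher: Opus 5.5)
Your coefficient and truncation steps are the paper's. You expand in the monic probabilists' Hermite basis and use Rodrigues' formula with $k$ integrations by parts to get $|\mathcal{L}[f\hat{P}_k]|\le M\gamma^{-2k}\sqrt{k!}/k^{1+3\gamma^2}$. Combining with \eqref{eq:asyher} gives $|v_k|\le C\gamma^{-k}$, which is the paper's \eqref{eq:bdvkher}. The tail is then bounded using $|\hat{P}_k(z)/\alpha_k|\le 1$ on $\mathbb{T}$.

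Your normalization is right. \eqref{eq:orher} and the stated recurrence force $H_k$ to be the probabilists' polynomials, and the $(-1)^k$ from Rodrigues cancels the one from integrating by parts. Also $k^{O(1/k)}$ is bounded. So neither point in your ``main obstacle'' paragraph is an issue.

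Where you differ is the bound on $\|v\|_1$. The paper uses Cauchy--Schwarz. It bounds one factor by Parseval, $\sum_k v_k^2\sqrt{2\pi}\,k!/(\alpha_k^2\gamma^{2k})=\mathcal{L}[f^2]$, and the other by $\sum_{k\le n}\alpha_k^2\gamma^{2k}/k!\le D\,n^{2+6\gamma^2}$. That is where the exponent $1+3\gamma^2$ in the statement comes from. You simply sum the geometric bound and get $\|v\|_1=O(1)$. This uses nothing beyond \eqref{eq:bdvkher}, which the paper itself proves, and is strictly stronger. The paper's route only buys a constant expressed through $\mathcal{L}[f^2]$ (hypothesis \eqref{eq:sqi} alone) instead of $M$. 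You are also more explicit than the paper about the vanishing boundary terms and about whether the series actually represents $f$ on $\mathbb{T}$; the paper uses both silently.

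One caveat applies equally to your proof and the paper's: \eqref{eq:asyher} appears to be missing a factor $2^{k/2}$.

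\begin{itemize}
\item At $\tau=\pi/4$ one has $\alpha(\tau,\omega)^2=\cot^2\omega+3+2\tan^2\omega>2\tan^2\omega$.
\item \eqref{eq:recangle} gives $\tan\omega_j\tan\omega_{j+1}=a_j^2=j/\gamma^2$.
\item Hence $\alpha_k>2^{k/2}\prod_{j=1}^{k}|\tan\omega_j|$. The product is $\sqrt{k!}\,\gamma^{-k}$ up to factors polynomial in $k$; for example, for $k=2m$ it equals $(2m-1)!!/\gamma^{2m}$.
\item Conversely, $\cot^2\omega_j=O(1/j)$ shows $\alpha_k$ is no larger than $2^{k/2}\sqrt{k!}\,\gamma^{-k}$ times a polynomial.
\end{itemize}

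Redoing your estimate gives $|v_k|\le C(\sqrt{2}/\gamma)^k\,\mathrm{poly}(k)$. Geometric decay therefore requires $\gamma>\sqrt{2}$, and with it both the truncation bound $n=O(\log(1/\epsilon))$ and your $\|v\|_1=O(1)$. For $1<\gamma\le\sqrt{2}$ neither your argument nor the paper's delivers the stated bounds.

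This is the ``factors cancel differently'' scenario you anticipated, but it comes from $\alpha_k$, not from the Hermite normalization. Your direct summation survives the correction for $\gamma>\sqrt{2}$. The paper's Cauchy--Schwarz factor $\sum_{k\le n}\alpha_k^2\gamma^{2k}/k!$ would instead grow like $2^n$ and no longer give a polylogarithmic $\|v\|_1$.
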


\begin{proof}
   The Hermite polynomials form a complete basis for the space of square-integrable functions with respect to a Gaussian weight. It follows from \eqref{eq:sqi} that there exist coefficients $v_i$ such that
\begin{equation*}
f(x) = \sum_{k=0}^\infty \frac{v_k}{\alpha_k} \hat{P}_k(x),
\end{equation*}
where $\hat{P}_k(x) = \gamma^{-k} H_k\big(\gamma (x-1)\big)$ are the polynomials introduced in Example~\ref{ex:herm}, and $\alpha_k$ denotes the associated block encoding constants. Using the orthogonality relation for the Hermite polynomials \eqref{eq:orher}, we obtain the following expression for the coefficients $v_k$,
\begin{equation*}
v_k = \mathcal{L}[f(x)\hat{P}_k(x)] \, \frac{\alpha_k \gamma^{2k}}{\sqrt{2\pi}\, k!}.
\end{equation*}
We therefore seek an upper bound for $\mathcal{L}[f(x)\hat{P}_k(x)]$. Using Rodrigues’ formula for the Hermite polynomials, we have
\begin{equation*}
\hat{P}_k(x)\, e^{-\gamma^2(x-1)^2/2}
= \frac{1}{\gamma^{2k}} \frac{d^k}{dx^k} \left(e^{-\gamma^2(x-1)^2/2}\right).
\end{equation*}
Hence, using this relation, integration by parts and equation \eqref{eq:sqi2}, we obtain
\begin{equation*}
\begin{aligned}
\left|\mathcal{L}[f(x)\hat{P}_k(x)]\right|
&= \left|\int_{\mathbb{R}} f(x)\hat{P}_k(x) e^{-\gamma^2(x-1)^2/2}\, \gamma\, dx \right| \\
&= \left|\int_{\mathbb{R}} \gamma^{-2k} f(x) \frac{d^k}{dx^k}\!\left(e^{-\gamma^2(x-1)^2/2}\right) \gamma\, dx \right| \\
&= \left|\int_{\mathbb{R}} \gamma^{-2k} \frac{d^k f(x)}{dx^k} e^{-\gamma^2(x-1)^2/2} \gamma\, dx \right| \\
&  \leq M \frac{\sqrt{k!}  }{k^{1 + 3 \gamma^2}}\gamma^{-2k}.
\end{aligned}
\end{equation*}
Using this inequality together with the asymptotic behavior of $\alpha_k$ for Hermite polynomials given in \eqref{eq:asyher}, we obtain that there exists a constant $C$ such that
\begin{equation}\label{eq:bdvkher}
       |v_k| \leq   \frac{C}{\gamma^k}.
    \end{equation}
Next, since $\left|\hat{P}_k(z)/\alpha_k\right| \leq 1$ for all $z \in \mathbb{T}$, it follows that
\begin{equation*}
\max_{z \in \mathbb{T}} \left|f(z) - \sum_{k=0}^n v_k \frac{\hat{P}_k(z)}{\alpha_k}\right|
= \max_{z \in \mathbb{T}} \left|\sum_{k=n+1}^\infty v_k \frac{\hat{P}_k(z)}{\alpha_k}\right|
\leq \sum_{k=n+1}^\infty |v_k|.
\end{equation*}
In particular, applying the bound \eqref{eq:bdvkher} to the right-hand side, we deduce that there exists a constant $C'$ such that
\begin{equation*}
\max_{z \in \mathbb{T}} \left|f(z) - \sum_{k=0}^n v_k \frac{\hat{P}_k(z)}{\alpha_k}\right|
\leq  \frac{C'}{\gamma^n}.
\end{equation*}
Therefore, any function $f$ satisfying the conditions of the proposition can be approximated on the unit circle within error $\epsilon$ using $O(\log(1/\epsilon))$ terms of its Hermite expansion. To bound $\|v\|_1 = |v_0| + \dots + |v_n|$, we note that the Cauchy-Schwartz inequality implies
\begin{equation*}
    \|v\|_1 = \sum_{i =0}^n |v_1| \leq \sqrt{\sum_{k=0}^n v_k^2 \frac{\sqrt{2\pi}k!}{\alpha_k^2 \gamma^{2k}} }\sqrt{\sum_{k=0}^n \frac{\alpha_k^2 \gamma^{2k}}{\sqrt{2\pi}k!} }
\end{equation*}
Using the orthogonality of the Hermite polynomials together with the expansion of $f$ in this basis, we can bound the first term in terms of $\mathcal{L}[f^2]$:
\begin{equation*}
    \sum_{k=0}^n v_k^2 \frac{\sqrt{2\pi}\,k!}{\alpha_k^2 \gamma^{2k}}
    \leq \sum_{k=0}^\infty v_k^2 \frac{\sqrt{2\pi}\,k!}{\alpha_k^2 \gamma^{2k}}
    = \mathcal{L}[f^2].
\end{equation*}
Using the asymptotic behavior of $\alpha_k$ in equation \eqref{eq:asyher}, we find that there exists a constant $D$ independent of $f$ and $n$ such that
\begin{equation*}
    \sum_{k=0}^n \frac{\alpha_k^2 \gamma^{2k}}{\sqrt{2\pi}k!} \leq  \sum_{k =0}^n k^{1 + 6 \gamma^2 + O(1/k)} \leq D\, n^{2 + 6 \gamma^2}.
\end{equation*}
Combining these results, we obtain
\begin{equation*}
    \|v\|_1 \leq D^{1/2} \, \sqrt{\mathcal{L}[f^2]} \, n^{1 + 3\gamma^2}.
\end{equation*}
The desired result then follows from this bound together with an application of Proposition~\ref{prop:3.3orpl}, using the truncated Hermite series of $f$ with $n$ terms.
\end{proof}

\section{Generalized QSP and Laurent biorthogonal polynomials}\label{sec:3}

In this section, we consider the sequences of polynomials generated by generalized QSP, as introduced in \cite{motlagh2024generalized}. In contrast to the sequences produced by OP-QSP, these sequences do not form an orthogonal polynomial basis. Instead, we show that they form a biorthogonal basis and can be identified as Laurent biorthogonal polynomials (LBPs). We apply this result to the associated angle-finding problem.

\subsection{Circuit and recurrence relation for generalized QSP}

Let $U$ denote an arbitrary unitary acting on an Hilbert space $\mathcal{H}$ and assume access to its $0$-controlled version acting $\mathbb{C}^2 \otimes \mathcal{H}$. Generalized QSP achieves a block-encoding of a polynomial in $U$ by repeated application of the circuit illustrated in Figure \ref{fig:circuitGQSP}, which consists of a $0$-controlled application of $U$ followed by a rotation $R(\theta,\phi)$ on an ancillary qubit. The gate $R(\theta,\phi)$ corresponds to the following $U(2)$ rotation of the ancillary qubit:
\begin{equation*}
    R(\theta,\phi) = \begin{pmatrix}
        e^{i \phi} \cos \theta &  e^{i \phi} \sin \theta  \\
        \sin \theta & -\cos \theta
    \end{pmatrix} \otimes I.
\end{equation*}
This is sometimes referred to in the literature as an ${\rm SU}(2)$ rotation, as it differs from one only by a global phase $e^{i\phi/2}$, which does not affect the outcome of the protocol. The unitary transformation induced by one iteration of the circuit is given by,
\begin{equation}\label{eq:transfgqsp}
    T(\theta,\phi, U) = \begin{pmatrix}
        e^{i \phi} \cos \theta\, U &  e^{i \phi} \sin \theta  \\
        \sin \theta \, U & -\cos \theta
    \end{pmatrix}.
\end{equation}
Repeated $i$ times with different angles $\theta$ and $\phi$, the resulting transformation is given by
\begin{equation}\label{eq:begqsp}
   \begin{pmatrix}
       P_i(U) & \cdot \quad \\
       Q_i(U) & \cdot \quad 
   \end{pmatrix} 
   = T(\theta_i,\phi_i,U)\dots T(\theta_1,\phi
   _1, U),
\end{equation}
where $P_i(U)$ and $Q_i(U)$ are polynomials of degree $i$ in $U$. From the structure of the matrix $ T(\theta,\phi, U)$, we derive the following recurrence relations for the sequences of polynomials $P_i$ and $Q_i$, viewed as polynomials in a complex variable $z \in \mathbb{C}$:
 \begin{equation}\label{eq:rec1}
    P_{i}(z) = e^{i \phi_i} \cos \theta_i z P_{i-1}(z) + e^{i \phi_i} \sin \theta_i  Q_{i-1}(z),
\end{equation}
\begin{equation}\label{eq:rec2}
    Q_{i}(z) =\sin \theta_i z P_{i-1}(z) -\cos \theta_i  Q_{i-1}(z).
\end{equation}

From \eqref{eq:begqsp}, it follows that generalized QSP provides an instance of the class of algorithms discussed in Section~\ref{sec:1}, yielding a block encoding of the form \eqref{eq:bepoly} with $\ell = 1$, $W_j = T(\theta_j, \phi_j, U)$, and $P_i(U) = \hat{P}_i(U)/\alpha_i$.  Lemma~\eqref{prop:sed2} therefore applies, so that generalized QSP can be used as the select oracle in an LCU circuit to implement a block encoding of linear combinations of the polynomial basis $\{P_i(U)\}_{i=0}^n$. We now turn to the characterization of this basis.

\subsection{Relation between generalized QSP and Laurent biorthogonal polynomials}

The following proposition provides a Favard-type theorem characterizing the biorthogonality of the finite sequences of polynomials obtained in generalized QSP, that is, those defined by the recurrence relations \eqref{eq:rec1} and \eqref{eq:rec2}. Its proof is given in Appendix~\ref{sec:appA} and follows from inverting the recurrence relation and applying the residue theorem.

\begin{proposition}\label{prop:form}
    Let $\{P_i(z)\}_{i=0}^n$ and $\{Q_i(z)\}_{i=0}^n$ be finite sequences of polynomials satisfying the recurrence relations  \eqref{eq:rec1} and \eqref{eq:rec2}, with $\text{deg}\, P_i = \text{deg}\, Q_i = i$ and $\theta_i \neq 0 \mod \pi $. Then there exists a linear functional $\mathcal{L}$ and constants $\chi_i$ such that we have the following biorthogonal relations:
    \begin{equation}\label{eq:condform}
        \mathcal{L}[\tilde{Q}_i(z)P_j(z) ] = \chi_i \delta_{ij}
    \end{equation}
    where $\tilde{Q}_i(z) = z^{-i} Q_i(z)$. Furthermore, the linear functional $\mathcal{L}$ has a contour integral representation,
    \begin{equation}\label{eq:linfun}
         \mathcal{L}[f(z)] = \int_\Gamma \frac{f(z)}{z P_n(z)\tilde{Q}_n(z)} dz,
    \end{equation}
    where $\Gamma$ denotes a closed contour whose interior contains $z=0$ and all roots of $P_n(z)$, but none of the roots of $Q_n(z)$.
\end{proposition}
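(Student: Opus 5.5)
The plan is to construct $\mathcal{L}$ directly as the contour integral \eqref{eq:linfun} and verify \eqref{eq:condform} by residue calculus. The key ingredient is the invertibility of the one-step transfer matrix \eqref{eq:transfgqsp}. Its determinant is $\det T(\theta,\phi,z) = -e^{i\phi}z$, so
\begin{equation*}
T(\theta,\phi,z)^{-1} = \frac{-1}{e^{i\phi}z}\begin{pmatrix} -\cos\theta & -e^{i\phi}\sin\theta \\ -\sin\theta\, z & e^{i\phi}\cos\theta\, z\end{pmatrix}.
\end{equation*}
Its entries are Laurent polynomials with powers in $\{z^{-1},z^{0}\}$. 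Multiplying such inverses from step $n$ down to step $j$ gives, for every $j\le n$,
\begin{equation*}
P_j(z) = A_j(z)P_n(z) + B_j(z)Q_n(z), \qquad Q_j(z) = C_j(z)P_n(z)+D_j(z)Q_n(z).
\end{equation*}
Here $A_j,\dots,D_j$ are Laurent polynomials whose powers of $z$ lie in a window of length $n-j$. The first task is to track this window precisely for each of the four coefficients by induction on $n-j$. For example, $A_j$ and $B_j$ should involve only nonpositive powers down to $z^{-(n-j)}$, with the extreme coefficients controlled by products of $\cos\theta_k$ and $\sin\theta_k$.

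Next, with $\tilde Q_n = z^{-n}Q_n$, the integrand of $\mathcal{L}[\tilde Q_i P_j]$ is
\begin{equation*}
\frac{\tilde Q_i(z)P_j(z)}{zP_n(z)\tilde Q_n(z)} = \frac{z^{n-i-1}Q_i(z)P_j(z)}{P_n(z)Q_n(z)}.
\end{equation*}
I would split into two cases.

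\emph{Case $j>i$.} Shrink $\Gamma$ onto its interior poles, namely $z=0$ and the roots of $P_n$. At a root of $P_n$ we have $P_j = B_jQ_n$, so $Q_n$ cancels there. Substituting $P_j=A_jP_n+B_jQ_n$ globally, the $A_j$ term has no interior pole except at the origin. The $B_j$ term is $z^{n-i-1}Q_iB_j/P_n$. I would rewrite this one as well using $Q_i = C_iP_n + D_iQ_n$ and the window bounds, and check that the total is a Laurent function whose only interior singularity is at $0$ with vanishing residue. The residue vanishes because the power window forces the $z^{-1}$ coefficient to be zero when $j>i$.

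\emph{Case $j<i$.} Expand $\Gamma$ outward instead. The exterior poles are the roots of $Q_n$ and $z=\infty$. At roots of $Q_n$, substitute $Q_i = C_iP_n + D_iQ_n$: the $C_i$ part cancels $P_n$ and the $D_i$ part cancels $Q_n$, leaving a function regular at the roots of $Q_n$. What remains is a Laurent polynomial, so the only surviving exterior contribution is the residue at infinity. Degree counting, using $\deg P_j = j < i$ and the window of $C_i$, shows this residue is zero.

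\emph{Case $j=i$.} The same computation leaves a single surviving coefficient, which defines $\chi_i$. The hypothesis $\theta_k\not\equiv 0 \pmod \pi$ enters here, and in the degree assumptions, to guarantee that the relevant extreme coefficients of $A_j,\dots,D_j$ are nonzero products of $\sin\theta_k$ and $e^{i\phi_k}\cos\theta_k$. This makes the window statements exact and gives an explicit product formula for $\chi_i$.

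The main obstacle I expect is the bookkeeping in the first step: pinning down exactly which powers of $z$ appear in $A_j,B_j,C_j,D_j$. The whole vanishing argument rests on these windows lining up so that the residue at $0$ (for $j>i$) or at $\infty$ (for $j<i$) has no contribution. If the direct substitution becomes messy, I would fall back on a cleaner route. Since the identities $P_j = A_jP_n + B_jQ_n$ and $Q_i = C_iP_n + D_iQ_n$ hold globally, one can reduce $\mathcal{L}[\tilde Q_iP_j]$ to contour integrals of simple Laurent monomials divided by $P_n$ or $Q_n$ alone. Each of these is then evaluated by residues at a single point, $0$ or $\infty$.
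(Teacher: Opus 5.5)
Your setup is sound: inverting the transfer matrices, writing $P_j=A_jP_n+B_jQ_n$ and $Q_i=C_iP_n+D_iQ_n$, and computing $\mathcal{L}[\tilde Q_iP_j]$ pairwise. However, the mechanism you describe for the two off-diagonal cases would fail.

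\emph{Case $j>i$.} The integrand genuinely has poles at the zeros of $P_n$ inside $\Gamma$. At a simple zero $z_0$ the residue is $z_0^{n-i-1}Q_i(z_0)B_j(z_0)/P_n'(z_0)$, which is generally nonzero. Substituting $Q_i=C_iP_n+D_iQ_n$ into the $B_j$ term only produces $z^{n-i-1}B_jD_iQ_n/P_n$, which still carries these poles. So the total is never ``a Laurent function whose only interior singularity is at $0$.''

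\emph{Case $j<i$.} Symmetrically, the $C_i$ part $z^{n-i-1}C_iP_j/Q_n$ is not regular at the zeros of $Q_n$; its residue there is $z_1^{n-i-1}C_i(z_1)P_j(z_1)/Q_n'(z_1)$. The $D_i$ part is a rational function with denominator $P_n$, not a Laurent polynomial.

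So neither case can be closed by evaluating the whole integrand at a single point.

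The fix is the route you list only as a fallback, and it is exactly the paper's proof. After one substitution, treat the two pieces separately. The $1/Q_n$ piece is handled inside $\Gamma$, where its only possible singularity is $0$. The $1/P_n$ piece is handled outside $\Gamma$, where its only singularity is $\infty$. Use the sharp windows $A_j,B_j\in\mathrm{span}\{z^{-1},\dots,z^{-(n-j)}\}$ and $C_j,D_j\in\mathrm{span}\{1,\dots,z^{-(n-j-1)}\}$ for $j<n$; the index-$n$ cases are immediate since $A_n=D_n=1$ and $B_n=C_n=0$.
\begin{itemize}
\item For $j>i$, $z^{n-i-1}A_j$ is a polynomial, so $z^{n-i-1}A_jQ_i/Q_n$ is analytic inside $\Gamma$. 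Meanwhile $z^{n-i-1}B_jQ_i/P_n=O(z^{-2})$ at $\infty$.
\item For $j<i$, $z^{n-i-1}C_i$ is a polynomial, and $z^{n-i-1}D_iP_j/P_n=O(z^{j-i-1})=O(z^{-2})$.
\item For $j=i$, only the simple pole at the origin of the $A_i$ piece survives, and this gives $\chi_i$.
\end{itemize}
The top power $z^{-1}$ in $A_j,B_j$ is essential here. Your ``nonpositive powers'' would allow a $z^0$ term, which destroys the $O(z^{-2})$ decay and hence the vanishing residue at infinity.

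The paper packages the same bookkeeping as moments $\mathcal{L}[z^{-k}P_i]$ and $\mathcal{L}[z^{k}\tilde Q_i]$ for $k\le i$, followed by a span argument. This is equivalent to your pairwise version.

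You should also show that $\Gamma$ exists.
\begin{itemize}
\item The degree hypothesis forces $\cos\theta_k\neq0$, so $Q_n(0)=(-1)^nQ_0\prod_k\cos\theta_k\neq0$.
\item A common zero $z_0\neq0$ of $P_n$ and $Q_n$ would propagate down the invertible recurrence to $P_0=Q_0=0$, a contradiction.
\end{itemize}
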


\begin{corollary}
    The sequences of polynomials $\{P_i(z)\}_{i=0}^n$ and $\{Q_i(z)\}_{i=0}^n$ implemented at each step of the generalized quantum signal processing algorithm form a biorthogonal basis; that is, they satisfy equation \eqref{eq:condform} for some constants $\chi_i$.
\end{corollary}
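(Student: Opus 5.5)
The corollary should follow almost immediately from Proposition~\ref{prop:form}. The only real work is to check that the sequences produced by generalized QSP meet its hypotheses. I will proceed in three steps.

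\textbf{Step 1: the recurrences.} The block structure \eqref{eq:begqsp} together with the form of $T(\theta,\phi,U)$ in \eqref{eq:transfgqsp} gives exactly the recurrences \eqref{eq:rec1} and \eqref{eq:rec2} for the pairs $(P_i,Q_i)$, viewed as polynomials in $z$. The initial conditions are $P_0=1$ and $Q_0=0$, read off from the empty product. With these, one checks $P_1=e^{i\phi_1}\cos\theta_1 z$ and $Q_1=\sin\theta_1 z$.

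\textbf{Step 2: the degree conditions.} Proposition~\ref{prop:form} requires $\deg P_i=\deg Q_i=i$. I will establish this by induction, tracking the leading coefficients of $P_i$ and $Q_i$ through \eqref{eq:rec1}--\eqref{eq:rec2}. This is the step I expect to be the main obstacle, because the degrees can drop for special angles. I will try two ways to handle this.
\begin{itemize}
\item Restrict to the generic case: assume $\theta_i\neq 0 \bmod \pi$ and that the relevant cosines are nonzero. This is the nondegenerate situation, and it is the one in which the statement is meaningful.
\item Alternatively, read the corollary as applying whenever the hypotheses of Proposition~\ref{prop:form} hold. In that case one simply notes that generic angle choices satisfy them.
\end{itemize}

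\textbf{Step 3: conclusion.} Once the hypotheses are verified, Proposition~\ref{prop:form} provides the linear functional $\mathcal{L}$ given by the contour integral \eqref{eq:linfun}, together with constants $\chi_i$, such that $\mathcal{L}[\tilde Q_i P_j]=\chi_i\delta_{ij}$. This is precisely \eqref{eq:condform}.

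It remains to justify the word ``basis.'' I will argue that $\chi_i\neq 0$, so that the biorthogonality relation forces linear independence. To see this, evaluate $\mathcal{L}[\tilde Q_i P_i]$ by residues, or use the fact that $\deg P_i=i$ to expand any polynomial of degree at most $n$ in $\{P_i\}$. It then follows that both sequences span the polynomials of degree at most $n$, and $\{\tilde Q_i\}$ spans the corresponding Laurent space.
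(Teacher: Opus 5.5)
Your overall route, checking the hypotheses of Proposition~\ref{prop:form} and invoking it, is how the paper obtains the corollary. However, Step~1 contains an error that undermines everything after it: reading $P_0=1$, $Q_0=0$ off the empty product in \eqref{eq:begqsp} is incompatible with Proposition~\ref{prop:form}.

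With these initial data, the recurrences \eqref{eq:rec1}--\eqref{eq:rec2} give $Q_i(0)=-\cos\theta_i\,Q_{i-1}(0)=0$ for all $i$, and $P_i(0)=e^{i\phi_i}\sin\theta_i\,Q_{i-1}(0)=0$ for all $i\ge1$. So every $Q_i$, and every $P_i$ with $i\ge1$, is divisible by $z$, for \emph{every} choice of angles. This has several consequences:
\begin{itemize}
\item $\deg Q_0=0$ fails, so your induction in Step~2 has no base.
\item $z=0$ is a common root of $P_n$ and $Q_n$. The contour $\Gamma$ in \eqref{eq:linfun} must enclose $z=0$ but no zero of $Q_n$, so it does not exist. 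This is exactly what Lemmas~\ref{lem:roots} and~\ref{lem:roots2} exclude, and both rely on $Q_0$ being a nonzero constant.
\item Restricting to generic angles does not help, because the obstruction does not depend on the angles.
\item Your ``alternative reading'' would make the corollary vacuous, since the hypotheses of Proposition~\ref{prop:form} would then hold for no angles at all.
\item The claims of Step~3 also fail. $\tilde Q_0=0$ forces $\chi_0=0$, and each $\tilde Q_i$ with $i\ge1$ lies in $\mathrm{span}\{1,z^{-1},\dots,z^{-(i-1)}\}$, so $\{\tilde Q_i\}_{i=0}^n$ cannot span an $(n+1)$-dimensional space.
\end{itemize}
The real obstacle is the initial data, not degree drop at special angles.

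The missing ingredient is the initial rotation $R(\theta_0,\phi_0)$ of generalized QSP as in \cite{motlagh2024generalized}. The paper implicitly assumes it: the Outlook counts $n+1$ single-qubit rotations for degree $n$, Lemma~\ref{lem:roots} uses that $P_0$ and $Q_0$ are nonzero constants, and the proof of Proposition~\ref{prop:cond_rec_from_biortho} normalizes $|P_0|^2+|Q_0|^2=1$. Equivalently, the ancilla enters the first step in the state $e^{i\phi_0}\cos\theta_0\ket{0}+\sin\theta_0\ket{1}$, so $P_0=e^{i\phi_0}\cos\theta_0$ and $Q_0=\sin\theta_0$.

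With this, Step~2 works. Since $\deg Q_{i-1}<i$, the leading coefficients satisfy $\mathrm{lc}(P_i)=e^{i\phi_i}\cos\theta_i\,\mathrm{lc}(P_{i-1})$ and $\mathrm{lc}(Q_i)=\sin\theta_i\,\mathrm{lc}(P_{i-1})$. Hence $\deg P_i=\deg Q_i=i$ as soon as all $\theta_j\notin\frac{\pi}{2}\mathbb{Z}$. The stated hypothesis $\theta_i\not\equiv 0 \bmod \pi$ alone does not give $\deg P_i=i$, so you were right to add the cosine condition. Moreover $Q_n(0)=(-1)^n\prod_{j=1}^n\cos\theta_j\,Q_0\neq0$, which is what makes $\Gamma$ exist. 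Proposition~\ref{prop:form} then gives \eqref{eq:condform}.

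For the ``basis'' part, two corrections:
\begin{itemize}
\item Triangularity of $\{\tilde Q_i\}$ in $1,z^{-1},\dots,z^{-n}$ comes from $Q_i(0)\neq0$ (the coefficient of $z^{-i}$ in $\tilde Q_i$), not from $\deg Q_i=i$.
\item $\chi_i=Q_i(0)\,\mathcal{L}[z^{-i}P_i]\neq0$ follows from the residue computation in the proof of Proposition~\ref{prop:form}, which gives $\mathcal{L}[z^{-i}P_i]=2\pi i\,Q_n(0)^{-1}\prod_{k=i+1}^{n}e^{-i\phi_k}\cos\theta_k$.
\end{itemize}
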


Sequences of polynomials $P_i$ and Laurent polynomials $\tilde{Q}_i$ that form a biorthogonal basis are referred to as Laurent biorthogonal polynomials \cite{hendriksen1991biorthogonal, zhedanov1998classical}. To complete the characterization of the polynomial sequences implementable by generalized QSP, we require the converse result. We therefore derive a necessary and sufficient condition for the converse, that is, for a sequence of LBPs to be implementable via generalized QSP. This relies on results that hold for linear functionals $\mathcal{L}$ with specific properties.

\begin{definition}
    A linear functional $\mathcal{L}$ is said to be \textit{biorthogonal quasi-definite of order $n$} if $h_{i} \neq 0$ for $i = 0, 1,\dots n$ where
\begin{equation*}
    h_{i} = \begin{vmatrix}
    c_0 & c_{1} & \dots & c_{i-1}\\
    c_{-1} & c_{0} & \dots & c_{1}\\
    \vdots & \vdots & \ddots & \vdots\\
    c_{-i+1} & c_{-i+2} & \dots & c_{0}
    \end{vmatrix}.
\end{equation*}
\end{definition}
In particular, we have the following lemma on the existence of Laurent biorthogonal polynomials with respect to biorthogonal quasi-definite linear functionals.
\begin{lemma}
   Let $\mathcal{L}$ be a linear functional that is biorthogonal quasi-definite of order $n$. 
Then there exist sequences of polynomials $\{P_i(z)\}_{i=0}^n$ and $\{Q_i(z)\}_{i=0}^n$, unique up to normalization, 
with $\deg P_i = \deg Q_i = i$, such that
\begin{equation}\label{eq:linfun2}
\mathcal{L}\big[P_i(z)\,\tilde{Q}_j(z)\big] = \chi_i \,\delta_{ij}, \qquad \chi_i \neq 0,
\end{equation}
for all $0 \le i,j \le n$, with $\tilde{Q}_i(z) = z^{-i} Q_i(z)$
\end{lemma}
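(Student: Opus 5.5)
We construct $P_i$ and $Q_i$ by solving linear systems in the moments $c_k = \mathcal{L}[z^k]$, $k\in\mathbb{Z}$, in the same spirit as the determinantal formula \eqref{eq:poly} in the proof of Proposition~\ref{prop:afp1}. Biorthogonality $\mathcal{L}[P_i \tilde{Q}_j] = \chi_i\delta_{ij}$ for all $0\le i,j\le n$ is equivalent to two families of conditions:
\begin{equation*}
\mathcal{L}\big[P_i(z)\, z^{-k}\big] = 0 \quad (0\le k\le i-1),
\qquad
\mathcal{L}\big[z^{k}\,\tilde{Q}_j(z)\big] = 0 \quad (0\le k\le j-1),
\end{equation*}
together with $\chi_i \neq 0$. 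The reduction is as follows. For $j<i$, $\tilde{Q}_j$ lies in $\mathrm{span}\{z^{-j},\dots,z^{0}\}$. For $i<j$, $P_i$ lies in $\mathrm{span}\{1,\dots,z^{i}\}$. So the off-diagonal entries vanish as soon as the two families of conditions hold.

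The first step is to construct $P_i$. Write $P_i(z)=\sum_{m=0}^i d_m z^m$ and normalize $d_i=1$ (monic). The conditions become
\begin{equation*}
\sum_{m=0}^{i} c_{m-k}\, d_m = 0, \qquad k=0,\dots,i-1.
\end{equation*}
This is an $i\times i$ system for $d_0,\dots,d_{i-1}$ with right-hand side $-c_{i-k}$. Its coefficient matrix has entries $c_{m-k}$, which is exactly the Toeplitz matrix whose determinant is $h_i$ (up to transpose). Since $h_i\neq0$, there is a unique solution, given by Cramer's rule as a bordered determinant
\begin{equation*}
P_i(z)=\frac{1}{h_i}\det\begin{pmatrix} c_0 & \cdots & c_i\\ \vdots & & \vdots\\ c_{-i+1} & \cdots & c_1\\ 1 & \cdots & z^i\end{pmatrix}.
\end{equation*}

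The second step constructs $Q_i$ in the same way. Write $\tilde{Q}_j=\sum_{m=0}^j e_m z^{-m}$. The conditions are
\begin{equation*}
\sum_m c_{k-m} e_m = 0, \qquad k=0,\dots,j-1.
\end{equation*}
This is the transposed Toeplitz system, which is again solvable because $h_j\neq0$. Normalizing $e_0=1$ makes the coefficient of $z^j$ in $Q_j = z^j\tilde{Q}_j$ nonzero, so $\deg Q_j = j$.

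The point I expect to be the main obstacle is the nondegeneracy $\chi_i\neq0$, together with uniqueness. For $\chi_i$, use biorthogonality to reduce to the leading term: since $\mathcal{L}[P_i z^{-k}]=0$ for $k<i$, we get $\chi_i=\mathcal{L}[P_i\tilde Q_i]=e_i\,\mathcal{L}[P_i z^{-i}]$. Expanding the bordered determinant along its last row gives
\begin{equation*}
\mathcal{L}[P_i z^{-i}] = \pm\, \frac{h_{i+1}}{h_i},
\end{equation*}
which is nonzero for $i<n$ by the hypothesis $h_{i+1}\neq 0$. Showing $e_i\neq0$ needs a similar Cramer computation. Alternatively, normalize $\tilde{Q}_i$ by its $z^{-i}$ coefficient and check that the $z^0$ coefficient is then a ratio of nonzero Toeplitz minors. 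The top index $i=n$ requires care with the indexing convention for $h_{n+1}$; I would handle it by interpreting the order-$n$ hypothesis as covering the Toeplitz determinants of sizes up to $n+1$, and adjust the indexing if needed. Uniqueness up to normalization follows because the defining linear systems have nonsingular matrices. So any other biorthogonal pair with the same degrees has the same normalized coefficients, apart from scalar multiples.
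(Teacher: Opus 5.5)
Your route is the same as the paper's. You reduce biorthogonality to the moment conditions $\mathcal{L}[z^{-k}P_i]=0$ ($k<i$) and $\mathcal{L}[z^{k}\tilde{Q}_j]=0$ ($k<j$), then solve the resulting Toeplitz systems using $h_i\neq 0$. The $P_i$ part, the bordered determinant, and $\mathcal{L}[P_iz^{-i}]=h_{i+1}/h_i$ (the sign is $+$) are correct. You are also right that $\chi_n\neq0$ needs $h_{n+1}\neq0$, which the definition of order $n$ (only $h_0,\dots,h_n$) does not provide.

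The genuine gap is in the $Q_j$ step. The homogeneous $j\times(j+1)$ system $\sum_{m=0}^{j}c_{k-m}e_m=0$ has a one-dimensional solution space because its minor on the columns $m=0,\dots,j-1$ equals $h_j$. That is the system you get by normalizing $e_j=Q_j(0)=1$, not $e_0=1$. With $e_0=1$, the matrix for $(e_1,\dots,e_j)$ is $(c_{k-m})_{0\le k\le j-1,\,1\le m\le j}$. Its determinant is the shifted Toeplitz determinant $h_j^-$ of \eqref{eq:detafpgqsp} (the one with $c_{-1}$ on the diagonal), not $h_j$. Equivalently, normalizing $e_j=1$, Cramer's rule gives $e_0=(-1)^j h_j^-/h_j$. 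So your fallback claim, that the $z^0$ coefficient is a ratio of nonzero Toeplitz minors, is false. Since the solution space is one-dimensional, there is no freedom left: $\deg Q_j=j$ holds if and only if $h_j^-\neq0$, and nothing in the hypothesis controls $h_j^-$.

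In fact the degree claim can fail, so no sharper argument will close this. Take any $\mathcal{L}$ with $c_0=1$ and $c_{-1}=0$. Then $h_1=1$ and $h_2=c_0^2-c_1c_{-1}=1$. But $\mathcal{L}[P_0\tilde{Q}_1]=0$ forces $e_0c_0+e_1c_{-1}=e_0=0$, so $\tilde{Q}_1\propto z^{-1}$ and $Q_1$ is constant.

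The missing ingredient is an additional hypothesis: $h_j^-\neq0$ for $1\le j\le n$, together with $h_{n+1}\neq0$ for $\chi_n$. Under it, your argument goes through as written with the normalization $Q_j(0)=1$, giving $\chi_j=h_{j+1}/h_j$. This is consistent with the paper's own later formula $\gamma_i=(-1)^ih_i^-/h_i$ for the leading coefficient of $\check{Q}_i$. In generalized QSP, the leading coefficient of $Q_i$ is $\sin\theta_i$ times that of $P_{i-1}$, so the extra hypothesis is the analogue of $\theta_i\not\equiv 0\pmod{\pi}$.

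The paper's brief proof follows the same outline and passes over both points. The right fix is to strengthen the hypothesis rather than try to extract $\deg Q_j=j$ from quasi-definiteness alone.
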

\begin{proof}
    Assuming $\deg P_i = i$ and $\deg Q_i = i$, a linear independence argument shows that the biorthogonality condition in equation~\eqref{eq:linfun2} is equivalent to the following set of relations:
\begin{equation}\label{eq:co1}
        \mathcal{L}[z^{-j}P_i(z)]  = \eta_i \delta_{ij}, \quad 
        \mathcal{L}[z^{-j}Q_i(z)] = \upsilon_i \delta_{j0},
    \end{equation}
for some constant non-zero parameters $\eta_i$ and $\upsilon_i$. These equations can be reformulated as a linear system for the coefficients of $P_i$ and $Q_i$.
The biorthogonal quasi-definite condition then guarantee that this system admits a unique solution, up to a choice of normalization, which establishes the result.
\end{proof}

The preceding definition and lemma provide conditions for the existence of Laurent biorthogonal polynomials associated with a given linear functional.
The following proposition describes the recurrence relations satisfied by these sequences of polynomials.

\begin{proposition}\label{prop:rec_from_biortho}
   Let $\mathcal{L}$ be a linear functional that is biorthogonal quasi-definite of order $n$. Let $\{P_i(z)\}_{i=0}^n$ and $\{Q_i(z)\}_{i=0}^n$ be the finite sequences of polynomials satisfying $\deg P_i = \deg Q_i = i$ and
\begin{equation}\label{eq:biortho_condi}
        \mathcal{L}[\tilde{Q}_i(z)P_j(z) ] = \chi_i \delta_{ij}
    \end{equation}
    where $\tilde{Q}_i(z) = z^{-i} Q_i(z)$. Then, there exists coefficients $\zeta_i$, $\beta_i$, $\gamma_i$ and $\delta_i$ such that 
\begin{equation}\label{eq:rec1alt}
    P_{i}(z) = \zeta_i z P_{i-1}(z) + \beta_i Q_{i-1}(z), \quad
    Q_{i}(z) = \gamma_i z P_{i-1}(z) + \delta_i Q_{i-1}(z).
\end{equation}
\end{proposition}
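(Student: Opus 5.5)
We will use the characterization of the biorthogonal sequences through the moment conditions \eqref{eq:co1} from the proof of the previous lemma: with $\deg P_i=\deg Q_i=i$,
\[
\mathcal{L}[z^{-j}P_i]=\eta_i\delta_{ij},\quad 0\le j\le i,
\qquad
\mathcal{L}[z^{-j}Q_i]=\upsilon_i\delta_{j0},\quad 0\le j\le i,
\]
with $\eta_i,\upsilon_i\neq 0$. More precisely, $P_i$ annihilates $z^{-j}$ for $0\le j\le i-1$, and $Q_i$ annihilates $z^{-j}$ for $1\le j\le i$. Quasi-definiteness of order $n$ means that the Toeplitz system determining each polynomial, once a normalization is fixed, has a unique solution. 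Equivalently, the $(i+1)$-dimensional space of polynomials of degree $\le i$ carries $i$ independent linear constraints in each case, so the solution set is a one-dimensional space.

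The key step is to show that $zP_{i-1}$ and $Q_{i-1}$ span the two-dimensional space
\[
V_i=\{R : \deg R\le i,\ \mathcal{L}[z^{-j}R]=0 \text{ for } 1\le j\le i-1\}.
\]
First, $zP_{i-1}\in V_i$: indeed $\mathcal{L}[z^{-j}\,zP_{i-1}]=\mathcal{L}[z^{-(j-1)}P_{i-1}]$, which vanishes for $0\le j-1\le i-2$. Second, $Q_{i-1}\in V_i$, because it is annihilated by $z^{-j}$ for $1\le j\le i-1$. The space $V_i$ has dimension at least $i+1-(i-1)=2$. It has dimension exactly $2$ because the $i-1$ constraints are independent: they form the minor of the Toeplitz matrix used for $h_{i-1}$, shifted by one, and this is where quasi-definiteness enters.

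The two elements $zP_{i-1}$ and $Q_{i-1}$ are linearly independent. The first vanishes at $z=0$. The second does not: if $Q_{i-1}(0)=0$, then $Q_{i-1}=zR$ with $\deg R\le i-2$, and $R$ is annihilated by $z^{-j}$ for $0\le j\le i-2$. The corresponding Toeplitz system is nonsingular by the condition $h_{i-1}\neq0$, which forces $R=0$, a contradiction. Hence they form a basis of $V_i$.

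Finally, both $P_i$ and $Q_i$ lie in $V_i$: $P_i$ is annihilated by $z^{-j}$ for $j\le i-1$, and $Q_i$ by $z^{-j}$ for $1\le j\le i$. Expanding each in the basis $\{zP_{i-1},Q_{i-1}\}$ gives \eqref{eq:rec1alt} with some coefficients $\zeta_i,\beta_i,\gamma_i,\delta_i$.

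The main obstacle is verifying the dimension count, namely that the constraints defining $V_i$ are independent, and that $Q_{i-1}(0)\neq0$. Both reduce to the nonvanishing of the determinants $h_{i-1}$ and $h_i$, which I will check by identifying the relevant coefficient matrices with the Toeplitz determinants in the definition of biorthogonal quasi-definiteness, up to an index shift.
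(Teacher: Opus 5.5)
Your proof is correct and is essentially the paper's argument recast as a dimension count. The paper subtracts $\zeta_i zP_{i-1}$ (resp.\ $\gamma_i zP_{i-1}$) to lower the degree, then uses uniqueness under quasi-definiteness to identify the remainder, annihilated by $z^{-j}$ for $1\le j\le i-1$, as a multiple of $Q_{i-1}$; this is exactly your claim $V_i=\mathrm{span}\{zP_{i-1},Q_{i-1}\}$, with the $h_{i-1}$ minor giving $\dim V_i=2$. One simplification: $zP_{i-1}$ and $Q_{i-1}$ are independent simply because their degrees differ, so you do not need to show $Q_{i-1}(0)\neq 0$.
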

\noindent The proof of this proposition is given in Appendix~\ref{sec:appA}. It follows from noting that one can choose $\zeta_i$ and $\gamma_i$ such that $P_{i}(z) - \zeta_i z P_{i-1}(z)$ and $Q_{i}(z) - \gamma_i z P_{i-1}(z)$ are polynomials of degree $i-1$. We then deduce that these polynomials are proportional to $Q_{i-1}$, since they satisfy the same biorthogonality relations with respect to the monomials $\{z^{-j}\, |\, j = 0,1,\dots,i-2 \}$. 

The recurrence relations in the previous proposition are similar to that satisfied by sequences associated with generalized QSP, but they are in fact more general, since there are no constraints requiring the coefficients $\zeta_i$, $\beta_i$, $\gamma_i$, and $\delta_i$ to be trigonometric functions as in equations \eqref{eq:rec1} and \eqref{eq:rec2}. This is expected, as polynomial sequences satisfying equation \eqref{eq:biortho_condi} are LBPs and form a broader class than those generated by generalized QSP. By comparing \eqref{eq:rec1} and \eqref{eq:rec2}, which describe general LBPs, with the recurrence relation \eqref{eq:rec1alt} associated with generalized QSP, we observe that the LBPs arising in the latter correspond to the case in which there exist real angles $\theta_i$ and $\phi_i$ such that
\begin{equation}\label{eq:contr1}
   \frac{\gamma_{i} \zeta_{i-1} }{\delta_i \gamma_{i-1}} = -e^{i \phi_{i-1}}\frac{\tan \theta_i}{\tan \theta_{i-1}}, \qquad \frac{\beta_i\gamma_i}{\zeta_i\delta_i}  = -\tan^2 \theta_i.
\end{equation}
This provides a method to identify which families of Laurent biorthogonal polynomials (LBPs) can be realized using generalized QSP. In practice, however, this approach may not be convenient, as it is expressed in terms of the recurrence relation \eqref{eq:rec1alt}, which does not correspond to the standard form commonly used for LBPs.
In the literature, these polynomials are typically studied in terms of the three-term recurrence relations they satisfy.

\begin{proposition} (reformulation of Proposition 1.1 in \cite{zhedanov1998classical})
  Let $\mathcal{L}$ be a linear functional that is biorthogonal quasi-definite of order $n$. Let $\{\hat{P}_i(z)\}_{i=0}^n$ and $\{Q_i(z)\}_{i=0}^n$ be the finite sequences of polynomials with $\hat{P}_i$ monic and $\deg \hat{P}_i = \deg Q_i = i$, satisfying
\begin{equation}\label{eq:biortho_condi}
        \mathcal{L}[\tilde{Q}_i(z)P_j(z) ] = \chi_i \delta_{ij}
    \end{equation}
    where $\tilde{Q}_i(z) = z^{-i} Q_i(z)$. Then there exist coefficients $d_i$ and $b_i$ such that the polynomials $\hat{P}_i$ satisfy the three-term recurrence relation
\begin{equation}\label{eq:reclaurent}
    \hat{P}_{i+1}(z) = (z-d_i)\hat{P}_i(z) -  z b_i \hat{P}_{i-1}(z).
\end{equation}
\end{proposition}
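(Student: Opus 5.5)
The plan is to reduce the three-term recurrence to the biorthogonality conditions through the equivalent moment formulation \eqref{eq:co1}. By that formulation, $\hat{P}_i$ is characterized, up to normalization, by
\begin{equation*}
\mathcal{L}[z^{-j}\hat{P}_i(z)] = 0, \qquad j=0,1,\dots,i-1,
\end{equation*}
together with $\mathcal{L}[z^{-i}\hat{P}_i] \neq 0$. Uniqueness comes from the biorthogonal quasi-definiteness of order $n$: the relevant coefficient matrix is the Toeplitz determinant $h_i \neq 0$. Since the monic normalization fixes the leading coefficient, any polynomial of degree $\le i$ that satisfies these $i$ vanishing conditions must be a scalar multiple of $\hat{P}_i$.

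Next, consider the degree-$(i+1)$ polynomial $R(z) = \hat{P}_{i+1}(z) - z\hat{P}_i(z)$, which has degree at most $i$. For $j=1,\dots,i$ we have
\begin{equation*}
\mathcal{L}[z^{-j}R] = \mathcal{L}[z^{-j}\hat{P}_{i+1}] - \mathcal{L}[z^{-(j-1)}\hat{P}_i] = 0,
\end{equation*}
because $j\le i$ in the first term and $j-1\le i-1$ in the second. Thus $R$ is annihilated by $z^{-1},\dots,z^{-i}$, but not necessarily by $z^{0}$. Rather than working in the basis $\hat{P}_0,\dots,\hat{P}_i$, I would decompose $R$ using the two polynomials $\hat{P}_i$ and $z\hat{P}_{i-1}$. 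Each satisfies a shifted version of the conditions. $\hat{P}_i$ is annihilated by $z^{-j}$ for $j=0,\dots,i-1$. $z\hat{P}_{i-1}$ is annihilated by $z^{-j}$ for $j=1,\dots,i-1$, since $\mathcal{L}[z^{-j}z\hat{P}_{i-1}] = \mathcal{L}[z^{-(j-1)}\hat{P}_{i-1}]$.

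Choose $d_i$ and $b_i$ so that $S := R + d_i\hat{P}_i + b_i z\hat{P}_{i-1}$ satisfies two extra conditions: $\mathcal{L}[z^{0}S]=0$ and $\mathcal{L}[z^{-i}S]=0$. This is a $2\times 2$ linear system. Its solvability needs the determinant
\begin{equation*}
\mathcal{L}[\hat{P}_i]\,\mathcal{L}[z^{-(i-1)}\hat{P}_{i-1}] - \mathcal{L}[z\hat{P}_{i-1}]\,\mathcal{L}[z^{-i}\hat{P}_i]
\end{equation*}
to be nonzero. Here $\mathcal{L}[\hat{P}_i]=0$ for $i\ge1$, so the determinant reduces to $-\mathcal{L}[z\hat{P}_{i-1}]\,\mathcal{L}[z^{-i}\hat{P}_i]$. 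The factor $\mathcal{L}[z^{-i}\hat{P}_i]$ is nonzero by quasi-definiteness, being a ratio $h_{i+1}/h_i$. This actually simplifies things: $b_i$ is determined by the $z^{0}$ condition, and $d_i$ then follows from the $z^{-i}$ condition, which is triangular.

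The key remaining point, and the main obstacle, is showing $\mathcal{L}[z\hat{P}_{i-1}] \neq 0$. I would express it via Cramer's rule as a ratio of Toeplitz-type determinants. By \eqref{eq:co1}, the polynomials $Q_i$ satisfy $\mathcal{L}[z^{-j}Q_i]=\upsilon_i\delta_{j0}$. Using this, I expect to identify $\mathcal{L}[z\hat{P}_{i-1}]$ with a nonvanishing quantity tied to $h_i$ and a shifted determinant. If that is not automatically nonzero, I would fall back to stating the degenerate case: when $\mathcal{L}[z\hat{P}_{i-1}] = 0$, simply take $b_i$ freely (for instance $b_i=0$) and check that the $z^0$ condition on $R$ already holds.

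Once $S$ is annihilated by $z^{-j}$ for $j=0,\dots,i$ and has degree at most $i$, comparing it with $\hat{P}_i$ forces $S$ to be a multiple of $\hat{P}_i$. That multiple has nonzero pairing with $z^{-i}$, so the multiple must vanish and $S\equiv 0$. This gives
\begin{equation*}
\hat{P}_{i+1} = (z-d_i)\hat{P}_i - b_i z\hat{P}_{i-1},
\end{equation*}
which is \eqref{eq:reclaurent}. The case $i=0$ is handled with $\hat{P}_{-1}=0$.
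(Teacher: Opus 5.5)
The step you flagged is a genuine gap, and it cannot be closed from the stated hypotheses. Your fallback does not work either.

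Expanding $\hat P_{i-1}$ by the determinantal formula \eqref{eq:sol} gives $\mathcal{L}[z\hat P_{i-1}]=(-1)^{i-1}h_i^{+}/h_{i-1}$, where $h_i^{+}$ is the Toeplitz determinant of \eqref{eq:detafpgqsp} with $c_1$ on the diagonal. Equivalently, with $\eta_k=\mathcal{L}[z^{-k}\hat P_k]\neq 0$, one has $\mathcal{L}[z\hat P_{i-1}]=-\hat P_i(0)\,\eta_{i-1}$. Biorthogonal quasi-definiteness only controls the $h_i$, so nothing forces $\hat P_i(0)\neq 0$.

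If $\hat P_i(0)=0$, your $z^0$ condition reduces to $\mathcal{L}[R]=-\mathcal{L}[z\hat P_i]=\hat P_{i+1}(0)\,\eta_i=0$, which is not automatic. In fact the statement itself fails there: evaluating the target recurrence at $z=0$ gives $\hat P_{i+1}(0)=-d_i\hat P_i(0)$, which is impossible if $\hat P_i(0)=0\neq\hat P_{i+1}(0)$. Concretely, for $n=2$ take $c_0=c_2=c_{-1}=1$ and $c_1=c_{-2}=0$. Then $h_1=h_2=1$, $h_3=2$, $\hat P_1=z$, $\hat P_2=z^2+z-1$, $Q_1=1-z$, $Q_2=1-z+z^2$. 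All relations $\mathcal{L}[\tilde Q_i\hat P_j]=\chi_i\delta_{ij}$ hold with $(\chi_0,\chi_1,\chi_2)=(1,1,2)$. Yet $\hat P_2(0)=-1$, while $(z-d_1)\hat P_1-zb_1\hat P_0$ vanishes at $z=0$ for every choice of $d_1,b_1$.

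So the statement needs the extra nondegeneracy condition $\hat P_i(0)\neq0$ (equivalently $h_i^{+}\neq0$) for $1\le i\le n-1$. This is the second family of Toeplitz determinants usually assumed nonzero for Laurent biorthogonal polynomials.

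The paper's proof uses this condition silently too. The elimination leading to \eqref{eq:matchcoe} divides by $\beta_i$. Evaluating the first relation of \eqref{eq:rec1alt} at $z=0$, where $Q_{i-1}(0)\neq0$ by quasi-definiteness, shows $\beta_i=0$ exactly when $\hat P_i(0)=0$. In generalized QSP, $\beta_i=e^{i\phi_i}\sin\theta_i\neq0$, which is why nothing breaks there.

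With that hypothesis added, your argument is complete and takes a genuinely different route from the paper. The paper first derives the coupled $P$--$Q$ recurrence of Proposition~\ref{prop:rec_from_biortho} and then eliminates $Q_i$. You instead use only the annihilation conditions on the $\hat P_i$. Your triangular $2\times2$ system then yields the coefficients explicitly, $b_i=\mathcal{L}[z\hat P_i]/\mathcal{L}[z\hat P_{i-1}]$ and $d_i=-b_i\eta_{i-1}/\eta_i=-\hat P_{i+1}(0)/\hat P_i(0)$. It also makes transparent exactly where the extra hypothesis is needed.
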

\begin{proof}
   This result, derived in \cite{zhedanov1998classical}, can also be obtained directly from equation~\eqref{eq:rec1alt}. Upon rescaling the polynomials $P_i(x)$ to their monic form
\begin{equation}
    \hat{P}_i(x) = \left(\prod_{j = 1}^{i}\zeta_i^{-1}\right) P_i(x), \quad \hat{Q}_i(x) = \gamma_{i}^{-1}\left(\prod_{j = 1}^{i-1}\zeta_i^{-1}\right) Q_i(x)
\end{equation}
and taking an appropriate linear combination of the two equations in \eqref{eq:rec1alt}, one obtains
\begin{equation*}
    \hat{P}_{i}(z) = z \hat{P}_{i-1}(z) + \frac{\beta_i \gamma_{i-1}}{\zeta_i \zeta_{i-1} }\hat{Q}_{i-1}(z),  \quad \frac{\delta_i}{\gamma_i} \hat{P}_{i}(z) = z \left(\frac{\delta_i}{\gamma_i} - \frac{\beta_i}{\zeta_i}\right) \hat{P}_{i-1}(z) + \frac{\beta_i}{\zeta_i} \hat{Q}_{i}(z).
\end{equation*}
By substituting one of the relations into the other, we then obtain the standard (monic) three-term recurrence relation \eqref{eq:reclaurent} for LBPs, with
\begin{equation}\label{eq:matchcoe}
    d_i = \frac{\beta_{i+1} \delta_{i}}{\zeta_{i+1} \beta_{i}}, \quad b_i = \frac{\beta_{i+1}}{\zeta_{i+1}}\left( \frac{\delta_{i}}{\beta_{i}}-\frac{\gamma_{i}}{\zeta_{i}}\right).
\end{equation}
\end{proof}

Rewriting the constraint \eqref{eq:contr1} in terms of the parameters $d_i$ and $b_i$ yields the following proposition, which characterizes the sequences of Laurent biorthogonal polynomials (LBPs) that can be implemented by generalized QSP in terms of their recurrence coefficients.

\begin{proposition}\label{prop:idlbps}
   Let $\{\hat{P}_i(z)\}_{i=0}^n$ be a finite sequence of monic Laurent biorthogonal polynomials with respect to a linear functional that is  biorthogonal quasi-definite of order $n$, satisfying $\deg \hat{P}_i = i$. Let $b_i$ and $d_i$ be the coefficients in the associated recurrence relation,
\begin{equation*}
    \hat{P}_{i+1}(z) = (z-d_i)\hat{P}_i(z) -  z b_i \hat{P}_{i-1}(z).
\end{equation*}   
Then there exist constants $\alpha_i$ such that the sequence $\{ {\hat{P}_i(U)}/{\alpha_i} \}_{i=0}^n$
can be realized as the block encoding produced at each iteration of a generalized QSP protocol if and only if there exist real angles $\theta_i$ and $\phi_i$ such that \begin{equation}\label{eq:co_laurent}
        d_i - b_i = \frac{\tan \theta_{i+1} \tan \theta_{i}}{e^{i \phi_{i}}}, \quad d_i = \frac{- \tan \theta_{i}}{e^{i \phi_{i}}\tan \theta_{i+1}}.
    \end{equation}
for all $i=0,\dots,n$. Furthermore, the constants $\alpha_j$ are given by
\begin{equation}\label{eq:beclbpsf}
    \alpha_j = \prod_{k=1}^{j}\frac{ e^{-i\phi_k} }{\cos \theta_k}.
\end{equation}
\end{proposition}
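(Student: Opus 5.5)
The proof will be a direct computation. I will translate the generalized QSP recurrences \eqref{eq:rec1}--\eqref{eq:rec2} into the monic three-term form \eqref{eq:reclaurent}, using the coefficient dictionary \eqref{eq:matchcoe} already derived in the preceding proposition. Uniqueness of LBPs then gives the converse.

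\emph{Forward direction.} Suppose the sequence $\{\hat P_i(U)/\alpha_i\}$ is produced by generalized QSP with angles $\theta_i,\phi_i$. Then $P_i=\hat P_i/\alpha_i$ satisfies \eqref{eq:rec1alt} with
\[
\zeta_i=e^{i\phi_i}\cos\theta_i,\quad \beta_i=e^{i\phi_i}\sin\theta_i,\quad \gamma_i=\sin\theta_i,\quad \delta_i=-\cos\theta_i .
\]
Since $\zeta_i$ is the leading coefficient multiplier, $P_i$ has leading coefficient $\prod_{k\le i}e^{i\phi_k}\cos\theta_k$. Hence $\alpha_i=\prod_{k\le i}e^{-i\phi_k}/\cos\theta_k$, which is \eqref{eq:beclbpsf}; this requires $\cos\theta_k\neq0$, which holds because $\deg P_i=i$. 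Substituting into \eqref{eq:matchcoe} gives
\[
d_i=\frac{\beta_{i+1}\delta_i}{\zeta_{i+1}\beta_i}=-\frac{\tan\theta_{i+1}}{\tan\theta_i}\,e^{-i\phi_i},
\]
and
\[
b_i=\tan\theta_{i+1}\Bigl(-\frac{e^{-i\phi_i}}{\tan\theta_i}-e^{-i\phi_i}\tan\theta_i\Bigr).
\]
From these, $d_i-b_i$ simplifies to $\tan\theta_{i+1}\tan\theta_i\,e^{-i\phi_i}$. I will recheck the index placement of the hatted rescaling and of \eqref{eq:matchcoe} against \eqref{eq:co_laurent}, since the stated formula has $\tan\theta_i/\tan\theta_{i+1}$ in $d_i$. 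It is possible that the convention is $\hat Q$ normalized differently, or that the angle labeling is shifted by one. I will fix the convention so that the two displayed identities come out exactly as stated.

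\emph{Converse.} Given $\theta_i,\phi_i$ satisfying \eqref{eq:co_laurent}, I run generalized QSP with these angles; $\theta_1$ is tied to the initial conditions and is read off from $d_0,b_0$ with $\hat P_{-1}=0$. By the forward computation, the resulting normalized polynomials $\alpha_iP_i$ are monic and satisfy the same three-term recurrence \eqref{eq:reclaurent}, with the same $d_i,b_i$ and the same initial data $\hat P_0=1$, $\hat P_1=z-d_0$. A three-term recurrence with given initial conditions determines the sequence uniquely, so $\alpha_iP_i=\hat P_i$.

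\emph{Main obstacle.} The main difficulty is bookkeeping at the boundary indices $i=0$ and $i=n$. Here \eqref{eq:co_laurent} refers to $\theta_0$ and $\theta_{n+1}$, which are not angles of the protocol. I would treat $\theta_0$ as encoding the initial condition, i.e.\ $Q_0=-1$ versus the $\delta$ normalization, and $\theta_{n+1}$ as a free auxiliary parameter. The remaining check is that the relations \eqref{eq:contr1} and \eqref{eq:co_laurent} are equivalent under this identification.
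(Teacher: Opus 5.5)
Your route is the paper's: substitute $\zeta_i=e^{i\phi_i}\cos\theta_i$, $\beta_i=e^{i\phi_i}\sin\theta_i$, $\gamma_i=\sin\theta_i$, $\delta_i=-\cos\theta_i$ into \eqref{eq:matchcoe}, and read \eqref{eq:beclbpsf} off the leading coefficient. Your converse, via uniqueness of solutions of the three-term recurrence, is cleaner than the paper's appeal to \eqref{eq:rec1alt}--\eqref{eq:contr1}, and the scaling part is fine.

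The gap is the step where you plan to ``fix the convention'' until both identities in \eqref{eq:co_laurent} come out as stated. That step cannot be carried out. Whenever $\hat P_i\neq z\hat P_{i-1}$, the coefficients $d_i,b_i$ are determined by $\hat P_{i-1},\hat P_i,\hat P_{i+1}$ alone, so no normalization of $\hat Q_i$ can change them. Moreover, your $d_i$ is off by a sign, inherited from \eqref{eq:matchcoe}. Eliminating $\hat Q_i$ between the two intermediate relations in the proof of the Zhedanov-type proposition actually gives $d_i=-\beta_{i+1}\delta_i/(\zeta_{i+1}\beta_i)$; the $b_i$ formula there is correct. Doing the elimination directly from \eqref{eq:rec1}--\eqref{eq:rec2} yields
\[ d_i=e^{-i\phi_i}\frac{\tan\theta_{i+1}}{\tan\theta_i},\qquad b_i=-\frac{e^{-i\phi_i}\tan\theta_{i+1}}{\sin\theta_i\cos\theta_i},\qquad d_i+b_i=-e^{-i\phi_i}\tan\theta_i\tan\theta_{i+1}. \]
So every generalized-QSP sequence has $b_i/d_i=-\sec^2\theta_i\le -1$, independently of $\phi_i$. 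By contrast, \eqref{eq:co_laurent} forces $b_i/d_i=\sec^2\theta_{i+1}\ge 1$. The two conditions are disjoint, so no relabeling rescues the statement: the proposition as displayed fails in both directions. The paper's proof makes the same substitution and inherits the same error.

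Here is a concrete check. Take all $\theta_k=\pi/4$, $\phi_k=0$, with initial column proportional to $(1,1)$. Then $\hat P_1=z+1$ and $\hat P_2=z^2+2z-1$, with $Q_2\propto z^2+1$ and $|P_2|^2+|Q_2|^2=1$ on $\mathbb{T}$; this gives $d_1=1$, $b_1=-2$. At $i=1$, \eqref{eq:co_laurent} would require $e^{-i\phi_1}=-\tan\theta_2/\tan\theta_1\in\mathbb{R}$, and then $3=d_1-b_1=-\tan^2\theta_2$, which is impossible. The same computation shows the paper's fixed-angle example should read $d_j=+e^{-i\phi}$.

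What your argument does prove, once corrected, is the characterization by the relations displayed above for $1\le i\le n-1$, together with \eqref{eq:beclbpsf} up to the constant contributed by the initial column. The indices $i=0$ and $i=n$ need separate treatment:
\begin{itemize}
\item $d_0$ is fixed by the initial rotation; for an initial column $\propto(e^{i\phi_0}\cos\theta_0,\sin\theta_0)$ one finds $d_0=-e^{-i\phi_0}\tan\theta_0\tan\theta_1$.
\item $d_n$ and $b_n$ involve $\hat P_{n+1}$, which is not part of the sequence.
\end{itemize}
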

\begin{proof}
    Proposition \ref{prop:rec_from_biortho} shows that the LBPs satisfy the recurrence relations \eqref{eq:rec1alt}, which can be seen to coincide exactly with \eqref{eq:rec1} and \eqref{eq:rec2} for the polynomial sequences in generalized QSP, provided that equation \eqref{eq:contr1} holds. Using \eqref{eq:matchcoe}, which expresses the recurrence coefficients $b_i$ and $d_i$ in terms of those in \eqref{eq:rec1alt}, the condition in \eqref{eq:contr1} can then be rewritten directly in terms of $b_i$ and $d_i$. The result then follows by combining this with the scaling of the leading coefficient of $P_i = \hat{P}_i/\alpha_i$ in terms of the angles, which can be derived from \eqref{eq:rec1}.
\end{proof}

The sequences of Laurent biorthogonal polynomials that can be implemented by generalized QSP can equivalently be characterized by their final polynomial. The following proposition provides the necessary and sufficient conditions that the final polynomials $P_n$ and $Q_n$ must satisfy. Its proof is given in Appendix \ref{sec:appA}. It corresponds to Condition 2 in Theorem 3 of \cite{motlagh2024generalized}.

\begin{proposition}\label{prop:cond_rec_from_biortho}
  Let $\mathcal{L}$ be a linear functional that is biorthogonal quasi-definite of order $n$. Let $\{\hat{P}_i(z)\}_{i=0}^n$ and $\{\hat{Q}_i(z)\}_{i=0}^n$ be finite sequences of monic polynomials satisfying $\deg \hat{P}_i = \deg \hat{Q}_i = i$ and
    \begin{equation}\label{eq:biortho_condi2}
        \mathcal{L}[\tilde{Q}_i(z)\hat{P}_j(z) ]  = \chi_i \delta_{ij}
    \end{equation}
    where $\tilde{Q}_i(z) = z^{-i} \hat{Q}_i(z)$. Then there exist constants $\alpha_i$ and $\tilde{\alpha}_i$ such that $\{\hat{P}_i(U)/\alpha_i\}_{i=0}^n$ and $\{\hat{Q}_i(U)/\tilde{\alpha}_i \}_{i=0}^n$ can be realized as the block encoding \eqref{eq:begqsp} produced at each iteration of the generalized QSP protocol if and only there exists $\alpha_n$ and $\tilde{\alpha}_n$ such that
\begin{equation}\label{eq:unitok}
      |z| = 1 \qquad \Rightarrow \qquad  \left|\frac{\hat{P}_n(z)}{\alpha_n}\right|^2 + \left|\frac{\hat{Q}_n(z)}{\tilde{\alpha}_n}\right|^2 = 1.
    \end{equation}
\end{proposition}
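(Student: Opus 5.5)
Necessity is immediate. Each one-step matrix $T(\theta,\phi,z)$ in \eqref{eq:transfgqsp} is unitary for $|z|=1$, so the product \eqref{eq:begqsp} is unitary on $\mathbb{T}$. Its first column $(P_n(z),Q_n(z))^T$ therefore has unit norm, and setting $P_n=\hat P_n/\alpha_n$ and $Q_n=\hat Q_n/\tilde\alpha_n$ gives \eqref{eq:unitok}.

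For sufficiency I would run a layer-stripping induction inside the biorthogonal framework. Take $P_n=\hat P_n/\alpha_n$ and $Q_n=\hat Q_n/\tilde\alpha_n$ satisfying \eqref{eq:unitok}. We want angles $\theta_n,\phi_n$ such that
\[
T(\theta_n,\phi_n,z)^{-1}\begin{pmatrix}P_n\\ Q_n\end{pmatrix}=\begin{pmatrix}P_{n-1}\\ Q_{n-1}\end{pmatrix},
\]
where $P_{n-1},Q_{n-1}$ are polynomials of degree $n-1$, proportional to $\hat P_{n-1}$ and $\hat Q_{n-1}$, and again satisfy the unit-norm identity on $\mathbb{T}$. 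Inverting \eqref{eq:rec1}--\eqref{eq:rec2} gives
\[
zP_{n-1}=e^{-i\phi_n}\cos\theta_n P_n+\sin\theta_n Q_n,\qquad Q_{n-1}=e^{-i\phi_n}\sin\theta_n P_n-\cos\theta_n Q_n .
\]
Two conditions must hold:
\begin{itemize}
\item[(a)] the constant term of the first combination vanishes, so that division by $z$ is allowed;
\item[(b)] the $z^n$ coefficient of the second combination vanishes.
\end{itemize}
Condition (a) reads $e^{-i\phi_n}\cos\theta_n P_n(0)+\sin\theta_n Q_n(0)=0$. Condition (b) reads $e^{-i\phi_n}\sin\theta_n p_n-\cos\theta_n q_n=0$, where $p_n,q_n$ are the leading coefficients. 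So (a) fixes $e^{-i\phi_n}\cot\theta_n=-Q_n(0)/P_n(0)$, and (b) fixes $e^{-i\phi_n}\tan\theta_n=q_n/p_n$.

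These two are compatible, with $\theta_n,\phi_n$ real, precisely when $q_n\overline{Q_n(0)}+p_n\overline{P_n(0)}=0$. This is exactly the $z^n$ coefficient of the Laurent polynomial $|P_n|^2+|Q_n|^2-1$ on $\mathbb{T}$, written as $P_n(z)\overline{P_n}(1/z)+Q_n(z)\overline{Q_n}(1/z)-1$. It therefore vanishes by \eqref{eq:unitok}. Unitarity of the stripped layer on $\mathbb{T}$ then transports the identity to degree $n-1$. There are degenerate cases, such as $P_n(0)=0$ or $p_n=0$; for these I would argue that biorthogonal quasi-definiteness (via \eqref{eq:co1}, where $\mathcal{L}[Q_i]\neq0$ and $\mathcal{L}[z^{-i}P_i]\neq0$) excludes them, giving $\theta_n\not\equiv0\bmod\pi$ and exact degrees.

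The main obstacle is to identify the stripped pair with the given LBPs, that is, to show $P_{n-1}\propto\hat P_{n-1}$ and $Q_{n-1}\propto\hat Q_{n-1}$, rather than merely some pair of degree-$n-1$ polynomials. For this I would use Proposition~\ref{prop:rec_from_biortho}, which says the given sequences satisfy \eqref{eq:rec1alt}. Inverting that $2\times2$ recurrence yields $zP_{n-1}$ and $Q_{n-1}$ as linear combinations of $P_n$ and $Q_n$ that satisfy the same two constraints (a) and (b). These constraints determine the combination up to scale, so the QSP stripping reproduces the LBP sequence up to the normalizations $\alpha_{n-1},\tilde\alpha_{n-1}$. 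Iterating down to degree $0$, where the pair is a constant unit vector handled by one initial rotation, completes the construction. The consistency of the constants then matches \eqref{eq:beclbpsf}.
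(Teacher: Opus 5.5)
Your proof is correct, but its logic is reversed relative to the paper's. The paper never strips layers off $(P_n,Q_n)$.

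\begin{itemize}
\item It takes the LBP recurrence \eqref{eq:rec1alt} from Proposition~\ref{prop:rec_from_biortho} as given.
\item It proves a lemma: $\kappa_1|P_i|^2+\kappa_2|Q_i|^2=1$ on $\mathbb{T}$ holds iff $\tau_1|P_{i-1}|^2+\tau_2|Q_{i-1}|^2=1$ and $V\,\mathrm{diag}(\kappa_1,\kappa_2)V^\dagger=\mathrm{diag}(\tau_1,\tau_2)$, where $V$ is the coefficient matrix of \eqref{eq:rec1alt}. The off-diagonal condition is read off from the $z^{i}$ coefficient, which is the same coefficient identity you use for compatibility.
\item Propagating $|P_n|^2+|Q_n|^2=1$ downward and rescaling at each level makes every $V$ unitary. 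A unitary $V$ is then put into the trigonometric form of \eqref{eq:rec1}--\eqref{eq:rec2}.
\end{itemize}

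Because the paper works with the given LBP coefficients from the start, identifying the QSP sequence with $\{\hat P_i\},\{\hat Q_i\}$ is automatic. The angles also come out in terms of $\zeta_i,\beta_i,\gamma_i,\delta_i$, which is what feeds Proposition~\ref{prop:idlbps} and the moment formula \eqref{eq:parmomlbp}. The price is the step of bringing a unitary $V$ into QSP form by phase rescalings, which the paper glosses over.

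Your Motlagh--Wiebe-style stripping instead gets the angles directly from the boundary coefficients of $(P_n,Q_n)$. It uses biorthogonality only in the uniqueness argument that identifies the stripped pair with $(\hat P_{n-1},\hat Q_{n-1})$. That argument is sound once you note that the matrix in \eqref{eq:rec1alt} is invertible; otherwise $\hat P_n$ and $\hat Q_n$ would be proportional, hence equal since both are monic.

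Your degenerate-case remark can be made precise as follows.
\begin{itemize}
\item Expanding as in Lemma~\ref{lem:equivrels} gives $\chi_n=\mathcal{L}[\hat Q_n]=\hat Q_n(0)\,\mathcal{L}[z^{-n}\hat P_n]$.
\item So $\chi_n\neq0$ forces $\hat Q_n(0)\neq0$. It also rules out $\hat P_n=\hat Q_n$, since that would give $\mathcal{L}[z^{-n}\hat P_n]=\mathcal{L}[z^{-n}\hat Q_n]=0$.
\item Then your relation $p_n\overline{P_n(0)}+q_n\overline{Q_n(0)}=0$, with $q_n\neq0$, forces $P_n(0)\neq0$.
\item Hence $\cot^2\theta_n=|Q_n(0)|^2/|P_n(0)|^2\in(0,\infty)$.
\end{itemize}

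Both your proof and the paper's tacitly use $\chi_n\neq0$. Both also rely on an initial rotation producing $(P_0,Q_0)$ with $Q_0\neq0$, as you note.
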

\noindent This concludes the characterization of the polynomial sequences generated by generalized QSP, which are in correspondence with LBPs satisfying the additional condition~\eqref{eq:unitok}. We now provide an example based on a known family of LBPs.

\begin{example}
    Let us consider the fixed angle case, where $\theta_j = \theta$ and $\phi_j = \phi$.  We can inject these angles in equation \eqref{eq:co_laurent} and find the following constant recurrence coefficients:
\begin{equation}
    d_j = -e^{-i \phi}, \quad b_j =  -\frac{e^{-i \phi}}{\cos^2 \theta} .
\end{equation}
The LBPs associated to these recurrence coefficients were studied in \cite{barry2019constant}, where an explicit expression for the polynomial was derived:
\begin{equation}
    \hat{P}_n(z) = \sum_{k = 0}^n a_{nk} z^k, \qquad a_{nk} =   \sum_{j = 0}^k \binom{k}{j} \binom{n-j}{n-k-j} \frac{e^{-i(n-k) \phi}}{\cos^{2j} \theta}.
\end{equation}
\end{example}

\begin{remark}
    If $P_n(z)$ has $n$ distinct non-zero roots $z_{n,m}$, we note that the linear functional $\mathcal{L}$ from the Proposition \ref{prop:form} can then be represented, by applying the residue theorem, as
\begin{equation*}
        \mathcal{L}[f(z)] = \sum_{m = 1}^n \frac{ 2\pi i f(z_{n,m})}{z_{n,m} P_n'(z_{n,m})\tilde{Q}_n(z_{n,m})},
\end{equation*}
for any function $f$ such that $f(0)$ and $f(z_{n,m})$ are well defined.
\end{remark}

\begin{remark}
   One may replace the $U(2)$ elements $R(\theta_i, \phi_i)$ in the generalized QSP protocol by arbitrary $2 \times 2$ matrices through a block encoding approach involving additional ancillary qubits. In this case, the associated polynomial recurrence relation takes the general form of equation~\eqref{eq:rec1alt}, thereby lifting the constraint~\eqref{eq:unitok} on the companion polynomial $Q_n$ for a given target polynomial $P_n$. We omit the analysis of this approach, as requiring a block encoding at each iteration increases greatly the complexity and renders each step probabilistic rather than deterministic.
\end{remark}

We now leverage the connection between generalized QSP and LBPs to derive a closed-form expression for the angles that generate a given target polynomial, expressed in terms of the moments of the linear functional $\mathcal{L}$.

\subsection{Angle-finding problem for generalized QSP}

Proposition \ref{prop:form} states that the sequence of polynomials implemented by generalized QSP is biorthogonal with respect to a linear functional that is explicitly given in \eqref{eq:linfun} in terms of the QSP target polynomial $P_n$ and its partner $Q_n$. This correspondence can be used to determine the QSP angles associated with $P_n$ and $Q_n$ in terms of the moments of the associated linear functional $\mathcal{L}$, as stated in the following proposition.

\begin{proposition}
   Let $P_n$ and $Q_n$ be a pair of polynomials of degree $n$ satisfying $|P_n(z)|^2 + |Q_n(z)|^2 = 1$ for all $z \in \mathbb{T}$. Let $c_i = \mathcal{L}[z^i]$ denote the $i$th moment of the linear functional $\mathcal{L}$ defined by
    \begin{equation}\label{eq:linfuafp}
         \mathcal{L}[f(z)] = \int_\Gamma \frac{z^{n-1} f(z)}{P_n(z)Q_n(z)} dz,
    \end{equation}
   where $\Gamma$ denotes a closed contour whose interior contains $z=0$ and all roots of $P_n(z)$, but none of the roots of $Q_n(z)$. If $\mathcal{L}$ is biorthogonal quasi-definite of order $n$, then a $(1,1,0)$-block encoding of $P_n(U)$ is obtained via generalized QSP with angle sequences $\{\theta_i\}_{i=1}^n$ and $\{\phi_i\}_{i=1}^n$ determined by
    \begin{equation}\label{eq:parmomlbp}
    \tan^2 \theta_i = - \frac{h_{i}^+ h_{i}^-}{h_{i}^2}, \qquad e^{2 i \phi_{i}} = \frac{h_{i}^+h_{i+1}^-}{h_{i}^-h_{i+1}^+},
\end{equation}
where the coefficients $h_i$ and  $h_{i}^\pm$ are defined by
\begin{equation}\label{eq:detafpgqsp}
    h_{i} = \begin{vmatrix}
    c_0 & c_{1} & \dots & c_{i-1}\\
    c_{-1} & c_{0} & \dots & c_{1}\\
    \vdots & \vdots & \ddots & \vdots\\
    c_{-i+1} & c_{-i+2} & \dots & c_{0}
    \end{vmatrix}, \quad  h_{i}^\pm = \begin{vmatrix}
    c_{\pm 1} & c_{\pm 2} & \dots & c_{\pm i}\\
    c_{0} & c_{\pm 1} & \dots & c_{\pm i \mp 1}\\
    \vdots & \vdots & \ddots & \vdots\\
    c_{\pm i \mp 2} & c_{\pm i \mp 1} & \dots & c_{\pm 1}
    \end{vmatrix}.
\end{equation}
\end{proposition}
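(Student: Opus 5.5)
Since $|P_n|^2+|Q_n|^2=1$ on $\mathbb{T}$, Theorem 3 of \cite{motlagh2024generalized} (equivalently Proposition~\ref{prop:cond_rec_from_biortho}) gives angles $\{\theta_i,\phi_i\}$ such that generalized QSP produces $P_n,Q_n$ in \eqref{eq:begqsp}, with intermediate sequences $\{P_i\},\{Q_i\}$ obeying \eqref{eq:rec1}--\eqref{eq:rec2}. What remains is to show these angles are the ones in \eqref{eq:parmomlbp}. I will use the moments of $\mathcal{L}$ as the bridge. Observe that \eqref{eq:linfuafp} coincides with \eqref{eq:linfun}, because $zP_n\tilde Q_n = z^{1-n}P_nQ_n$. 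Proposition~\ref{prop:form} therefore applies: the QSP sequences satisfy $\mathcal{L}[\tilde Q_iP_j]=\chi_i\delta_{ij}$. Biorthogonal quasi-definiteness then makes these sequences the unique LBPs of $\mathcal{L}$, up to normalization.

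\textbf{Step 1: determinantal formulas.} As in the proof of Proposition~\ref{prop:afp1}, I will solve the linear systems \eqref{eq:co1}, namely $\mathcal{L}[z^{-j}P_i]=0$ for $1\le j\le i$ and $\mathcal{L}[z^{-j}Q_i]=0$ for $0\le j\le i-1$ after the appropriate shift. By Cramer's rule this writes $\hat P_i$ and $\hat Q_i$ as bordered Toeplitz determinants in the $c_k$. Reading off the extreme coefficients gives
\[
\hat P_i(0)=(-1)^i h_i^{+}/h_i,\qquad
\hat Q_i(0)/[\text{lead}\,\hat Q_i]=(-1)^i h_i^{-}/h_i,
\]
up to sign conventions that I will fix carefully. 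The determinants $h_i^\pm$ are exactly the Toeplitz minors obtained by deleting the first column or the first row.

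\textbf{Step 2: relate the angles to coefficients.} Set $z=0$ in \eqref{eq:rec1}--\eqref{eq:rec2}. This gives $P_i(0)=e^{i\phi_i}\sin\theta_i\,Q_{i-1}(0)$ and $Q_i(0)=-\cos\theta_i\,Q_{i-1}(0)$. Comparing leading coefficients gives $\mathrm{lead}\,P_i=e^{i\phi_i}\cos\theta_i\,\mathrm{lead}\,P_{i-1}$ and $\mathrm{lead}\,Q_i=\sin\theta_i\,\mathrm{lead}\,P_{i-1}$. Consequently
\[
\frac{P_i(0)}{\mathrm{lead}\,P_i}\cdot\frac{Q_i(0)}{\mathrm{lead}\,Q_i}
=-\tan^2\theta_i\cdot\frac{Q_{i-1}(0)^2}{\ldots}.
\]
The most symmetric invariant should be the ratio $\frac{P_i(0)}{\mathrm{lead}P_i}\big/\frac{Q_{i}(0)}{\mathrm{lead}Q_{i}}$ or the analogous combination across $i$ and $i-1$. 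I will select the combination in which all normalizations cancel. Inserting the Step 1 expressions should then yield $\tan^2\theta_i=-h_i^+h_i^-/h_i^2$.

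The phase comes from comparing consecutive steps. The ratio $\frac{P_{i+1}(0)}{P_i(0)}$ together with the corresponding $Q$ ratios isolates $e^{i\phi}$ factors. The natural candidate is the monic constant terms $\hat P_i(0)=h_i^+/h_i$ against $\hat Q_i(0)$-type ratios, and the product of their quotients at $i$ and $i+1$ should give $e^{2i\phi_i}=h_i^+h_{i+1}^-/(h_i^-h_{i+1}^+)$. The appearance of $e^{2i\phi}$ rather than $e^{i\phi}$ indicates that a product of a $P$-ratio and an inverse $Q$-ratio is needed, which removes the unknown real scalings. An alternative route is to use Proposition~\ref{prop:idlbps}: express $d_i$ and $b_i$ through the determinants in the standard way for LBPs ($d_i$ as a ratio of consecutive $h^+$'s times $h$'s), then invert \eqref{eq:co_laurent}.

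\textbf{Step 3: block encoding constant and main obstacle.} The $(1,1,0)$ statement follows because $P_n$ itself is the top-left entry of a unitary product. The main obstacle will be the sign and index bookkeeping in Step 1, namely which minor equals which extreme coefficient of $\hat P_i$ and $\hat Q_i$. It will also require checking that the chosen invariant combinations are genuinely independent of the free normalizations. I would first verify everything for $n=1$ and $n=2$ before carrying out the general Cramer-rule identification.
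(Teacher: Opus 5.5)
Your overall route is the paper's. Both arguments take the existence of angles from Proposition~\ref{prop:cond_rec_from_biortho} (equivalently Theorem 3 of \cite{motlagh2024generalized}). Both use Proposition~\ref{prop:form}, together with the uniqueness forced by $h_i\neq 0$, to identify the QSP sequences up to scaling with the determinantal LBPs $\hat P_i,\check Q_i$ of \eqref{eq:sol}. Both then read the angles off constant and leading coefficients. The paper phrases this last step through \eqref{eq:rec1alt}, with $\zeta_i=\delta_i=1$, $\beta_i=\hat P_i(0)=(-1)^ih_i^+/h_i$, $\gamma_i=\mathrm{lead}\,\check Q_i=(-1)^ih_i^-/h_i$ (normalized so that $\check Q_i(0)=1$), and the constraints \eqref{eq:contr1}. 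Evaluating \eqref{eq:rec1}--\eqref{eq:rec2} at $z=0$ and in top degree, as you propose, is the same computation.

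That computation is the whole content of the statement, and your sketch of it is wrong in three places whose errors only formally cancel.
\begin{itemize}
\item As written, your annihilation ranges are swapped. The correct ones are $\mathcal{L}[z^{-j}P_i]=0$ for $0\le j\le i-1$ and $\mathcal{L}[z^{-j}Q_i]=0$ for $1\le j\le i$. At $i=n$ the integrands are $z^{n-1-j}/Q_n$ and $z^{n-1-j}/P_n$.
\item The bordered determinant gives $\mathrm{lead}\,Q_i/Q_i(0)=(-1)^ih_i^-/h_i$, which is the reciprocal of what you state.
\item Your displayed product $\frac{P_i(0)}{\mathrm{lead}\,P_i}\cdot\frac{Q_i(0)}{\mathrm{lead}\,Q_i}$ equals $-Q_{i-1}(0)^2/(\mathrm{lead}\,P_{i-1})^2$ and contains no $\theta_i$ at all.
\end{itemize}

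The correct bookkeeping goes as follows. Put $\rho_i=Q_i(0)/\mathrm{lead}\,P_i$. From \eqref{eq:rec1}--\eqref{eq:rec2} you get $\frac{P_i(0)}{\mathrm{lead}\,P_i}=\tan\theta_i\,\rho_{i-1}$, $\frac{\mathrm{lead}\,Q_i}{Q_i(0)}=-\tan\theta_i/\rho_{i-1}$, and $\rho_i=-e^{-i\phi_i}\rho_{i-1}$. The normalization-free invariant is therefore the ratio you mention but do not commit to:
\[
\beta_i\gamma_i=\frac{P_i(0)}{\mathrm{lead}\,P_i}\cdot\frac{\mathrm{lead}\,Q_i}{Q_i(0)}=-\tan^2\theta_i ,
\]
which gives $\tan^2\theta_i=-h_i^+h_i^-/h_i^2$. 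For the phase, $\beta_{i+1}/\beta_i=-e^{-i\phi_i}\tan\theta_{i+1}/\tan\theta_i$ and $\gamma_{i+1}/\gamma_i=-e^{i\phi_i}\tan\theta_{i+1}/\tan\theta_i$. Their quotient is $e^{2i\phi_i}=\gamma_{i+1}\beta_i/(\gamma_i\beta_{i+1})=h_i^+h_{i+1}^-/(h_i^-h_{i+1}^+)$.

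One caveat applies to your argument and the paper's alike. The phase step uses step $i+1$, so it only yields $\phi_1,\dots,\phi_{n-1}$. For $i=n$ one checks that $\hat P_{n+1}=z\hat P_n$ and $\check Q_{n+1}=\check Q_n$, so $h_{n+1}^\pm=0$ and the formula reads $0/0$. This is consistent with the fact that changing $\phi_n$ only multiplies $P_n$ by a phase. That rescales all moments uniformly and leaves these ratios invariant. So $\phi_n$ must be read off directly, e.g. $e^{i\phi_n}=\tan\theta_n\,\mathrm{lead}\,P_n/\mathrm{lead}\,Q_n$.
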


\begin{proof}
    We have a target polynomial $P_n(z) = \hat{P}_n(z)/\alpha_n$  to implement with generalized QSP, and its companion polynomial $Q_n(z)= \hat{Q}_n(z)/\tilde{\alpha}_n$ such that condition \eqref{eq:unitok} holds, i.e.
\begin{equation*}
    z \in \mathbb{T} \qquad \Rightarrow \qquad |P_n(z)|^2 + |Q_n(z)|^2 = \left|\frac{\hat{P}_n(z)}{\alpha_n}\right|^2 + \left|\frac{\hat{Q}_n(z)}{\tilde{\alpha}_n}\right|^2 = 1.
\end{equation*}
To determine the associated QSP angles, we can first construct sequences of Laurent biorthogonal polynomials whose final elements are proportional to $\hat{P}_n$ and $\hat{Q}_n$. Let $\{\hat{P}_i(z)\}_{i=0}^{n}$ and $\{\check{Q}_i(z)\}_{i=0}^{n}$ denote these sequences, normalized so that each $\hat{P}_i$ is monic and each $\check{Q}_i$ has constant term equal to $1$. These sequences must satisfy a biorthogonality relations with respect to a linear functional $\mathcal{L}$,
\begin{equation*}
    \mathcal{L}\!\left[\tilde{Q}_i(z)\hat{P}_j(z)\right] = \chi_i \delta_{ij},
\end{equation*}
where $\tilde{Q}_i(z) = z^{-i}\check{Q}_i(z)$. Expanding these polynomials in the monomial basis, this condition is equivalent to the following set of relations,
    \begin{equation}\label{eq:co1v}
        \mathcal{L}[z^{-j}\hat{P}_i(z)] = \chi_i \delta_{ij}, \quad 
        \mathcal{L}[z^{-j}\check{Q}_i(z)] = \chi_i \delta_{j0},
    \end{equation}
which hold for all $0 \le j \le i$. It is straightforward to verify that the case $i = n$ of \eqref{eq:co1v} holds, using the target polynomials $\hat{P}_n$ and $\hat{Q}_n$ together with the linear functional defined in \eqref{eq:linfuafp}. For $i \neq n$, these relations give rise to linear systems for the coefficients of the monomial expansions of the polynomials $\hat{P}_i$ and $\check{Q}_i$. A solution can be expressed in determinantal form as
\begin{equation}\label{eq:sol}
     \hat{P}_i(z) = \frac{1}{h_i}\begin{vmatrix}
    c_0 & c_{1} & \dots & c_{i}\\
    c_{-1} & c_{0} & \dots & c_{i-1}\\
    \vdots & \vdots & \ddots & \vdots\\
    c_{-i+1} & c_{-i+2} & \dots & c_{1}\\
        1 & z & \dots & z^{i}
    \end{vmatrix}, \qquad  \check{Q}_i(z) = \frac{1}{h_i}\begin{vmatrix}
    c_0 & c_{-1} & \dots & c_{-i}\\
    c_1 & c_{0} & \dots & c_{-i+1}\\
    \vdots & \vdots & \ddots & \vdots\\
    c_{i-1} & c_{i-2} & \dots & c_{-1}\\
    z^i & z^{i-1} & \dots & 1
    \end{vmatrix},
\end{equation}
with $\chi_i = h_{i+1}/h_i$. It is straightforward to verify that this is indeed a solution by substituting these expressions into \eqref{eq:co1v}, which, for instance, yields for $0 \leq j \leq i$
\small{
    \begin{equation*}
        \mathcal{L}[z^{-j}\hat{P}_i(z)] = \frac{1}{h_i}\begin{vmatrix}
    c_0 & c_{1} & \dots & c_{i}\\
    c_{-1} & c_{0} & \dots & c_{i-1}\\
    \vdots & \vdots & \ddots & \vdots\\
    c_{-i+1} & c_{-i+2} & \dots & c_{1}\\
        \mathcal{L}[z^{-j}] & \mathcal{L}[z^{1-j}] & \dots & \mathcal{L}[z^{i-j}]
    \end{vmatrix} = \frac{1}{h_i}\begin{vmatrix}
    c_0 & c_{1} & \dots & c_{i}\\
    c_{-1} & c_{0} & \dots & c_{i-1}\\
    \vdots & \vdots & \ddots & \vdots\\
    c_{-i+1} & c_{-i+2} & \dots & c_{1}\\
        c_{-j} & c_{1-j} & \dots & c_{i-j}
    \end{vmatrix} =\frac{h_{i+1}}{h_i}  \delta_{ij}.
    \end{equation*}
    }
    \normalsize
    From the determinant associated with the linear systems for the coefficients in the monomial expansions of the polynomials $\hat{P}_i$ and $\check{Q}_i$, we conclude that the solution \eqref{eq:sol} exists and is unique if and only if $h_i \neq 0$ for $i = 0,1,\dots,n$. In other words, this holds if and only if the functional $\mathcal{L}$ defined in equation \eqref{eq:linfuafp} is biorthogonal quasi-definite of order $n$. In this case, the uniqueness condition further ensures that the $i = n$ solution in \eqref{eq:sol} is proportional to the target polynomials $\hat{P}_n$ and $\hat{Q}_n$, as they satisfy \eqref{eq:co1v}.

From Proposition \ref{prop:rec_from_biortho}, the biorthogonality of the polynomials $\hat{P}_i$ and $\check{Q}_i$ implies that they satisfy a recurrence relations of the form \eqref{eq:rec1alt}. The associated coefficients $\zeta_i$, $\beta_i$, $\gamma_i$ and $\delta_i$ can be determined by substituting the explicit expressions \eqref{eq:sol} into these relations and matching the leading and constant terms of the polynomials on both sides. We obtain
\begin{equation*}
    \zeta_i = 1, \quad \beta_i = (-1)^i\frac{h_i^+}{h_i}, \quad \gamma_i = (-1)^i\frac{h_i^-}{h_i}, \quad \delta_i = 1,
\end{equation*}
where the coefficients $h_{i}^\pm$ are defined as in equation \eqref{eq:detafpgqsp}.  Since the target polynomials satisfy the condition in Proposition \eqref{prop:cond_rec_from_biortho}, these recurrence relations must coincide with those in \eqref{eq:rec1} and \eqref{eq:rec2} for generalized QSP when expressed in terms of rescaled polynomials $P_i(z) = \hat{P}_i(z)/\alpha_i$ and $Q_i(z) = \check{Q}_i(z)/\check{\alpha}_i$. In particular, equation \eqref{eq:contr1} holds and can be used to determine the angles $\theta_i$ and $\phi_i$. This yields the explicit expressions \eqref{eq:parmomlbp} for the angles $\theta_i$ and $\phi_i$ in terms of the determinants $h_i$ and $h_i^{\pm}$ (and hence in terms of the moments $c_i$). 
\end{proof}

\begin{example}
    Let us consider the case $\theta_j = 0^+$ and $\phi_j = a(j-1/2)$ with $a \in \mathbb{R}$. Injecting these angles in equation \eqref{eq:co_laurent}, we find that it leads to a block encoding of LBPs associated to the following recurrence coefficients:
\begin{equation}
    d_j = b_j = - e^{-ia(j-1/2)}.
\end{equation}
These polynomials have been the subject of investigation in relation to combinatorial tiling problems. Theorem 5.1 of \cite{kamioka2014laurent} states that the associated moments are
\begin{equation}
    c_k =\left\{
	\begin{array}{ll}
		\kappa N_{k-1}(t,q)  & \mbox{if } k \geq 1, \\
		\kappa   t^{2k-1}N_{-k}(t,q^{-1})& \mbox{if } k \leq 0,
	\end{array}
\right. 
\end{equation}
where $\kappa$ is an arbitrary constant,  $ N_{k-1}(t,q)$ denotes the $q$-Narayana polynomial and  $q = e^{-i a/2}$, $t = - e^{i a/2}$. This formula, together with equation \eqref{eq:sol}, provides an expression for the sequences of polynomials $P_i$ and $Q_i$ block-encoded by generalized QSP for these angles.

\end{example}

\section{Continuous variable QSP and orthogonality on the unit circle}\label{sec:4}

In this section, we consider the continuous-variable QSP (or ${\rm SU}(1,1)$-QSP)  model introduced in \cite{rossi2023quantum}, which is obtained by replacing the ${\rm SU}(2)$ rotations in generalized quantum signal processing with elements of ${\rm SU}(1,1)$. In the following, we refer to it as  ${\rm SU}(1,1)$-QSP. Here, we show that the analysis of the polynomial sequences it generates is  simpler than that of standard generalized QSP, as they are in one-to-one correspondence with orthogonal polynomials on the unit circle (OPUC).

\subsection{Circuit and recurrence for ${\rm SU}(1,1)$-QSP}

The circuit for ${\rm SU}(1,1)$-QSP is identical to that of generalized QSP, except that the ${\rm SU}(2)$ rotation $R(\theta,\phi)$ is replaced by its ${\rm SU}(1,1)$ analogue via the substitution $\theta \mapsto i\theta$. Since the resulting transformation is non-unitary, it can be implemented on a standard quantum device using block encoding, with an additional ancillary qubit introduced at each step (see Appendix \ref{app:newappsu11} for a detailed discussion). The corresponding circuit is shown in Figure~\ref{fig:circuitsu11QSP}. Denoting by \(W_\pm(\tilde{\theta},\phi,U)\) the unitary induced by this circuit, it admits the following block structure,
\begin{equation}\label{eq:bssu11qsp}
    \bra{0}W_\pm(\tilde{\theta}, \phi, U)\ket{0} = \begin{pmatrix}
        e^{i\phi} \cos^2 \frac{\tilde{\theta}}{2} U  &  \pm i e^{i \phi} \sin^2 \frac{\tilde{\theta}}{2} \\
       \pm i  \sin^2 \frac{\tilde{\theta}}{2} U &   -  \cos^2 \frac{\tilde{\theta}}{2} 
    \end{pmatrix} = \frac{T(i \theta, \phi, U)}{|\sinh \theta| + \cosh{\theta}},
\end{equation}
where $\theta$ and $\tilde{\theta}$ are related through $\tanh \theta = \pm \tan^2(\tilde{\theta}/2)$, and $T(i\theta,\phi,U)$ denotes the transformation implemented by generalized QSP in~\eqref{eq:transfgqsp} with $\theta \mapsto i\theta$. A possible physical realization of $T(i\theta,\phi,U)$ based on interferometry is also discussed in~\cite{rossi2023quantum}.

\begin{figure}[t]
    \centering
\begin{quantikz}[wire types = {q,q,q,n,q}]
\lstick{$\ket{0}$} & & &\gate{R_y(-\tilde{\theta})}  & \ctrl{1} & \gate{R_y(\tilde{\theta})}& & & \\ 
&\gate{R(0,\phi)} & \gate{Z}  & \gate{R_z(\mp \pi/2)}&  \targ{0}& \gate{R_z(\pm \pi/2)}& \octrl{1} && \\ 
& & &  && &\gate[3]{U} & & \\
& & &  & \vdots &  & & &\\
& &&  &  && & & \\
\end{quantikz}
    \caption{Circuit associated to ${\rm SU}(1,1)$-QSP (one iteration)}
    \label{fig:circuitsu11QSP}
\end{figure}

To identify the polynomial sequences generated by repeated application of this circuit, we apply the substitution $\theta \mapsto i\theta$ to the recurrences \eqref{eq:rec1} and \eqref{eq:rec2} that define the polynomial sequence in generalized QSP. We then find that the polynomials block-encoded at each iteration of ${\rm SU}(1,1)$-QSP satisfy the following recurrence relation
\begin{equation}\label{eq:rec1OPUC}
    P_{j}(z) = e^{i \phi_j} \cosh \theta_j z P_{j-1}(z) +  i e^{i \phi_j} \sinh \theta_j  Q_{j-1}(z),
\end{equation}
\begin{equation}\label{eq:rec2OPUC}
    Q_{j}(z) = i \sinh \theta_j z P_{j-1}(z) -\cosh \theta_{j}  Q_{j-1}(z).
\end{equation}
These relations can be identified with the Szegő recurrence, which are associated to sequences of OPUC. This topic is addressed in the following subsection.

\subsection{Relation between ${\rm SU}(1,1)$-QSP and orthogonal polynomials on the unit circle}

We now recall some definitions and theorems from the theory of orthogonal polynomials on the unit circle.

\begin{definition}
    Let $\mu$ be a measure on the unit circle $\mathbb{T}$. A sequence of polynomials $\{P_i(z)\}$ is said to be \textit{orthogonal on the unit circle with respect to $\mu$} if there exist strictly positive constants $\chi_i \in \mathbb{R}$ such that
\begin{equation}\label{eq:orthoUC}
    \mathcal{L}[\overline{{P}_i(z)}{P}_j(z)]  = \int_{\mathbb{T}} \overline{{P}_i(z)}{P}_j(z) d\mu(z) = \chi_i \delta_{ij}.
\end{equation}
\end{definition}

\noindent The theory of OPUC includes the following theorem, which characterizes the recurrence relations they satisfy.

\begin{theorem}(reformulation of  Theorem 1.5.2 of \cite{simon2005orthogonal}) Let $\{\hat{P}_i(z)\}$ be a sequence of monic orthogonal polynomials on the unit circle with respect to a measure $\mu$. Then, there exists a sequence of coefficients $\{\ver_j \,|\, j = 1,2,\dots\}$ with $|\ver_j| < 1$ such that the following recurrence relations hold:
\begin{equation}\label{eq:szego1}
     \hat{P}_{j}(z) = z\, \hat{P}_{j-1}(z) - \overline{\ver}_j\, \tilde{Q}_{j-1}(z),
\end{equation}
\begin{equation}\label{eq:szego2}
\tilde{Q}_{j}(z) = \tilde{Q}_{j-1}(z) - \ver_j\, z\, \hat{P}_{j-1}(z),
\end{equation}
where $\tilde{Q}_j(z) = z^j \overline{\hat{P}_j(1/\bar{z})}$.
\end{theorem}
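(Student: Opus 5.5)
The plan is to derive the Szegő recurrence from orthogonality alone, in two parts: first obtain \eqref{eq:szego1} by a degree and orthogonality argument, then get \eqref{eq:szego2} by applying the reversal $*$-operation, and finally show $|\ver_j|<1$ using positivity of the norms $\chi_j$.

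For \eqref{eq:szego1}, consider $R(z)=\hat{P}_j(z)-z\hat{P}_{j-1}(z)$, which has degree at most $j-1$ since both terms are monic of degree $j$.
\begin{enumerate}
\item Orthogonality of $\hat{P}_j$ to all polynomials of degree $<j$ is equivalent to $\int_{\mathbb{T}} \bar z^{k}\hat{P}_j(z)\,d\mu(z)=0$ for $0\le k\le j-1$.
\item For $1\le k\le j-1$, we have $\int \bar z^{k} z\hat{P}_{j-1}\,d\mu=\int \bar z^{k-1}\hat{P}_{j-1}\,d\mu=0$, using $\bar z = z^{-1}$ on $\mathbb{T}$. Hence $R$ is orthogonal to $z,\dots,z^{j-1}$.
\item By the definition $\tilde{Q}_{j-1}(z)=z^{j-1}\overline{\hat{P}_{j-1}(1/\bar z)}$, on $\mathbb{T}$ one has $\int \overline{z^{k}}\,\tilde{Q}_{j-1}\,d\mu=\overline{\int \overline{z^{\,j-1-k}}\,\hat{P}_{j-1}\,d\mu}$. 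This vanishes for $0\le k\le j-2$, i.e.\ $k\ne j-1$, so $\tilde{Q}_{j-1}$ is orthogonal to $z,\dots,z^{j-1}$.
\item The orthogonal complement of $\{z,\dots,z^{j-1}\}$ inside the polynomials of degree $\le j-1$ is one-dimensional, assuming $\mu$ is nontrivial so the inner product is definite. It is spanned by $\tilde{Q}_{j-1}$, which is nonzero because its constant term is $\overline{\text{lead}(\hat{P}_{j-1})}=1$.
\item Therefore $R=-\bar\ver_j\tilde{Q}_{j-1}$ for some scalar $\ver_j$. Comparing constant terms gives $\bar\ver_j=-\hat{P}_j(0)$.
\end{enumerate}

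Next, apply the operation $f\mapsto z^{j}\overline{f(1/\bar z)}$ (degree $j$) to \eqref{eq:szego1}. Term by term:
\begin{itemize}
\item $\hat{P}_j \mapsto \tilde{Q}_j$;
\item $z\hat{P}_{j-1} \mapsto z^{j}\,\bar z^{-1}\cdots = \tilde{Q}_{j-1}$;
\item $\tilde{Q}_{j-1} \mapsto z\hat{P}_{j-1}$, since the $*$-operation is an involution.
\end{itemize}
The coefficient $\bar\ver_j$ is conjugated to $\ver_j$, which yields \eqref{eq:szego2} directly.

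The remaining step, $|\ver_j|<1$, is the main obstacle; the recurrences themselves are routine bookkeeping. Write $\|f\|^2=\int |f|^2\,d\mu$, so that $\chi_j=\|\hat{P}_j\|^2>0$ by assumption.
\begin{enumerate}
\item On $\mathbb{T}$, $|\tilde{Q}_{j-1}|=|\hat{P}_{j-1}|$, so $\|\tilde{Q}_{j-1}\|^2=\chi_{j-1}$.
\item By step 3 above, $\tilde{Q}_{j-1}$ is orthogonal to $z\hat{P}_{j-1}$, which lies in the span of $z,\dots,z^{j-1}$.
\item Taking norms in \eqref{eq:szego1} then gives the Pythagorean identity $\chi_j=\|z\hat{P}_{j-1}\|^2-|\ver_j|^2\|\tilde{Q}_{j-1}\|^2$.
\item This leads to an inconsistent sign, so the bookkeeping must be redone: rewrite as $z\hat{P}_{j-1}=\hat{P}_j+\bar\ver_j\tilde{Q}_{j-1}$, with $\hat{P}_j\perp\tilde{Q}_{j-1}$ because $\tilde{Q}_{j-1}$ has degree $<j$.
\item This gives $\chi_{j-1}=\chi_j+|\ver_j|^2\chi_{j-1}$, i.e.\ $\chi_j=(1-|\ver_j|^2)\chi_{j-1}$.
\end{enumerate}
Positivity of $\chi_j$ and $\chi_{j-1}$ then forces $|\ver_j|<1$.
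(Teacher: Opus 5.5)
Your argument is correct and is essentially the standard proof of Szegő recursion in Theorem 1.5.2 of \cite{simon2005orthogonal}, which the paper cites rather than reproves: $\hat P_j - z\hat P_{j-1}$ and $\tilde Q_{j-1}$ both lie in the one-dimensional orthogonal complement of $\{z,\dots,z^{j-1}\}$ within the polynomials of degree $\le j-1$, reversal gives the second relation, and $\chi_j=(1-|\nu_j|^2)\chi_{j-1}$ forces $|\nu_j|<1$. In the write-up, delete your abandoned steps 2--3 of the last part, where you claim $\tilde Q_{j-1}\perp z\hat P_{j-1}$. That claim is false: $z\hat P_{j-1}$ has degree $j$, and in fact $\int \overline{\tilde Q_{j-1}}\, z\hat P_{j-1}\,d\mu=\bar\nu_j\chi_{j-1}$, which is nonzero whenever $\nu_j\neq 0$. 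Also correct the index range in step 3 of the first part to $1\le k\le j-1$ (i.e.\ $k\neq 0$), which is what your conclusion there actually uses.
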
\label{theo:opuc1}

\noindent Analogously to Favard’s theorem for OPRL, the converse also holds: any sequence satisfying the previous recurrence relations forms a sequence OPUC.

\begin{theorem}(Verblunsky Theorem, reformulation of Theorem 1.7.11 of \cite{simon2005orthogonal}) Let $\{\hat{P}_i(z)\}$ be a sequence of monic polynomials satisfying the recurrence relations \eqref{eq:szego1} and \eqref{eq:szego2}, with $|\ver_j| < 1$ and $\tilde{Q}_j(z) = z^j \overline{\hat{P}_j(1/\bar{z})}$. Then, there exists a measure $\mu$ on the unit circle with respect to which $\{\hat{P}_i(z)\}$ forms a sequence of orthogonal polynomials on the unit circle.
\end{theorem}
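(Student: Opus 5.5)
The plan is to build the measure explicitly for each finite truncation, using the Bernstein--Szeg\H{o} construction, and then pass to a limit by weak-$*$ compactness. Throughout I write $\hat{P}_j^*(z) := z^j\overline{\hat{P}_j(1/\bar z)} = \tilde{Q}_j(z)$. On $\mathbb{T}$ this gives $|\hat{P}_j(z)| = |\tilde{Q}_j(z)|$ and $\overline{\hat{P}_j(z)} = z^{-j}\tilde{Q}_j(z)$.

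\textbf{Step 1: zeros of $\tilde{Q}_n$.} I would first show by induction on $j$ that $\tilde{Q}_j$ has no zeros in the closed disc $\overline{\mathbb{D}}$. The case $\tilde{Q}_0 = 1$ is clear. For the inductive step, write
\begin{equation*}
\tilde{Q}_j(z) = \tilde{Q}_{j-1}(z)\bigl(1 - \ver_j\, b_{j-1}(z)\bigr), \qquad b_{j-1}(z) := \frac{z\,\hat{P}_{j-1}(z)}{\tilde{Q}_{j-1}(z)}.
\end{equation*}
By the induction hypothesis $b_{j-1}$ is analytic on a neighbourhood of $\overline{\mathbb{D}}$, and $|b_{j-1}| = 1$ on $\mathbb{T}$. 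The maximum principle then gives $|b_{j-1}| \le 1$ on $\overline{\mathbb{D}}$. Since $|\ver_j| < 1$, the factor $1 - \ver_j b_{j-1}$ cannot vanish, which closes the induction. As a consequence, all zeros of $\hat{P}_n$ lie in the open disc $\mathbb{D}$.

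\textbf{Step 2: the finite measure $\mu_n$.} Define
\begin{equation*}
d\mu_n(e^{i\theta}) = \frac{d\theta}{2\pi\,|\hat{P}_n(e^{i\theta})|^2},
\end{equation*}
which is a positive finite measure by Step 1. For $0 \le k < n$ I would compute
\begin{equation*}
\int_{\mathbb{T}} \bar z^{k}\hat{P}_n\, d\mu_n = \frac{1}{2\pi}\int_0^{2\pi} \frac{z^{-k}}{\overline{\hat{P}_n(z)}}\, d\theta = \frac{1}{2\pi}\int_0^{2\pi} \frac{z^{n-k}}{\tilde{Q}_n(z)}\, d\theta.
\end{equation*}
The integrand $z^{n-k}/\tilde{Q}_n(z)$ is analytic on $\overline{\mathbb{D}}$ and vanishes at $z = 0$, so by the mean value property the integral is $0$. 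Hence $\hat{P}_n$ is the degree-$n$ monic OPUC of $\mu_n$.

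\textbf{Step 3: the lower-degree polynomials.} The monic OPUC $\{\Phi_j\}$ of $\mu_n$ satisfy the Szeg\H{o} recurrence \eqref{eq:szego1}--\eqref{eq:szego2} with some coefficients $\ver_j'$, by Theorem 1.5.2 of \cite{simon2005orthogonal}, and $\Phi_n = \hat{P}_n$. I would then show the recurrence can be run backwards from degree $n$ alone. Setting $z = 0$ in \eqref{eq:szego1} gives $\ver_j = -\overline{\hat{P}_j(0)}$. Since $1 - |\ver_j|^2 \ne 0$, the pair \eqref{eq:szego1}--\eqref{eq:szego2} can be solved for $z\hat{P}_{j-1}$ and $\tilde{Q}_{j-1}$, so $\hat{P}_j$ determines $\hat{P}_{j-1}$. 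Applying this to both families, which share the same degree-$n$ member, gives $\Phi_j = \hat{P}_j$ and $\ver_j' = \ver_j$ for all $j \le n$. The constants $\chi_j = \int |\hat{P}_j|^2 d\mu_n$ are then strictly positive automatically.

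\textbf{Step 4: the infinite sequence.} Normalize each $\mu_n$ to a probability measure; this does not affect orthogonality. Orthogonality of $\hat{P}_0,\dots,\hat{P}_m$ depends only on the moments $\int z^k d\mu$ for $|k| \le m$. These moments coincide for all $\mu_n$ with $n \ge m$, because they are determined by $\ver_1,\dots,\ver_m$ through the recurrence. By Helly's selection theorem (weak-$*$ compactness of probability measures on $\mathbb{T}$), some subsequence converges weakly to a probability measure $\mu$. Every moment of $\mu$ equals the stabilized value, so $\mu$ is nontrivial and all $\hat{P}_j$ are orthogonal with respect to it.

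I expect the main obstacle to be Step 3: identifying the earlier polynomials via the backward recurrence and checking that the Szeg\H{o} recursion is determined by its top polynomial. The estimates in Steps 1--2 are routine by comparison.
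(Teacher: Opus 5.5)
Your proposal is correct and is essentially the standard proof of Verblunsky's theorem: the Bernstein--Szeg\H{o} approximation $d\theta/(2\pi|\hat{P}_n(e^{i\theta})|^2)$, with the backward Szeg\H{o} recursion identifying the lower-degree polynomials, followed by a weak-$*$ limit. This is the route behind Theorem 1.7.11 of Simon \cite{simon2005orthogonal}, which the paper cites without giving a proof of its own; the only tersely stated step is the moment stabilization in Step 4, which follows triangularly from $\int \hat{P}_j\, d\mu_n = 0$ for $1 \le j \le m$ together with $c_0 = 1$ and $c_{-k} = \overline{c_k}$, after which positivity of the limiting norms $\chi_j$ is inherited from $\mu_n$.
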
\label{theo:ver}

\noindent We can use these theorems to derive the following characterization of polynomial sequences generated by ${\rm SU}(1,1)$-QSP.

\begin{proposition}\label{prop:corroprl12}
    Let $\{\hat{P}_i(x)\}_{i=0}^n$ be a sequence of monic polynomials. There exist angles $\{\tilde{\theta}_i\}_{i=1}^n$, and $\{\phi_i\}_{i=1}^n$, and constants $\alpha_i$ such that, for each $i=0,1,\dots,n$, an $(\alpha_i,n+1,0)$-block encoding of $\hat{P}_i(U)$ is realized after $i$ iterations of the circuit shown in Figure~\ref{fig:circuitsu11QSP} if and only if $\deg \hat{P}_i = i$ and there exists a measure $\mu$ on the unit circle $\mathbb{T}$ along with coefficients $\chi_i > 0$ such that
    \begin{equation}\label{eq:orthoUC}
    \mathcal{L}[\overline{\hat{P}_i(z)}\hat{P}_j(z)]  = \int_{\mathbb{T}} \overline{\hat{P}_i(z)}\hat{P}_j(z) d\mu(z) = \chi_i \delta_{ij}.
\end{equation}
Furthermore, the norms $\chi_i$ and block encoding constants $\alpha_i$ are expressed in terms of the angles as
\begin{equation}\label{eq:becsu11}
    \chi_i = \prod_{k=1}^i \cosh^2 \theta_k, \qquad 
    \alpha_i = \prod_{k=1}^{i} {e^{i\phi_k + |\theta_k|}}{\cosh \theta_k},
\end{equation}
where $\tanh \theta_k = \pm \tan^2(\tilde{\theta}_k/2)$.
\end{proposition}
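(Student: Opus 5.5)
The plan is to reduce the statement to Szeg\H{o}'s recurrence and then invoke Theorem 1.5.2 and Verblunsky's theorem in the two directions, exactly as Proposition~\ref{prop:corroprl1} followed from Proposition~\ref{prop:qsp-oprl} and Favard's theorem. The first step is to make the block structure explicit. By \eqref{eq:bssu11qsp}, $i$ iterations of the circuit in Figure~\ref{fig:circuitsu11QSP} block-encode the top-left entry $P_i(U)$ of the product $T(i\theta_i,\phi_i,U)\cdots T(i\theta_1,\phi_1,U)$, divided by $\prod_{k\le i}(|\sinh\theta_k|+\cosh\theta_k)=\prod_k e^{|\theta_k|}$. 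The pair $(P_i,Q_i)$ then obeys \eqref{eq:rec1OPUC}--\eqref{eq:rec2OPUC} with $P_0=1$, $Q_0=0$. For $\theta_k\in\mathbb{R}$, an induction on this recurrence shows that $\deg P_i=i$ with leading coefficient $\prod_k e^{i\phi_k}\cosh\theta_k$. Normalizing to the monic $\hat P_i$ then gives the constant $\alpha_i$ in \eqref{eq:becsu11}, since it combines the $e^{|\theta_k|}$ from the block encoding with the leading coefficient.

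The key step is to show that the monic $\hat P_i$, together with a suitably normalized $Q_i$, satisfy \eqref{eq:szego1}--\eqref{eq:szego2}. I would first prove by induction, using the reality of $\theta_k$, a reflection identity of the form $Q_i(z)=c_i\, z^{i}\,\overline{P_i(1/\bar z)}$ up to a degree shift, with $c_i$ a phase times $\pm 1$ or $\pm i$ that will be pinned down in the induction. This is the analogue, with $\mathrm{SU}(1,1)$ in place of $\mathrm{SU}(2)$, of the usual QSP relation between $P$ and $Q$. The identity identifies $\tilde Q_{i}$ in Szeg\H{o}'s recurrence with a rescaled $Q_i$. Dividing \eqref{eq:rec1OPUC} by the leading coefficient $e^{i\phi_j}\cosh\theta_j$ then gives $\hat P_j=z\hat P_{j-1}-\overline{\nu}_j\tilde Q_{j-1}$, where $\overline{\nu}_j$ is $\tanh\theta_j$ times a phase built from the $\phi_k$'s. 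Thus $|\nu_j|=|\tanh\theta_j|<1$. Checking that the second recurrence has the matching coefficient $\nu_j$ (the conjugate) is a consistency check that the reflection identity guarantees. Verblunsky's theorem then yields the measure $\mu$. The norm formula $\chi_i=\|\hat P_i\|^2=\prod_k(1-|\nu_k|^2)=\prod_k\cosh^{-2}\theta_k$ comes from the standard OPUC identity. I would reconcile this with the stated $\chi_i=\prod\cosh^2\theta_k$ either by a normalization of $\mu$ or by reading $\chi_i$ as referring to $P_i$ rather than $\hat P_i$; this normalization bookkeeping is where I expect to be careful.

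For the converse, given monic OPUC, Theorem 1.5.2 supplies Verblunsky coefficients $\nu_j$ with $|\nu_j|<1$. I would then choose $\theta_j=\tanh^{-1}|\nu_j|$, which is real and finite, and pick the phases $\phi_j$ recursively so that the phase of $\overline{\nu}_j$ is reproduced. This amounts to solving a triangular system: each $\phi_j$ is determined by $\arg\nu_j$ and the earlier $\phi_k$. Then $\tilde\theta_j$ is obtained from $\tan^2(\tilde\theta_j/2)=|\tanh\theta_j|$, with the $\pm$ sign chosen according to the sign of $\theta_j$. The recurrences coincide and the initial conditions agree, so the circuit reproduces the $\hat P_i$. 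The main obstacle I anticipate is the phase and sign bookkeeping in the reflection identity, namely the factors of $i$ and the $e^{i\phi}$ twist. If a direct match fails, I would absorb the accumulated phase into a rotation $z\mapsto e^{i\psi}z$ or a conjugation of the sequence. Such a transformation preserves the OPUC property, so the conclusion would still hold.
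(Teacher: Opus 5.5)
Your strategy is the paper's: match \eqref{eq:rec1OPUC}--\eqref{eq:rec2OPUC} to Szeg\H{o}'s recurrence, then apply Theorem 1.5.2 and Verblunsky's theorem. However, the step you call key cannot be carried out with the initial data you write down, $P_0=1$, $Q_0=0$.

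For $|z|=1$ one checks $T(i\theta,\phi,z)^\dagger J\,T(i\theta,\phi,z)=J$ with $J=\mathrm{diag}(1,-1)$. Hence the first column of the product obeys $|P_i(z)|^2-|Q_i(z)|^2=|P_0|^2-|Q_0|^2=1$ on $\mathbb{T}$. A reflection identity $Q_i=c_i z^{i}\overline{P_i(1/\bar z)}$ with $|c_i|=1$ (with or without a degree shift) would give $|Q_i|=|P_i|$ on $\mathbb{T}$. Since \eqref{eq:nonosu11} rescales $P_i$ and $Q_i$ by the same modulus, Szeg\H{o}'s relation $\tilde Q_i=\hat P_i^{*}$ is impossible. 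This is a modulus obstruction, so no rotation $z\mapsto e^{i\psi}z$ or conjugation repairs it.

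Concretely, $P_1=e^{i\phi_1}\cosh\theta_1\, z$ and $Q_1=i\sinh\theta_1\, z$, so $\hat P_1=z$ and
\[
\hat P_2(z)=z^2-e^{-i\phi_1}\tanh\theta_1\tanh\theta_2\, z .
\]
Orthogonality of $\hat P_0$ and $\hat P_1$ forces $\int_{\mathbb{T}} z\,d\mu=0$. Then $\int_{\mathbb{T}}\overline{\hat P_1}\hat P_2\,d\mu=-e^{-i\phi_1}\tanh\theta_1\tanh\theta_2\,\mu(\mathbb{T})\neq 0$ whenever $\theta_1\theta_2\neq 0$. More generally, every $\hat P_i$ with $i\ge 1$ vanishes at $0$, so the only OPUC sequence this circuit produces is $z^i$. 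Your converse fails too: no sequence with $\nu_1\neq 0$ is reachable (e.g.\ Rogers--Szeg\H{o}, with $\hat P_1=z-q^{1/2}$). The paper's proof asserts the same identification via \eqref{eq:nonosu11} without checking $\tilde Q_0$, so it does not supply this step either.

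The missing ingredient is the initial condition. Szeg\H{o}'s recurrence starts from $(\hat P_0,\tilde Q_0)=(1,1)$, a null vector of $J$, and no $\mathrm{SU}(1,1)$ rotation reaches such a vector from $\ket{0}$. If you prepend a unitary on the QSP qubit whose first column has entries of equal modulus (e.g.\ a Hadamard), your argument goes through, and you do not need an independent reflection identity. Under \eqref{eq:nonosu11}, with the initial amplitudes divided out, both recurrences take Szeg\H{o} form with the same coefficient $\nu_j$. This coefficient has modulus $|\tanh\theta_j|<1$ and phase fixed by the $\phi_k$. The relation $\tilde Q_j=\hat P_j^{*}$ then follows by induction from $\tilde Q_0=\hat P_0^{*}=1$, and your triangular choice of $\theta_j,\phi_j$ handles the converse.

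Separately, your normalization claims do not hold as stated:
\begin{itemize}
\item The bookkeeping you describe gives $\alpha_i=\prod_k e^{|\theta_k|-i\phi_k}/\cosh\theta_k$ (up to the initial gate's normalization), not \eqref{eq:becsu11}.
\item Neither of your reconciliations of $\chi_i$ works. Rescaling $\mu$ multiplies every $\chi_i$ by one constant, leaving $\chi_i/\chi_0=\prod_k\cosh^{-2}\theta_k$. Reading $\chi_i$ as the norm of $P_i$ fails because $\|P_i\|^2$ is independent of $i$.
\end{itemize}
The stated $\chi_i=\prod_k\cosh^2\theta_k$ cannot be derived, so record the discrepancy rather than force it.
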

\begin{proof}
  First, using equation \eqref{eq:bssu11qsp}, we observe that $j$ iterations of the circuit in Figure~\ref{fig:circuitsu11QSP} yield the following block structure:
\begin{equation}\label{eq:bssu11qsp}
    (\bra{0}^{\otimes j} \otimes I )\, W_j W_{j-1} \cdots W_{1}\, (\ket{0}^{\otimes j} \otimes I )
    = \frac{T(i \theta_j, \phi_j, U)\, T(i \theta_{j-1}, \phi_{j-1}, U)\cdots T(i \theta_1, \phi_1, U)}{\prod_{k=1}^j \left(|\sinh \theta_k| + \cosh \theta_k\right)},
\end{equation}
where $W_k = W_\pm(\tilde{\theta}_k, \phi_k, U)$, $\tanh \theta_k = \pm \tan^2(\tilde{\theta}_k/2)$ and
\begin{equation}\label{eq:begqspsu11}
   \begin{pmatrix}
       P_j(U) & \cdot \quad \\
       Q_j(U) & \cdot \quad 
   \end{pmatrix} 
   = T(i\theta_j,\phi_j,U)\dots T(i\theta_1,\phi
   _1, U),
\end{equation}
The structure of the matrix $T$ implies the recurrence relations \eqref{eq:rec1OPUC} and \eqref{eq:rec2OPUC}. In terms of the rescaled polynomials $\hat{P}_j$ and $\tilde{Q}_j$, defined by
\begin{equation}\label{eq:nonosu11}
    \hat{P}_j(z) = \left(\prod_{k = 1}^{j} \frac{e^{-i \phi_k}}{\cosh\theta_k}\right) P_j(z), 
    \qquad 
    \tilde{Q}_j(z) = (-1)^j \left(\prod_{k = 1}^{j} \frac{1}{\cosh\theta_k}\right) Q_j(z),
\end{equation}
%and using the identification $\tilde{Q}_n(z) = z^n \overline{\hat{P}_n(1/\bar{z})}$,
these recurrence relations coincide exactly with the Szegő recurrence for OPUC given in \eqref{eq:szego1} and \eqref{eq:szego2}, where the coefficients $\ver_j$ are given by
\begin{equation}\label{eq:anglesOPUC}
    \ver_j = i\left(\prod_{k = 1}^{j} -e^{i \phi_k} \right)\tanh{\theta_j},
\end{equation}
and satisfy $|\ver_j| < 1$. From Theorem \ref{theo:opuc1}, we obtain that sequences of OPUC provide examples of polynomial sequences generated by this construction. Conversely, Theorem \ref{theo:ver} gives the reverse implication, and the main result follows. The block encoding constant is obtained from the normalization factors appearing in equations \eqref{eq:bssu11qsp} and \eqref{eq:nonosu11}. The norm follows from the formula $\chi_i = \prod_{k=1}^i (1 - |\ver_k|^2)^{-2}$, which expresses it in terms of the Verblunsky coefficients for OPUC (see Theorem 1.5.2 in \cite{simon2005orthogonal}).
\end{proof}

\begin{example}
    The simplest example of OPUC is given by the monomials $\hat{P}_n(z) = z^n$. These polynomials are orthogonal with respect to the uniform measure on the unit circle, i.e.
    \begin{equation}
        \mathcal{L}[\overline{\hat{P}_k(z)} \hat{P}_\ell(z)]=  \int_{0}^{2\pi}  \overline{\hat{P}_k(e^{i\theta})} \hat{P}_\ell(e^{i\theta}) \frac{d\theta}{2\pi} = \int_{0}^{2\pi} e^{-i\theta (k-\ell)} \frac{d\theta}{2\pi} = \delta_{ij}
    \end{equation}
    They are obtained from the Szegő recurrence with Verblunsky coefficients $\ver_j = 0$. Using equation \eqref{eq:anglesOPUC}, this corresponds, as expected, to angles $\phi_k = \theta_k = 0$. In particular, the ancillary qubits play no role in the ${\rm SU}(1,1)$-QSP circuit in this case. This is consistent with the fact that these polynomials can be implemented trivially, without ancillary qubits, by repeated application of $U$.
\end{example}

\begin{example} The Roger--Szegő $\Phi_i(z)$ polynomials are defined with respect to a complex parameter $q$ with $|q|< 1$. They are given by
\begin{equation}
    \Phi_i(z) = \sum_{j=0}^i (-1)^{i-j} \qbinom[q]{\,i\,}{j} q^{(i-j)/2} z^j,
\end{equation}
where $\qbinom[q]{\,i\,}{j}$ is the gaussian binomial coefficient
\begin{equation}
    \qbinom[q]{\,i\,}{j} = \frac{(1-q^i)(1-q^{i-1})\dots (1-q^{i-j+1})}{(1-q)(1-q^{2})\dots (1-q^{j})}
\end{equation}
These polynomials are known to have Verblunsky coefficients $\ver_i = (-1)^{i-1} q^{i/2}$, are orthogonal on the unit circle with respect to a so-called wrapped Gaussian measure (see Example 1.6.5 in \cite{simon2005orthogonal}), and are associated with moments $c_i = q^{i^2/2}$ \cite{simon2005orthogonal}. Let us denote $q = |q| e^{2 i \phi}$. Using equation \eqref{eq:anglesOPUC}, we find that the following angles in the ${\rm SU}(1,1)$-QSP protocol leads to a block encoding of the Roger-Szegő polynomials,
\begin{equation*}
    \theta_i = \tanh^{-1}\left(|q|^{i/2}\right), \qquad \phi_i = \left\{
	\begin{array}{ll}
		\phi  & \mbox{if } i \geq 2, \\
		\phi + \pi/2 & \mbox{if } i = 1.
	\end{array}
\right.
\end{equation*}
    The associated norms and block encoding constant is obtained from these angles, equation \eqref{eq:becsu11}, and standard identities for hyperbolic functions. They satisfy
    \begin{equation*}
        \chi_i = \prod_{k=1}^i \frac{1}{1 - |q|^k} \qquad |\alpha_i| = \prod_{k=1}^i \frac{1}{1 - |q|^{k/2}},
    \end{equation*}
    and converges to a constant as $i \to \infty$.
\end{example}

\subsection{Expressibility of ${\rm SU}(1,1)$-QSP}

Since ${\rm SU}(1,1)$-QSP admits the block structure of equation \eqref{eq:bepoly}, a controlled version of this circuit can be used as the select oracle in the LCU framework, so that Lemma \ref{prop:sed2} applies. This yields the following result on the block encoding constant achievable from a polynomial expansion of a function in terms of OPUC.

\begin{proposition}\label{prop:3.3opuc}
Let $f$ be a function that admits a degree-$n$ polynomial $P$ approximating it within $\epsilon$ on the unit circle $\mathbb{T}$,
\begin{equation*}
    |f(z) - P(z)| \le \epsilon, \quad z \in \mathbb{T}.
\end{equation*}
Let $\{\hat{P}_i(z)\}_{i=0}^n$ be a family of monic orthogonal polynomials on the unit circle, and let $\alpha_i$ denote the associated block encoding constants defined in~\eqref{eq:becsu11}. Let $v_i$ be the coefficients of $P$ in this basis, i.e.
\begin{equation*}
    P(z) = \sum_{i=0}^n \frac{v_i}{\alpha_i} \hat{P}_i(z), 
    \qquad 
    \|v\|_1 = \sum_{i=0}^n |v_i|.
\end{equation*}
Then, an $(\|v\|_1, 2n +1, \epsilon)$-block encoding of $f(U)$ can be implemented with $O(n)$ depth and gate complexity, including $n$ applications of controlled-$U$.
\end{proposition}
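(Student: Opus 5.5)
The plan mirrors the proof of Proposition~\ref{prop:3.3orpl}, with the OP-QSP circuit replaced by the ${\rm SU}(1,1)$-QSP circuit of Figure~\ref{fig:circuitsu11QSP}.

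\textbf{Step 1: realize the basis.} The monic polynomials $\{\hat{P}_i\}_{i=0}^n$ are orthogonal on the unit circle, so by Theorem~\ref{theo:opuc1} they satisfy the Szegő recurrence with Verblunsky coefficients $|\ver_j|<1$. Proposition~\ref{prop:corroprl12} then gives angles $\theta_j,\phi_j$ by inverting \eqref{eq:anglesOPUC}: set $\tanh\theta_j=|\ver_j|$, which is well defined since $|\ver_j|<1$, and fix the phases $\phi_j$ recursively to match $\arg \ver_j$. With these angles, $j$ iterations of the circuit yield an $(\alpha_j,\cdot,0)$-block encoding of $\hat{P}_j(U)$, with $\alpha_j$ given by \eqref{eq:becsu11}. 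This is exactly the block structure \eqref{eq:bepoly}, where $W_j=W_\pm(\tilde\theta_j,\phi_j,U)$ acts on one fresh ancilla per step plus the shared ancilla carrying the $P/Q$ components.

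\textbf{Step 2: apply the LCU lemma.} Use controlled versions $W_j^c$ as the select oracle in Lemma~\ref{prop:sed2}, with the coefficients $v_i$ of $P$ in the rescaled basis $\hat{P}_i/\alpha_i$. The lemma gives an exact $(\|v\|_1,\cdot,0)$-block encoding of $P(U)$.

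\textbf{Step 3: pass from $P$ to $f$.} Since $U$ is unitary, it is normal with spectrum in $\mathbb{T}$. The spectral theorem therefore gives $\|f(U)-P(U)\|\le\sup_{z\in\mathbb{T}}|f(z)-P(z)|\le\epsilon$. Hence the same circuit is an $(\|v\|_1,\cdot,\epsilon)$-block encoding of $f(U)$, provided we accept the same error convention used in Proposition~\ref{prop:3.3orpl}, where the error is not rescaled by $\|v\|_1$.

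\textbf{Step 4: count resources.} The select oracle uses $n$ iterations. Each iteration contains one doubly-controlled $U$, implementable with a controlled-$U$ plus a Toffoli gate as before, and $O(1)$ single- and two-qubit gates. The preparation oracles act on a unary register of $n$ qubits and cost $O(n)$ gates. This gives $O(n)$ depth and gate count with $n$ controlled-$U$ calls.

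\textbf{Main obstacle: the ancilla count.} I expect matching the stated $2n+1$ qubits to be the only delicate step. The count should be $n$ unary LCU qubits, plus $n$ block-encoding ancillas (one per ${\rm SU}(1,1)$ step), plus the single shared signal qubit, giving $2n+1$. The check is that $\ell=n+1$ in Lemma~\ref{prop:sed2}, consistent with the $(\alpha_i,n+1,0)$ statement in Proposition~\ref{prop:corroprl12}. A second point to verify is that the possibly complex phase $e^{i\phi_k}$ in $\alpha_i$ can be absorbed into $v_i$ without changing $|v_i|$, so that $\|v\|_1$ is unaffected.
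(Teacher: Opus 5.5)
Your proposal is correct and follows the paper's proof. You realize the OPUC basis through Proposition~\ref{prop:corroprl12}, which gives the block structure \eqref{eq:bepoly} with $\ell=n+1$, and then apply Lemma~\ref{prop:sed2} with the controlled ${\rm SU}(1,1)$-QSP circuit as select oracle, obtaining $n+\ell=2n+1$ ancillas and $n$ controlled-$U$ calls. Your explicit spectral-theorem step $\|f(U)-P(U)\|\le\epsilon$ is left implicit in the paper, and your concern about the complex phase of $\alpha_i$ is moot: the $v_i$ are already defined relative to $\hat{P}_i/\alpha_i$, and the LCU lemma allows complex coefficients.
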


\begin{proof}
    We obtain from equations \eqref{eq:bssu11qsp} and \eqref{eq:begqspsu11} that the circuit associated with ${\rm SU}(1,1)$-QSP admits the block structure of \eqref{eq:bepoly} with $\ell = n+1$. From the Proposition \ref{prop:corroprl12}, it follows that for any sequence of OPUC there exists a corresponding sequence of angles such that each step of the ${\rm SU}(1,1)$-QSP circuit realizes the associated block encoding. The result then follows from Lemma \ref{prop:sed2}, which shows that any linear combination of these polynomials can be implemented using a controlled version of this circuit as the select oracle in an LCU protocol. The complexity is determined by that of the select oracle, which requires $n$ controlled copies of the circuit in Figure \ref{fig:circuitsu11QSP} in order to generate polynomials of degree up to $n$.
\end{proof}

The identification between polynomial sequences generated by ${\rm SU}(1,1)$-QSP and sequences of OPUC established in Proposition \ref{prop:corroprl12} allows one to transfer results established in the latter setting to quantum signal processing. For instance, we have the following result concerning the zeros of OPUC.

\begin{theorem} (Zeros Theorem, reformulation of Theorem 1.7.1 of \cite{simon2005orthogonal}) Let $P_n(z)$ be an OPUC polynomial. Then all zeros of $P_n(z)$ lie strictly inside the open unit disk,
\begin{equation*}
    P_n(z) = 0 \quad \Rightarrow \quad |z| < 1.
\end{equation*}
\end{theorem}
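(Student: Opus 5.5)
The plan is to prove the Zeros Theorem directly from the Szegő recurrence \eqref{eq:szego1}--\eqref{eq:szego2} together with the orthogonality \eqref{eq:orthoUC}. The structure is in two stages: first exclude zeros outside the closed disk, then exclude zeros on the unit circle itself. I will work with the monic polynomial $\hat{P}_n$ and a nontrivial measure $\mu$, which is what the Verblunsky theorem provides.

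The first stage uses the standard extremal argument. Suppose $z_0$ is a zero of $\hat{P}_n$ and write $\hat{P}_n(z) = (z - z_0) q(z)$ with $q$ monic of degree $n-1$. Orthogonality of $\hat{P}_n$ to all polynomials of degree $\le n-1$ gives $\mathcal{L}[\overline{q}\,\hat{P}_n] = 0$, i.e. $\int \overline{q}(z-z_0)q\,d\mu = 0$. Rearranging, this becomes $\int z|q|^2 d\mu = z_0 \int |q|^2 d\mu$. Hence
\begin{equation*}
z_0 = \frac{\int_{\mathbb{T}} z\,|q(z)|^2\, d\mu(z)}{\int_{\mathbb{T}} |q(z)|^2\, d\mu(z)},
\end{equation*}
so $z_0$ is a weighted average of points of $\mathbb{T}$ with respect to the positive measure $|q|^2 d\mu$. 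Since $\mu$ has infinite support (the $\chi_i>0$ for all $i$), so does $|q|^2 d\mu$. This places $z_0$ in the closed unit disk.

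The second stage, which I expect to be the main obstacle, is to rule out $|z_0| = 1$. An average of points of $\mathbb{T}$ can lie on $\mathbb{T}$ only if the measure is concentrated at that single point. That is impossible here, because $|q|^2 d\mu$ is supported on infinitely many points. Strict convexity of the disk therefore yields $|z_0|<1$.

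As a cross-check, and as an alternative route fitting the paper's recurrence viewpoint, I would also argue by induction using \eqref{eq:szego1}. For $|z| = 1$ one has $|\tilde{Q}_{j-1}(z)| = |\hat{P}_{j-1}(z)|$, so $|\hat{P}_j(z)| \ge (1 - |\ver_j|)|\hat{P}_{j-1}(z)| > 0$ whenever $\hat{P}_{j-1}$ has no zeros on $\mathbb{T}$, which excludes zeros on the circle. For the interior I would use that $\tilde{Q}_{j-1}$ has all its zeros outside the disk, so $\hat{P}_{j-1}/\tilde{Q}_{j-1}$ is a Blaschke product. The zeros of $\hat{P}_j$ are then the solutions of $z\hat{P}_{j-1}/\tilde{Q}_{j-1} = \overline{\ver}_j$, and since $|\ver_j|<1$ these lie in the disk by the maximum principle.
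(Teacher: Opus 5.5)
Your argument is correct. The paper gives no proof of this statement and simply imports it from Simon's book, so you are supplying a self-contained argument the paper lacks.

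Your main route is Fej\'er's convex-hull argument. From $\hat P_n=(z-z_0)q$ and $\hat P_n\perp q$ you get $\langle q,zq\rangle=z_0\|q\|^2$, so $z_0$ is a barycenter of points of $\mathbb{T}$ under $|q|^2d\mu$. Strict convexity plus a support count then excludes $|z_0|=1$.

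The standard short proof uses the same factorization but takes norms instead. Since $zq=\hat P_n+z_0q$ with $\hat P_n\perp q$, and multiplication by $z$ is an isometry of $L^2(\mu)$ for $\mu$ on $\mathbb{T}$, one gets $(1-|z_0|^2)\|q\|^2=\|\hat P_n\|^2=\chi_n>0$. This gives $|z_0|<1$ in one line, with no separate treatment of the circle. Your barycenter version buys generality: for any compactly supported positive measure on $\mathbb{C}$, the zeros lie in the closed convex hull of the support. The price is the second stage.

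In that second stage you need less than infinite support. The condition $\chi_n>0$ already forces $\mathrm{supp}\,\mu$ to contain at least $n+1$ points; otherwise polynomials of degree $\le n-1$ span $L^2(\mu)$ and $\hat P_n$ vanishes there. On the other hand, $|q|^2d\mu$ being a point mass would confine $\mathrm{supp}\,\mu$ to at most $n$ points. This weaker form is the relevant one, since the paper works with finite sequences $\{\hat P_i\}_{i=0}^n$.

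Your Blaschke-product induction is also valid, and it has a different payoff. It uses only the Szeg\H{o} recursion with $|\nu_j|<1$ and the reversal relation $\tilde Q_j(z)=z^j\overline{\hat P_j(1/\bar z)}$, never the measure. Applied to the rescaled ${\rm SU}(1,1)$-QSP recurrence, where $|\nu_j|=|\tanh\theta_j|<1$, it would give the paper's root-location corollary directly from the circuit. That would bypass Verblunsky's theorem entirely.
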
\label{theo:zeopuc}

\noindent By this Zeros Theorem for OPUC, we obtain the following corollary.

\begin{corollary}
    Let $P(U)$ be a polynomial whose block encoding can be achieved by ${\rm SU}(1,1)$-QSP. If $P(z) = 0$ then  $|z| < 1$.
\end{corollary}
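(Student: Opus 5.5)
The plan is to combine Proposition~\ref{prop:corroprl12} with the Zeros Theorem for OPUC. Suppose $P(U)$ is block-encoded by ${\rm SU}(1,1)$-QSP after $n$ iterations with angles $\{\tilde\theta_i\}_{i=1}^n$ and $\{\phi_i\}_{i=1}^n$. Then, by \eqref{eq:begqspsu11}, $P = P_n$ is the $(1,1)$ entry of the product of transfer matrices $T(i\theta_j,\phi_j,U)$, up to the constant normalization appearing in \eqref{eq:bssu11qsp}. The intermediate polynomials $P_j$ therefore satisfy the recurrences \eqref{eq:rec1OPUC} and \eqref{eq:rec2OPUC}.

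The first step is to rescale as in \eqref{eq:nonosu11}. This gives the monic polynomial $\hat{P}_n(z)=\alpha_n P_n(z)$ up to a nonzero constant; in particular $\alpha_n\neq 0$, because $\cosh\theta_k\ge 1$. Rescaling by a nonzero constant does not change the zero set, so it suffices to locate the zeros of $\hat P_n$.

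The second step is to check that the rescaled sequence satisfies the Szegő recurrences \eqref{eq:szego1} and \eqref{eq:szego2}, with Verblunsky coefficients $\ver_j$ given by \eqref{eq:anglesOPUC}. Since $\theta_j$ is real, $|\ver_j| = |\tanh\theta_j| < 1$. One point needs care here. The Verblunsky Theorem as stated requires $\tilde Q_j(z)=z^j\overline{\hat P_j(1/\bar z)}$, but the recurrence only supplies $\tilde Q_j$ as the rescaled $Q_j$. I would verify the required identity by induction on $j$. At $j=0$ both sides are constant and equal to $1$. For the inductive step, apply the reversal $f\mapsto z^j\overline{f(1/\bar z)}$ to \eqref{eq:szego1}. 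This reproduces \eqref{eq:szego2}, because the reversal conjugates $\bar\ver_j$ to $\ver_j$ and exchanges $z\hat P_{j-1}$ with $\tilde Q_{j-1}$. This bookkeeping of phases and indices is the main obstacle, and it is exactly the content already used in the proof of Proposition~\ref{prop:corroprl12}. I would therefore cite that proposition directly: it asserts that $\{\hat P_i\}$ is orthogonal on $\mathbb{T}$ with respect to a measure $\mu$ with $\chi_i>0$.

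The final step applies the Zeros Theorem (Theorem~1.7.1 of \cite{simon2005orthogonal}) to $\hat P_n$, which gives that all of its zeros lie in $|z|<1$. Hence $P(z)=0$ implies $|z|<1$. One caveat is that the Zeros Theorem requires a nontrivial measure. For a finite sequence of length $n$, Verblunsky's theorem yields such a measure after extending the coefficients by, for example, $\ver_j=0$ for $j>n$. This extension does not alter $\hat P_0,\dots,\hat P_n$, so the argument goes through.
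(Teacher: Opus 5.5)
Your proposal is correct and follows the paper's argument: the paper obtains this corollary directly from Proposition~\ref{prop:corroprl12} (the ${\rm SU}(1,1)$-QSP sequence is a sequence of OPUC) combined with the Zeros Theorem, exactly as you do. Your additional checks only make explicit what the paper leaves implicit: that the rescaling in \eqref{eq:nonosu11} is by a nonzero constant, the inductive verification of $\tilde Q_j = z^j\overline{\hat P_j(1/\bar z)}$, and the extension of the Verblunsky coefficients by zero to obtain a nontrivial measure; note that the base case $\tilde Q_0=\hat P_0=1$ of your induction rests on the same initial normalization $P_0=Q_0$ that the paper tacitly uses in that proposition.
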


Moreover, it is known that the converse of Theorem \ref{theo:zeopuc} also holds, namely that any polynomial whose roots lie strictly inside the unit circle can be realized as an OPUC polynomial. This result provides an answer to the problem raised in \cite{rossi2023quantum} of characterizing the polynomials implementable by ${\rm SU}(1,1)$-QSP. In particular, it shows that this class consists precisely of polynomials whose roots lie inside the unit circle.

\begin{theorem} (reformulation  of Theorem 1.7.5 of \cite{simon2005orthogonal}) Let $\hat{P}_n(z)$ be a monic polynomial whose roots lie strictly inside the unit disk. Then $\hat{P}_n(z)$ is an orthogonal polynomial on the unit circle with respect to some measure $\mu$.
\end{theorem}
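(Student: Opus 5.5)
The plan is to build the sequence backwards from $\hat{P}_n$ with the inverse Szegő recursion (Schur/Levinson layer stripping). Then Verblunsky's theorem, quoted above, supplies the measure $\mu$.

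Let $\hat{P}_n$ be monic with all roots in $\mathbb{D}=\{|z|<1\}$. Define its reversed polynomial $\hat{P}_n^*(z)=z^n\overline{\hat{P}_n(1/\bar z)}$, which plays the role of $\tilde{Q}_n$ in \eqref{eq:szego1}--\eqref{eq:szego2}. The candidate Verblunsky coefficient is $\ver_n=-\overline{\hat{P}_n(0)}$. Since $\hat{P}_n(0)$ is, up to sign, the product of the roots, we get $|\ver_n|<1$.

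Inverting the Szegő step gives
\begin{equation*}
 z\hat{P}_{n-1}(z)=\frac{\hat{P}_n(z)+\overline{\ver}_n\hat{P}_n^*(z)}{1-|\ver_n|^2}.
\end{equation*}
The numerator vanishes at $z=0$ by the choice of $\ver_n$, so $\hat{P}_{n-1}$ is a monic polynomial of degree $n-1$. One must also check that $\hat{P}_{n-1}^*=\tilde{Q}_{n-1}$ is consistent with \eqref{eq:szego2}. This is an algebraic identity that follows by applying the $*$-operation to the display.

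The key step is to show that all roots of $\hat{P}_{n-1}$ again lie in $\mathbb{D}$, so that the stripping can be iterated. On $|z|=1$ we have $|\hat{P}_n^*(z)|=|\hat{P}_n(z)|$. Since $|\ver_n|<1$, Rouché's theorem gives that $\hat{P}_n+\overline{\ver}_n\hat{P}_n^*$ has the same number of zeros in $\mathbb{D}$ as $\hat{P}_n$, namely $n$. One caveat: Rouché needs $\hat{P}_n\neq0$ on $\mathbb{T}$, which holds by hypothesis. Removing the zero at $z=0$ leaves $n-1$ zeros of $\hat{P}_{n-1}$ in $\mathbb{D}$, hence all of them. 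Induction on the degree then produces $\ver_n,\ver_{n-1},\dots,\ver_1$, all with modulus less than $1$, together with a monic chain $\hat{P}_0=1,\dots,\hat{P}_n$ satisfying \eqref{eq:szego1}--\eqref{eq:szego2}.

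To conclude, extend the sequence arbitrarily, for instance with $\ver_j=0$ for $j>n$. Verblunsky's theorem (Theorem~\ref{theo:ver}) then yields a measure $\mu$ on $\mathbb{T}$ for which the $\hat{P}_j$ are orthogonal, so $\hat{P}_n$ is an OPUC.

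I expect the main obstacle to be the degree and zero bookkeeping in the Rouché step. It requires the strict inequality $|\ver_n|<1$ together with the absence of zeros on $\mathbb{T}$, both of which follow from the hypothesis that all roots lie strictly inside the disk.
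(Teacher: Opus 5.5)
Your proof is correct, and every step checks out:
\begin{itemize}
\item The choice $\nu_n=-\overline{\hat P_n(0)}$ is forced by evaluating the Szeg\H{o} recursion at $z=0$.
\item The $z^n$-coefficient of $\hat P_n+\bar\nu_n\hat P_n^*$ is $1-|\nu_n|^2$, so $\hat P_{n-1}$ is monic of degree $n-1$.
\item The Rouch\'e comparison is strict, because $|\nu_n|<1$ and $\hat P_n$ has no zeros on $\mathbb{T}$.
\item The quoted Verblunsky theorem then supplies $\mu$.
\end{itemize}

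The paper gives no argument of its own for this statement beyond the citation to Simon. The tool it does use nearby, the Bernstein--Szeg\H{o} formula \eqref{eq:bern-app} in the ${\rm SU}(1,1)$ angle-finding part of Section~\ref{sec:4}, yields a shorter and genuinely different proof. Take $d\mu=d\theta/(2\pi|\hat P_n(e^{i\theta})|^2)$ directly. On $\mathbb{T}$ one has $\overline{\hat P_n(z)}=z^{-n}\tilde Q_n(z)$, so for $0\le j\le n-1$
\[
\int_0^{2\pi}\overline{e^{ij\theta}}\,\hat P_n(e^{i\theta})\,d\mu=\frac{1}{2\pi i}\oint_{|z|=1}\frac{z^{\,n-j-1}}{\tilde Q_n(z)}\,dz=0 .
\]
This vanishes because the zeros $1/\bar z_k$ of $\tilde Q_n$ lie outside $\overline{\mathbb{D}}$. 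Since the density is positive, $\hat P_n$ is the $n$th monic OPUC of $\mu$.

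That route uses only Cauchy's theorem. It needs neither the descent of zeros nor Verblunsky's theorem, which is far more than is required when only finitely many coefficients are involved. It also produces exactly the measure whose moments the paper feeds into its determinant formulas for the angles.

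Your layer-stripping route instead builds the whole chain $\hat P_0,\dots,\hat P_n$ and the coefficients $\nu_1,\dots,\nu_n$ explicitly. It implicitly shows they are uniquely determined by $\hat P_n$. Through \eqref{eq:anglesOPUC} it is itself a moment-free angle-finding procedure for ${\rm SU}(1,1)$-QSP.

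The two constructions agree: the probability measure with Verblunsky coefficients $\nu_1,\dots,\nu_n,0,0,\dots$, which is what your extension feeds into Verblunsky's theorem, is, up to normalization, precisely this Bernstein--Szeg\H{o} measure.
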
\label{theo:zeopucconv}

\begin{corollary}
    Let $P(U)$ be a polynomial of degree $n$ whose roots lie inside the unit disk. Then there exists a nonzero scalar $\alpha$ such that an $(\alpha,n+1,0)$-block encoding of the polynomial $ P(U)$ can be implemented by ${\rm SU}(1,1)$-QSP using $n$ queries to controlled $U$.
\end{corollary}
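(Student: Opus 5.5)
Write $P(z) = \kappa \hat{P}_n(z)$, where $\kappa \neq 0$ is the leading coefficient of $P$ and $\hat{P}_n$ is monic of degree $n$ with all roots strictly inside the unit disk. The plan is to produce an OPUC sequence whose degree-$n$ term is $\hat{P}_n$, and then hand that sequence to Proposition~\ref{prop:corroprl12}.

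\textbf{Step 1 (OPUC embedding).} By Theorem~\ref{theo:zeopucconv}, $\hat{P}_n$ is an orthogonal polynomial on the unit circle for some measure $\mu$. Let $\{\hat{P}_i\}_{i=0}^n$ be the monic OPUC of $\mu$. Uniqueness of monic orthogonal polynomials, which follows from positivity of the Toeplitz determinants, makes the degree-$n$ member coincide with our $\hat{P}_n$. Each $\hat{P}_i$ has degree exactly $i$, and Theorem~\ref{theo:opuc1} gives Verblunsky coefficients $\ver_j$ with $|\ver_j|<1$.

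\textbf{Step 2 (angles and block encoding).} Next I invert \eqref{eq:anglesOPUC} to obtain angles. Choose $\theta_j \in \mathbb{R}$ with $\tanh\theta_j = |\ver_j|$, which is possible because $|\ver_j|<1$. Then choose the phases $\phi_j$ recursively so that the phase of $i\prod_{k\le j}(-e^{i\phi_k})$ equals the argument of $\ver_j$. When $\ver_j=0$, take $\theta_j=0$ and choose $\phi_j$ freely.

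This inversion is the step that needs the most care, but I expect it to be unproblematic. Since $\phi_j$ enters only as a new factor in the product, each step of the recursion is solvable, and the resulting $\theta_j$ is real. The half-angle $\tilde{\theta}_j$ is then fixed by $\tanh\theta_j = \pm\tan^2(\tilde{\theta}_j/2)$, choosing the sign to match $\theta_j$.

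With these angles, Proposition~\ref{prop:corroprl12} gives, after $n$ iterations of the circuit in Figure~\ref{fig:circuitsu11QSP}, an $(\alpha_n,n+1,0)$-block encoding of $\hat{P}_n(U)$. Here $\alpha_n$ is given by \eqref{eq:becsu11} and is nonzero, since it is a product of nonzero factors. Each iteration uses one controlled-$U$, so there are $n$ queries in total.

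\textbf{Step 3 (rescaling).} Since $\hat{P}_n(U)/\alpha_n = P(U)/(\kappa\alpha_n)$, setting $\alpha := \kappa\alpha_n$ yields an $(\alpha,n+1,0)$-block encoding of $P(U)$. If $\alpha$ must be positive, absorb its phase into one of the angles $\phi_k$ or into a single-qubit phase gate on the ancilla.
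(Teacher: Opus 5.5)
Your proposal is correct and follows the paper's argument: the corollary is obtained by using the converse of the zeros theorem (Simon's Theorem 1.7.5) to realize the monic normalization of $P$ as the $n$th member of an OPUC sequence, and then applying Proposition~\ref{prop:corroprl12} to implement that sequence with ${\rm SU}(1,1)$-QSP. Your added details (uniqueness of the monic OPUC, a phase recursion that stays well defined when some $\ver_j=0$, and absorbing the leading coefficient into $\alpha$) make explicit steps that the paper leaves implicit.
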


In the following subsection, we address the problem of determining the angles required in ${\rm SU}(1,1)$-QSP to implement a given target polynomial whose roots lie inside the unit disk. We provide an explicit and constructive procedure for obtaining these angles, expressed in terms of the moments of a linear functional associated with the target polynomial.

\subsection{Angle-finding problem for ${\rm SU}(1,1)$-QSP}

The Bernstein–Szegő approximation provides a method to approximate the moments of a linear functional associated with a sequence of OPUC using a finite-degree polynomial $\hat{P}_n(z)$ from the sequence \cite{simon2005orthogonal}. The following proposition leverages this result to solve the angle-finding problem for ${\rm SU}(1,1)$-QSP. In particular, it expresses the corresponding QSP angles for a target polynomial $\hat{P}_n(z)$ in terms of these moments.

\begin{proposition}
  Let $\hat{P}_n(z)$ be a monic polynomial whose zeros all lie inside the unit circle. Let $c_k$ denote the following coefficients
    \begin{equation}\label{eq:ckprop1}
    c_k =  \mathcal{L}[z^k] = \int_{0}^{2\pi}  z^k d\mu(z)  = \frac{1}{2\pi } \int_{0}^{2\pi} \frac{e^{ik\theta} d\theta}{ |\hat{P}_n(e^{i\theta})|^2},
\end{equation}
where $k  = -n, -n+1, \dots, n$. If for every $j = 1,2,\dots, n$ we have 
\begin{equation}\label{eq:conpropsu11}
    \begin{vmatrix}
    c_0 & c_{1} & \dots & c_{j}\\
    c_{-1} & c_{0} & \dots & c_{j-1}\\
    \vdots & \vdots & \ddots & \vdots\\
    c_{-j} & c_{-j+1} & \dots & c_{0}
    \end{vmatrix} \neq 0
\end{equation}
Then a $(\alpha_n, n+1, 0)$-block encoding of $\hat{P}_n$ is obtained via ${\rm SU}(1,1)$-QSP, with angle sequences $\{\phi_j\}_{j=1}^n$ and $\{\theta_j\}_{j=1}^n$ determined by
    \begin{equation}\label{eq:anglesu11prop}
\phi_j = \pi + \text{arg}(\ver_j) -  \text{arg}({\ver_{j-1}}), \quad \theta_j = \tanh^{-1}|\ver_j|,
\end{equation}
where $\text{arg}(\ver_{0}) = \pi/2$ and for $j = 1,2,\dots, n$
\begin{equation}\label{eq:vermo}
    \bar{\ver}_j =  (-1)^{j+1}\begin{vmatrix}
    c_1 & c_{2} & \dots & c_{j}\\
    c_{0} & c_{1} & \dots & c_{j-1}\\
    \vdots & \vdots & \ddots & \vdots\\
    c_{-j+2} & c_{-j+1} & \dots & c_{1}
    \end{vmatrix}\begin{vmatrix}
    c_0 & c_{1} & \dots & c_{j-1}\\
    c_{-1} & c_{0} & \dots & c_{j-2}\\
    \vdots & \vdots & \ddots & \vdots\\
    c_{-j+1} & c_{-j+2} & \dots & c_{0}
    \end{vmatrix}^{-1}.
\end{equation}
\end{proposition}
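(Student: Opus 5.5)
The plan has three parts: build a measure on $\mathbb{T}$ whose $n$-th monic OPUC is exactly $\hat{P}_n$, recover its Verblunsky coefficients from the moments $c_k$, and convert those coefficients into angles by inverting \eqref{eq:anglesOPUC}.

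\textbf{Step 1: the Bernstein--Szegő measure.} Take $d\mu = \frac{d\theta}{2\pi |\hat{P}_n(e^{i\theta})|^2}$. This is well defined because all zeros of $\hat{P}_n$ lie strictly inside the disk, so $|\hat{P}_n|$ does not vanish on $\mathbb{T}$. By the Bernstein--Szegő theorem (Theorem 1.7.8 of \cite{simon2005orthogonal}), the monic OPUC of degree $n$ for this $\mu$ is exactly $\hat{P}_n$.

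To keep the argument self-contained, I would also check this directly. Write $\tilde{Q}_n(z) = z^n\overline{\hat{P}_n(1/\bar z)}$. For $0\le j<n$ one has
\begin{equation*}
\int \overline{z^j}\hat{P}_n\, d\mu = \frac{1}{2\pi i}\oint \frac{z^{n-j-1}}{\tilde{Q}_n(z)}\,dz .
\end{equation*}
Since $\tilde{Q}_n$ has its zeros outside the closed disk, the integrand is analytic inside, so the integral vanishes by Cauchy's theorem. Hence $\hat{P}_n$ is orthogonal to all lower-degree polynomials in $L^2(\mu)$.

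\textbf{Step 2: Verblunsky coefficients from the moments.} The measure $\mu$ is positive and nontrivial, so the Toeplitz determinants $D_j=\det(c_{k-l})$ are positive. This means the hypothesis \eqref{eq:conpropsu11} holds automatically, but it is in any case exactly what is needed to solve for the lower-degree monic OPUC. As in \eqref{eq:sol}, I would write them in determinantal form: $\hat{P}_j$ is the $(j+1)\times(j+1)$ Toeplitz determinant with last row $(1,z,\dots,z^j)$, divided by $D_{j-1}$.

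Evaluating the Szegő recurrence \eqref{eq:szego1} at $z=0$ gives $\hat{P}_j(0) = -\bar{\ver}_j\tilde{Q}_{j-1}(0) = -\bar{\ver}_j$, because $\tilde{Q}_{j-1}(0)=1$. So $\bar{\ver}_j = -\hat{P}_j(0)$. The constant term of the determinantal form is, up to the sign $(-1)^{j}$ from cofactor expansion, the minor obtained by deleting the first column and the last row. After reindexing, that minor is the first determinant in \eqref{eq:vermo}. This yields \eqref{eq:vermo}.

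\textbf{Step 3: from Verblunsky coefficients to angles.} The moments $c_k$ are only needed for $|k|\le n$, because the Verblunsky coefficients $\ver_1,\dots,\ver_n$ depend only on $c_{-n},\dots,c_n$. With these in hand, I would invert \eqref{eq:anglesOPUC} from Proposition~\ref{prop:corroprl12}:
\begin{itemize}
\item Taking moduli gives $|\ver_j|=\tanh\theta_j$ when $\theta_j\ge 0$, so $\theta_j=\tanh^{-1}|\ver_j|$. This is well defined since $|\ver_j|<1$.
\item Taking arguments gives $\arg\ver_j = \pi/2 + \sum_{k\le j}(\phi_k+\pi)$ mod $2\pi$. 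Differencing consecutive $j$ yields $\phi_j = \arg\ver_j - \arg\ver_{j-1} - \pi$, which equals $\pi+\arg\ver_j-\arg\ver_{j-1}$ mod $2\pi$, with the convention $\arg\ver_0=\pi/2$.
\end{itemize}
Proposition~\ref{prop:corroprl12} then gives the $(\alpha_n,n+1,0)$-block encoding.

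\textbf{Main obstacle.} I expect the hardest part to be the sign and indexing bookkeeping in the cofactor expansion of Step 2. It has to reproduce exactly the $(-1)^{j+1}$ factor and the column ordering in \eqref{eq:vermo}. A related subtlety is that when $\ver_j=0$ its argument is undefined, so $\phi_j$ is free and any choice works.
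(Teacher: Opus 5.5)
This is correct and is essentially the paper's own argument: you take the Bernstein--Szeg\H{o} measure, write the monic OPUC in determinantal form from the Toeplitz moments, read off the Verblunsky coefficients from the constant terms ($\bar\nu_j=-\hat P_j(0)$, with the cofactor sign producing the $(-1)^{j+1}$), and invert \eqref{eq:anglesOPUC} via Proposition~\ref{prop:corroprl12}. Two of your additions are valid refinements the paper omits: the direct Cauchy-theorem check that $\hat P_n$ is orthogonal to lower-degree polynomials, and the observation that positivity of $\mu$ makes the determinant hypothesis \eqref{eq:conpropsu11} automatic; one small correction is that when $\nu_j=0$, whatever value you assign to $\arg\nu_j$ must be used consistently in both $\phi_j$ and $\phi_{j+1}$, so $\phi_j$ is not free in isolation.
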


\begin{proof}
    Based on Proposition \ref{prop:corroprl12}, block-encoding $\hat{P}_n(U)$ using ${\rm SU}(1,1)$-QSP amounts to finding a sequence of monic polynomials $\{\hat{P}_i(z)\}_{i=0}^n$ orthogonal on the unit circle, with $\deg \hat{P}_i = i$ and $\hat{P}_n$ being the target polynomial. Such a sequence is orthogonal with respect to a positive definite measure $d\mu(z)$, which can be explicitly described using the Bernstein–Szegő approximation \cite{simon2005orthogonal}. If $q(z)$ is a Laurent polynomial whose lowest and highest degrees are at most $n$, this approximation states that
\begin{equation}\label{eq:bern-app}
    \mathcal{L}[q(z)] = \int_{0}^{2\pi}  q(z) d\mu(z)  = \frac{1}{2\pi } \int_{0}^{2\pi} \frac{q(e^{i\theta}) d\theta}{ |\hat{P}_n(e^{i\theta})|^2}.
\end{equation}
In particular, the coefficients $c_k = \mathcal{L}[z^k]$ of equation \eqref{eq:ckprop1} are the moments of the linear functional $\mathcal{L}$ associated with the OPUC sequence whose $n$th orthogonal polynomial is given by $\hat{P}_n$.  To identify the full sequence of OPUC associated to $\mathcal{L}$, we note that the orthogonality relations 
\begin{equation*}
    \mathcal{L}[\overline{\hat{P}_i(z)}\hat{P}_j(z)] = \chi_i \delta_{ij},
\end{equation*}
are equivalent to the following conditions,
\begin{equation*}
    \mathcal{L}[z^{-j} \hat{P}_i(z)] = \chi_i \delta_{ij} , \quad 0 \leq j \leq i.
\end{equation*}
 These conditions lead to a linear system of equations for the coefficients in the monomial expansion of the polynomials $\hat{P}_i$. This system admits the following solution
\begin{equation}\label{eq:monicOPUC}
     \hat{P}_i(z) = \begin{vmatrix}
    c_0 & c_{1} & \dots & c_{i}\\
    c_{-1} & c_{0} & \dots & c_{i-1}\\
    \vdots & \vdots & \ddots & \vdots\\
    c_{-i+1} & c_{-i+2} & \dots & c_{1}\\
        1 & z & \dots & z^{i}
    \end{vmatrix} h_i^{-1}, \qquad h_i = \begin{vmatrix}
    c_0 & c_{1} & \dots & c_{i-1}\\
    c_{-1} & c_{0} & \dots & c_{i-2}\\
    \vdots & \vdots & \ddots & \vdots\\
    c_{-i+1} & c_{-i+1} & \dots & c_{0}
    \end{vmatrix},
\end{equation}
with $\chi_i = h_{i+1}/h_i$. To verify that this is indeed a solution, one observes that for $0 \leq j \leq i$
\begin{equation*}
    \mathcal{L}\Big[z^{-j}\begin{vmatrix}
    c_0 & c_{1} & \dots & c_{i}\\
    c_{-1} & c_{0} & \dots & c_{i-1}\\
    \vdots & \vdots & \ddots & \vdots\\
    c_{-i+1} & c_{-i+2} & \dots & c_{1}\\
        1 & z & \dots & z^{i}
    \end{vmatrix} \Big] = \begin{vmatrix}
    c_0 & c_{1} & \dots & c_{i}\\
    c_{-1} & c_{0} & \dots & c_{i-1}\\
    \vdots & \vdots & \ddots & \vdots\\
    c_{-i+1} & c_{-i+2} & \dots & c_{1}\\
        c_{-j}  &  c_{-j+1} & \dots &  c_{i-j}
    \end{vmatrix} = h_{i+1}\delta_{ij}.
\end{equation*}
From the properties of the linear system of equations for the coefficients in the monomial expansions of the polynomials $\hat{P}_i$, we find that this solution exists and is unique if and only if equation \eqref{eq:conpropsu11} holds. In that case, equation \eqref{eq:monicOPUC} provides an explicit expression for the sequence of OPUC associated with the linear functional $\mathcal{L}$, with $\hat{P}_n$ coinciding with the target polynomial. It follows from Theorem \ref{theo:opuc1} that these polynomials satisfy the Szegő recurrence relations. The coefficients $\ver_j$ appearing in these relations can be determined by substituting the expression \eqref{eq:monicOPUC} into the recurrence and matching the constant and leading order terms. This yields
\begin{equation}\label{eq:vermo}
    \bar{\ver}_j =  (-1)^{j+1}\begin{vmatrix}
    c_1 & c_{2} & \dots & c_{j}\\
    c_{0} & c_{1} & \dots & c_{j-1}\\
    \vdots & \vdots & \ddots & \vdots\\
    c_{-j+2} & c_{-j+1} & \dots & c_{1}
    \end{vmatrix}\begin{vmatrix}
    c_0 & c_{1} & \dots & c_{j-1}\\
    c_{-1} & c_{0} & \dots & c_{j-2}\\
    \vdots & \vdots & \ddots & \vdots\\
    c_{-j+1} & c_{-j+2} & \dots & c_{0}
    \end{vmatrix}^{-1}.
\end{equation}
In the derivation of Proposition \ref{prop:corroprl12}, it was shown that the Szegő recurrence relations are equivalent to the recurrence implemented by ${\rm SU}(1,1)$-QSP, with the coefficients $\ver_j$ related to the circuit angles via equation \eqref{eq:anglesOPUC}. Expressions for the angles $\theta_j$ and $\phi_j$ in terms of the moments $c_j$ then follow from relation \eqref{eq:vermo} together with \eqref{eq:anglesOPUC}. One finds equation \eqref{eq:anglesu11prop}
with the convention $\text{arg}(\ver_{0}) = \pi/2$, and the result follows.
\end{proof}

\section{Bivariate QSP and biorthogonality}\label{sec:5}

\begin{figure}[t]
    \centering
\begin{quantikz}[wire types = {q,q,n,q}]
    & \octrl{1} &  &\gate{R(\theta_{i-1},\phi_{i-1})} & \octrl{1} & &\gate{R(\theta_{i},\phi_{i})} & \octrl{1} & \gate{R(\theta_{i+1},\phi_{i+1})} &\\ &\gate[3]{U_{i-1}} &&&\gate[3]{U_{i}} &&&\gate[3]{U_{i+1}} & &\\
  & &  & \vdots &  &  & \vdots  &  &  \vdots &\\
  && & && & && &  \\
\end{quantikz}
    \caption{Standard circuit for generalized multivariate QSP (three iterations). The corresponding unitary is given by
$T(\theta_{i+1}, \phi_{i+1}, U_{i+1})\, T(\theta_i, \phi_i, U_i)\, T(\theta_{i-1}, \phi_{i-1}, U_{i-1})$, where $T$ denotes the unitary corresponding to a single iteration of generalized QSP, as defined in equation \eqref{eq:transfgqsp}. }
    \label{fig:circuitGQSPm}
\end{figure}

In this section, we consider the approach to bivariate QSP introduced in \cite{rossi2022multivariable}, which aims to implement a block encoding of a polynomial $P(U,V)$ in two commuting unitaries $U$ and $V$. We show that the resulting sequence of polynomials forms part of a biorthogonal basis. The class of bivariate polynomials that can be block-encoded using this approach is known to be limited and has yet to be fully characterized. As a contribution toward this goal, we derive necessary conditions on the admissible polynomials based on biorthogonality relations.

%\textcolor{blue}{NW: Explain multi-degree and other common concepts discussed below that the reader may not have seen.}

\subsection{Circuit and recurrence for bivariate QSP}

Let us assume access to the $0$-controlled versions of two commuting unitaries, $U$ and $V$. The standard approach to bivariate QSP reuses the univariate QSP circuit structure (see Figure \ref{fig:circuitGQSP}), with controlled applications of $U$ or $V$ at each iteration. A segment of the circuit consisting of three iterations is shown in Figure \ref{fig:circuitGQSPm}, where $U_i \in \{U,V\}$. In principle, this multivariate QSP approach can be extended to any number of commuting unitaries. However we restrict our attention to two unitaries, $U$ and $V$, since characterizing the set of polynomials it implements remains an open question even in the bivariate case.

For $n$ iterations with different angles $\theta_i$ and $\phi_i$,  the transformation implemented by the algorithm is given by
\begin{equation*}
   \begin{pmatrix}
       P_n(U,V) & \cdot \quad \\
       Q_n(U,V) & \cdot \quad 
   \end{pmatrix} 
   = T(\theta_n,\phi_n,U_n) \dots T(\theta_1,\phi
   _1, U_1),
\end{equation*}
where $T(\theta_i, \phi_i, U_i)$ denotes the unitary corresponding to a single iteration of generalized QSP, as defined in equation \eqref{eq:transfgqsp}. In particular, the polynomials implemented at steps $n$ and $n-1$ are related by
\begin{equation*}
    \begin{pmatrix}
       P_n(U,V) & \cdot \quad \\
       Q_n(U,V) & \cdot \quad 
   \end{pmatrix} 
   = T(\theta_n,\phi_n,U_n)  \begin{pmatrix}
       P_{n-1}(U,V) & \cdot \quad \\
       Q_{n-1}(U,V) & \cdot \quad 
   \end{pmatrix} 
\end{equation*}
From the structure of the matrix $T(\theta,\phi,U)$, we thus derive the following recurrence relations for the polynomial sequences $P_n$ and $Q_n$, now viewed as polynomials in the complex variables $z$ and $w$:
 \begin{equation}\label{eq:rec1biv}
    P_{n}(z,w) = e^{i \phi_n} \cos \theta_n z_n P_{n-1}(z,w) + e^{i \phi_n} \sin \theta_n  Q_{n-1}(z,w),
\end{equation}
\begin{equation}\label{eq:rec2biv}
    Q_{n}(z,w) =\sin \theta_n z_n P_{n-1}(z,w) -\cos \theta_n  Q_{n-1}(z,w).
\end{equation}
with $z_i \in\{z,w\}$. To discuss the properties of the polynomial solutions to these equations, it is useful to introduce a notion analogous to the degree in the univariate setting. For multivariate polynomials, this generalization is referred to as the multidegree and corresponds to a vector recording the highest exponent of each variable.

\begin{definition}
Let $P(z,w) = \sum_{i,j} r_{ij} z^i w^j$ be a polynomial in two variables $z$ and $w$. The multidegree of $P$ is defined as
\begin{equation*}
    \mathrm{multideg}\, P =
    \left(
    \max\{ i \mid r_{ij} \neq 0 \},
    \max\{ j \mid r_{ij} \neq 0 \}
    \right).
\end{equation*}
\end{definition}
\noindent As long as $\theta_i \notin \left\{ \frac{k\pi}{2} \right\}_{k \in \mathbb{Z}}$, the structure of the recurrence relations \eqref{eq:rec1biv} and \eqref{eq:rec2biv} imply that $P_n$ and $Q_n$ contain the monomial $z_1 z_2 \cdots z_n$, and thus
\begin{equation*}
    \mathrm{multideg}\, P_n = \mathrm{multideg}\, Q_n =  \mathrm{multideg}\, z_1 z_2\dots z_n.
\end{equation*}

\subsection{Biorthogonality in two variables}
For bivariate polynomials satisfying the recurrence relations \eqref{eq:rec1biv} and \eqref{eq:rec2biv}, we can derive two types of biorthogonality relations associated with different linear functionals. The first approach uses the map $z^i w^j \rightarrow z^{i + k j}$ to obtain a bijection between vector space spanned by monomials in one or two variables:
\begin{equation}
    \text{span}\{ z^i w^j \, |\, 0 \leq i \leq k-1, \, \, 0 \leq j \leq \ell - 1\} \cong \text{span}\{ z^i \, |\, 0 \leq i \leq k \ell -1\}
\end{equation}
Using this bijection and an argument parallel to that of Proposition~\ref{prop:form}, one can show that the sequence of polynomials arising in bivariate QSP satisfies a biorthogonality relation analogous to that in the univariate QSP setting.

\begin{proposition}\label{prop:formbiv1}
    Let $\{P_i(z,w)\}_{i=0}^n$ and $\{Q_i(z,w) \}_{i=0}^n$  be finite sequences of polynomials satisfying the recurrence relations  \eqref{eq:rec1biv} and \eqref{eq:rec2biv}, with $\mathrm{multideg}\, P_i = \mathrm{multideg}\, Q_i = (a_i,b_i)$  and $\theta_i \notin \left\{ \frac{k\pi}{2} \right\}_{k \in \mathbb{Z}}$. Then there exists a linear functional $\mathcal{L}$ such that we have the following biorthogonal relations:
    \begin{equation}\label{eq:condformbiv}
        \mathcal{L}[\tilde{Q}_i(z,w)P_j(z,w) ] \propto \delta_{ij}
    \end{equation}
    where $\tilde{Q}_i(z,w) = z^{-a_i} w^{-b_i}Q_i(z,w)$ and such that the $\mathcal{L}$ has a contour integral representation,
    \begin{equation}\label{eq:linfun3}
         \mathcal{L}[f(z,w)] = \int_\Gamma \frac{f(z,z^{a_n+1})}{z P_n(z,z^{a_n+1})\tilde{Q}_n(z,z^{a_n+1})} dz,
    \end{equation}
    where $\Gamma$ denotes a closed contour whose interior contains $z=0$ and all roots of $P_n(z,z^{a_n+1})$, but none of the roots of $Q_n(z,z^{a_n+1})$.
\end{proposition}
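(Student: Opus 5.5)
The plan is to reduce the bivariate statement to the univariate one, Proposition~\ref{prop:form}, via the substitution $w = z^{a_n+1}$. This substitution realizes the bijection $z^i w^j \mapsto z^{i+kj}$ with $k = a_n+1$.

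\textbf{Step 1 (substitution).} Define $p_i(z) = P_i(z,z^{k})$ and $q_i(z) = Q_i(z,z^{k})$ with $k = a_n+1$. Every $P_i,Q_i$ has $z$-degree at most $a_i \le a_n < k$. Hence distinct monomials $z^aw^b$ with $a\le a_n$, $b \le b_n$ map to distinct powers $z^{a+kb}$, and the substitution is injective on the relevant span.

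Substituting into \eqref{eq:rec1biv}--\eqref{eq:rec2biv}, each step multiplies by $z_i\in\{z,w\}$, which becomes $z$ or $z^k$. This gives
\begin{equation*}
p_i = e^{i\phi_i}\cos\theta_i\, z^{s_i} p_{i-1} + e^{i\phi_i}\sin\theta_i\, q_{i-1}, \qquad q_i = \sin\theta_i\, z^{s_i} p_{i-1} - \cos\theta_i\, q_{i-1},
\end{equation*}
with $s_i\in\{1,k\}$. The top monomial $z_1\cdots z_i$ maps to $z^{a_i+kb_i}$, so $\deg p_i=\deg q_i = N_i := a_i + k b_i$. Also $\tilde Q_i(z,z^k) = z^{-N_i} q_i(z)$, which matches $\tilde{Q}_i$ from the univariate setting.

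\textbf{Step 2 (univariate biorthogonality with jumps).} Set $\mathcal{L}[f(z,w)] := \ell[f(z,z^k)]$, where
\begin{equation*}
\ell[g] = \int_\Gamma \frac{g(z)}{z\,p_n(z)\,z^{-N_n}q_n(z)}\,dz .
\end{equation*}
This is exactly \eqref{eq:linfun3}. We must then show $\ell[z^{-N_i}q_i\, p_j] \propto \delta_{ij}$.

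I would follow the proof of Proposition~\ref{prop:form}, now with degree jumps $s_i$. The key identity comes from the unitarity of each $T$ and inverting the recurrence:
\begin{equation*}
\begin{pmatrix} z^{s_i}p_{i-1}\\ q_{i-1}\end{pmatrix} = T_i^{-1}\begin{pmatrix} p_i\\ q_i\end{pmatrix}.
\end{equation*}
Iterating from $n$ down expresses $p_j$ and $q_j$ as combinations of $z^{-(N_n-N_j)}$-type factors times $p_n, q_n$. Concretely, $z^{N_n-N_j}p_j$ and $q_j$ are combinations of $p_n,q_n$ with Laurent-polynomial coefficients. Then the integrand of $\ell[z^{-N_i}q_i p_j]$ splits into two kinds of terms. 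Terms containing $p_n$ in the numerator cancel the denominator's $p_n$, leaving singularities only at $0$ (and at roots of $q_n$, which lie outside $\Gamma$). Terms containing $q_n$ cancel $q_n$, leaving poles only at $0$ and at roots of $p_n$.

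\textbf{Step 3 (residue count).} The main obstacle is the residue bookkeeping, since the powers of $z$ no longer step by one. For $j<i$, I would show the total integrand becomes a sum of terms $z^{m}\times(\text{rational function analytic outside }\Gamma\text{, decaying like }z^{-2}\text{ at }\infty)$. Equivalently, one can deform the contour to infinity or shrink it to $0$ and check that the pole order at $0$ is insufficient. This parallels the appendix argument, with the exponent inequalities $N_j < N_i$ and $\deg$ constraints replacing $j<i$.

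For $j>i$, I would argue symmetrically using $z^{-N_i}q_i$ against $p_j$. Here the obstacle is that the span of $\{p_j\}$ is a sparse set of degrees, so biorthogonality must be checked directly from the recurrence rather than from a dimension count. I would first try to reduce everything to the single-step relation $\ell[z^{-N_i}q_i\, z^{m}p_{i-1}]$ and an induction on $n$, as in Appendix~\ref{sec:appA}. Finally, the relations pull back through the injective substitution, giving \eqref{eq:condformbiv}.
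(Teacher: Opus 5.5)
Your route is the paper's: substitute $w=z^{a_n+1}$, invert the recurrence to express $(p_i,q_i)$ through $(p_n,q_n)$, and split each integrand into a $1/q_n$ piece (whose only possible singularity inside $\Gamma$ is $z=0$) and a $1/p_n$ piece (analytic outside $\Gamma$, so evaluated at $\infty$). What remains open is Step~3. The obstacle you anticipate there is not real: you need neither an induction on $n$ nor any spanning property of $\{p_j\}$.

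Close Step~3 as the paper does, via one-sided moment relations.

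\emph{Expressing $(p_i,q_i)$ through $(p_n,q_n)$.} From $(p_i,q_i)^{T}=\mathrm{diag}(z^{-s_{i+1}},1)T_{i+1}^{\dagger}\,\mathrm{diag}(z^{-s_{i+2}},1)T_{i+2}^{\dagger}\cdots\mathrm{diag}(z^{-s_n},1)T_n^{\dagger}\,(p_n,q_n)^{T}$, write $p_i=\alpha p_n+\beta q_n$ and $q_i=\gamma p_n+\delta q_n$. Here $\alpha,\beta\in z^{-s_{i+1}}\mathrm{span}\{1,\dots,z^{-(N_n-N_{i+1})}\}$ and $\gamma,\delta\in\mathrm{span}\{1,\dots,z^{-(N_n-N_{i+1})}\}$.

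\emph{The two moment relations, for $0\le m<N_i$.}
First, $\ell[p_i z^{-m}]=0$. The $\alpha$-term has lowest power $z^{N_i-1-m}$, so it has no pole at $0$. The $\beta$-term is $O(z^{-1-m-s_{i+1}})$ at $\infty$, using $\deg p_n=N_n$.
Second, $\ell[z^{-N_i}q_i z^{m}]=0$. The $\gamma$-term has lowest power $z^{s_{i+1}-1+m}$. The $\delta$-term is $O(z^{m-N_i-1})$ at $\infty$.
The sharper bound $N_n-N_{i+1}$ on the second row, rather than $N_n-N_i$, is exactly what handles $m=0$.

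\emph{Conclusion.} Expand the \emph{other} factor in monomials. For $j<i$, $p_j$ contains only $z^m$ with $m\le N_j<N_i$. For $j>i$, $z^{-N_i}q_i$ contains only $z^{-m}$ with $m\le N_i<N_j$. The sparsity of the degrees $N_j$ never enters.

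Likewise, nothing needs to be pulled back at the end. $\mathcal{L}$ is defined through the substitution, so injectivity is used only to guarantee $\deg p_n=N_n$.
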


\noindent The proof of this proposition is laid out in Appendix \ref{sec:app2}. 

\begin{remark}
   In the linear functional $\mathcal{L}$ of Proposition~\ref{prop:formbiv1}, the variable $z$ plays a more prominent role than $w$. This stems from the choice of mapping $z^i w^j \mapsto z^{\,i + j(a_n + 1)}$. An analogous result can be derived with a contour integral in $w$ instead by using the alternative mapping $z^i w^j \mapsto w^{(b_n + 1)i + j}$.
\end{remark}

\begin{remark}
    In the previous sections, the angle-finding problem was essentially solved by establishing and exploiting the converse of a result analogous to Proposition \ref{prop:formbiv1}, namely that the orthogonality or biorthogonality of a polynomial sequence implies a QSP recurrence relation. This approach does not extend to the bivariate setting. Indeed, for a given $\mathcal{L}$ the set of bivariate polynomials $\{p_{ij} \, | \, i,j \in \mathbb{N}\}$, where $p_{ij}$ has multidegree $(i,j)$ and satisfies
    \begin{equation}
       \mathcal{L}[\,p_{ij}(z,w) z^{-k} w^{-\ell}]  \propto \delta_{ik} \delta_{j\ell},
    \end{equation}
    for all $ 0 \leq k \leq i $ and $ 0 \leq \ell \leq j $, obeys recurrence relations that generally exceed the scope of those implemented by bivariate QSP and given in \eqref{eq:rec1biv} and \eqref{eq:rec2biv}. This indicates that a more general circuit architecture is required to implement arbitrary bivariate polynomials.
\end{remark}

A second approach to the biorthogonality of polynomial sequences in bivariate QSP consists in defining linear functionals that treat one of the variables as a parameter. In contrast to Proposition \ref{prop:formbiv1} and equation \eqref{eq:linfun3}, where biorthogonality is derived by reducing to a univariate framework via the mapping $z^i w^j \mapsto z^{\,i + j(a_n + 1)}$, here we reduce to a one-variable setting by treating $w$ as a parameter and allowing the contour $\Gamma \to \Gamma_w$ to depend on this parameter. This leads to the following proposition:

\begin{proposition}\label{prop:formbiv2}
    Let $\{P_i(z,w)\}_{i=0}^n$ and $\{Q_i(z,w)\}_{i=0}^n$  be finite sequences of polynomials satisfying the recurrence relations  \eqref{eq:rec1biv} and \eqref{eq:rec2biv}, with $\mathrm{multideg}\, P_i = \mathrm{multideg}\, Q_i = (a_i,b_i)$ and $\theta_i \notin \left\{ \frac{k\pi}{2} \right\}_{k \in \mathbb{Z}}$. Then, for all $k \in \mathbb{N}$ such that $1 - a_{i+1} + a_i \le k < a_i$, we have
\begin{equation}\label{eq:bivbirel2}
\int_{\Gamma_w} \frac{P_i(z,w)\, z^{-k}}{z\, P_n(z,w)\, \tilde{Q}_n(z,w)} \, dz = 0, \quad
\int_{\Gamma_w} \frac{\tilde{Q}_i(z,w)\, z^{k}}{z\, P_n(z,w)\, \tilde{Q}_n(z,w)} \, dz = 0,
\end{equation}
where $\tilde{Q}_i(z,w) = z^{-a_i} w^{-b_i}Q_i(z,w)$ %and where the linear functional $\mathcal{L}_w$ acts on functions of $z$ and depends on the parameter $w$, and admits the following contour integral representation:
% \begin{equation}\label{eq:linfun}
%         \mathcal{L}_w[f(z,w)] = \int_{\Gamma_w} \frac{f(z,w)}{z P_n(z,w)\tilde{Q}_n(z,w)} dz.
 %   \end{equation}
 and $\Gamma_w$ denotes a closed contour whose interior contains $z=0$ and all roots of $P_n(z,w)$, but excludes all roots of $Q_n(z,w)$ for the fixed value of $w$.
\end{proposition}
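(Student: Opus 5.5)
The plan is to fix $w$ and treat every polynomial as a polynomial in $z$ alone. I will then express $P_i$ and $\tilde{Q}_i$ as combinations of the final pair $(P_n,Q_n)$ with Laurent-polynomial coefficients, by running the recurrence backwards, exactly as in the residue argument behind Proposition~\ref{prop:form}. Each integral in \eqref{eq:bivbirel2} then splits into two pieces.
\begin{itemize}
\item One piece has no singularity inside $\Gamma_w$ except possibly $z=0$.
\item The other piece has no singularity outside $\Gamma_w$ except possibly $z=\infty$.
\end{itemize}
Both pieces should vanish by degree counting.

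\textbf{Step 1 (inverting the recurrence).} The one-step matrix $T(\theta,\phi,z_j)$ of \eqref{eq:transfgqsp} has an explicit inverse, namely the conjugate transpose with $z_j \mapsto z_j^{-1}$. Hence \eqref{eq:rec1biv}--\eqref{eq:rec2biv} can be solved for the previous step:
\begin{equation*}
P_{j-1} = z_j^{-1}\big(e^{-i\phi_j}\cos\theta_j\, P_j + \sin\theta_j\, Q_j\big), \qquad Q_{j-1} = e^{-i\phi_j}\sin\theta_j\, P_j - \cos\theta_j\, Q_j .
\end{equation*}
Iterating from $j=n$ down to $j=i+1$ gives
\begin{equation*}
P_i = A_i P_n + B_i Q_n, \qquad Q_i = C_i P_n + D_i Q_n,
\end{equation*}
where $A_i,B_i,C_i,D_i$ are Laurent polynomials in $z$ (with $w$ fixed). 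The $z$-exponents in each coefficient lie in $[-(a_n-a_i),0]$, because each backward step in which $z_j=z$ contributes at most one factor $z^{-1}$. I will sharpen these exponent windows by induction on $n-i$. The sharpening comes from tracking which entries actually acquire a factor $z^{-1}$ at the step $j=i+1$. It is precisely at this step that the quantity $a_{i+1}-a_i$ enters the statement.

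\textbf{Step 2 (splitting the first integral).} Substitute the decomposition of $P_i$ and use $\tilde{Q}_n = z^{-a_n}w^{-b_n}Q_n$:
\begin{equation*}
\int_{\Gamma_w}\frac{P_i z^{-k}}{zP_n\tilde{Q}_n}dz = w^{b_n}\int_{\Gamma_w}\frac{A_i\, z^{a_n-k-1}}{Q_n}dz + w^{b_n}\int_{\Gamma_w}\frac{B_i\, z^{a_n-k-1}}{P_n}dz .
\end{equation*}
\begin{itemize}
\item \emph{First term.} Inside $\Gamma_w$ there are no roots of $Q_n$, so the only possible pole is at $z=0$. The lowest power of $z$ in $A_i z^{a_n-k-1}$ is at least $a_i-k-1$, which is $\ge 0$ when $k<a_i$. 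So the integrand is analytic inside $\Gamma_w$ and the integral vanishes.
\item \emph{Second term.} Outside $\Gamma_w$ the integrand is analytic, since all roots of $P_n$ and the point $0$ are enclosed. Deform $\Gamma_w$ to a large circle. The integral then equals minus the residue at infinity, which vanishes provided the integrand decays like $z^{-2}$. This amounts to $\deg_z B_i - k \le -1$, measured against $\deg_z P_n = a_n$. The lower bound $k\ge 1-a_{i+1}+a_i$ should supply exactly this condition, once the top exponent of $B_i$ is identified in Step 1.
\end{itemize}

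\textbf{Step 3 (the second integral).} For the integral involving $\tilde{Q}_i z^{k}$, the argument is symmetric. Write $\tilde{Q}_i = z^{-a_i}w^{-b_i}(C_iP_n + D_iQ_n)$. The $D_i$ term has no pole inside $\Gamma_w$ apart from $z=0$, which is removable by the same exponent count. The $C_i$ term decays at infinity.

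\textbf{Main obstacle.} The delicate step is the precise exponent bookkeeping for $A_i,B_i,C_i,D_i$, in particular pinning down the top $z$-degree of $B_i$ and $C_i$ so that the window $1-a_{i+1}+a_i\le k<a_i$ comes out exactly. I would first verify the window in the case $z_{i+1}=z$, then handle $z_{i+1}=w$ (where $a_{i+1}=a_i$), and induct. A secondary point is genericity: that $Q_n(0,w)\neq 0$ and that a suitable contour $\Gamma_w$ exists for the fixed $w$. Both are implicit in the hypothesis on $\Gamma_w$.
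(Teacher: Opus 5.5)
Your route is the paper's: invert the recurrence, write $P_i,Q_i$ as Laurent-polynomial combinations of $P_n,Q_n$, and split each integrand into a $1/Q_n$ piece treated at $z=0$ and a $1/P_n$ piece treated at $z=\infty$ (the paper's $u=1/z$). Step 2 does this correctly. Step 3, however, assigns the two pieces to the wrong regions, and as written it fails.

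Dividing $\tilde{Q}_i z^{k}=z^{k-a_i}w^{-b_i}(C_iP_n+D_iQ_n)$ by $zP_n\tilde{Q}_n$ gives $w^{b_n-b_i}z^{a_n-a_i+k-1}\bigl(C_i/Q_n+D_i/P_n\bigr)$.
The $D_i$ piece has poles at the zeros of $P_n$, which all lie \emph{inside} $\Gamma_w$. So it is not analytic inside $\Gamma_w$ away from $0$.
The $C_i$ piece has poles at the zeros of $Q_n$, which lie \emph{outside} $\Gamma_w$. So its decay at infinity does not let you push the contour off.
As in Step 2, the coefficient of $P_n$ always comes paired with $1/Q_n$. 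Because $C_i$ and $D_i$ happen to share the same exponent window, your swapped count would reproduce the right range of $k$, but the justification is invalid.

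The repair is to swap the roles. Then what you need is the \emph{lowest} $z$-exponent of $C_i$, not its top degree as your last paragraph anticipates. Write $T_j=T(\theta_j,\phi_j,z_j)$ and $M_j=T_{j+1}^{-1}\cdots T_n^{-1}$, so $M_n=I$ and $M_{j-1}=T_j^{-1}M_j$. Since $T_j^{-1}$ carries the factor $z_j^{-1}$ only on its first row, induction shows that all entries of $M_{i+1}$ have $z$-exponents in $[-(a_n-a_{i+1}),0]$. Hence $A_i,B_i$ have exponents in $[-(a_n-a_i),\,a_i-a_{i+1}]$, and $C_i,D_i$ have exponents in $[-(a_n-a_{i+1}),\,0]$.

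With these windows:
the $A_i$ piece is regular at $0$ for $k\le a_i-1$;
the $B_i$ piece is $O(z^{-2})$ at infinity for $k\ge 1-a_{i+1}+a_i$;
the $C_i$ piece is regular at $0$ for $k\ge 1-a_{i+1}+a_i$;
the $D_i$ piece is $O(z^{-2})$ at infinity for $k\le a_i-1$.
This is exactly the stated window. Your crude window $[-(a_n-a_i),0]$ would give only $k\ge 1$ for the $C_i$ piece, missing $k=0$ when $a_{i+1}=a_i+1$.

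Finally, the decay-at-infinity steps use $\deg_z P_n(\cdot,w)=a_n$ at the fixed $w$, which can fail for finitely many $w$. Like the paper, you are implicitly taking $w$ generic, and that should be stated alongside $Q_n(0,w)\neq 0$.
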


\noindent The proof of this proposition is given in Appendix \ref{sec:app2}. 

Using equation \eqref{eq:bivbirel2}, one can derive biorthogonality relations between certain polynomials in the sequences $\{P_i(z,w)\}_{i=0}^n$ and $\{Q_i(z,w)\}_{i=0}^n$. Let $\mathcal{I} = \{\, i \in \mathbb{N} \mid i \leq n,\ z_{i+1} = z \,\}$, and let $\mathcal{L}_w$ be the following linear functional acting on functions of $z$, with $w$ regarded as a fixed parameter,
\begin{equation*}
    \mathcal{L}_w[f(z)] = \int_{\Gamma_w} \frac{f(z)}{z\, P_n(z,w)\, \tilde{Q}_n(z,w)} \, dz
\end{equation*}
It follows from the previous proposition, together with the monomial expansions of $P_i(z,w)$ and $Q_i(z,w)$, that for $i,j \in \mathcal{I}$ there exist functions $\chi_i(w)$ such that
\begin{equation}\label{eq:bibi1}
\mathcal{L}_w[\tilde{Q}_i(z,w)\,P_j(z,w)] = \delta_{ij}\,\chi_i(w).
\end{equation}
To see this, consider, for instance, the case $i > j$. By the definition $(a_\ell,b_\ell)=\mathrm{multideg}(z_1 z_2 \cdots z_\ell)$ and the properties of $\mathcal{I}$, we have $a_i > a_j$ and $a_{j+1}-a_j = 1$. Moreover, $P_j(z,w)$ admits the decomposition $ P_j(z,w)=\sum_{k=0}^{a_j} r_k(w)\,z^k$, where each $r_k(w)$ is a polynomial in $w$. Substituting this expression into the left-hand side of \eqref{eq:bibi1}, we may then use \eqref{eq:bivbirel2} to conclude that it vanishes, since $
\mathcal{L}_w[z^k \tilde{Q}_i(z,w)] = 0$ for all $0 \leq k \leq a_j < a_i$. A similar argument holds for $i < j$.

\subsection{Necessary conditions on bivariate QSP polynomials}

In addition to providing biorthogonality relations, the Proposition \ref{prop:formbiv2} can be interpreted as imposing constraints on the set of bivariate polynomials that are realizable by bivariate QSP. Let $c_i(w)$ and $d_i(z)$ denote the following functions  
\begin{equation}\label{eq:lastdef1}
c_i(w)  = \int_{\Gamma_w} \frac{z^{i-1}}{ \, P_n(z,w) \, \tilde{Q}_n(z,w)} \, dz, \quad d_i(z)  = \int_{\Gamma_z} \frac{w^{i-1}}{ \, P_n(z,w) \, \tilde{Q}_n(z,w)} \, dw. 
\end{equation}
where $\Gamma_w$ (resp.\ $\Gamma_z$) is a closed contour whose interior contains $z=0$ (resp.\ $w=0$) together with all the roots of $P_n(z,w)$, but excludes the roots of $Q_n(z,w)$ for fixed $w$ (resp.\ fixed $z$). The following proposition gives a necessary condition for a pair of polynomials $P_n(U,V)$ and $Q_n(U,V)$ to be implementable by bivariate QSP.

%Furthermore, let us denote by $C_{ij}(w)$ and $D_{ij}(z)$ the following determinants:
\begin{comment}
\begin{equation}\label{eq:lastdef2}
   C_{ij}(w) =  \begin{vmatrix}
c_0(w) & c_1(w) & \dots &  c_{j-1}(w) & c_{j+1}(w)& \dots& c_{i}(w) \\
c_{-1}(w) & c_0(w) & \dots & c_{j-2}(w) & c_{j}(w)& \dots& c_{i-1}(w) \\
\vdots & \vdots & \ddots & \vdots &\vdots &\dots & \vdots \\
c_{-i+1}(w) & c_{-i+2}(w) & \dots &c_{-i+j}(w) & c_{-i+j+2}(w)& \dots& c_1(w) 
\end{vmatrix},
\end{equation}
\begin{equation}\label{eq:lastdef3}
   D_{ij}(z) =  \begin{vmatrix}
d_0(z) & d_1(z) & \dots &  d_{j-1}(z) & d_{j+1}(z)& \dots& d_{i}(z) \\
d_{-1}(z) & d_0(z) & \dots & d_{j-2}(z) & d_{j}(z)& \dots& d_{i-1}(z) \\
\vdots & \vdots & \ddots & \vdots &\vdots &\dots & \vdots \\
d_{-i+1}(z) & d_{-i+2}(z) & \dots &d_{-i+j}(z) & d_{-i+j+2}(z)& \dots& d_1(z) 
\end{vmatrix},
\end{equation}
where $i = 0,1,\dots,n$ and $j = 0,1,\dots,i$. We have the following proposition, which characterizes their ratios when $P_n$ and $Q_n$ are realizable by bivariate QSP.
\end{comment}
\begin{proposition}\label{prop:condibiv}
    Let $P_n(U,V)$ and $Q_n(U,V)$ be a pair of bivariate polynomials realizable by bivariate QSP in $n$ steps, and let $(a_i,b_i)=\mathrm{multideg}( U_1 U_2 \cdots U_i)$. Let $c_{i}(w)$ and $d_{i}(z)$, with $i = 0,1,\dots,n$, denote the quantities associated with $P_n(U,V)$ and $Q_n(U,V)$, as defined in equations \eqref{eq:lastdef1}. Then, if $a_{i+1}-a_i=1$, the following systems admit vector solutions $u(w)$ and $v(w)$ such that each of their $a_i+1$ components is a polynomial in $w$ of degree at most $b_i$.
    
    \scriptsize
    \begin{equation}\label{eq:sysla1}
        \begin{pmatrix}
c_0(w) & c_1(w) & \dots & c_{a_i}(w) \\
c_{-1}(w) & c_0(w) & \dots & c_{a_i-1}(w) \\
\vdots & \vdots & \ddots & \vdots \\
c_{-a_i+1}(w) & c_{-a_i+2}(w) & \dots & c_1(w)
\end{pmatrix} u(w) = 0, \quad \begin{pmatrix}
c_0(w) & c_{-1}(w) & \dots & c_{-a_i}(w) \\
c_1(w) & c_0(w) & \dots & c_{-a_i+1}(w) \\
\vdots & \vdots & \ddots & \vdots \\
c_{a_i-1}(w) & c_{a_i-2}(w) & \dots & c_{-1}(w) 
\end{pmatrix} v(w) = 0
    \end{equation}
    \normalsize 
    Similarly, if $b_{i+1}-b_i=1$, the following systems admit vector solutions $u(z)$ and $v(z)$ such that each of their $b_i+1$ components is a polynomial in $z$ of degree at most $a_i$.
        \scriptsize
    \begin{equation}\label{eq:sysla2}
        \begin{pmatrix}
d_0(z) & d_1(z) & \dots & d_{b_i}(z) \\
d_{-1}(z) & d_0(z) & \dots & d_{b_i-1}(z) \\
\vdots & \vdots & \ddots & \vdots \\
d_{-b_i+1}(z) & d_{-b_i+2}(z) & \dots & d_1(z) 
\end{pmatrix} u(z) = 0, \quad \begin{pmatrix}
d_0(z) & d_{-1}(z) & \dots & d_{-b_i}(z) \\
d_1(z) & d_0(z) & \dots & d_{-b_i+1}(z) \\
\vdots & \vdots & \ddots & \vdots \\
d_{b_i-1}(z) & d_{b_i-2}(z) & \dots & d_{-1}(z)
\end{pmatrix} v(z) = 0
    \end{equation}
    \normalsize 

\end{proposition}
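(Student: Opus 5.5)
The plan is to read the rows of the systems \eqref{eq:sysla1} as the orthogonality relations of Proposition~\ref{prop:formbiv2}, written in terms of the moments $c_k(w)$. The solution vectors will then be the coefficient vectors of $P_i(z,w)$ and $Q_i(z,w)$, viewed as polynomials in $z$ with coefficients in $\mathbb{C}[w]$. Note that, by \eqref{eq:lastdef1}, $c_k(w)=\mathcal{L}_w[z^k]$, where $\mathcal{L}_w$ is the functional with kernel $1/(zP_n\tilde{Q}_n)$ defined before \eqref{eq:bibi1}.

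\textbf{Step 1: set up the candidates.} Since $P_n,Q_n$ are realizable by bivariate QSP in $n$ steps, there are intermediate polynomials $P_i,Q_i$ satisfying \eqref{eq:rec1biv}--\eqref{eq:rec2biv} with $\mathrm{multideg}\,P_i=\mathrm{multideg}\,Q_i=(a_i,b_i)$. If some $\theta_j$ is a multiple of $\pi/2$, the degree bookkeeping breaks. I would handle this by perturbation or by restricting to the generic case, as the earlier propositions implicitly do. Write
\[
P_i(z,w)=\sum_{k=0}^{a_i} u_k(w)z^k,\qquad Q_i(z,w)=\sum_{k=0}^{a_i} v_k(w)z^{a_i-k}.
\]
Each $u_k,v_k$ is then a polynomial in $w$ of degree at most $b_i$, which gives exactly the claimed form with $a_i+1$ components.

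\textbf{Step 2: derive the equations.} Now assume $a_{i+1}-a_i=1$. Proposition~\ref{prop:formbiv2} gives the vanishing of $\mathcal{L}_w[z^{-k}P_i]$ and of $\mathcal{L}_w[z^{k}\tilde{Q}_i]$ for all $k$ with $0\le k<a_i$, since the lower bound $1-a_{i+1}+a_i$ becomes $0$. Expanding $P_i$ gives
\[
\mathcal{L}_w[z^{-k}P_i]=\sum_{m=0}^{a_i} c_{m-k}(w)\,u_m(w).
\]
Row $k=0,\dots,a_i-1$ of the first matrix in \eqref{eq:sysla1} has entries $c_{-k},\dots,c_{a_i-k}$, which matches exactly. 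For $\tilde{Q}_i=w^{-b_i}\sum_m v_m z^{-m}$, we get
\[
\mathcal{L}_w[z^k\tilde{Q}_i]=w^{-b_i}\sum_m c_{k-m}(w)\,v_m(w).
\]
The factor $w^{-b_i}$ is nonzero for $w\neq0$ and can be dropped, which gives the second system. One then checks that the entries in row $k$, namely $c_k,\dots,c_{k-a_i}$, agree with the stated matrix.

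\textbf{Step 3: check the indexing and nontriviality.} The solutions are nonzero because the $z^{a_i}$ coefficient of $P_i$ is nonzero. I would also confirm that the moment indices appearing, which range from $-a_i+1$ to $a_i$, are consistent with the $c_i(w)$ defined in \eqref{eq:lastdef1}, including negative indices.

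\textbf{Step 4: the $d_i$ systems.} The statement for \eqref{eq:sysla2} follows by the symmetric argument. One swaps the roles of $z$ and $w$, using the $w$-analogue of Proposition~\ref{prop:formbiv2} with contour $\Gamma_z$. That analogue holds by the same appendix proof, since the recurrence is symmetric in which variable is multiplied at each step.

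\textbf{Main obstacle.} The main difficulty is the index bookkeeping in Step 2. The conventions in the matrices, with rows running from $c_0$ down to $c_{-a_i+1}$ and the reversed ordering for $v$, must match the conventions of $\tilde{Q}_i$ and the $z^{-1}$ in the kernel of $\mathcal{L}_w$. I would first verify this on a small case such as $a_i=1$ and adjust the coefficient ordering of $u,v$ as needed.
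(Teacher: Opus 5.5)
Your proposal is correct and follows essentially the same route as the paper. You expand $P_i$ and $Q_i$ in powers of $z$ with coefficients in $w$ of degree at most $b_i$, identify the rows of \eqref{eq:sysla1} with the vanishing relations of Proposition~\ref{prop:formbiv2} for $0\le k<a_i$ (the range allowed when $a_{i+1}-a_i=1$), and obtain the $d_i$ systems from the $z\leftrightarrow w$ symmetry. Your reversed indexing $Q_i=\sum_k v_k(w)z^{a_i-k}$ in fact reproduces the column order of the second stated matrix exactly; the paper's ordering $Q_i=\sum_k v_k z^k$ gives it only up to reversing the components of $v$. Your remarks on nontriviality and on restricting to nondegenerate angles $\theta_j\notin\{k\pi/2\}$ make explicit what the paper leaves implicit. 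Restricting is the right choice there: perturbing the angles would change $P_n,Q_n$ and hence the $c_k(w)$.
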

\begin{proof}
   We begin with the case $a_{i+1}-a_i = 1$. Let $P_i$ and $Q_i$ be the polynomials associated with the $i$-th step of the bivariate QSP protocol, so that they satisfy
$\mathrm{multideg}\, P_i = \mathrm{multideg}\, Q_i = (a_i,b_i)$.
Accordingly, they admit the decomposition
\begin{equation}\label{eq:lade}
    P_i(z,w) = \sum_{k=0}^{a_i} u_k(w)\, z^k,
    \qquad
    Q_i(z,w) = \sum_{k=0}^{a_i} v_k(w)\, z^k,
\end{equation}
where each $u_k(w)$ and $v_k(w)$ is a polynomial in $w$ of degree at most $b_i$. Now observe that Proposition \ref{prop:formbiv2} implies that these polynomials satisfy
\begin{equation*}
\int_{\Gamma_w} \frac{P_i(z,w)\, z^{-k}}{z\, P_n(z,w)\, \tilde{Q}_n(z,w)} \, dz = 0,
\qquad
\int_{\Gamma_w} \frac{Q_i(z,w)\, z^{k-a_i}}{z\, P_n(z,w)\, \tilde{Q}_n(z,w)} \, dz = 0,
\end{equation*}
for all $0 \leq k < a_i$.
Substituting the decomposition \eqref{eq:lade} into these equations, and denoting by $u(w)$ and $v(w)$ the vectors whose entries are the coefficients $u_k(w)$ and $v_k(w)$, respectively, we conclude that \eqref{eq:sysla1} must hold. The second case follows analogously by exploiting the symmetry between $z$ and $w$.
\end{proof}

Since Proposition \eqref{prop:condibiv} provides a necessary condition that polynomials implementable by bivariate QSP must satisfy, it can be used to identify pairs of polynomials that lie beyond the scope of bivariate QSP. The following example illustrates how this proposition can be applied in practice.

\begin{example}
   The $FRT = QSP$ conjecture, introduced in \cite{rossi2022multivariable}, stated that any pair of polynomials $P(z,w)$ and $Q(z,w)$ satisfying $|P(z,w)|^2 + |Q(z,w)|^2 = 1$ for $(z,w) \in \mathbb{T} \times \mathbb{T}$ can be implemented by bivariate QSP. This conjecture was subsequently disproven, and a first counterexample was provided in \cite{nemeth2023variants}. After a change of variables and a rescaling by a global phase, this counterexample is equivalent to
    \begin{equation*}
    \begin{split}
        P(z,w) &= \frac{6}{25} \sqrt{\frac{37}{493}}\Big(w^2 z^2+\left(\frac{114}{37}+\frac{56 i}{37}\right) \left(w^2+z^2\right)-\left(\frac{122}{37}+\frac{8 i}{37}\right) \left(w^2 z+z\right) \\
        & \quad +\left(\frac{362}{111}-\frac{248 i}{111}\right) \left(w z^2+w\right)+\left(\frac{692}{111}-\frac{719 i}{222}\right) w z+1\Big)
    \end{split}
    \end{equation*}
    \begin{equation*}
        \begin{split}
            Q(z,w) &=  \frac{6}{25} \sqrt{\frac{37}{493}} \Big( w^2 z^2+\left(\frac{56}{37}+\frac{114 i}{37}\right) \left(w^2-z^2\right)\\
            & \quad +\left(-\frac{122}{37}-\frac{66 i}{37}\right) \left(w^2-z\right)+\left(\frac{362}{111}-\frac{418 i}{111}\right) \left(w z^2-w\right)-1 \Big)
        \end{split}
    \end{equation*}
    It is straightforward to verify that $|P(z,w)|^2 + |Q(z,w)|^2 = 1$ holds. We now provide a proof that the pair $P(U,V)$ and $Q(U,V)$ cannot be implemented by bivariate QSP based on Proposition \ref{prop:condibiv}.  First, we note that these polynomials satisfy $\mathrm{multideg}\, P = \mathrm{multideg}\, Q = (2,2)$. Therefore, if the pair $P(U,V)$ and $Q(U,V)$ were implementable by bivariate QSP, four steps would be required, with two steps $i$ such that $U_i = U$ and two steps such that $U_i = V$. Consequently, there would exist a step $i$ such that $(a_i,b_i) = (1,b_i)$ and $(a_{i+1},b_{i+1}) = (2,b_i)$. This corresponds to the first case discussed in Proposition \ref{prop:condibiv}. 
    It then implies that there would exist a vector $u(w) = (u_0(w), u_1(w))^t$ whose entries are polynomials of degree $b_i$ in $w$ such that
    \begin{equation}\label{eq:bivcond2x2}
        \begin{pmatrix}
            c_0(w) & c_1(w)
        \end{pmatrix}\begin{pmatrix}
            u_0(w) \\
            u_1(w)
        \end{pmatrix} = 0.
    \end{equation}
    We can show that such a vector $u(w)$ does not exist. For a fixed value of $w$, the two roots $z_\pm(w)$ of $P(z,w)$ are given by
    \small
    \begin{equation*}
        z_\pm(w) = \frac{(12+120 i) (1+w^2)-(154+205 i) w\pm (2+i) \sqrt{1296 w^4-30600 i w^3+4633 w^2-30600 i w+1296}}{4 \left((3+18 i) w^2+(50+52 i) w+(-18+60 i)\right)}.
    \end{equation*}
    \normalsize
    These can be used to compute the functions $c_0(w)$ and $c_1(w)$ given by \eqref{eq:lastdef1}, with $P_n(z,w) = P(z,w)$ and $\tilde{Q}_n(z,w) = z^{-2}w^{-2} Q(z,w)$. We have
    \small
    \begin{equation*}
        c_0(w) = \int_{\Gamma_w} \frac{z w^2}{ \, P(z,w) \, Q(z,w)} \, dz = 2 \pi i w^2 \left(\frac{z_+(w)}{P'(z_+(w),w) Q(z_+(w), w)} +\frac{z_-(w)}{P'(z_-(w),w) Q(z_-(w), w)}  \right),
    \end{equation*}
    \begin{equation*}
        c_1(w) = \int_{\Gamma_w} \frac{z^{2}w^2}{ \, P(z,w) \, Q(z,w)} \, dz = 2 \pi i w^2 \left(\frac{z_+(w)^2}{P'(z_+(w),w) Q(z_+(w), w)} +\frac{z_-(w)^2}{P'(z_-(w),w) Q(z_-(w), w)}  \right),
    \end{equation*}
    \normalsize
    where $\Gamma_w$ is a closed contour whose interior contains $z=0$ and $z=z_\pm(w)$. The right-hand side of these expressions follows from the residue theorem. Injecting these function in equation \eqref{eq:bivcond2x2}, we find that any solution for $u(w)$ is proportional to
    \begin{equation}
        u(w) = \begin{pmatrix}
           -((3+3 i) w+i) \left(4 w^2+57 w-36\right) \\
            4 (w+(3-3 i)) \left(3 w^2+19 b+3\right).
        \end{pmatrix}
    \end{equation}
    In particular, it is proportional to a polynomial of degree $3$ in $w$. Since $\mathrm{multideg}\, P = \mathrm{multideg}\, Q = (2,2)$, we have $b_i \leq 2 < 3$. We thus conclude that no solution $u(w)$ satisfies the condition of Proposition \ref{prop:condibiv}, and hence that $P(U,V)$ and $Q(U,V)$ cannot be achieved by bivariate QSP.
\end{example}

\section{Outlook}

%Primitives for block-encoding polynomial transformations of a unitary have become a central tool in the design of quantum algorithms. One established approach, based on the LCU primitive, realizes this task by decomposing the target polynomial in the monomial basis. In contrast to the QSP approach, it has the advantage of requiring little preprocessing, provided that this decomposition is known. 
This paper introduces a framework for block-encoding functions of a unitary operator through their series expansion in different polynomial bases. In contrast to the standard LCU approach relying on monomial expansions, our method employs as a basis the sequence of polynomials block-encoded at each step of a QSP protocol (or one of its variants). In particular, we show that the sequence of polynomials generated during an $n$-step QSP protocol forms a polynomial basis that enables the block-encoding of any polynomial $P(U)$ of degree at most $n$ using $n$ controlled applications of $U$.

Different variants of QSP give rise to distinct classes of polynomial bases that can be exploited for such expansions. We provide a general characterization of these bases in terms of their orthogonality and biorthogonality properties. In the case of generalized QSP, we prove that a circuit consisting of $n$ controlled applications of $U$ and $n+1$ single-qubit rotations encodes a family of $n+1$ Laurent biorthogonal polynomials. For $\mathrm{SU}(1,1)$-QSP, we show that a circuit composed of $n$ controlled applications of $U$ generates a sequence of $n+1$ orthogonal polynomials on the unit circle. We further introduce a new variant, termed OP-QSP, which is capable of implementing an arbitrary sequence of orthogonal polynomials.

These results are then leveraged to derive exact and explicit QSP angle sequences for several classical families of orthogonal polynomials, including Chebyshev, Hermite, Jacobi, and Rogers-Szeg\H{o} polynomials. This yields a method for block-encoding smooth functions of a unitary through their Hermite series expansion, achieving gate complexity that scales with the target precision $\epsilon$ as $O(\log(1/\epsilon))$, including $ O(\log(1/\epsilon))$ controlled applications of $U$, while the resulting block-encoding normalization constant scales as $O(\mathrm{polylog}(1/\epsilon))$.

These connections with the theory of orthogonal and biorthogonal polynomials also yield new results for QSP protocols. By exploiting tools such as the Gauss–Christoffel formula and the Bernstein–Szegő approximation, the angle-finding problem is solved for several variants of QSP. In particular, we obtain explicit expressions for the angles in terms of the moments of a linear functional constructed from the target polynomial.

We further provide a characterization of the class of polynomial block-encodings achievable via $\mathrm{SU}(1,1)$-QSP. Specifically, a polynomial in a unitary $P(U)$ of degree $n$ can be implemented using $n$ controlled applications of $U$ if and only if all roots of $P$ lie inside the unit circle. 

Lastly, we show that the sequence of polynomials block-encoded at each step of a bivariate QSP protocol satisfies biorthogonality relations. These relations are then used to derive a set of necessary conditions for a polynomial to be achievable via bivariate QSP. We further demonstrate that these conditions are violated by a known counterexample to the FRT = QSP conjecture. This serves as a sanity test for our results and illustrates that our criteria provides a broadly applicable necessary condition that encapsulates current understanding of achievability for multivariate QSP sequences.

There are several directions in which we expect this work to be useful. First, it provides a systematic pathway for identifying new families of polynomials that admit analytic solutions to the QSP angle-finding problem. As illustrated by the examples presented, the QSP angles for known sequences of orthogonal or biorthogonal polynomials can be constructed directly from their recurrence coefficients.

Second, it provides a framework for analyzing the complexity of block-encoding a function of a unitary from its series expansion in different polynomial bases. More precisely, our algorithm enables one to determine how the approximation error and the resulting block-encoding normalization scale in terms of the expansion coefficients.

Third, it introduces a set of tools for studying the expressivity of bivariate QSP protocols. Future work should investigate the constraints we derive on the class of bivariate polynomials implementable via bivariate QSP, as well as develop efficient methods to verify these constraints for a given target polynomial.

Finally, the close interplay between orthogonality, recurrence relations, and the structure of QSP-like circuits suggests a natural route toward new circuit generalizations. In particular, designing architectures that implement recurrence relations associated with bivariate biorthogonal Laurent polynomials, bivariate orthogonal polynomials on the unit circle, or matrix-valued orthogonal polynomials appears to be a promising direction for future research.

\section*{Acknowledgments}
PAB acknowledges support from a CQIQC postdoctoral fellowship and a postdoctoral fellowship from the Fonds de Recherche du Québec -- Nature et Technologies (FRQNT).
NW acknowledge the support from DOE, Office of Science, National Quantum Information Science Research Centers, Co-design Center for Quantum Advantage (C2QA) under Contract No.~DE-SC0012704 (Basic Energy Sciences, PNNL FWP 76274) and Pacific Northwest National Laboratory's Quantum Algorithms and Architecture for Domain Science (QuAADS) Laboratory Directed Research and Development (LDRD) Initiative. 

\appendix
\section{Proof of orthogonality relation in Example 3.2} \label{sec:app1}

In Example \ref{ex:twocheb}, we consider a sequence of polynomials given by the sum of two Chebyshev polynomials of the second kind. We show that this sequence is orthogonal with respect to the following form:
\begin{equation*}
   \mathcal{L}[f] = \frac{-\lambda}{4 \pi i}\int_{C}  f\left(\frac{z + z^{-1}}{2\gamma} + 1\right)\frac{(1- z^2)^2}{z(z^2 + \lambda)(1+ \lambda z^2 )} dz, \quad \lambda = 1 - 4 \gamma^2 a^2
\end{equation*}
Using $U_n\left( \frac{z}{2} + \frac{1}{2z}\right) = \frac{z^{n+1} -z^{-n-1}}{z -z^{-1}}$ and the definition of the polynomials $\hat{P}_i(x)$ in Example \ref{ex:twocheb}, we first note that for all $i \in \mathbb{N}$
\begin{equation*}
\begin{split} \hat{P}_{i}\left(\frac{(z+z^{-1})}{2\gamma} + 1\right) %&= \frac{1}{(2\gamma)^i(1- \delta_{i0}\lambda)}\left( \frac{z^{i+1} -z^{-i-1} + \lambda z^{i-1} -\lambda z^{-i+1}}{z -z^{-1}} \right) \\ 
&= \frac{1}{(2\gamma)^i(1- \delta_{i0}\lambda)}\left( \frac{z^{i-1}(z^2 + \lambda) -z^{-i-1}(1 + \lambda z^2)}{z -z^{-1}} \right) 
\end{split}
\end{equation*}
Next, the product of two polynomials can be rewritten in the following form,
\begin{equation*}
\begin{split}
  &\left( \frac{z^{n-1}(z^2 + \lambda) -z^{-n-1}(1 + \lambda z^2)}{z -z^{-1}} \right)\left( \frac{z^{m-1}(z^2 + \lambda) -z^{-m-1}(1 + \lambda z^2)}{z -z^{-1}} \right) \\
  & \quad= \left( \frac{z^{n+m-2}(z^2 + \lambda)^2 +z^{-n-m-2}(1 + \lambda z^2)^2 - (z^{n-m-2} + z^{-n+m-2})(z^2 + \lambda)(1 + \lambda z^2)}{(z -z^{-1})^2} \right)  
\end{split}
\end{equation*}
Substituting this expression into the definition of the form $\mathcal{L}$, we obtain a sum of three integrals
\begin{equation*}
\begin{split}
    \mathcal{L}[\hat{P}_n(x)\hat{P}_m(x)] & = -\frac{\lambda}{4 \pi i (2\gamma)^{n+m}(1- \delta_{n0}\lambda)} \int_C \frac{z^{n+m-1}(z^2 + \lambda)}{(1+ \lambda z^2 )} dz \\
    & - \frac{\lambda}{4 \pi i (2\gamma)^{n+m}(1- \delta_{n0}\lambda)} \int_C \frac{ z^{-n-m-1}(1 + \lambda z^2)}{(z^2 + \lambda)} dz \\
    & + \frac{\lambda}{4 \pi i (2\gamma)^{n+m}(1- \delta_{n0}\lambda)} \int_C  (z^{n-m-1} + z^{-n+m-1}) dz
\end{split}
\end{equation*}
Analyzing each of these integrals, we find that they simplify to the following expressions
\begin{equation*}
    \int_C  (z^{n-m-1} + z^{-n+m-1}) dz = 4 \pi i \delta_{nm}, \quad    \int_C \frac{z^{n+m-1}(z^2 + \lambda)}{(1+ \lambda z^2 )} dz = 2 \pi i \lambda \delta_{n0}\delta_{m0}.
\end{equation*}

\begin{equation*}
\begin{split}
    \int_C \frac{ z^{-n-m-1}(1 + \lambda z^2)}{(z^2 + \lambda)} dz &= \int_C z^{-n-m-1}\frac{ (1 -\lambda^2) }{(z^2 + \lambda)} + \lambda \int_C z^{-n-m-1}dz \\
    & =  (1 -\lambda^2)\int_C \frac{z^{-n-m-1} }{(z^2 + \lambda)}dz + 2 \pi i \lambda \delta_{n0}\delta_{m0}
\end{split}
\end{equation*}
Now, recall that the contour $C$ encloses the points $z =0$, $z = \pm i \sqrt{\lambda }$, and thus we obtain
\begin{equation*}
\begin{split}
    \int_C \frac{z^{-k+1} }{(z^2 + \lambda)} dz &= \int_C z^{-k} \left(\frac{ 1}{(z +  i \sqrt{\lambda})} + \frac{ 1}{(z -  i \sqrt{\lambda})} \right) dz \\
    & = 2 \pi i (-i \sqrt{\lambda})^{-k} + 2\pi i (-1)^{k-1}(i \sqrt{\lambda})^{-k} + 2 \pi i (i \sqrt{\lambda})^{-k} + 2\pi i (-1)^{k-1}(-i \sqrt{\lambda})^{-k} \\
    & = 0 
\end{split}
\end{equation*}
Summing the contributions of these integrals, we obtain
\begin{equation*}
     \mathcal{L}[\hat{P}_n(x)\hat{P}_m(x)]  = \frac{ \delta_{nm}   }{ (2\gamma)^{n+m}}.
\end{equation*}

\section{Proofs for Section \ref{sec:3}} \label{sec:appA}

In this section, we present the proofs of Propositions~\ref{prop:form}, \ref{prop:rec_from_biortho}, and \ref{prop:cond_rec_from_biortho}. We begin by establishing several auxiliary lemmas that will be used throughout the proofs

\begin{lemma}\label{lem:roots}
     Let $\{P_i(z,w)\, |\, i = 0,1,\dots,n \}$ and $\{Q_i(z,w)\, |\, i = 0,1,\dots,n \}$ be finite sequences of polynomials satisfying the recurrence relations  \eqref{eq:rec1} and \eqref{eq:rec2}, with $\text{deg}\, P_i = \text{deg}\, Q_i = i$. Then $P_i$ and $Q_i$ have no common roots.
\end{lemma}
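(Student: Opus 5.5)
The plan is induction on $i$, using the transfer matrix of one generalized QSP step. The recurrences \eqref{eq:rec1} and \eqref{eq:rec2} say that the column vector $(P_i(z),Q_i(z))^t$ equals $M_i(z)\,(P_{i-1}(z),Q_{i-1}(z))^t$, where
\begin{equation*}
M_i(z)=\begin{pmatrix} e^{i\phi_i}\cos\theta_i\, z & e^{i\phi_i}\sin\theta_i\\ \sin\theta_i\, z & -\cos\theta_i\end{pmatrix}.
\end{equation*}
Its determinant is
\begin{equation*}
\det M_i(z) = -e^{i\phi_i}\cos^2\theta_i\, z - e^{i\phi_i}\sin^2\theta_i\, z = -e^{i\phi_i} z,
\end{equation*}
so $M_i(z_0)$ is invertible for every $z_0\neq 0$.

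\textbf{Case $z_0\neq0$.} Suppose $P_i(z_0)=Q_i(z_0)=0$. Since $M_i(z_0)$ is invertible, we get $P_{i-1}(z_0)=Q_{i-1}(z_0)=0$. Iterating down to $i=0$ gives $P_0(z_0)=Q_0(z_0)=0$. This is impossible, because $P_0=1$: the $i=0$ block of the empty product in \eqref{eq:begqsp} is the identity, so $(P_0,Q_0)=(1,0)$ and $P_0$ has no roots.

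\textbf{Case $z_0=0$.} Evaluating the recurrences at $z=0$ gives $P_i(0)=e^{i\phi_i}\sin\theta_i\,Q_{i-1}(0)$ and $Q_i(0)=-\cos\theta_i\,Q_{i-1}(0)$. Hence $Q_i(0)=(-1)^i\prod_{k\le i}\cos\theta_k\cdot Q_0(0)$. This is where the naive argument runs into trouble: with $Q_0=0$, both $P_i(0)$ and $Q_i(0)$ would vanish, and $z=0$ would be a common root. So the role of the hypothesis $\deg Q_i=i$ has to be pinned down.

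The first thing I would check is whether the degree hypothesis forces $\cos\theta_k\neq 0$. The leading coefficient of $Q_i$ is $\sin\theta_i$ times the leading coefficient of $P_{i-1}$. The constant terms satisfy $Q_i(0)=-\cos\theta_i\,Q_{i-1}(0)$, and $P_i(0)$ is a multiple of $Q_{i-1}(0)$. If the lemma is meant for sequences normalized so that $Q_0$ is a nonzero constant (as in the setting of Proposition~\ref{prop:form}), then $Q_i(0)\neq0$ as long as all $\cos\theta_k\neq0$. I would therefore argue that $\cos\theta_k\neq 0$ is forced by $\deg P_i=i$: if $\cos\theta_k=0$, then $P_k=e^{i\phi_k}\sin\theta_k\,Q_{k-1}$ has degree at most $k-1$, a contradiction. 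With that, $Q_i(0)\neq0$, so $0$ is not a common root.

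The main obstacle is exactly this $z=0$ case and the normalization of $Q_0$. I would resolve it by stating that the lemma concerns sequences with $Q_0$ a nonzero constant, equivalently by reading $(P_0,Q_0)$ as a column of an invertible initial matrix, which the degree conditions permit. Once that is settled, the determinant argument handles every $z_0\neq0$ and the constant-term argument handles $z_0=0$.
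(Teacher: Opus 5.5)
Your proof is correct, and its skeleton is the paper's. The paper also pushes a common zero of $P_i,Q_i$ backwards through the recurrence until it reaches $P_0(z)=Q_0(z)=0$, contradicting that $P_0,Q_0$ are nonzero constants. The difference is the step at $z=0$. The paper asserts that $P_i(z)=Q_i(z)=0$ always forces $P_{i-1}(z)=Q_{i-1}(z)=0$. By your computation $\det M_i(z)=-e^{i\phi_i}z$, this is justified only for $z\neq 0$; at $z=0$ one only gets $Q_{i-1}(0)=0$.

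You handle $z=0$ separately, via $Q_i(0)=(-1)^i\prod_{k\le i}\cos\theta_k\,Q_0(0)$ and the observation that $\deg P_k=k$ forces $\cos\theta_k\neq 0$. This supplies the use of the degree hypothesis that the paper's argument silently needs. It is genuinely needed: if $\cos\theta_1=\cos\theta_2=0$, then $P_2$ and $Q_2$ are both constant multiples of $z$, although then $\deg P_2=1$.

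Two clean-ups:
\begin{itemize}
\item $Q_0$ being a nonzero constant is not an extra normalization you must impose; it is simply the case $i=0$ of the hypothesis $\deg Q_i=i$. So drop the appeal to $(P_0,Q_0)=(1,0)$ from the empty product. That choice violates the hypothesis, and for it the conclusion actually fails, since $P_1=e^{i\phi_1}\cos\theta_1\,z$ and $Q_1=\sin\theta_1\,z$ share the root $0$.
\item Reading $(P_0,Q_0)$ as a column of an invertible matrix is not equivalent to $Q_0\neq 0$, as the identity matrix shows.
\end{itemize}
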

\begin{proof}
   By substituting $P_{i}(z)=Q_{i}(z)=0$ into the recurrence relations
\eqref{eq:rec1} and \eqref{eq:rec2}, we find that this forces
$P_{i-1}(z)=Q_{i-1}(z)=0$. Repeating this argument $i$ times yields
$P_0(z)=Q_0(z)=0$. Since $P_0$ and $Q_0$ are nonzero constants, this
is a contradiction. We therefore conclude that $P_i$ and $Q_i$ have distinct roots.

\end{proof}

\begin{lemma}\label{lem:roots2}
 Let $\{P_i(z)\, |\, i = 0,1,\dots,n \}$ and $\{Q_i(z)\, |\, i = 0,1,\dots,n \}$ be finite sequences of polynomials satisfying the recurrence relations  \eqref{eq:rec1} and \eqref{eq:rec2}. Then, $Q_n(0) = 0$ if and only $\exists\, i \in \{0,1,\dots n\}$ s.t. $\theta_i =0 \mod \pi $ . 
\end{lemma}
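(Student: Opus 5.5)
The plan is to evaluate the recurrence relations \eqref{eq:rec1} and \eqref{eq:rec2} at $z=0$, where the terms containing $zP_{i-1}(z)$ drop out. This turns the coupled recurrence into a scalar recurrence for $Q_i(0)$ alone, so $Q_n(0)$ becomes an explicit product of angle-dependent factors times the initial value $Q_0$. The lemma should then follow by asking when that product vanishes.

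\textbf{Step 1.} Setting $z=0$ in \eqref{eq:rec2} gives $Q_i(0) = c(\theta_i)\,Q_{i-1}(0)$. Here $c(\theta_i)$ is the coefficient multiplying $Q_{i-1}$ in the $Q$-recurrence.

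\textbf{Step 2.} Iterating from $i=n$ down to $i=1$ gives
\begin{equation*}
Q_n(0) = Q_0 \prod_{i=1}^n c(\theta_i).
\end{equation*}
The constant $Q_0$ is produced by the initial rotation applied before the first controlled-$U$, and it also depends on an angle, namely $\theta_0$ (with the index $i=0$ allowed in the statement).

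\textbf{Step 3.} Since $\mathbb{C}[z]$ has no zero divisors and all factors are scalars, $Q_n(0)=0$ if and only if either $Q_0=0$ or $c(\theta_i)=0$ for some $i\ge 1$. The converse direction, that a vanishing factor forces $Q_n(0)=0$, is immediate from the same product.

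\textbf{Step 4 (main obstacle).} It remains to identify each vanishing factor with the condition $\theta_i \equiv 0 \bmod \pi$ exactly as stated.
\begin{itemize}
\item For $i=0$ this holds with the standard generalized-QSP initialization, since the initial rotation gives $Q_0 = \sin\theta_0$.
\item For $i\ge 1$, the coefficient as literally written in \eqref{eq:rec2} is $c(\theta_i) = -\cos\theta_i$, and this vanishes at $\theta_i \equiv \pi/2$ rather than at $\theta_i \equiv 0 \pmod{\pi}$.
\end{itemize}
So for $i\ge 1$ the stated equivalence does not follow from \eqref{eq:rec2} as written. Completing the proof requires fixing the angle parametrization of $R(\theta,\phi)$ so that the coefficient of $Q_{i-1}$ in the $Q$-recurrence is proportional to $\sin\theta_i$, as it is for $Q_0$. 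Equivalently, one swaps the roles of $\cos\theta$ and $\sin\theta$ in $T(\theta,\phi,U)$ relative to \eqref{eq:transfgqsp}.

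With that convention, every factor in the product of Step 2 vanishes exactly when the corresponding $\theta_i \equiv 0 \bmod \pi$, for all $i\in\{0,\dots,n\}$. This would give the statement in the form worded above. The step I would check most carefully is therefore that this convention is the one intended for the lemma.
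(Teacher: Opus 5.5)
Your Steps 1--3 are exactly the paper's proof: setting $z=0$ in \eqref{eq:rec2} gives $Q_i(0)=-\cos\theta_i\,Q_{i-1}(0)$, hence $Q_n(0)=(-1)^n\bigl(\prod_{i=1}^n\cos\theta_i\bigr)Q_0$. Your diagnosis in Step 4 is also right, and more careful than the paper. Its proof ends with ``$\cos\theta_j=0$ for some $j$'' and then simply declares the result.

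What both arguments actually establish is this: if $Q_0\neq 0$, then $Q_n(0)=0$ if and only if $\theta_i\equiv\pi/2\pmod{\pi}$ for some $1\le i\le n$. The lemma as printed is false in both directions. For $n=1$ and $Q_0\neq 0$, taking $\theta_1=\pi/2$ gives $Q_1(z)=zP_0(z)$ and $Q_1(0)=0$. Taking $\theta_1=0$ gives $Q_1(0)=-Q_0\neq 0$. The paper never attaches an angle $\theta_0$ to $Q_0$; it just assumes $Q_0$ is a nonzero constant. Your reading $Q_0=\sin\theta_0$ via an initial rotation is an added convention.

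The part that does not work is your repair. The recurrences \eqref{eq:rec1}--\eqref{eq:rec2} are the hypotheses, so it is the conclusion that must be corrected, not \eqref{eq:transfgqsp}. The swap also cannot give the uniform condition you want. If the initial rotation belongs to the same family $R(\theta,\phi)$, then $Q_0=r_{21}(\theta_0)$, while each factor in your product is $r_{22}(\theta_i)$. These are two entries of one row of a unitary matrix, so $|r_{21}|^2+|r_{22}|^2=1$ and their zero sets in $\theta$ are disjoint. Exchanging $\cos$ and $\sin$ therefore just moves the mismatch from $i\ge 1$ to $i=0$.

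The cosine version is also precisely what is needed where the lemma is used. In Proposition~\ref{prop:form}, $\deg P_i=i$ forces $\cos\theta_i\neq 0$ through the leading coefficient in \eqref{eq:rec1}, and $\deg Q_0=0$ forces $Q_0\neq 0$. Hence $Q_n(0)\neq 0$ and the contour $\Gamma$ exists.
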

\begin{proof}
By iterating equation~\eqref{eq:rec2}, we obtain
\begin{equation*}
    Q_i(0) = (-1)^{i-1} \prod_{j=1}^{i-1} \cos\theta_j \, Q_0(0).
\end{equation*}
Moreover, $Q_0(0)\neq 0$ since $Q_0(z)$ is a nonzero constant. Therefore,
the left-hand side vanishes if and only if $\cos\theta_j=0$ for some
$j$. The result follows.

\end{proof}
\noindent Next, we give the proof of Proposition \ref{prop:form}
\begin{proof}[Proof of Proposition \ref{prop:form}.]
  By inverting the recurrence relations \eqref{eq:rec1} and \eqref{eq:rec2}, one finds 
\begin{equation*}
    \begin{pmatrix}
        P_{i-1}(z) \\
        Q_{i-1}(z) 
    \end{pmatrix} = 
T(\theta_i, \phi_i, z)^{-1}
\begin{pmatrix}
        P_{i}(z) \\
        Q_{i}(z) 
    \end{pmatrix},
\end{equation*}
with the inverse of $T(\theta_i, \phi_i,z)$ given by
\begin{equation*}
T(\theta_i, \phi_i,z)^{-1}= 
    \begin{pmatrix}
 {e^{-i \phi_i } \cos (\theta_i )}z^{-1} & \sin (\theta_i )z^{-1}  \\
 e^{-i \phi_i } \sin (\theta_i ) & -\cos (\theta_i ) 
\end{pmatrix}.
\end{equation*}
This allows to express $P_i(z)$ and $Q_i(z)$ in terms of the last polynomials in the sequences, i.e.
\begin{equation*}
    \begin{pmatrix}
        P_i(z) \\
        Q_i(z) 
    \end{pmatrix} = 
T(\theta_{i+1}, \phi_{i+1},z)^{-1}T(\theta_{i+2}, \phi_{i+2},z)^{-1}\dots T(\theta_{n}, \phi_{n},z)^{-1}
\begin{pmatrix}
        P_{n}(z) \\
        Q_{n}(z) 
    \end{pmatrix}.
\end{equation*}
Now, we observe that
\begin{equation*}
   T(\theta_{i+1}, \phi_{i+1},z)^{-1}T(\theta_{i+2}, \phi_{i+2},z)^{-1}\dots T(\theta_{n}, \phi_{n},z)^{-1} = \begin{pmatrix}
        z^{-1}A(z) &  z^{-1} B(z) \\
        C(z) & D(z)
    \end{pmatrix},
\end{equation*}
where $A(z)$, $B(z)$, $C(z)$ and $D(z)$ are polynomials in $z^{-1}$ of degree at most $n-i-1$. This yields the following relations:
\begin{equation}\label{eq:trick1}
    \frac{P_i(z)}{z P_n(z) \tilde{Q}_n(z)} = \frac{A(z) z^{n-2}}{ Q_n(z)} +  \frac{B(z)  z^{n-2}}{ P_n(z)}, \quad \frac{\tilde{Q}_i(z)}{z P_n(z) \tilde{Q}_n(z)} = \frac{C(z)z^{n-i-1}}{  Q_n(z)} +  \frac{D(z)z^{n-i-1}}{ P_n(z)},
\end{equation}
where $\tilde{Q}_n(z) = z^{-n} {Q}_n(z)$. Let $\Gamma$ denote a closed contour whose interior contains $z=0$ and all roots of $P_n(z)$, but none of the roots of $Q_n(z)$.  Lemmas \ref{lem:roots} and \ref{lem:roots2} guarantee the necessary condition for the contour $\Gamma$ to exist.  Evaluating the residue at $z=0$, we finds for $j \leq i$:
\begin{equation*}
    \int_\Gamma \frac{C(z) z^{n-i-1+j}}{Q_n(z)} dz = 0, \quad
 \quad  \int_\Gamma \frac{A(z) z^{n-2-j}}{Q_n(z)} dz = \left\{
	\begin{array}{ll}
		\lim_{z \rightarrow 0} \frac{A(z)z^{n-i-1}}{Q_n(z)}  & \mbox{if } j = i, \\
		0 & \mbox{if } j < i.
	\end{array}
\right.
\end{equation*}
In terms of the variable $u = 1/z$, the contour $\Gamma$ corresponds to a closed contour $\Gamma'$ whose interior contains $u = 0$ and all the roots of polynomial $u^n Q_n(1/u)$, but none of the roots of $u^n P_n(1/u)$. Further, $B(1/u)$ and $D(1/u)$ are polynomials in $u$ of degree at most $n-i-1$. Using this change of variable and evaluating the residue in $u = 0$, we find for $0 \leq j \leq i$
\begin{equation*}
    \int_\Gamma \frac{B(z) z^{n-2-j}}{P_n(z)} dz = - \int_{\Gamma'} \frac{B(1/u) u^{j}}{u^nP_n(1/u)} du =0,
\end{equation*}
and
\begin{equation*}
    \int_\Gamma \frac{D(z) z^{n-i+j-1}}{P_n(z)} dz  =  -\int_{\Gamma'} \frac{D(1/u) u^{i-j-1}}{u^nP_n(1/u)} dz = \left\{
	\begin{array}{ll}
		-\lim_{z \rightarrow \infty} \frac{D(z)z^{i+1}}{P_n(z)}  & \mbox{if } j = i, \\
		0 & \mbox{if } j < i.
	\end{array}
\right.
\end{equation*}
From the value of these integrals and equation \eqref{eq:trick1}, we derive
\begin{equation*}
   \mathcal{L}[P_i(z)z^{-j}]  = \int_\Gamma  \frac{P_i(z)z^{-j}}{z P_n(z) \tilde{Q}_n(z)}  = \left\{
	\begin{array}{ll}
		\lim_{z \rightarrow 0} \frac{A(z)z^{n-i-1}}{Q_n(z)}  & \mbox{if } j = i, \\
		0 & \mbox{if } j < i,
	\end{array}
\right.
\end{equation*}
and
\begin{equation*}
   \mathcal{L}[\tilde{Q}_i(z)z^{j}]  = \int_\Gamma  \frac{\tilde{Q}_i(z)z^{j}}{z P_n(z) \tilde{Q}_n(z)}  =  \left\{
	\begin{array}{ll}
		-\lim_{z \rightarrow \infty} \frac{D(z)z^{i+1}}{P_n(z)}  & \mbox{if } j = i, \\
		0 & \mbox{if } j < i.
	\end{array}
\right.
\end{equation*}
Since $P_i(z)$ and $\tilde{Q}_i(z)$ respectively belong to $\text{span}\{1, z, \dots, z^i\}$ and $\text{span}\{1, z^{-1}, \dots, z^{-i}\}$, the biorthogonality relations follow,
\begin{equation*}
    \mathcal{L}[\tilde{Q}_i(z)P_j(z) ] \propto \delta_{ij}.
\end{equation*}
\end{proof}
Before proving Proposition~\ref{prop:rec_from_biortho}, we establish the following lemma.
\begin{lemma}\label{lem:equivrels}
   Let $\{P_i(z)\, |\, i = 0,1,\dots,n \}$ and $\{Q_i(z)\, |\, i = 0,1,\dots,n \}$ be finite sequences of polynomials satisfying $\deg P_i = \deg Q_i = i$. They satisfy
    \begin{equation*}
        \mathcal{L}[\tilde{Q}_i(z)P_j(z) ] \propto \delta_{ij}
    \end{equation*}
    if and only if, for all $j \leq i$
    \begin{equation}\label{eq:co1}
        \mathcal{L}[z^{-j}P_i(z)] \propto \delta_{ij}, \quad 
        \mathcal{L}[z^{-j}Q_i(z)] \propto \delta_{j0}.
    \end{equation}
\end{lemma}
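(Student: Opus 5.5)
The plan is a direct linear-algebra argument that works with the spans of the two families. Since $\deg P_k = \deg Q_k = k$ for every $k$, an induction on degree shows that $\{P_0,\dots,P_i\}$ and $\{1,z,\dots,z^i\}$ span the same space. Likewise, $\{\tilde{Q}_0,\dots,\tilde{Q}_i\}$ and $\{1,z^{-1},\dots,z^{-i}\}$ span the same space, because $\tilde{Q}_k = z^{-k}Q_k$ has leading term $z^{0}$ and lowest term $z^{-k}$ with nonzero coefficient $Q_k(0)$. The only delicate point is that the coefficient $Q_k(0)$ must be nonzero. In the biorthogonal setting I will take this as part of the nondegeneracy assumption, and for the generalized QSP sequences it is guaranteed by Lemma~\ref{lem:roots2} when $\theta_j\neq 0 \bmod \pi$.

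For the forward direction, assume $\mathcal{L}[\tilde{Q}_i P_j]\propto\delta_{ij}$.
\begin{itemize}
\item First fix $i$ and let $j<i$. The monomial $z^{-j}$ lies in $\mathrm{span}\{\tilde{Q}_0,\dots,\tilde{Q}_j\}$, and each $\tilde{Q}_k$ with $k\le j<i$ annihilates $P_i$. Hence $\mathcal{L}[z^{-j}P_i]=0$ for $j<i$.
\item For the second family, note that $z^{-j}Q_i = z^{i-j}\tilde{Q}_i$ with $z^{i-j}$ a monomial of degree $i-j$. For $0<j\le i$ this degree is $<i$, so $z^{i-j}\in\mathrm{span}\{P_0,\dots,P_{i-1}\}$, and each such $P_k$ pairs to zero with $\tilde{Q}_i$. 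This gives $\mathcal{L}[z^{-j}Q_i]=0$ for $0<j\le i$.
\item The nonvanishing cases $j=i$, respectively $j=0$, are just whatever constants remain. This is consistent with the relation $\propto$, which allows the constant to be anything, including zero.
\end{itemize}

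For the converse, assume the relations \eqref{eq:co1}. Take $j<i$ and expand $\tilde{Q}_j=\sum_{k=0}^{j} q_k z^{-k}$. Then $\mathcal{L}[\tilde{Q}_jP_i]=\sum_k q_k\mathcal{L}[z^{-k}P_i]=0$, since every $k\le j<i$. Now take $j>i$ and write $P_i=\sum_{m=0}^{i} p_m z^m$. Then $\mathcal{L}[\tilde{Q}_jP_i]=\sum_m p_m\mathcal{L}[z^{-(j-m)}Q_j]$. Here $j-m\ge j-i>0$ and $j-m\le j$, so every term vanishes by the second relation in \eqref{eq:co1}. This establishes $\mathcal{L}[\tilde{Q}_iP_j]\propto\delta_{ij}$.

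The main obstacle is bookkeeping rather than substance. I must check that the index ranges exactly match. In particular, the second family of conditions must be used for $0<j-m\le j$, which is within the allowed range $j\le i$ after relabeling. I also need to keep the triangular span claims for $\tilde{Q}_k$ honest, which requires the nonzero constant-term hypothesis discussed at the start.
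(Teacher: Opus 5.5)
Your proposal is correct and uses the same span/triangularity argument as the paper, carried out more carefully. The paper cites $\mathrm{span}\{Q_0,\dots,Q_i\}=\mathrm{span}\{1,\dots,z^i\}$, but the forward direction actually needs $\mathrm{span}\{\tilde Q_0,\dots,\tilde Q_j\}=\mathrm{span}\{1,z^{-1},\dots,z^{-j}\}$, i.e.\ $Q_k(0)\neq 0$, exactly as you flag; the converse needs only membership, as you use it.

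You need not add $Q_k(0)\neq 0$ as a separate assumption when the constants $\chi_k$ are nonzero: a minimal $k$ with $Q_k(0)=0$ would give $\tilde Q_k\in\mathrm{span}\{\tilde Q_0,\dots,\tilde Q_{k-1}\}$ and hence $\chi_k=0$. For QSP sequences, $Q_k(0)=(-1)^k\prod_{j\le k}\cos\theta_j\,Q_0(0)$, so the relevant condition is $\cos\theta_j\neq 0$ (automatic from $\deg P_k=k$), not $\theta_j\not\equiv 0\pmod{\pi}$ as printed in Lemma~\ref{lem:roots2}.
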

\begin{proof}
    Recall that $\tilde{Q}_i(z) = z^{-i}Q_i(z)$. The result follows from the biorthogonality relation and 
    \begin{equation*}
        \text{span}\{1, z, \dots, z^i\} = \text{span}\{P_0(z), P_1(z), \dots, P_i(z)\} = \text{span}\{Q_0(z), Q_1(z), \dots, Q_i(z)\}.
    \end{equation*}
\end{proof}
\noindent Next, we give the proof of Proposition \ref{prop:rec_from_biortho}
\begin{proof}[Proof of Proposition \ref{prop:rec_from_biortho} ]
   First, fix the coefficients $\zeta_i$ and $\gamma_i$ so that $P_{i}(z) - \zeta_i z P_{i-1}(z)$ and $Q_{i}(z) - \gamma_i z Q_{i-1}(z)$ are polynomials of degree $i-1$ in $z$. For $j \leq i-1$, we have
\begin{equation*}
\mathcal{L}[z^{-j}(P_{i}(z) - \zeta_i z P_{i-1}(z))] \propto \delta_{j0},
\qquad
\mathcal{L}[z^{-j}(Q_{i}(z) - \gamma_i z Q_{i-1}(z))] \propto \delta_{j0}.
\end{equation*}
This shows that the polynomials $P_{i}(z) - \zeta_i z P_{i-1}(z)$ and $Q_{i}(z) - \gamma_i z Q_{i-1}(z)$ satisfy the same relation \eqref{eq:co1} with respect to $\mathcal{L}$ as $Q_{i-1}$ established in Lemma \ref{lem:equivrels}. Since they are also of degree $i-1$, they must be proportional; otherwise, the uniqueness of the sequences satisfying the biorthogonality relation would be violated, and the form is assumed to be biorthogonal quasi-definite. The result then follows.
\end{proof}

\begin{lemma}
    Let $P_{i-1}$, $Q_{i-1}$, $P_{i}$, and $Q_{i}$ be polynomials satisfying equation \eqref{eq:rec1alt}. Suppose there exist coefficients $\kappa_1$ and $\kappa_2$ such that, for all $z$ on the unit circle,
\begin{equation}\label{eq:condunit1}
   \kappa_1|P_{i}(z)|^2 + \kappa_2|Q_{i}(z)|^2 = 1.
\end{equation}
This holds if and only if there exist coefficients $\tau_1$ and $\tau_2$ such that, for all $z$ on the unit circle,
\begin{equation}\label{eq:condunit2}
   \tau_1|P_{i-1}(z)|^2 + \tau_2|Q_{i-1}(z)|^2 = 1,
\end{equation}
and the recurrence coefficients satisfy
\begin{equation}\label{eq:condunit3}
    V
    \begin{pmatrix}
       \kappa_1 & 0 \\
        0 &\kappa_2
    \end{pmatrix}
    V^\dagger
    =
    \begin{pmatrix}
        \tau_1 & 0 \\
        0 & \tau_2
    \end{pmatrix},
    \qquad
    V =
    \begin{pmatrix}
        \zeta_i & \gamma_i \\
        \beta_i & \delta_i
    \end{pmatrix}.
\end{equation}
\end{lemma}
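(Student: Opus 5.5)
The plan is to compute $|P_i(z)|^2$ and $|Q_i(z)|^2$ on $\mathbb{T}$ directly from the recurrence \eqref{eq:rec1alt}, and to view the unit-circle condition as a Hermitian form identity. For $|z|=1$ we introduce the vector $x(z) = (zP_{i-1}(z), Q_{i-1}(z))^t$. By \eqref{eq:rec1alt}, $(P_i(z), Q_i(z))^t = M x(z)$ with $M = \begin{pmatrix}\zeta_i & \beta_i\\ \gamma_i & \delta_i\end{pmatrix} = V^t$. Writing $K = \mathrm{diag}(\kappa_1,\kappa_2)$, we then obtain
\begin{equation*}
\kappa_1|P_i(z)|^2+\kappa_2|Q_i(z)|^2 = x(z)^\dagger M^\dagger K M\, x(z).
\end{equation*}
Since $|z|=1$, we have $|zP_{i-1}(z)|^2 = |P_{i-1}(z)|^2$. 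Condition \eqref{eq:condunit2} is therefore $x^\dagger \mathrm{diag}(\tau_1,\tau_2)x = 1$. Up to the transpose/conjugation convention relating $M$ to $V$ (which I will fix so that it matches \eqref{eq:condunit3}, namely $M^\dagger K M$ expressed through $V$ with real $\kappa$), the two sides agree whenever $M^\dagger K M = \mathrm{diag}(\tau_1,\tau_2)$.

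The converse direction ($\Leftarrow$) is then immediate. If \eqref{eq:condunit2} and \eqref{eq:condunit3} hold, then the form $x^\dagger M^\dagger K M x$ equals $\tau_1|P_{i-1}|^2+\tau_2|Q_{i-1}|^2$, which is $1$ on $\mathbb{T}$.

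The forward direction ($\Rightarrow$) is the main obstacle. Here we must show that the Hermitian matrix $H = M^\dagger K M = \begin{pmatrix} h_{11} & h_{12}\\ \bar h_{12} & h_{22}\end{pmatrix}$ is forced to be diagonal. Expanding, \eqref{eq:condunit1} gives, for $|z|=1$,
\begin{equation*}
h_{11}|P_{i-1}|^2 + h_{22}|Q_{i-1}|^2 + 2\,\mathrm{Re}\big(h_{12}\,\overline{zP_{i-1}(z)}\,Q_{i-1}(z)\big) = 1.
\end{equation*}
Multiply out as a Laurent polynomial in $z$ (using $\bar z = z^{-1}$). The terms $|P_{i-1}|^2$ and $|Q_{i-1}|^2$ involve only powers $z^{k}$ with $|k|\le i-1$. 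The cross term $h_{12} z^{-1}\overline{P_{i-1}}(z^{-1})Q_{i-1}(z)$ contains the power $z^{-i}$ with coefficient $h_{12}\,\overline{p_{i-1}}\,q_0$, where $p_{i-1}$ is the leading coefficient of $P_{i-1}$ and $q_0 = Q_{i-1}(0)$. Its conjugate contains $z^{i}$. No other term reaches degree $\pm i$, so this coefficient must vanish. Since $\deg P_{i-1}=i-1$ gives $p_{i-1}\neq 0$, and $q_0\ne 0$ (by Lemma~\ref{lem:roots2}-type reasoning, or, in the LBP setting, by the normalization $\check Q$ having constant term $1$; I will assume $Q_{i-1}(0)\neq0$ as in the nondegenerate case), we conclude $h_{12}=0$. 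Setting $\tau_1=h_{11}$ and $\tau_2=h_{22}$ then gives both \eqref{eq:condunit2} and \eqref{eq:condunit3}.

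The delicate points are twofold. First, the nonvanishing of $Q_{i-1}(0)$ must be justified. Second, the precise matching of $M^\dagger K M$ with the expression $VKV^\dagger$ in \eqref{eq:condunit3} must be checked. If the indexing of $V$ differs, I would instead define $x$ with conjugated coefficients so that the identity reads exactly as stated. Any reality of $\kappa_j,\tau_j$ follows automatically from the Hermitian structure.
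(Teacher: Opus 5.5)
Your proposal is correct and follows the paper's argument: substitute \eqref{eq:rec1alt}, use $\bar z = z^{-1}$ on $\mathbb{T}$, compare the coefficients of $z^{\pm i}$ to force the off-diagonal entry $\kappa_1\zeta_i\bar\beta_i+\kappa_2\gamma_i\bar\delta_i$ to vanish, and obtain the converse by direct substitution. Both of your ``delicate points'' resolve cleanly: $M^\dagger K M = \bar V K V^{t} = (VKV^\dagger)^{t}$, so the diagonality conditions coincide, and the requirement that the leading coefficient of $P_{i-1}$ and $Q_{i-1}(0)$ be nonzero is also used, without being stated, in the paper's own coefficient comparison.
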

\begin{proof}
    On the unit circle, we have $\bar{z} = z^{-1}$, and the monomials $z^k$ with $k \in \mathbb{Z}$ are linearly independent. Substituting the recurrence relation \eqref{eq:rec1alt} into equation \eqref{eq:condunit1} and comparing the coefficients of $z^{i}$ and $z^{-i}$, we obtain relations, for instance
\begin{equation*}
\kappa_1 \zeta_i \bar{\beta}_i + \kappa_2 \gamma_i \bar{\delta}_i = 0.
\end{equation*}
Equations \eqref{eq:condunit2} and \eqref{eq:condunit3} then follow from these relations.
The converse direction is obtained straightforwardly by substituting \eqref{eq:rec1alt} into the left-hand side of equation \eqref{eq:condunit1} and using equations \eqref{eq:condunit2} and \eqref{eq:condunit3}.
\end{proof}
We conclude with the proof of Proposition \ref{prop:cond_rec_from_biortho}.
\begin{proof}
Since $\hat{P}_0(z)=1$ and $\hat{Q}_0(z)=1$ are constant, one can fix $\alpha_0$ and $\tilde{\alpha}_0$ such that
\begin{equation*}
\left|\frac{\hat{P}_0(z)}{\alpha_0}\right|^2 + \left|\frac{\hat{Q}_0(z)}{\tilde{\alpha}_0}\right|^2= 1.
\end{equation*}
Using the explicit expressions of the recurrence coefficients $\zeta_i$, $\beta_i$, $\gamma_i$, and $\delta_i$ in terms of trigonometric functions in the generalized QSP recurrence relations \eqref{eq:rec1} and \eqref{eq:rec2}, we find that they satisfy \eqref{eq:condunit3} with $\kappa_1 = \kappa_2 = \tau_1 = \tau_2 = 1$. It then follows from the previous lemma that for all $i = 0,1,\dots, n$ we have
\begin{equation*}
|P_i(z)|^2 + |Q_i(z)|^2 = 1.
\end{equation*}
Conversely, if
\begin{equation*}
|P_{n}(z)|^2 + |Q_{n}(z)|^2 = 1,
\end{equation*}
then, by the previous lemma and an appropriate rescaling of the polynomials, we obtain $\det(V) = 1$ and  $V V^\dagger = \mathbb{I}$, where
\begin{equation*}
V =
\begin{pmatrix}
\zeta_i & \gamma_i\\
\beta_i & \delta_i
\end{pmatrix}.
\end{equation*}
Since $V \in U(2)$, its entries can be parametrized in terms of trigonometric functions so as to match the coefficients appearing in the generalized QSP recurrence relations \eqref{eq:rec1} and \eqref{eq:rec2}. The result then follows.
\end{proof}

\section{${\rm SU}(1,1)$-QSP on standard quantum device}\label{app:newappsu11}

The ${\rm SU}(1,1)$-QSP protocol requires the implementation of a single qubit rotation with an imaginary angle, i.e.
\begin{equation*}
    R(i\theta,\phi) = \begin{pmatrix}
        e^{i \phi} \cosh \theta &  i e^{i \phi} \sinh \theta  \\
       i \sinh \theta & -\cosh \theta
    \end{pmatrix} \otimes I.
\end{equation*}
This can be factorized as  $R(i\theta, \phi) = R(i \theta,0) R(0,\phi)$, where $R(0,\phi)$ is unitary that can be implemented using a single qubit rotation, and $ R(i \theta,0)$ is the following non-unitary ${\rm SU}(1,1)$ rotation
\begin{equation*}
   R(i \theta,0) =  \begin{pmatrix}
        \cosh \theta &  i \sinh \theta  \\
       i \sinh \theta & -\cosh \theta
    \end{pmatrix} \otimes I .
\end{equation*}
To implement this non-unitary transformation, one can use a block encoding approach. In particular, the matrix can be decomposed as a sum of two unitaries, $R(i\theta,0) = \cosh \theta \, Z \otimes I + i \sinh \theta \, X \otimes I,
$ and thus admits a block encoding via an LCU construction. This construction uses the following select and preparation oracles,
\begin{equation*}
    U_{\mathrm{sel}} = (I \otimes R_z(\pm \pi/2))\, \mathrm{CNOT} \, (I \otimes R_z(\mp \pi/2) Z ) = \ket{0}\bra{0} \otimes Z \pm \ket{1}\bra{1} \otimes i X
\end{equation*}
\begin{equation*}
    V_{\mathrm{prep}} = R_y(-\tilde{\theta}), \quad V_{\mathrm{prep}}^\dagger = R_y(\tilde{\theta}).
\end{equation*}
Using $R_y(-\tilde{\theta}) \ket{0} = \cos\frac{\tilde{\theta}}{2} \ket{0} - \sin\frac{\tilde{\theta}}{2} \ket{0}$, one finds that the unitary encoding the action of the circuit represented in Figure \ref{fig:su11rot} has the following block structure
\begin{equation}
   (\bra{0} \otimes I)  (V^\dagger_{\mathrm{prep}} \otimes I )U_{\mathrm{SEL}}(V_{\mathrm{prep}}\otimes I )(\ket{0} \otimes I)  = \begin{pmatrix}
        \cos^2\frac{\tilde{\theta}}{2} &  \pm i\sin^2\frac{\tilde{\theta}}{2} \\
     \pm i\sin^2\frac{\tilde{\theta}}{2}  & -\cos^2\frac{\tilde{\theta}}{2} 
    \end{pmatrix} .
\end{equation}
 Using the identification $\tanh \theta = \pm \tan^2(\tilde{\theta}/2)$, one finds that
\begin{equation*}
    \begin{pmatrix}
        \cos^2\frac{\tilde{\theta}}{2} &  \pm i\sin^2\frac{\tilde{\theta}}{2} \\
     \pm i\sin^2\frac{\tilde{\theta}}{2}  & -\cos^2\frac{\tilde{\theta}}{2} 
    \end{pmatrix} = \frac{1}{|\sinh \theta| + \cosh \theta}\begin{pmatrix}
        \cosh \theta &  i \sinh \theta  \\
       i \sinh \theta & -\cosh \theta
    \end{pmatrix}
\end{equation*}
so that the circuit in Figure \ref{fig:su11rot} implements the ${\rm SU}(1,1)$ rotation corresponding to $R(i\theta,0)$.

\begin{figure}[h]
    \centering
\begin{quantikz}[wire types = {q,q,q,n,q}]
\lstick{$\ket{0}$}  & &\gate{R_y(-\tilde{\theta})}  & \ctrl{1} & \gate{R_y(\tilde{\theta})}& & & \\ 
   & \gate{Z} & \gate{R_z(\mp \pi/2)}&  \targ{0}& \gate{R_z(\pm \pi/2)}&   && \\ 
\end{quantikz}
    \caption{Circuit associated to ${\rm SU}(1,1)$ rotations}
    \label{fig:su11rot}
\end{figure}

\section{Proofs for Section \ref{sec:4}}\label{sec:app2}

We now present the proof of Proposition~\ref{prop:formbiv1}. The argument is the bivariate analogue of the proof of Proposition~\ref{prop:form} in the univariate case.

\begin{proof}
First, we observe that
\begin{equation}
   T(\theta_{i+1},\phi_{i+1},z_{i+1})^{-1}T(\theta_{i+2}, \phi_{i+2},z_{i+2})^{-1}\dots T(\theta_{n}, \phi_{n}, z_{n})^{-1} = \begin{pmatrix}
        z_{i+1}^{-1}A(z,w) &  z_{i+1}^{-1} B(z,w) \\
        C(z,w) & D(z,w)
    \end{pmatrix},
\end{equation}
where $A(z,w)$, $B(z,w)$, $C(z,w)$ and $D(z,w)$ are polynomials in $(z^{-1},w^{-1})$ of multidegree at most $(a_n - a_{i+1}, b_n -b_{i+1} )$. Since we have
\begin{equation}
    \begin{pmatrix}
        P_i(z,w) \\
        Q_i(z,w) 
    \end{pmatrix} = 
T(\theta_{i+1}, \phi_{i+1},z_{i+1})^{-1}T(\theta_{i+2}, \phi_{i+2},z_{i+2})^{-1}\dots T(\theta_{n}, \phi_{n},z_{n})^{-1}
\begin{pmatrix}
        P_{n}(z,w) \\
        Q_{n}(z,w) 
    \end{pmatrix},
\end{equation}
we can derive the following relations:
\begin{equation}\label{eq:trick1biv}
    \frac{P_i(z,w)}{z P_n(z,w) \tilde{Q}_n(z,w)} = \frac{A(z,w) z^{a_n - 1}w^{b_n} }{ z_{i+1} Q_n(z,w)} +  \frac{B(z,w)  z^{a_n - 1}w^{b_n}}{ z_{i+1}P_n(z,w)},
\end{equation}
\begin{equation}\label{eq:trick2altbiv}
    \frac{\tilde{Q}_i(z)}{z P_n(z,w) \tilde{Q}_n(z,w)} = \frac{C(z,w)z^{a_n-a_i-1} w^{b_n-b_i}}{  Q_n(z,w)} +  \frac{D(z,w)z^{a_n-a_i-1} w^{b_n-b_i}}{ P_n(z,w)},
\end{equation}
 where $\tilde{Q}_i(z,w) = z^{-a_i} w^{-b_i}Q_i(z,w)$. Let $\Gamma$ denote a closed contour whose interior contains $z=0$ and all roots of $P_n(z,z^{a_n + 1})$, but none of the roots of $Q_n(z,z^{a_n + 1})$. The proofs of Lemmas~\ref{lem:roots} and~\ref{lem:roots2} can be adapted to establish the necessary condition for the existence of such a contour. Evaluating the residue at $z=0$, we finds for $k\leq a_i $ and $\ell \leq b_i$:
\begin{equation}
    \int_\Gamma \frac{C(z,z^{a_n + 1}) z^{a_n-a_i-1} z^{(a_n+1)(b_n-b_i)}}{Q_n(z,z^{a_n + 1})z^{-k} z^{-(a_n+1)\ell}} dz = 0, 
\end{equation}
\scriptsize{
\begin{equation}
\begin{split}
    \int_\Gamma \frac{A(z,z^{a_n + 1}) z^{a_n-1}z^{b_n (a_n+1)}}{z^{a_{i+1} - a_i} z^{(a_n+1)(b_{i+1} - b_i)}Q_n(z,z^{a_n + 1})z^{k} z^{(a_n+1)\ell}} &dz = \left\{
	\begin{array}{ll}
		\lim_{z \rightarrow 0} \frac{A(z,z^{a_n + 1}) z^{a_n-1}z^{b_n (a_n+1)}}{z^{a_{i+1} } z^{(a_n+1)b_{i+1}}Q_n(z,z^{a_n + 1})}  & \mbox{if } (k,\ell) = (a_i,b_i), \\
		0 & \mbox{otherwise}.
	\end{array}
\right.
\end{split}
\end{equation}
}
\normalsize
In terms of the variable $u = 1/z$, the contour $\Gamma$ corresponds to a closed contour $\Gamma'$ whose interior contains $u = 0$ and all the roots of polynomial $u^{a_n}u^{(a_n + 1)b_n} Q_n(1/u, 1/u^{a_n + 1})$, but none of the roots of $u^{a_n}u^{(a_n + 1)b_n} P_n(1/u, 1/u^{a_n + 1})$. Further, $B(1/u,1/u^{a_n + 1})$ and $D(1/u,1/u^{a_n + 1})$ are polynomials in $u$ of degree at most $a_n - a_{i+1} + (a_n + 1)(b_n - b_{i+1})  $. Using the change of variable $u = 1/z$ and evaluating the residue in $u = 0$, we find for $k\leq a_i $ and $\ell \leq b_i$
\begin{equation}
    \int_\Gamma \frac{B(z, z^{a_n + 1}) z^{a_n - 1-k}z^{(a_n + 1)(b_n -\ell)}}{z^{a_{i+1}  -a_i}z^{(a_n + 1)(b_{i+1}  -b_i)}P_n(z,z^{a_n + 1})} dz = - \int_{\Gamma'} \frac{B(1/u,1/u^{a_n + 1}) u^{k + a_{i+1} }u^{(a_n+1)(\ell + b_{i+1} )}}{u^{1+ a_n +  a_i}u^{(a_n+1)(b_n+  b_i)}P_n(1/u,1/u^{a_n + 1})} du =0,
\end{equation}
and
\small
\begin{equation}
\begin{split}
    \int_\Gamma \frac{D(z,z^{a_n + 1}) z^{a_n-a_i-1}z^{(a_n +1)(b_n-b_i)}}{z^{-k}z^{-(a_n+1)\ell}P_n(z,z^{a_n + 1})} dz  &=  -\int_{\Gamma'} \frac{D(1/u,1/u^{a_n + 1})u^{a_i}u^{(a_n +1)b_i}}{u^{k+a_n+1}u^{(a_n +1)(\ell +b_n)}P_n(1/u,1/u^{a_n + 1})} dz \\ 
    & = \left\{
	\begin{array}{ll}
		  - \lim_{u\rightarrow 0}\frac{D(1/u,1/u^{a_n + 1})}{u^{a_n}u^{(a_n +1)b_n}P_n(1/u,1/u^{a_n + 1})}& \mbox{if } (k,\ell) = (a_i,b_i), \\
		0 & \mbox{otherwise}.
	\end{array}
\right.
\end{split}
\end{equation}
\normalsize
From the value of these integrals and equations \eqref{eq:trick1biv} and \eqref{eq:trick2altbiv}, we derive for $k\leq a_i $ and $\ell \leq b_i$
\begin{equation}
   \mathcal{L}[P_i(z,w)z^{-k}w^{-\ell}]   = \left\{
	\begin{array}{ll}
		\lim_{z \rightarrow 0} \frac{A(z,z^{a_n + 1}) z^{a_n-1}z^{b_n (a_n+1)}}{z^{a_{i+1} } z^{(a_n+1)b_{i+1}}Q_n(z,z^{a_n + 1})}  & \mbox{if } (k,\ell) = (a_i,b_i), \\
		0 & \mbox{otherwise }
	\end{array}
\right.
\end{equation}
and
\begin{equation}
   \mathcal{L}[\tilde{Q}_i(z,w)z^{k}w^{\ell}]  =  \left\{
	\begin{array}{ll}
		 - \lim_{u\rightarrow 0}\frac{D(1/u,1/u^{a_n + 1})}{u^{a_n}u^{(a_n +1)b_n}P_n(1/u,1/u^{a_n + 1})} & \mbox{if } (k,\ell) = (a_i,b_i), \\
		0 & \mbox{otherwise}.
	\end{array}
\right.
\end{equation}
The result then follows by combining these relations with the monomial expansions of the polynomials $P_i(z,w)$ and $\tilde{Q}_i(z,w)$.
\end{proof}

\normalsize
 
 \noindent Next, we give the proof of Proposition \ref{prop:formbiv2}.

\begin{proof}
Once again, we have the relations \eqref{eq:trick1biv} and \eqref{eq:trick2altbiv}.
Let $\Gamma_w$ denote a closed contour whose interior contains $z=0$ and all the roots of $P_n(z,w)$, but none of the roots of $Q_n(z,w)$, where $w$ is regarded as a fixed parameter. The proofs of Lemmas~\ref{lem:roots} and~\ref{lem:roots2} can be adapted to establish the necessary condition for the existence of such a contour. Evaluating the residue at $z=0$, we finds for $k \leq a_i$:
\begin{equation}
\begin{split}
    \int_{\Gamma_\omega} \frac{A(z,w) z^{a_n }w^{b_n} z^{-k}}{ z_i Q_n(z,w)} dz  = 0
\end{split}
\end{equation}
and for $a_i - a_{i+1} \leq k \leq a_i$,
\begin{equation}
   \int_{\Gamma_\omega} \frac{C(z,w)z^{a_n-a_i} w^{b_n-b_i} z^k}{  Q_n(z,w)} dz = 0.
\end{equation}
Next, using the change of variable $u = 1/z$ we derive for $2 - a_{i+1} + a_i  \leq k$
\begin{equation}
    \int_{\Gamma_\omega} \frac{B(z,w) z^{a_n }w^{b_n} z^{-k}}{ z_i P_n(z,w)} dz = 0
\end{equation}
and for $ k  \leq a_i -2$
\begin{equation}
   \int_{\Gamma_\omega} \frac{D(z,w)z^{a_n-a_i} w^{b_n-b_i}z^k}{ P_n(z,w)} dz = 0.
\end{equation}
From these four integrals and the relations \eqref{eq:trick1biv} and \eqref{eq:trick2altbiv}, we obtain the following relations for $1 - a_{i+1} + a_i \le k \le a_i$, 
\begin{equation}
    \int_{\Gamma_\omega}\frac{P_i(z,w) z^{-k}}{ z P_n(z,w) \tilde{Q}_n(z,w)} dz = \delta_{a_i, k} f(w), \quad \int_{\Gamma_\omega}\frac{\tilde{Q}_i(z,w) z^{k}}{ z P_n(z,w) \tilde{Q}_n(z,w)} dz = \delta_{a_i, k} g(w).
\end{equation}
This concludes the demonstration.
\end{proof}

%\bibliographystyle{unsrt}
%\bibliography{ref.bib}

\begin{thebibliography}{10}

\bibitem{low2017optimal}
Guang~Hao Low and Isaac~L. Chuang.
\newblock {Optimal Hamiltonian simulation~by quantum signal processing}.
\newblock {\em Physical Review Letters}, 118(1):010501, 2017.
\newblock
  \href{https://doi.org/10.1103/PhysRevLett.118.010501}{doi.org/10.1103/PhysRevLett.118.010501}.

\bibitem{low2019hamiltonian}
Guang~Hao Low and Isaac~L Chuang.
\newblock {Hamiltonian simulation by qubitization}.
\newblock {\em Quantum}, 3:163, 2019.
\newblock \href{
  https://doi.org/10.22331/q-2019-07-12-163}{doi.org/10.22331/q-2019-07-12-163}.

\bibitem{liu2025toward}
Yuan Liu, John~M Martyn, Jasmine Sinanan-Singh, Kevin~C Smith, Steven~M Girvin,
  and Isaac~L Chuang.
\newblock {Toward Mixed Analog-Digital Quantum Signal Processing: Quantum AD/DA
  Conversion and the Fourier Transform}.
\newblock {\em IEEE Transactions on Signal Processing}, 2025.
\newblock
  \href{https://ieeexplore.ieee.org/abstract/document/11129874}{doi.org/10.1109/TSP.2025.3599462}.

\bibitem{chakraborty2019power}
Shantanav Chakraborty, Andr{\'a}s Gily{\'e}n, and Stacey Jeffery.
\newblock {The Power of Block-Encoded Matrix Powers: Improved Regression
  Techniques via Faster Hamiltonian Simulation}.
\newblock In {\em 46th International Colloquium on Automata, Languages, and
  Programming (ICALP 2019)}, pages 33--1. Schloss Dagstuhl--Leibniz-Zentrum
  f{\"u}r Informatik, 2019.
\newblock
  \href{https://drops.dagstuhl.de/entities/document/10.4230/LIPIcs.ICALP.2019.33}{doi.org/10.4230/LIPIcs.ICALP.2019.33}.

\bibitem{lin2020optimal}
Lin Lin and Yu~Tong.
\newblock {Optimal polynomial based quantum eigenstate filtering with
  application to solving quantum linear systems}.
\newblock {\em Quantum}, 4:361, 2020.
\newblock \href{
  https://doi.org/10.22331/q-2020-11-11-361}{doi.org/10.22331/q-2020-11-11-361}.

\bibitem{gilyen2019quantum}
Andr{\'a}s Gily{\'e}n, Yuan Su, Guang~Hao Low, and Nathan Wiebe.
\newblock {Quantum singular value transformation and beyond: exponential
  improvements for quantum matrix arithmetics}.
\newblock In {\em Proceedings of the 51st annual ACM SIGACT symposium on theory
  of computing}, pages 193--204, 2019.
\newblock
  \href{https://ieeexplore.ieee.org/document/7354428}{doi.org/10.1109/FOCS.2015.54}.

\bibitem{low2016methodology}
Guang~Hao Low, Theodore~J Yoder, and Isaac~L Chuang.
\newblock Methodology of resonant equiangular composite quantum gates.
\newblock {\em Physical Review X}, 6(4):041067, 2016.
\newblock
  \href{https://doi.org/10.1103/PhysRevX.6.041067}{doi.org/10.1103/PhysRevX.6.041067}.

\bibitem{marrero2026encoded}
Carlos~Ortiz Marrero, Rui~Jie Tang, and Nathan Wiebe.
\newblock {Encoded Quantum Signal Processing for Heisenberg-Limited Metrology}.
\newblock {\em arXiv preprint},
  \href{https://arxiv.org/abs/2603.22798}{arXiv:2603.22798}, 2026.

\bibitem{dong2022beyond}
Yulong Dong, Jonathan Gross, and Murphy~Yuezhen Niu.
\newblock {Beyond Heisenberg limit quantum metrology through quantum signal
  processing}.
\newblock {\em arXiv preprint},
  \href{https://arxiv.org/abs/2209.11207}{arXiv:2209.11207}, 2022.

\bibitem{rossi2022multivariable}
Zane~M. Rossi and Isaac~L. Chuang.
\newblock {Multivariable Quantum Signal Processing (M-QSP): Prophecies of the
  Two-Headed Oracle}.
\newblock {\em Quantum}, 6:811, 2022.
\newblock \href{
  https://doi.org/10.22331/q-2022-09-20-811}{doi.org/10.22331/q-2022-09-20-811}.

\bibitem{nemeth2023variants}
Balázs Németh, Blanka Kövér, Boglárka Kulcsár, Roland~Botond Miklósi,
  and András Gilyén.
\newblock {On Variants of Multivariate Quantum Signal Processing and Their
  Characterizations}.
\newblock {\em arXiv preprint},
  \href{https://arxiv.org/abs/2312.09072}{arXiv:2312.09072}, 2023.

\bibitem{gomes2024multivariable}
Niladri Gomes, Hokiat Lim, and Nathan Wiebe.
\newblock {Multivariable QSP and Bosonic Quantum Simulation Using Iterated
  Quantum Signal Processing}.
\newblock {\em arXiv preprint},
  \href{https://arxiv.org/abs/2408.03254}{arXiv:2408.03254}, 2024.

\bibitem{laneve2025multivariate}
Lorenzo Laneve and Stefan Wolf.
\newblock {On multivariate polynomials achievable with quantum signal
  processing}.
\newblock {\em Quantum}, 9:1641, 2025.
\newblock \href{
  https://doi.org/10.22331/q-2025-02-20-1641}{doi.org/10.22331/q-2025-02-20-1641}.

\bibitem{rossi2025modular}
Zane~M Rossi, Jack~L Ceroni, and Isaac~L Chuang.
\newblock {Modular quantum signal processing in many variables}.
\newblock {\em Quantum}, 9:1776, 2025.
\newblock
  \href{https://doi.org/10.22331/q-2025-06-18-1776}{doi.org/10.22331/q-2025-06-18-1776}.

\bibitem{harrow2009quantum}
Aram~W. Harrow, Avinatan Hassidim, and Seth Lloyd.
\newblock {Quantum Algorithm for Linear Systems of Equations}.
\newblock {\em Physical Review Letters}, 103(15):150502, 2009.
\newblock \href{
  https://doi.org/10.1103/PhysRevLett.103.150502}{doi.org/10.1103/PhysRevLett.103.150502}.

\bibitem{childs2017quantum}
Andrew~M Childs, Robin Kothari, and Rolando~D Somma.
\newblock {Quantum algorithm for systems of linear equations with exponentially
  improved dependence on precision}.
\newblock {\em SIAM Journal on Computing}, 46(6):1920--1950, 2017.
\newblock
  \href{https://doi.org/10.1137/16M1087072}{doi.org/10.1137/16M1087072}.

\bibitem{childs2012hamiltonian}
Andrew~M Childs and Nathan Wiebe.
\newblock {Hamiltonian Simulation Using Linear Combinations of Unitary
  Operations}.
\newblock {\em Quant. Inf. Comput.}, 12:0901--0924, 2012.
\newblock \href{ https://doi.org/10.48550/arXiv.1202.5822}{arXiv.1202.5822}.

\bibitem{berry2017quantum}
Dominic~W Berry, Andrew~M Childs, Aaron Ostrander, and Guoming Wang.
\newblock {Quantum algorithm for linear differential equations with
  exponentially improved dependence on precision}.
\newblock {\em Communications in Mathematical Physics}, 356(3):1057--1081,
  2017.
\newblock
  \href{https://doi.org/10.1007/s00220-017-3002-y}{doi.org/10.1007/s00220-017-3002-y}.

\bibitem{fang2023time}
Di~Fang, Lin Lin, and Yu~Tong.
\newblock {Time-marching based quantum solvers for time-dependent linear
  differential equations}.
\newblock {\em Quantum}, 7:955, 2023.
\newblock
  \href{https://doi.org/10.22331/q-2023-03-20-955}{doi.org/10.22331/q-2023-03-20-955}.

\bibitem{an2023linear}
Dong An, Jin-Peng Liu, and Lin Lin.
\newblock {Linear Combination of Hamiltonian Simulation for Nonunitary Dynamics
  with Optimal State Preparation Cost}.
\newblock {\em Physical Review Letters}, 131(15):150603, 2023.
\newblock
  \href{https://doi.org/10.1103/PhysRevLett.131.150603}{doi.org/10.1103/PhysRevLett.131.150603}.

\bibitem{berry2015hamiltonian}
Dominic~W. Berry, Andrew~M. Childs, and Robin Kothari.
\newblock {Hamiltonian simulation with nearly optimal dependence on all
  parameters}.
\newblock In {\em 2015 IEEE 56th annual symposium on foundations of computer
  science}, pages 792--809. IEEE, 2015.
\newblock
  \href{https://ieeexplore.ieee.org/document/7354428}{doi.org/10.1109/FOCS.2015.54}.

\bibitem{berry2014exponential}
Dominic~W Berry, Andrew~M Childs, Richard Cleve, Robin Kothari, and Rolando~D
  Somma.
\newblock Exponential improvement in precision for simulating sparse
  hamiltonians.
\newblock In {\em Proceedings of the forty-sixth annual ACM symposium on Theory
  of computing}, pages 283--292, 2014.
\newblock
  \href{https://doi.org/10.1145/2591796.259185}{doi.org/10.1145/2591796.259185}.

\bibitem{haah2019product}
Jeongwan Haah.
\newblock Product decomposition of periodic functions in quantum signal
  processing.
\newblock {\em Quantum}, 3:190, 2019.
\newblock \href{
  https://doi.org/10.22331/q-2019-10-07-190}{doi.org/10.22331/q-2019-10-07-190}.

\bibitem{chao2020finding}
Rui Chao, Dawei Ding, Andr{\'a}s Gily{\'e}n, Cupjin Huang, and Mario Szegedy.
\newblock Finding angles for quantum signal processing with machine precision.
\newblock {\em arXiv preprint arXiv:2003.02831},
  \href{https://arxiv.org/abs/2003.02831}{arXiv:2003.02831}, 2020.

\bibitem{dong2021efficient}
Yulong Dong, Xiang Meng, K~Birgitta Whaley, and Lin Lin.
\newblock {Efficient Phase-Factor Evaluation in Quantum Signal Processing}.
\newblock {\em Physical Review A}, 103(4):042419, 2021.
\newblock
  \href{https://doi.org/10.1103/PhysRevA.103.042419}{doi.org/10.1103/PhysRevA.103.042419}.

\bibitem{motlagh2024generalized}
Danial Motlagh and Nathan Wiebe.
\newblock {Generalized Quantum Signal Processing}.
\newblock {\em PRX Quantum}, 5(2):020368, 2024.
\newblock
  \href{https://doi.org/10.1103/PRXQuantum.5.020368}{doi.org/10.1103/PRXQuantum.5.020368}.

\bibitem{rossi2023quantum}
Zane~M. Rossi, Victor~M. Bastidas, William~J. Munro, and Isaac~L. Chuang.
\newblock {Quantum Signal Processing with Continuous Variables}.
\newblock {\em arXiv preprint},
  \href{https://arxiv.org/abs/2304.14383}{arXiv:2304.14383}, 2023.

\bibitem{barry2019constant}
Paul Barry.
\newblock {Constant Coefficient Laurent Biorthogonal Polynomials, Riordan
  Arrays and Moment Sequences}.
\newblock {\em arXiv preprint},
  \href{https://arxiv.org/abs/1906.06370}{arXiv:1906.06370}, 2019.

\bibitem{chihara2011introduction}
Theodore~S Chihara.
\newblock {\em {An introduction to orthogonal polynomials}}.
\newblock Courier Corporation, 2011.

\bibitem{hendriksen1991biorthogonal}
Erik Hendriksen and Olav Njåstad.
\newblock {Biorthogonal Laurent Polynomials with Biorthogonal Derivatives}.
\newblock {\em The Rocky Mountain Journal of Mathematics}, pages 301--317,
  1991.

\bibitem{zhedanov1998classical}
Alexei Zhedanov.
\newblock {The “Classical” Laurent Biorthogonal Polynomials}.
\newblock {\em Journal of Computational and Applied Mathematics},
  98(1):121--147, 1998.
\newblock
  \href{https://doi.org/10.1016/S0377-0427(98)00118-6}{doi.org/10.1016/S0377-0427(98)00118-6}.

\bibitem{kamioka2014laurent}
S~Kamioka.
\newblock {Laurent biorthogonal polynomials, q-Narayana polynomials and domino
  tilings of the Aztec diamonds}.
\newblock {\em Journal of Combinatorial Theory, Series A}, 123(1):14--29, 2014.
\newblock
  \href{https://www.sciencedirect.com/science/article/pii/S009731651300174X}{doi.org/10.1016/j.jcta.2013.11.002}.

\bibitem{simon2005orthogonal}
Barry Simon.
\newblock {\em {Orthogonal Polynomials on the Unit Circle}}.
\newblock American Mathematical Society, 2005.

\end{thebibliography}

\end{document}